\newtheorem{definition}{Definition}[section]  % Separate numbering for definitions
\newtheorem{lemma}{Lemma}[section]            % Separate numbering for lemmas
\definecolor{WowColor}{rgb}{.75,0,.75}
\definecolor{SubtleColor}{rgb}{0,0,.50}
\newcounter{margincounter}
\newcommand{\cf}{\textsc{ProveSound}\xspace}
\newcommand{\oldtool}{\textsc{ConstraintFlow}\xspace}
\newcommand{\var}{x}
\newcommand{\constant}{c}
\newcommand{\expr}{e}
\newcommand{\sym}{\cfkeywords{\texttt{sym}}\xspace}
\newcommand{\curr}{\cfkeywords{\texttt{curr}}\xspace}
\newcommand{\prev}{\cfkeywords{\texttt{prev}}\xspace}
\newcommand{\shape}{\cfkeywords{\texttt{shape}}\xspace}
\newcommand{\transformer}{\cfkeywords{\texttt{Transformer}}\xspace}
\newcommand{\summ}{\cfkeywords{\texttt{sum}}\xspace}
\newcommand{\avg}{\cfkeywords{\texttt{avg}}\xspace}
\newcommand{\len}{\cfkeywords{\texttt{len}}\xspace}
\newcommand{\argmax}{\cfkeywords{compare}\xspace}
\newcommand{\maxx}{\cfkeywords{\texttt{max}}\xspace}
\newcommand{\minn}{\cfkeywords{\texttt{min}}\xspace}
\newcommand{\map}{\cfkeywords{\texttt{map}}\xspace}
\newcommand{\mapl}{\cfkeywords{\texttt{mapList}}\xspace}
\newcommand{\foo}{\texttt{applyFunc}\xspace}
\newcommand{\dott}{\cfkeywords{\texttt{dot}}\xspace}
\newcommand{\concat}{\cfkeywords{concat}\xspace}
\newcommand{\traverse}{\cfkeywords{\texttt{traverse}}\xspace}
\newcommand{\solver}{\cfkeywords{\texttt{solver}}\xspace}
\newcommand{\lp}{\cfkeywords{\texttt{solver}}\xspace}
\newcommand{\maximize}{\cfkeywords{\texttt{maximize}}\xspace}
\newcommand{\minimize}{\cfkeywords{\texttt{minimize}}\xspace}
\newcommand{\forward}{\cfkeywords{\texttt{forward}}\xspace}
\newcommand{\backward}{\cfkeywords{\texttt{backward}}\xspace}
\newcommand{\flow}{\cfkeywords{\texttt{Flow}}\xspace}
\newcommand{\affine}{\cfkeywords{\texttt{Affine}}\xspace}
\newcommand{\relu}{\cfkeywords{\texttt{ReLU}}\xspace}
\newcommand{\abs}{\cfkeywords{\texttt{Abs}}\xspace}
\newcommand{\hsigmoid}{\cfkeywords{\texttt{HardSigmoid}}\xspace}
\newcommand{\htanh}{\cfkeywords{\texttt{HardTanh}}\xspace}
\newcommand{\hswish}{\cfkeywords{\texttt{HardSwish}}\xspace}
\newcommand{\leakyrelu}{\cfkeywords{\texttt{ReLU6}}\xspace}
\newcommand{\maxpool}{\cfkeywords{\texttt{MaxPool}}\xspace}
\newcommand{\minpool}{\cfkeywords{\texttt{MinPool}}\xspace}
\newcommand{\avgpool}{\cfkeywords{\texttt{AvgPool}}\xspace}
\newcommand{\dotprod}{\cfkeywords{\texttt{DotProduct}}\xspace}
\newcommand{\revaffine}{\cfkeywords{\texttt{rev\_Affine}}\xspace}
\newcommand{\revrelu}{\cfkeywords{\texttt{rev\_ReLU}}\xspace}
\newcommand{\weight}{\cfkeywords{\texttt{weight}}\xspace}
\newcommand{\bias}{\cfkeywords{\texttt{bias}}\xspace}
\newcommand{\layer}{\cfkeywords{\texttt{layer}}\xspace}
\newcommand{\equations}{\cfkeywords{\texttt{equations}}\xspace}
\newcommand{\deeppolyH}{BALANCE Cert\xspace}
\newcommand{\deeppolyNew}{REUSE Cert\xspace}
\newcommand{\sympoly}{SymPoly\xspace}
\newcommand{\abshape}{s}
\newcommand{\types}{t}
\newcommand{\expand}{\mathsf{expanded}}
\newcommand{\ex}{\mathsf{expand}}
\newcommand{\exn}{\mathsf{expandN}}
\newcommand{\hh}{\mathcal{D}}
\newcommand{\cc}{\mathcal{C}}
\newcommand{\prop}{\mathcal{P}}
\newcommand{\fstore}{F}
\newcommand{\tstore}{\Theta}
\newcommand{\sstore}{\sigma}
\newcommand{\trav}{\cfkeywords{\texttt{traverse}}\xspace}
\newcommand{\addg}{\mathsf{add}}
\newcommand{\cinvariant}{\mathsf{Inv}}
\newcommand{\cinduction}{\mathsf{Ind}}
\newcommand{\unsat}{\mathsf{unsat}}
\newcommand{\sat}{\mathsf{sat}}
\newcommand{\g}{\Gamma, \tau_s}
\newcommand{\store}{\rho}
\newcommand{\cfkeywords}[1]{\texttt{\footnotesize\bfseries\textcolor{keywords}{#1}}}
\newcommand{\cftypewords}[1]{\texttt{\footnotesize\bfseries\textcolor{typewords}{#1}}}
\newcommand{\defshape}{\cfkeywords{{\texttt{Def shape as}}}\xspace}
\newcommand{\func}{\cfkeywords{\texttt{Func}}\xspace}
\newcommand{\mathif}{\cfkeywords{\texttt{if}}\xspace}
\newcommand{\embed}{\texttt{<>}\xspace}
\newcommand{\polyexp}{\cftypewords{\texttt{PolyExp}}\xspace}
\newcommand{\symexp}{\cftypewords{\texttt{SymExp}}\xspace}
\newcommand{\ct}{\cftypewords{\texttt{Ct}}\xspace}
\newcommand{\float}{\cftypewords{\texttt{Real}}\xspace}
\newcommand{\intt}{\cftypewords{\texttt{Int}}\xspace}
\newcommand{\bool}{\cftypewords{\texttt{Bool}}\xspace}
\newcommand{\typeneuron}{\cftypewords{\texttt{Neuron}}\xspace}
\newcommand{\typesym}{\cftypewords{\texttt{Sym}}\xspace}
\newcommand{\typenoise}{\cftypewords{\texttt{Sym}}\xspace}
\newcommand{\dir}{\delta}
\newcommand{\andx}{\mathsf{and}}
\newcommand{\orx}{\mathsf{or}}
\newcommand{\notx}{\mathsf{not}}
\newcommand{\binop}{\oplus}
\newcommand{\add}{\mathsf{sum}}
\newcommand{\lpop}{\mathsf{op}}
\newcommand{\noise}{{\epsilon}}
\newcommand{\val}{\nu}
\newcommand{\pval}{\val_p}
\newcommand{\zval}{\val_s}
\newcommand{\bval}{\val_b}
\newcommand{\lval}{\val_l}
\newcommand{\cval}{\val_c}
\newcommand{\sval}{\mu}
\newcommand{\bsval}{\mu_b}
\newcommand{\lsval}{\mu_l}
\newcommand{\vset}{\mathsf{V}}
\newcommand{\context}{\fstore, \store, \hh_C}
\newcommand{\scontext}{\fstore, \sstore, \hh_S, \cc}
\newcommand{\symc}{\fstore, \sstore, \hh_S, \cc}
\newcommand{\env}{\fstore, \tstore, \hh_C}
\newcommand{\true}{\mathsf{true}}
\newcommand{\false}{\mathsf{false}}
\newcommand{\filter}{\mathsf{Ft}}
\newcommand{\priority}{\mathsf{P}}
\newcommand{\neigh}{\mathsf{N}}
\newcommand{\ver}{v}
\newcommand{\vertices}{\mathsf{neurons}}
\newcommand{\rrr}{\mathsf{real}}
\newcommand{\type}{\mathsf{type}}
\newcommand{\inputt}{\mathsf{input}}
\newcommand{\outputt}{\mathsf{output}}
\newcommand{\height}{\mathsf{height}}
\newcommand{\reduced}{\mathcal{R}}
\newcommand{\nn}{\mathcal{N}}
\newcommand{\ee}{\mathcal{E}}
\newcommand{\rr}{\mathcal{R}}
\newcommand{\ttt}{\mathcal{T}}
\newcommand{\sss}{\mathcal{S}}
\newcommand{\ooo}{\mathcal{OP}}
\newcommand{\m}{\mathcal{M}}
\newcommand{\cs}{\mathsf{Constants}}
\newcommand{\ns}{\mathsf{Neurons}}
\newcommand{\es}{\mathsf{SymbolicVars}}
\newcommand{\cts}{\mathsf{Constraints}}
\newcommand{\ps}{\mathsf{PolyExps}}
\newcommand{\zs}{\mathsf{SymExps}}
\newcommand{\dom}{\mathsf{dom}}
\newcommand{\range}{\mathsf{range}}
\newcommand{\vars}{\mathsf{vars}}
\newsavebox{\@brx}
\newcommand{\llangle}[1][]{\savebox{\@brx}{\(\m@th{#1\langle}\)}%
  \mathopen{\copy\@brx\kern-0.5\wd\@brx\usebox{\@brx}}}
\newcommand{\rrangle}[1][]{\savebox{\@brx}{\(\m@th{#1\rangle}\)}%
  \mathclose{\copy\@brx\kern-0.5\wd\@brx\usebox{\@brx}}}
\renewcommand{\ll}{\llangle}
\renewcommand{\gg}{\rrangle}
\newcounter{number}
\newcommand{\mycounter}{...(\thenumber) \stepcounter{number}}
\definecolor{diagramcolor}{rgb}{0.17,0.37,0.69}
\definecolor{keywords}{rgb}{0.05,0.05,0.9}
\definecolor{typewords}{rgb}{0,0.5,0}
\definecolor{greencomments}{rgb}{0,0.5,0}
\definecolor{turqusnumbers}{rgb}{0.17,0.57,0.69}
\definecolor{redstrings}{rgb}{0.5,0,0}
\definecolor{codegreen}{rgb}{0,0.6,0}
\definecolor{codegray}{rgb}{0.5,0.5,0.5}
\definecolor{codepurple}{rgb}{0.58,0,0.82}
\definecolor{backcolour}{RGB}{250, 250, 250}
\lstdefinelanguage{ConstraintFlow}
    {morekeywords={def, shape, as, curr, prev, prev0, prev1, mapList, transformer, ReLU, Affine, HardSwish, Maxpool, DotProduct, rev_ReLU, rev_Affine, rev_Maxpool, rev_Max, rev_Min, rev_Add, rev_Mult, func, map, true, false, traverse, dot, flow, forward, backward, sum, layer, sym, compare, avg, len, max, min, and, in, solver, currList, equations, minimize, maximize, mult, add, sigmoid, tanh, lp, Abs, eps},
    % morekeywords = [2]{shape},
    % morekeywords = [3]{as},
    morekeywords = [4]{Bool, Int, Real, PolyExp, SymExp, Neuron, Noise, Ct},
    keywordstyle = \bfseries\color{keywords},
    % keywordstyle = [2]{\color{lime}},
    % keywordstyle = [3]{\color{yellow}},
    keywordstyle = [4]{\bfseries\color{typewords}},
    sensitive=false, 
    morecomment=[l][\color{greencomments}]{///},
    morecomment=[l][\color{greencomments}]{//},
    morecomment=[s][\color{greencomments}]{{(*}{*)}},
    morestring=[b]",
    stringstyle=\color{redstrings}
    }
\tiny\color{black},
\begin{document}

\title{Automated Verification of Soundness of DNN Certifiers}

\author{Avaljot Singh}
\email{avaljot2@illinois.edu}
\orcid{0009-0006-4167-8709}
\affiliation{%
  \institution{University of Illinois Urbana-Champaign}
  \country{USA}
}

\author{Yasmin Chandini Sarita}
\affiliation{%
  \institution{University of Illinois Urbana-Champaign}
  \country{USA}
}
\email{ysarita2@illinois.edu}

\author{Charith Mendis}
\affiliation{%
  \institution{University of Illinois Urbana-Champaign}
  \country{USA}
}
\email{charithm@illinois.edu}

\author{Gagandeep Singh}
\affiliation{%
  \institution{University of Illinois Urbana-Champaign}
  \country{USA}
}
\email{ggnds@illinois.edu}

\begin{abstract}
The uninterpretability of Deep Neural Networks (DNNs) hinders their use in safety-critical applications. Abstract Interpretation-based DNN certifiers provide promising avenues for building trust in DNNs. Unsoundness in the mathematical logic of these certifiers can lead to incorrect results. However, current approaches to ensure their soundness rely on manual, expert-driven proofs that are tedious to develop, limiting the speed of developing new certifiers. Automating the verification process is challenging due to the complexity of verifying certifiers for arbitrary DNN architectures and handling diverse abstract analyses.

We introduce \cf, a novel verification procedure that automates the soundness verification of DNN certifiers for arbitrary DNN architectures. Our core contribution is the novel concept of a \textit{symbolic DNN}, using which, \cf reduces the soundness property, a universal quantification over arbitrary DNNs, to a tractable symbolic representation, enabling verification with standard SMT solvers. By formalizing the syntax and operational semantics of \oldtool, a DSL for specifying certifiers, \cf efficiently verifies both existing and new certifiers, handling arbitrary DNN architectures. 

Our code is available at \href{https://github.com/uiuc-focal-lab/constraintflow.git}{https://github.com/uiuc-focal-lab/constraintflow.git}
\end{abstract}

\begin{CCSXML}
<ccs2012>
   <concept>
       <concept_id>10003752.10003790.10002990</concept_id>
       <concept_desc>Theory of computation~Logic and verification</concept_desc>
       <concept_significance>500</concept_significance>
       </concept>
   <concept>
       <concept_id>10003752.10003790.10003794</concept_id>
       <concept_desc>Theory of computation~Automated reasoning</concept_desc>
       <concept_significance>500</concept_significance>
       </concept>
   <concept>
       <concept_id>10003752.10010124.10010131.10010134</concept_id>
       <concept_desc>Theory of computation~Operational semantics</concept_desc>
       <concept_significance>500</concept_significance>
       </concept>
   <concept>
       <concept_id>10003752.10010124.10010138.10010142</concept_id>
       <concept_desc>Theory of computation~Program verification</concept_desc>
       <concept_significance>500</concept_significance>
       </concept>
 </ccs2012>
\end{CCSXML}

\ccsdesc[500]{Theory of computation~Logic and verification}
\ccsdesc[500]{Theory of computation~Automated reasoning}
\ccsdesc[500]{Theory of computation~Operational semantics}
\ccsdesc[500]{Theory of computation~Program verification}

\keywords{Abstract interpretation, Language design, Machine learning, Program analysis, Verification}

\maketitle

\section{Introduction}
While DNNs can achieve impressive performance, there is a growing need for their safety and robustness in safety-critical domains like autonomous driving~\cite{driving}, healthcare~\cite{AMATO201347}, etc., due to their susceptibility to environmental and adversarial noise~\cite{madry2018towards, xu2023robust}. Formal certification of DNNs can be used to assess their performance on a large, potentially infinite set of inputs, thereby providing guarantees on DNN behavior. Abstract Interpretation-based DNN certifiers are used widely for formally certifying DNNs, balancing cost and precision tradeoffs~\cite{deepz, refinezono, AI2, optAndAbs, deeppoly, zhang2018crown, alphacrown, star, cnncert, dutta, ehlers2017formal, scalablever, fastcrown, gpupoly, semidefinite, convexrelaxation, krelu, tjandraatmadja, vincent19, imagestars, wang2018neurify, wang2018, wang2021beta, weng18a, WongK18, wu2020, xiang2017, Zelazny2022OnOB, syrenn, redprod, incremental1, incremental2, relational1,relational2, banerjee2024interpreting}. 

Abstract Interpretation-based DNN certifiers must satisfy the \textit{over-approximation-based soundness} property to ensure correctness. Currently, when a new DNN certifier is proposed, its soundness is proved manually using arduous pen-and-paper proofs. These proofs show that the outputs computed by abstract transformers over-approximate the outputs of the DNN on concrete inputs. Developing these proofs demands an expert-level understanding of abstract interpretation and substantial experience in proving mathematical lemmas and theorems. Consequently, the development of DNN certifiers is often confined to a small group of experts. Automating the verification of DNN certifiers would significantly reduce these barriers, enabling more widespread development of reliable certifiers. However, this automation presents several challenges, which we outline below.

\textbf{Challenge 1: Imperative Programming.}
While one approach to verifying the mathematical soundness of DNN certifiers could be to use program verifiers such as Dafny~\cite{dafny}, they are unsuitable because the commonly-used libraries implementing the DNN certifiers, such as auto\_LiRPA~\cite{Lirpa:20}, ELINA~\cite{fastpoly}, and ERAN~\cite{deeppoly}, are extensive code-bases in general-purpose programming languages, employing complex imperative programming paradigms, such as pointer arithmetic. Verifying the soundness of these libraries would require isolating the mathematical logic from their implementation and modeling the algorithm's behavior on an arbitrary DNN. 

\textbf{Challenge 2: Universal Quantification. }
Since a DNN is an input to a DNN certifier, the over-approximation-based soundness of the certifier is a universally quantified assertion over all possible DNNs, which significantly complicates its verification. To illustrate this, consider verifying the certifier for a fixed DNN, where the architecture is known. In this case, the soundness can be verified by representing all neurons and edges in the DNN (represented as a Directed Acyclic Graph) using symbolic variables and then executing the certifier symbolically. The difficulty arises when the input DNN is arbitrary and so, cannot be directly represented symbolically. A DNN might be a simple fully-connected network with ReLU activations, or a more complex architecture such as ResNet, with arbitrary residual connections and activations. These DNNs have drastically different architectures, and the DNN certifier may have different execution traces for them. So, verifying the soundness of the certifier for one architecture does not guarantee soundness for arbitrary DNNs.

\textbf{Challenge 3: Complex DNN Certifiers. }
Popular DNN certifiers like~\cite{deeppoly, zhang2018crown, forwardbackward} associate polyhedral bounds with each neuron, which makes it difficult to naively model the certifier behavior using symbolic execution. For example, a polyhedral lower bound for a neuron $n$ might be expressed as $n \geq 5n_1 + n_2$, where the neurons $n_1, n_2$ are neurons located anywhere in the DNN, independent of the DNN architecture. This adds a structure over the neurons (beyond the DNN architecture) that is unknown before executing the certifier. Further, $n, n_1, n_2, \cdots$ are symbolic variables even during a concrete execution of the certifier. So, modeling the certifier behavior using symbolic execution entails modeling the symbolic variables (neurons) as SMT symbolic variables. The correctness of this modeling is unclear and is not explored in existing work~\cite{rosette, symexec}. 

\textbf{Challenge 4: Huge Query Size. }
One approach would be to represent a DNN as a complete DAG where each neuron is a vertex, but this results in massive graphs (i.e. $10^4$ neurons in a modest-size DNN will have around \((10^4)^2\) edges), with a weight of zero in the DAG representing the absence of an edge in the DNN. However, a complete DAG would lead to a huge query, which would overwhelm current SMT solvers, making them either fail or take an impractically long time. So, naively modeling arbitrary DNNs as a complete DAG is impractical for realistic-size DNNs.

To the best of our knowledge, no existing technique can automatically verify the soundness of abstract interpretation-based DNN certifiers while accommodating a diverse range of certifiers, ensuring soundness for arbitrary DNNs, and maintaining efficiency and scalability.   

\textbf{This work. }
We design a novel automated bounded verification procedure---\cf---which can verify the soundness of DNN certifiers for arbitrary DNNs. \cf is based on the novel concept of a \textit{symbolic DNN}---an abstract neural network that represents all subgraphs of any arbitrary DNN on which a DNN certifier can be applied (\S~\ref{sec:verification}). By leveraging symbolic DNNs, we transform the universally quantified soundness conditions into a tractable symbolic representation, verifying which is sufficient to prove the certifier's soundness on arbitrary DNNs. We offload the verification of this tractable symbolic representation to off-the-shelf SMT solvers. Recently, a preliminary design of a Domain Specific Language (DSL)---\oldtool---was proposed for specifying the core mathematical logic of abstract interpretation-based DNN certifiers decoupling it from any implementation details~\cite{constraintflow}. However, its syntax and semantics are not formalized. So, we design a BNF grammar, type-system, and operational semantics for \oldtool, which enables \cf to verify the soundness of certifier specifications within \oldtool. 

\textbf{Main contributions. }
\begin{itemize}[noitemsep, nolistsep]
    \item We develop a type-system for ensuring well-typed programs in \oldtool and also provide operational semantics.  We also develop symbolic semantics for \oldtool and a novel concept of a symbolic DNN to devise a verification procedure---\cf---to automatically find bugs or verify the soundness of the specified DNN certifiers. 
    \item We establish formal guarantees and provide proofs that include type-soundness, and the soundness of the automated verification procedure, \cf, w.r.t. the operational semantics of \oldtool.
    \item We provide an extensive evaluation to demonstrate that \cf enables proving the correctness or detecting bugs in existing and new abstract transformers for contemporary DNN certifiers and new DNN certifiers with new abstract domains. Using \cf, for the first time, we can automatically verify the soundness of DNN certifiers for DNNs with an arbitrary number of layers, each with millions of learned parameters. 
\end{itemize}

\section{Background}
\label{sec:background}
In this section, we provide the necessary background needed for abstract interpretation-based DNN certifiers. While the concepts introduced are relevant to a broad range of certifiers, we describe the widely used DeepPoly certifier~\cite{deeppoly} and use it as our running example throughout the paper.

\subsection{Abstract Interpretation-Based DNN Certifiers }
We use a definition of DNNs similar to the one used in ~\cite{constraintflow}. A DNN is represented as a Directed Acyclic Graph (DAG) with neurons as the vertices and edges corresponding to the non-zero weights in the DNN architecture. The value of each neuron is determined by a DNN operation $f$, which receives as input a set of neurons, referred to as the \textit{previous} neurons $p$. DNN operations can be categorized into two categories: (i) primitive operations and (ii) composite operations. Primitive operations include the addition and multiplication of two neurons as well as non-linear activations like ReLU, sigmoid, etc. Composite operations are operations that can be expressed as combinations of primitive functions. Examples include affine transformation of neurons (fully connected layers or convolution layers) or activations like maxpool ,etc.

For a given DNN operation \(f\), the input consists of \(m\) neurons, where \(m\) denotes the arity of $f$ (e.g., \(f_{add}: \mathbb{R} \times \mathbb{R} \rightarrow \mathbb{R}\) has \(m = 2\)). Let \(\mathbf{x}\) represent an \(m\)-dimensional input to a layer, with each dimension corresponding to a neuron. DNN certifiers take a potentially infinite set of inputs, represented as \(c = \{\mathbf{x_i}\}\) and $c \in \mathcal{C}$, where \(\mathcal{C}\) is the concrete domain. Concrete elements \(c_1, c_2 \in \mathcal{C}\) are ordered by subset inclusion \(\subseteq\). Certification involves defining an abstract domain \(\mathcal{A}\) and abstract transformers \(f^\sharp\) for each \(f\). The DNN certifiers map concrete inputs to abstract elements via an abstraction function \(\alpha\) and propagate these through the network using abstract transformers. Abstract elements \(a \in \mathcal{A}\) can be mapped back to concrete values using a concretization function \(\gamma\).

\begin{definition}
\label{def:soundness}
    An abstract transformer $f^\sharp$ is sound w.r.t. the DNN operation $f$ if 
    $\ \forall a \in \mathcal{A} \cdot \forall c \in \mathcal{C} \cdot c \subseteq \gamma(a) \implies f(c) \subseteq \gamma(f^\sharp(a))$, where the semantics of $f$ are lifted to the natural set semantics.
\end{definition}

\subsection{DeepPoly DNN Certifier }
\label{background:deepoly}
We focus on abstract domains that associate fields with each neuron $n$ to impose constraints on their values. These fields form an \textit{abstract shape} $\abshape$ with corresponding constraints denoted as $\prop(\abshape, n)$. Popular abstract interpretation-based certifiers, including DeepPoly, use such domains. In the DeepPoly abstract domain, an abstract element \(a \in \mathcal{A}\) is represented as a conjunction of constraints over the neurons' abstract shapes, i.e., $a = (\abshape_1, \dots, \abshape_N)$, where $N$ is the total number of neurons. For each neuron \(n\), its abstract shape is $\abshape_n = \langle l_n, u_n, L_n, U_n \rangle$, where $l_n, u_n \in \mathbb{R} \cup \{-\infty, \infty\}$, and $L_n, U_n$ are affine expressions of neurons in the DNN. The associated over-approximation-based constraints are $\prop(\abshape, n) \triangleq (l_n \leq n \leq u_n) \wedge (L_n \leq n \leq U_n)$. Thus, the concretization function $\gamma(a) = \{(n_1, \dots, n_m) \in \mathbb{R}^m \ | \ \forall i \in [m], (l_{n_i} \leq n_i \leq u_{n_i}) \wedge (L_{n_i} \leq n_i \leq U_{n_i})\}$

An abstract transformer updates the abstract shape of the output neuron based on the concrete operation \(f\) while leaving the others unchanged. For the Affine operation, the updated abstract shape is $\abshape'_n = \langle l'_n, u'_n, L'_n, U'_n \rangle$, where $L'_n = U'_n = b + \sum^l_{i=1} w_i n_i$, where the bias ($b$) and the weights ($w_i$) are the DNN's learned parameters. To compute the lower concrete bound ($l'_n$), DeepPoly performs a backsubstitution step which starts with the lower polyhedral expression, $e = L'_n$.  At each step, $e = c_0' + \sum^l_{i=1} c'_i n_i$, each $n_i$ in $e$ is replaced with its own lower or upper polyhedral bound depending on the sign of the coefficient $c'_i$, i.e., $e \leftarrow c'_0 + \sum^l_{i=1} (c'_i \geq 0 \ ? \ c'_iL_{n_i} : c'_iU_{n_i})$. This step is repeated until all the neurons in $e$ are in the input layer, after which the constituent neurons are replaced with their respective lower or upper concrete bounds, i.e., if $e = c''_0 + \sum^l_{i=1} c''_i n_i$, then $l'_n = c''_0 + \sum^l_{i=1} (c''_i \geq 0 \ ? \ c''_il_{n_i} : c''_iu_{n_i})$. The upper concrete bound $u'_n$ is also computed similarly. 

\section{Overview}
\label{sec:overview}
We first provide an overview of DNN certifier specification in \oldtool using the DeepPoly specification from~\cite{constraintflow} as a running example, followed by the novel type-system and semantics for \oldtool. Finally, we show the soundness verification of the certifier specification.

\subsection{\oldtool}
% \begin{wrapfigure}{r}{0.3\textwidth}
%   \begin{minipage}{0.3\textwidth}
%         \centering
%         \input{sections/subtyping}
%     \caption{Subtyping Lattice}
%     \label{fig:subtyping}
%     \end{minipage}
% \end{wrapfigure}
% We define a type system and semantics for \oldtool that facilitates symbolic reasoning and high-level abstractions for formal verification. We represent the syntax in \cfkeywords{\texttt{blue}} and types in \cftypewords{\texttt{red}}.

\oldtool introduces datatypes specific to DNN certifiers including \typeneuron, \polyexp, and \ct. Neurons are represented as \typeneuron. The type \polyexp represents affine expressions over neurons and \ct represents symbolic constraints. Since some DNN certifiers use symbolic variables to specify constraints over the neuron values ~\cite{deepz, star, refinezono}, we introduce the $\sym$ construct to declare a symbolic variable of the type \typesym. We also introduce \symexp to capture symbolic expressions over these symbolic variables. 
% 
% We define a subtyping relation \(\sqsubset\) for the basic types in \oldtool, organized as a lattice (Fig.~\ref{fig:subtyping}). An expression is type-checked to ensure it has a type other than \(\top\) or \(\bot\) (\S~\ref{sec:typechecking}).
% 
By treating polyhedral and symbolic expressions as first-class members, we can define the operational semantics of constructs that can directly operate on these new types. These include (i) binary arithmetic operations like `+', (ii) \map, which applies a function to each constituent neuron or symbolic variable in a polyhedral or symbolic expression, and (iii) \traverse, which repeatedly applies \map to a polyhedral expression until a termination condition is met. The formal semantics (discussed in detail in \S~\ref{sec:operationalsemantics}) enable automated reasoning and verification. 

In \oldtool, a DNN certifier is specified through three main steps: (i) specifying the abstract shape for each neuron along with its soundness constraints, (ii) defining the abstract transformers for each DNN operation, and (iii) determining how constraints propagate through the network. We illustrate the different steps of specifying a DNN certifier in \oldtool using the DeepPoly specification in Fig.~\ref{fig:intro}. 

\begin{figure}
    \begin{lstlisting}
Def shape as (Real l, Real u, PolyExp L, PolyExp U) {(curr[l] <= curr) and (curr[u] >= curr) and (curr[L] <= curr) and (curr[U] >= curr)};

Func priority(Neuron n) = n[layer];
Func concretize_lower(Neuron n, Real c) = (c >= 0) ? (c * n[l]) : (c * n[u]);
Func concretize_upper(Neuron n, Real c) = (c >= 0) ? (c * n[u]) : (c * n[l]);
Func replace_lower(Neuron n, Real c) = (c >= 0) ? (c * n[L]) : (c * n[U]);
Func replace_upper(Neuron n, Real c) = (c >= 0) ? (c * n[U]) : (c * n[L]);
Func backsubs_lower(PolyExp e, Neuron n) = (e.traverse(backward,priority,false,replace_lower)  {e <= n}).map(concretize_lower);
Func backsubs_upper(PolyExp e, Neuron n) = (e.traverse(backward,priority,false,replace_upper)  {e >= n}).map(concretize_upper);

Transformer DeepPoly{
  Affine -> (backsubs_lower(prev.dot(curr[w]) + curr[b], curr),
             backsubs_upper(prev.dot(curr[w]) + curr[b], curr), 
             prev.dot(curr[w]) + curr[b], 
             prev.dot(curr[w]) + curr[b]);
  Relu -> prev[l] > 0 ? 
            (prev[l], prev[u], prev, prev) : 
            (prev[u] < 0 ? 
                (0, 0, 0, 0) : 
                (0, prev[u], 0, ((prev[u] / (prev[u] - prev[l])) * prev) - ((prev[u] * prev[l]) / (prev[u] - prev[l]))));
}

Flow(forward, -priority, false, DeepPoly);\end{lstlisting}
    \caption{DeepPoly specification in \oldtool}
    \label{fig:intro} 
\end{figure}

\subsubsection{Abstract Domain. }
The specification of a DNN certifier starts by defining the abstract domain used by the certifier (Line 1 of Fig.~\ref{fig:intro}). In \oldtool, this is done by defining the abstract shape ($\abshape$) associated with each neuron and the constraints defining the over-approximation-based soundness condition ($\prop$). These are specified for the \curr neuron, which serves as a syntactic placeholder for all neurons in the DNN. For example, the DeepPoly abstract shape and its constraints can be defined in \oldtool as illustrated in Fig.~\ref{fig:intro}, where \(l, u, L, U\) are user-defined members of the abstract shape, accessed via square bracket notation (\curr[·]). The DeepPoly soundness condition is encoded as: $(l\leq n) \wedge (u\geq n) \wedge (L\leq n) \wedge (U\geq n$).

We formalize the syntax for \oldtool (\S~\ref{sec:syntax}), allowing the users to define arbitrary abstract shapes. For instance, abstract domains can combine polyhedral and novel symbolic expressions. Symbolic variables (\(\noise\)) are subject to default constraints, \( -1 \leq \noise_i \leq 1 \), defining multi-dimensional polyhedra. The constraint \(\curr \ \embed \ \curr[Z]\) indicates that \curr is embedded in the polyhedron defined by \(\curr[Z]\), meaning there exists an assignment to the symbolic variables in \(\curr[Z]\) such that \(\curr = \curr[Z]\):
\begin{lstlisting}[numbers=none]
Def shape as (Real l, Real u, PolyExp L, PolyExp U, SymExp Z) {curr[l] <= curr, curr[u] >= curr, curr[L] <= curr, curr[U] >= curr, curr <> curr[Z]};  
\end{lstlisting}

\subsubsection{Abstract Transformers. }
\label{sec:overviewtransformers}
After defining the abstract domain, the second step is to specify the abstract transformers for different DNN operations. In Fig.~\ref{fig:intro}, lines 2-8 show the user-defined functions used within the transformer definitions in lines 9-19 within the \transformer construct. The implicit inputs to the \transformer construct are \curr, representing the current neuron, and \prev, representing the previous neurons. \prev is a list for DNN operations with multiple inputs, like \affine, and a single neuron in case of operations with a single input, like \relu. The transformer for each DNN operation specifies the computations for updating the four fields of the abstract shape: \(l\), \(u\), \(L\), and \(U\). The transformers for \affine and \relu operations are shown in Fig.~\ref{fig:intro} in lines 10 and 14 respectively. Using the semantics of the \oldtool constructs, we show how the DeepPoly specification in Fig.~\ref{fig:intro} simulates the mathematical logic of DeepPoly (explained in \S~\ref{sec:background}). The \oldtool semantics also allow us to explore variants of DeepPoly.  

In the DeepPoly \affine transformer, the polyhedral bounds ($L$ and $U$) are given by \prev.\dott(\curr[w]) + \curr[b]. There are many ways to compute the concrete lower $l$ and upper bounds $u$. Consider \texttt{\footnotesize{concretize\_lower}} and \texttt{\footnotesize{replace\_lower}} functions from Fig.~\ref{fig:intro} that respectively replace a neuron with its lower or upper concrete and polyhedral bounds based on its coefficient. We can compute the lower concrete bound for \curr, by applying the \texttt{\footnotesize{concretize\_lower}} to all the neurons in the lower polyhedral expression, i.e., (\prev.\dott(\curr[w]) + \curr[b]).\map(\texttt{\footnotesize{concretize\_lower}}). We can compute a more precise polyhedral lower bound by first applying \texttt{\footnotesize{replace\_lower}} to each constituent neuron, i.e., (\prev.\dott(\curr[w]) + \curr[b]).\map(\texttt{\footnotesize{replace\_lower}}). We can repeat this several times, following which, we can apply \texttt{\footnotesize{concretize\_lower}} to concretize the bound. In the standard implementation, the number of applications of \texttt{\footnotesize{replace\_lower}} is unknown because it is applied until the polyhedral bound only contains neurons from the input layer of the DNN. Although this is precise, it might be costly to perform this computation until the input layer is reached. So, custom stopping criteria can be decided, balancing the tradeoff between precision and cost. Note that the order in which the neurons are substituted with their bounds also impacts the output's precision. 

To specify arbitrary graph traversals succinctly, we provide the \traverse construct, which decouples the stopping criterion from the neuron traversal order. \traverse operates on polyhedral expressions and takes as input the direction of traversal and three functions---a user-defined stopping function, a priority function over neurons specifying the order of traversal and a neuron replacement function. In each step, \traverse applies the priority function to each constituent neuron in the polyhedral expression. Then, it applies the neuron replacement function to each constituent neuron with the highest priority among the neurons on which the stopping condition evaluates to $\false$. The outputs are then summed up to generate a new polyhedral expression. This process continues until the stopping condition is $\true$ on all the constituent neurons or all the neurons are in the input or output layer depending on the traversal order. We can use \traverse to specify the backsubstitution step and hence the DeepPoly \affine transformer as shown in Fig. ~\ref{fig:intro}.

\subsubsection{Flow of Constraints. }
\label{sec:flow}
Existing DNN certifiers propagate constraints from the input to the output layer or in reverse~\cite{forwardbackward, 10.1145/3428253, inbook}. Further, the order in which abstract shapes of neurons are computed impacts analysis precision. In \oldtool, the specification of the order of application is decoupled from the actual transformer specification, so the soundness verification of the transformer remains independent of the traversal order. We formalize this syntax and semantics to provide adjustable knobs to define custom flow orders, using a direction, priority function, and a stopping condition. The user specifies these arguments and the transformer using the \flow construct, as demonstrated in Fig.~\ref{fig:intro}, Line 20, for the DeepPoly certifier. This code assigns higher priority to lower-layer neurons, resulting in a BFS traversal. The stopping function is set to $\cfkeywords{false}$, stopping only when reaching the output layer. We verify the soundness of all specified transformers in the \transformer construct. Based on the DNN operation, \flow applies the corresponding transformer, ensuring a composition of only sound transformers. 

\subsection{\cf: Automated Bounded Verification of the DNN Certifier }
To establish the soundness of a certifier, it is necessary to verify the soundness of each abstract transformer $f^\sharp$ w.r.t. its concrete counterpart $f$, i.e.,
\begin{equation}
\label{formula}
\forall a \in \mathcal{A} \cdot \forall c \in \mathcal{C} \cdot c \subseteq \gamma(a) \implies f(c) \subseteq \gamma(f^\sharp(a))
\end{equation}

Equation~\ref{formula} is universally quantified over both the abstract element \(a\) and the concrete element \(c\). The abstract element, a tuple of abstract shapes, over-approximates the values of neurons in the DNN, while the concrete element represents specific valuations for the neurons. Since the DNN architecture—its topology, number of neurons, and consequently the number of abstract shapes—can vary, the universal quantification in equation~\ref{formula} presents a challenge for verification. 

So, we introduce the concept of a \textit{Symbolic DNN} to represent an arbitrary DNN and the corresponding abstract shapes symbolically. The symbolic DNN is an abstract neural network representing all subgraphs of any arbitrary DNN on which the specified transformer can be applied. It consists of symbolic values representing only the necessary neurons for executing the transformer specification. So, verifying the soundness of the specified transformer on a finite symbolic DNN is sufficient to prove its soundness on an arbitrarily large DNN with any topology. 

The symbolic DNN is initialized only with \curr and \prev, along with their abstract shapes so the specified abstract transformer can be symbolically executed. However, in some cases, the symbolic execution of a transformer requires more neurons to be initialized in the symbolic DNN. We do so by a \textit{Symbolic DNN Expansion}, where we statically analyze the transformer and only introduce neurons and their abstract shapes necessary for the symbolic execution. We explain these steps using an example in \S~\ref{sec:overviewgraphcreation}, \S~\ref{sec:overviewgraphexpand}. After the creation and expansion steps, we have a symbolic representation of the DNN and corresponding abstract shapes sufficient for symbolic execution to generate the final verification query which can be off-loaded to an off-the-shelf SMT solver (\S~\ref{sec:overviewexecution}). 

To better illustrate these steps, we introduce a new DeepPoly transformer for \relu which has a better runtime than the original transformer but is slightly less precise. We then show the above-mentioned steps for the verification of the new transformer. As introduced in \S~\ref{sec:background}, the DeepPoly abstract shape consists of 4 fields---$l, u, L, U$, where $l, u$ are the concrete bounds and $L, U$ are the polyhedral bounds of the neuron. Consider the DeepPoly \relu transformer. It takes in as input the abstract shape of the \prev neuron and computes the new abstract shape for \curr neuron. It has 3 cases based on the values $\prev[l], \prev[u]$ of the input abstract shape - (i) $\prev[l]\geq 0$, (ii) $\prev[u] \leq 0$, and (iii) $\prev[l] < 0 < \prev[u]$. We focus only on the first case for illustration. In this case, the concrete bounds are set to the input concrete bounds, i.e., $\curr[l] \gets \prev[l]$ and $\curr[u] \gets \prev[u]$. Both the lower and upper polyhedral bounds are set to $\prev$, i.e., $\curr[L] \gets \prev$ and $\curr[U] \gets \prev$. In the new transformer for \relu, instead of setting the polyhedral bounds of \curr in terms of the neurons of the previous layer, i.e., \prev, we set them using the lower and upper polyhedral bounds of \prev, which are $\prev[L]$ and $\prev[U]$ respectively. In \oldtool, these polyhedral bounds can be computed using $\map(\texttt{\footnotesize{replace\_lower}})$ and $\map(\texttt{\footnotesize{replace\_upper}})$ respectively. The user-defined functions $\texttt{\footnotesize{replace\_lower}}$ and $\texttt{\footnotesize{replace\_upper}}$ replace a neuron with its lower or upper polyhedral bounds based on its coefficient. The \map construct applies a function to all neurons in a polyhedral expression. So, the expression for the upper polyhedral bound (and similarly for lower) can thus be written as $\expr \equiv \prev[U].\map(\texttt{\footnotesize{replace\_upper}})$. 

\begin{figure}
\resizebox{\textwidth}{!}{
\begin{tikzpicture}
    \begin{scope}[local bounding box=f1box]
        \node (curr) at (0,0) {$\curr$};
        \node[below=0cm of curr.center, rotate=270, right=0.01cm] (node2) {$\mapsto$};
        \node[below=0.1cm of node2.center] (node3) {$\sval_c$};
        \node[left=1.0cm of curr] (prev) {$\prev$};
        \node[below=0cm of prev.center, rotate=270, right=0.01cm] (node4) {$\mapsto$};
        \node[below=0.1cm of node4.center] (node5) {$\sval_p$};
        \node[right=0.4cm of curr, yshift=0.9cm] (l) {$l \mapsto \sval_c^l$};
        \node[right=0.4cm of curr, yshift=0.3cm] (u) {$u \mapsto \sval_c^u$};
        \node[right=0.4cm of curr, yshift=-0.3cm] (L) {$L \mapsto \sval_c^L$};
        \node[right=0.4cm of curr, yshift=-0.9cm] (U) {$U \mapsto \sval_c^U$};
        \node[left=0.4cm of prev, yshift=0.9cm] (pl) {$\sval_p^l \mapsfrom l$};
        \node[left=0.4cm of prev, yshift=0.3cm] (pu) {$\sval_p^u \mapsfrom u$};
        \node[left=0.4cm of prev, yshift=-0.3cm] (pL) {$\sval_p^L \mapsfrom L$};
        \node[left=0.4cm of prev, yshift=-0.9cm] (pU) {$\sval_p^U \mapsfrom U$};

        \node[below] at (current bounding box.south) {};
        \node[below] at (current bounding box.south) {};
        \node[below] at (current bounding box.south) {$\prop(\curr) = (\sval_c^l \leq \sval_c \leq \sval_c^u) \wedge (\sval_c^L \leq \sval_c \leq \sval_c^U)$};
        \node[below] at (current bounding box.south) {$\prop(\prev) =  (\sval_p^l \leq \sval_p \leq \sval_p^u) \wedge (\sval_p^L \leq \sval_p \leq \sval_p^U) $};
        \node[below] at (current bounding box.south) {$\cc_\eta =  (\sval_p \leq 0 \implies \sval_c = 0) \wedge (\sval_p > 0 \implies \sval_c = \sval_p)$};
        \node[below] at (current bounding box.south) {$\cc = \prop(\curr) \wedge \prop(\prev) \wedge \cc_\eta$};
        \node[below, yshift=-0.3cm] at (f1box.south) {(a)};
        \draw[dotted] (curr.east) -- (l.west);
        \draw[dotted] (curr.east) -- (u.west);
        \draw[dotted] (curr.east) -- (L.west);
        \draw[dotted] (curr.east) -- (U.west);
    
        \draw[dotted] (prev.west) -- (pl.east);
        \draw[dotted] (prev.west) -- (pu.east);
        \draw[dotted] (prev.west) -- (pL.east);
        \draw[dotted] (prev.west) -- (pU.east);
        
        \draw[->] (prev.east) -- (curr.west) node[midway, above] {\relu};
        \node[draw, above=of prev] {$\hh_S$};
    \end{scope}

    \begin{scope}[shift={(11.0cm,1.0cm)},local bounding box=f2box]
        \node (curr) at (0,0) {$\curr$};
        \node[below=0cm of curr.center, rotate=270, right=0.01cm] (node2) {$\mapsto$};
        \node[below=0.1cm of node2.center] (node3) {$\sval_c$};
        \node[left=1.0cm of curr] (prev) {$\prev$};
        \node[below=0cm of prev.center, rotate=270, right=0.01cm] (node4) {$\mapsto$};
        \node[below=0.1cm of node4.center] (node5) {$\sval_p$};
        \node[right=0.4cm of curr, yshift=0.9cm] (l) {$l \mapsto \sval_c^l$};
        \node[right=0.4cm of curr, yshift=0.3cm] (u) {$u \mapsto \sval_c^u$};
        \node[right=0.4cm of curr, yshift=-0.3cm] (L) {$L \mapsto \sval_c^L$};
        \node[right=0.4cm of curr, yshift=-0.9cm] (U) {$U \mapsto \sval_c^U$};
    
        \node[left=0.4cm of prev, yshift=0.9cm] (pl) {$\sval_p^l \mapsfrom l$};
        \node[left=0.4cm of prev, yshift=0.3cm] (pu) {$\sval_p^u \mapsfrom u$};
        \node[left=0.4cm of prev, yshift=-0.3cm] (pL) {$\sval_p^L \mapsfrom L$};
        \node[left=0.4cm of prev, yshift=-0.9cm] (pU) {$\sval_r^1 + \sval_r^2*\textcolor{red}{\bm{\sval_{n_1}}} +  \sval_r^3*\textcolor{red}{\bm{\sval_{n_2}}} \mapsfrom U$};
    
        \node[draw, above=of pU, yshift=-0.2cm] {$\hh'_S$};
        \begin{scope}[shift={(-2cm,-2.5cm)},local bounding box=f3box]
            \node (n2) at (0,0) {$n_2$};
            \node[below=0cm of n2.center, rotate=270, right=0.01cm] (n2map) {$\mapsto$};
            \node[below=0.1cm of n2map.center] (n2var) {$\textcolor{red}{\bm{\sval_{n_2}}}$};
            \node[left=2.5cm of n2] (n1) {$n_1$};
            \node[below=0cm of n1.center, rotate=270, right=0.01cm] (n1map) {$\mapsto$};
            \node[below=0.1cm of n1map.center] (n1var) {$\textcolor{red}{\bm{\sval_{n_1}}}$};

            \node[left=0.4cm of n1, yshift=0.9cm] (n1l) {$\sval_{n_1}^l \mapsfrom l$};
            \node[left=0.4cm of n1, yshift=0.3cm] (n1u) {$\sval_{n_1}^u \mapsfrom u$};
            \node[left=0.4cm of n1, yshift=-0.3cm] (n1L) {$\sval_{n_1}^L \mapsfrom L$};
            \node[left=0.4cm of n1, yshift=-0.9cm] (n1U) {$\sval_{n_1}^U \mapsfrom U$};

            \node[right=0.4cm of n2, yshift=0.9cm] (n2l) {$l \mapsto \sval_{n_2}^l$};
            \node[right=0.4cm of n2, yshift=0.3cm] (n2u) {$u \mapsto \sval_{n_2}^u$};
            \node[right=0.4cm of n2, yshift=-0.3cm] (n2L) {$L \mapsto \sval_{n_2}^L$};
            \node[right=0.4cm of n2, yshift=-0.9cm] (n2U) {$U \mapsto \sval_{n_2}^U$};
    
            \draw[dotted] (n2.east) -- (n2l.west);
            \draw[dotted] (n2.east) -- (n2u.west);
            \draw[dotted] (n2.east) -- (n2L.west);
            \draw[dotted] (n2.east) -- (n2U.west);
        
            \draw[dotted] (n1.west) -- (n1l.east);
            \draw[dotted] (n1.west) -- (n1u.east);
            \draw[dotted] (n1.west) -- (n1L.east);
            \draw[dotted] (n1.west) -- (n1U.east);

            \draw[red, dotted, <->] ($(pU.west) - (-1.75,0.3)$) -- ($(n1.north)$);
            \draw[red, dotted, <->] ($(pU.west) - (-3.5,0.2)$) -- ($(n2.north west)$);
        \end{scope}
    
        \node[below] at (f2box.south) {};
        \node[below] at (f2box.south) {};
        \node[below] at (f2box.south) {$\prop(\prev) = (\sval_p^l \leq \sval_p \leq \sval_p^u) \wedge (\sval_p^L \leq \sval_p \leq (\sval_r^1 + \sval_r^2*\sval_{n_1} +  \sval_r^3*\sval_{n_2}))$};
        \node[below] at (f2box.south) {$\cc = \prop(\curr) \wedge \prop(\prev) \wedge \cc_\eta \wedge \prop(n_1) \wedge \prop(n_2)$};
        \node[below, yshift=-0.23cm] at (f2box.south) {(b)};
        \draw[dotted] (curr.east) -- (l.west);
        \draw[dotted] (curr.east) -- (u.west);
        \draw[dotted] (curr.east) -- (L.west);
        \draw[dotted] (curr.east) -- (U.west);
    
        \draw[dotted] (prev.west) -- (pl.east);
        \draw[dotted] (prev.west) -- (pu.east);
        \draw[dotted] (prev.west) -- (pL.east);
        \draw[dotted] (prev.west) -- (pU.east);
        
        \draw[->] (prev.east) -- (curr.west) node[midway, above] {\relu};
    \end{scope}
    
    \node at ($(f1box.east)!0.45!(f2box.west)$) {\mbox{\Huge $ \leadsto$ }};
\end{tikzpicture}
}
% \vspace{-3mm}
\caption{Symbolic DNN creation and expansion for DeepPoly. $\prop(n) \equiv (l \leq n \leq u) \wedge (L \leq n \leq U)$
}
% \vspace{-3mm}
    \label{fig:symbolicdnn}
\end{figure}
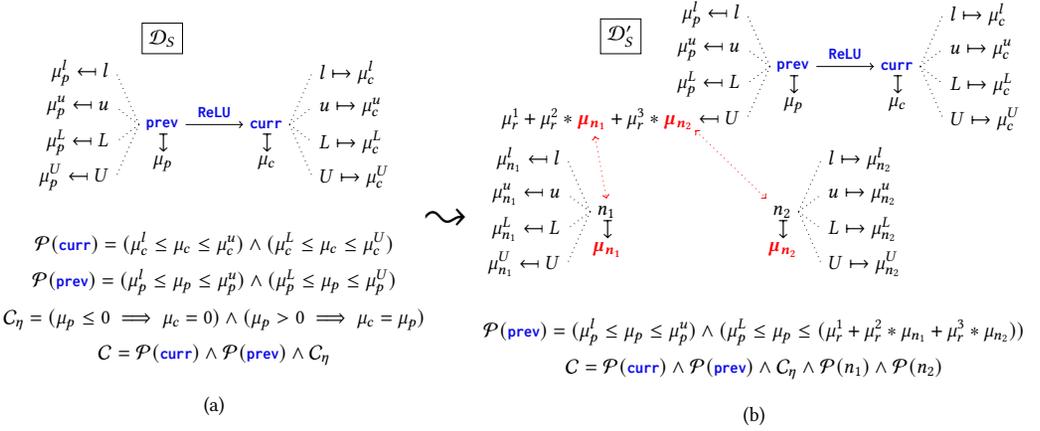

\subsubsection{Symbolic DNN Creation}
\label{sec:overviewgraphcreation}
For each DNN operation $\eta$ (e.g., \relu in this case), given the abstract transformer, we create a symbolic DNN (Fig.~\ref{fig:symbolicdnn}a) with neurons representing \prev and \curr that are respectively the input and output of $\eta$. These neurons are associated with symbolic variables $\mu_p$ and $\mu_c$ representing their valuations respectively. The edges are only between \curr and \prev neurons representing the $\relu$ operation. Here, \prev represents only a single neuron. However, for DNN operations like \affine, the symbolic DNN is initialized with $\prev_1, \cdots \prev_k$ where $k$ is a sufficiently large parameter. We do not make any assumptions about the DNN's architecture, resulting in the absence of any extra neurons or edges between $\prev_i$ and $\prev_j$ and thus, no additional constraints over symbolic variables. Fig.~\ref{fig:symbolicdnn}a shows the symbolic DNN for the \relu transformer for the DeepPoly certifier. The soundness property $\prop$ for this certifier is that for each neuron $n$, $(l \leq n \leq u) \wedge (L \leq n \leq U)$. Each shape member and metadata associated with these neurons is also initialized with fresh symbolic variables. For instance, $\mu_p^l$, $\mu_p^u$ represent the lower and upper concrete bounds respectively, and $\mu_p^L$, $\mu_p^U$ are the lower and upper polyhedral bounds of \prev. The symbolic DNN is associated with constraints representing the edge relations between the neurons and the soundness property assumptions before applying the transformer. In Fig.~\ref{fig:symbolicdnn}a, these constraints are presented as $\cc = \prop(\curr) \wedge \prop(\prev) \wedge \cc_\eta$, where $\prop(\curr)$ and $\prop(\prev)$ represent the soundness property over \curr and \prev respectively. $\cc_\eta$ represents the semantics of the $\relu$ operation, i.e., $\curr = 0$ when $\prev < 0$, and $\curr = \prev$ otherwise. The formal definition and details of a symbolic DNN can be found in \S~\ref{sec:graphcreation}.

\subsubsection{Symbolic DNN Expansion}
\label{sec:overviewgraphexpand}
Initially, polyhedral bounds such as \(\prev[L]\) and \(\prev[U]\) are represented as single symbolic variables. However, for operations like \map, the polyhedral values need to be expanded into expressions of the form \(x_0 + x_1\cdot n_1 + x_2 \cdot n_2 \dots\), where \(x_i\) are coefficients and \(n_i\) are neurons. This is necessary for the semantics of \map, as functions like \texttt{\footnotesize{replace\_upper}} are applied to each constituent neuron and coefficient within the polyhedral expression. For example, consider \(\expr \equiv \prev[U].\map(\texttt{\footnotesize{replace\_upper}})\). Initially, \(\prev[U]\) is a single symbolic variable \(\sval_p^U\) (Fig.~\ref{fig:symbolicdnn}a), but to symbolically evaluate \(\expr\), the expression must be expanded into its constituent terms, e.g., \(\sval_r^1 + \sval_r^2 \cdot \sval_{n_1} + \sval_r^3 \cdot \sval_{n_2}\), where \(\sval_r^1, \sval_r^2,\) and \(\sval_r^3\) are symbolic coefficients, and \(\sval_{n_1}, \sval_{n_2}\) represent new neurons. In this case, the expansion introduces two neurons, but in general, the number of neurons \(n_{sym}\) is a sufficiently large parameter. No architectural assumptions are made about the new neurons, but they must be added to the symbolic DNN along with their metadata, and the soundness property \(\prop\) must be assumed for them. Fig.~\ref{fig:symbolicdnn}b shows the updated symbolic DNN after one expansion step. Similarly, before executing the expression for the polyhedral lower bound \(\expr \equiv \prev[L].\map(\texttt{\footnotesize{replace\_lower}})\), \(\sval_p^L\) must also be expanded. This expansion is performed through static analysis of the transformer. Once the symbolic DNN is expanded, the associated constraints \(\cc\) are updated to reflect the new neurons and the expanded values. Detailed steps for Symbolic DNN Expansion are in \S~\ref{sec:graphexpansion}.

\subsubsection{Generating the Verification Query}
\label{sec:overviewexecution}
\begin{figure}
% \fbox{
    \begin{minipage}{0.29\textwidth}
        \centering
        \begin{tikzpicture}[scale=0.85]
        \draw[dashed] (-0,0.2) ellipse (0.8cm and 1.8cm);
        \fill (0, 1) circle (2pt) node[left] {\scriptsize{$\gamma(f^\sharp(a))$}};
        \fill (0, 0.5) circle (2pt) node[left] {\scriptsize$f(c)$};
        \fill (0.3, -0.5) circle (2pt) node[above right] {\scriptsize$\gamma(a)$};
        \fill (0, -1.0) circle (2pt) node[left] {\scriptsize$c$};
        \draw[->, color=diagramcolor, thick] (-0, -0.9) to node[left] {\scriptsize$2$} (-0, 0.4);
        \begin{scope}[shift={(2.5cm,0cm)},local bounding box=f1box]
            \draw[dashed] (0,0.2) ellipse (0.8cm and 1.8cm);
            \fill (0, 1) circle (2pt) node[right] {\scriptsize$f^\sharp(a)$};
            \fill (0, -0.5) circle (2pt) node[right] {\scriptsize$a$};
            \draw[->, color=diagramcolor, thick] (0, -0.4) to node[right] {\scriptsize$3$} (0, 0.9);
        \end{scope}
        \draw[->, color=diagramcolor, thick] (2.4, -0.5) to node[below] {\scriptsize$1$} (0.4, -0.5);
        \draw[->, color=diagramcolor, thick] (2.4, 1) to node[above] {\scriptsize$4$} (0.1, 1);
    \end{tikzpicture}
    \caption{Soundness of $f^\sharp$ w.r.t. $f$}
    \label{fig:diagram}
    \vspace{2em}
    \end{minipage}\hfill
    % \vrule
    \begin{minipage}{0.65\textwidth}
    % \resizebox{\textwidth}{!}{
        {\footnotesize 
        \begin{align*}
    &c \subseteq \gamma(a) \stackrel{?}{\implies} f(c) \subseteq \gamma(f^\sharp(a)) \\
    &\equiv c \subseteq \gamma(a) \stackrel{?}{\implies} \Big( (\cdot\cdot, p, c, \cdot\cdot) \in f(c)   \implies   (\cdot \cdot, p, c, \cdot\cdot) \in \gamma(f^\sharp(a)) \Big) \\
    &\equiv \Big( c \subseteq \gamma(a) \wedge  (\cdot\cdot, p, c, \cdot\cdot) \in f(c) \Big)  \stackrel{?}{\implies} \Big(   (\cdot \cdot, p, c, \cdot\cdot) \in \gamma(f^\sharp(a)) \Big) \\
    &\equiv  \ \Big( \prop(\abshape_{c}, c) \wedge \prop(\abshape_{p},p) \wedge c=f(p) \Big) \stackrel{?}{\implies} \Big( a'=f^\sharp(a) \implies \big(\prop(\abshape'_{c}, c)\big) \Big) \\
    &\equiv  \ \Big( \varphi_0 \wedge \varphi_1 \wedge \varphi_2 \Big) \stackrel{?}{\implies} \Big( \varphi_3 \implies \varphi_4 \Big) \\
    &\equiv  \ \Big( \varphi_0 \wedge \varphi_1 \wedge \varphi_2 \wedge \varphi_3 \Big) \stackrel{?}{\implies} \varphi_4 
        \end{align*}
        }
    \caption{SMT query for Soundness of $f^\sharp$ w.r.t. $f$}
            \label{eq:soundred}
    \vspace{2em}
    \end{minipage}
    \begin{minipage}{\textwidth}
    \captionof{table}{Generating SMT query for verifying one case of the \relu transformer for DeepPoly certifier.}
\label{table:diagram}
        \resizebox{\textwidth}{!}{
\scriptsize
\begin{tabular}{p{0.4\textwidth}|p{0.5\textwidth}}
\toprule
\textbf{Steps in Fig.~\ref{fig:diagram}} & \textbf{DeepPoly Translation for \relu Operation} \\
\midrule
Let $a = (\cdots, \abshape_{n_1}, \abshape_{n_2}, \abshape_{p}, \abshape_{c}, \cdots)$ & Declare fresh symbolic variables for all neurons, metadata, and shape fields in the expanded symbolic DNN \\
\midrule
(\textcolor{diagramcolor}{1}) Let $(\cdots, n_1, n_2, p, c, \cdots) = \gamma(a)$, $c \subseteq \gamma(a)$ & $\varphi_1 \equiv \prop(\abshape_{n_1},n_1) \wedge \prop(\abshape_{n_2},n_2) \wedge \prop(\abshape_{p},p) \wedge \prop(\abshape_{c},c)$  \\
\midrule
(\textcolor{diagramcolor}{2}) Apply $f$ to $c$ & $\varphi_2 \equiv c = f_r(p)$ \\
\midrule
(\textcolor{diagramcolor}{3}) Let $a' = f^\sharp(a)$ & Declare new symbolic variables for output: \\ & $\varphi_3 \equiv (a' == (\cdots, \abshape_{n_1}, \abshape_{n_2}, \abshape_{p}, \abshape'_{c}, \cdots))$ \\
\midrule
(\textcolor{diagramcolor}{4}) Apply $\gamma$ to $a'$ & $\varphi_4 \equiv \prop(\abshape'_{c},c)$ \\
\bottomrule
\end{tabular}
}
    \end{minipage}
    % \caption{Caption for the entire figure}
    % \label{fig:entire_figure}
% }
        % \vspace{-1.5em}
\end{figure}

Once the symbolic DNN is expanded, we can translate the soundness check of a DNN certifier (Formula~\ref{formula}) into a closed-form SMT query. In the case of \relu, the symbolic DNN corresponds to an abstract element \(a\), a tuple of abstract shapes \(a = (\cdots, \abshape_{n_1}, \abshape_{n_2}, \abshape_p, \abshape_c, \cdots)\), where \(\abshape_{n_1}, \abshape_{n_2}, \abshape_p,\) and \(\abshape_c\) represent the abstract shapes of \(n_1\), \(n_2\), \prev, and \curr, respectively. As shown in Fig.~\ref{fig:diagram}, the verification process consists of two steps (\textcolor{diagramcolor}{1}, \textcolor{diagramcolor}{2}) to compute \(f(c)\), and two steps (\textcolor{diagramcolor}{3}, \textcolor{diagramcolor}{4}) to compute \(\gamma(f^\sharp(a))\), starting from \(a\). Table~\ref{table:diagram} outlines the computations for each step, with an example for the first case of the DeepPoly \relu transformer (\(\varphi_0 \equiv \prev[l] \geq 0\)).

\begin{enumerate}
    \item[\textcolor{diagramcolor}{1}] \(c \subseteq \gamma(a)\), representing the set of neuron value tuples satisfying \(\prop\). This is denoted by \(\varphi_1\).
    \item[\textcolor{diagramcolor}{2}] Applying \(f\) to \prev to compute \curr. Any \(v \in f(c)\), with \(v = (\cdots, p, c, \cdots)\), must satisfy \(\varphi_2 \equiv c=f(p)\), where, in the case of \relu, \(f_r\) is defined as $f_r(p) = \max(p, 0)$.
    \item[\textcolor{diagramcolor}{3}] Applying \(f^\sharp\) to \(a\), updating only the abstract shape of \curr: \(a' = (\cdots, \abshape_{n_1}, \abshape_{n_2}, \abshape_p, \abshape'_c, \cdots)\). The new shape fields \(l\), \(u\), \(L\), and \(U\) are computed symbolically. For example, \(\curr[U]\) is set to \(\prev[U].\map(\texttt{\footnotesize{replace\_upper}})\).We start this computation by computing $\prev[U]$ as $\sval_r^1 + \sval_r^2* \sval_{n_1} + \sval_r^3* \sval_{n_2}$. Then we apply \texttt{\footnotesize{replace\_upper}} to each constituent summands to compute the final value as $\sval_r^1 + If(\sval_r^2 \geq 0, \sval_r^2*\sval_{n_1}^U, \sval_r^2*\sval_{n_1}^L) + If(\sval_r^3 \geq 0, \sval_r^3*\sval_{n_2}^U, \sval_r^3*\sval_{n_2}^L)$. Here, $If(c, l, r)$ is a Z3 construct. Similarly, the lower polyhedral bound is also computed.
    \item[\textcolor{diagramcolor}{4}] Applying \(\gamma\) to \(a'\) results in \(\varphi_4 \equiv \prop(\abshape'_c, c)\).
\end{enumerate}

\noindent
The verification reduces to checking if \((\varphi_0 \wedge  \varphi_1 \wedge \varphi_2 \wedge \varphi_3 ) \implies \varphi_4\), as illustrated in Fig.~\ref{eq:soundred}. More details on the symbolic semantics and the steps to generate the final query can be found in \S~\ref{sec:symexecution},~\ref{sec:smtverification}.

\subsubsection{Soundness and Completeness of \cf}
The target of the verification procedure is to ensure that if using the operational semantics of \oldtool, the abstract transformer is applied to any concrete DNN along with its abstract element that satisfies the specified property, the updated abstract element still maintains the over-approximation-based soundness. For this, \cf creates a symbolic DNN and executes the specified transformer using symbolic semantics to generate an SMT query. We prove that verifying the transformer using symbolic semantics over a symbolic DNN ensures the verification using operational semantics over any concrete DNN. We explain this in detail in \S~\ref{sec:verificationsoundness}.

\paragraph{Soundness}
We introduce the notion of a symbolic DNN over-approximating a concrete DNN and symbolic semantics over-approximating the operational semantics.  As a result, we use a bisimulation argument to prove that if the transformer is verified for a symbolic DNN, then it is also verified for all concrete DNNs that the symbolic DNN over-approximates. 

\paragraph{Completeness}
Symbolic execution is not complete for \traverse because it involves loops with input-dependent termination conditions. So, to verify programs using \traverse, we check the correctness and subsequently use the \textit{inductive invariant} provided by the programmer. We also provide a construct \solver in \oldtool that can be used for calls to external solvers. For example, finding the minimum value of an expression $e_1$ under some constraints $e_2$ can be encoded as $\solver(\minimize, e_1, e_2)$. Since we do not have access to the solver, instead of symbolically executing it, we use \textit{function contracts} to represent the output, i.e., a fresh variable $x$ is declared that represents the output. Under the conditions $\expr_2$, the output $x$ must be less than $\expr_1$, i.e., $e_2 \implies x \leq e_1$. Due to the invariants and contracts not being the strongest, the verification is not complete. However, it is complete for programs that do not use these constructs. 
\section{Formalising \oldtool}
\label{sec:constraintflow}
We formally develop the syntax, type-system, and operational semantics of \oldtool.

\subsection{Syntax}
\label{sec:syntax}
\subsubsection{Statements}
In \oldtool, a program $\Pi$ starts with the shape declaration ($d$) and is followed by a sequence of statements ($s$), i.e., $\Pi$ ::= $d \ ; \ s$. As shown in Fig.~\ref{fig:syntax}, statements include function definitions ($f$) - specified using \func construct, transformer definitions ($\theta$) - specified using \transformer construct, the flow of constraints - specified using \flow construct, and sequence of statements separated by $;$. The output of a function is an expression $\expr$, while the output of a transformer ($\theta_r$) is either a tuple of expressions $t \equiv (\expr_1, \cdots)$, where $\expr_i$ represents the output of each member of the abstract shape, or $(\expr \ ? \ \theta_{r_1} : \theta_{r_2})$, where $ \_ ? \_ : \_$ is the ternary operator.  

\setlength{\grammarindent}{10em}
\begin{figure}
    \centering
    \begin{grammar}
        <Expression> $\expr$ ::= $\constant$ | $\var$ | $\sym$ | $\expr_1 \ \oplus \ \expr_2$  | $\expr [ x ]$ | $f_c(\expr_1, \cdots)$ | $\var.\traverse(\dir, f_{c_1}, f_{c_2}, f_{c_3})\{\expr\}$ | $\expr.\map(f_c)$  | $\solver(\minimize, e_1, e_2)$ | $\cdots$
        
        <Shape-decl> $d$ ::= $\defshape \ (\types_1 \ x_1, \types_2 \ x_2, \cdots) \{\expr\}$  
        
        <Function-def> $f$ ::= $\func \ \var(\types _1 \ \var_1, \types_2 \ \var_2, \cdots) = e$
        
        <DNN-operation> $\eta$ ::= $\affine$ | $\relu$ | $\maxpool$ | $\dotprod$ | $\cfkeywords{Sigmoid}$ | $\cfkeywords{Tanh}$ | $\cdots$
        
        <Transformer-decl> $\theta_d$ ::= $\transformer \ \var$
        
        <Transformer-ret> $\theta_r$ ::= $(\expr_1, \expr_2, \cdots)$ | $(\expr \ ? \ \theta_{r_1} : \theta_{r_2})$
        
        <Transformer> $\theta$ ::= $\theta_d \ \{\eta_1 \rightarrow \theta_{r_1}; \eta_2 \rightarrow \theta_{r_2}; \cdots\}$
        
        <Statement> $s$ ::= $\flow(\dir, f_{c_1}, f_{c_2}, \theta_c)$ | $f$ | $\theta$ | $s_1 \ ; \ s_2$
        
        <Program> $\Pi$ ::= $d \ ; \ s$
    \end{grammar}
    \caption{A part of the BNF grammar for \oldtool. The complete grammar can be found in Appendix A}
    \label{fig:syntax}
\end{figure}

\subsubsection{Expressions}
As shown in Fig.~\ref{fig:syntax}, apart from constants ($\constant$) and variables ($\var$), \sym is also an expression, which can be used to declare a new symbolic variable $\noise$. For every symbolic variable, we implicitly add the constraint most commonly used in DNN certifiers, i.e., $-1 \leq \noise \leq 1$.  We allow the standard binary operators, list operators, function calls, etc. Some operators like `+' are overloaded to also apply to polyhedral and symbolic expressions. Each neuron is associated with its abstract shape and metadata, which can be accessed by square bracket notation, for instance - $\curr[l]$. The \map construct takes in a function name and an expression of type $\polyexp$ (or $\symexp$). The function is applied to all the constituent neurons (or symbolic variables) and adds the results to give a new polyhedral (or symbolic) expression. \traverse is applied to a variable ($\var$) representing a polyhedral expression, and takes in the direction of traversal ($\dir$), a priority function ($f_{c_1}$), a stopping function ($f_{c_2}$), a replacement function $(f_{c_3})$, and a user-defined invariant ($\expr$), needed for verification. We also provide the \solver construct in \cf, which allows calls to external solvers. For example, minimizing an expression \(e_1\) under constraints \(e_2\) can be expressed as \(\solver(\minimize, e_1, e_2)\).

\subsubsection{Specifying Constraints}
To verify a DNN certifier, one must provide the soundness property ($\prop$) along the abstract shape. Also, for \traverse, the programmer must provide an invariant. To define constraints in \oldtool, the operators $==, \le, \ge$ are overloaded and can be used to compare polyhedral expressions as well as \oldtool symbolic expressions. For example, the constraint $n_1 + n_2 \leq n_3$ means that for all possible values of $n_1, n_2,$ and $n_3$ during concrete execution, the constraint must be true. Further, the construct $\embed$ can be used to define constraints such as $e_1 \ \embed \ e_2$, where $e_1$ is a polyhedral expression, and $e_2$ is a symbolic expression. Mathematically, the constraint $n_1 + n_2 \ \embed \ \sym_1 + 2 \ \sym_2$ means $\forall n_1, n_2 \cdot \exists \ \sym_1, \sym_2 \in [-1, 1], s.t., n_1 + n_2 = \sym_1 + 2 \ \sym_2$. In \oldtool, the constraints are expressions of type $\ct$. The binary operators like $\wedge, \vee$ are also overloaded. For example, if $e_1$ and $e_2$ are of the type $\ct$, then $e_1 \wedge e_2$ is a constraint of type $\ct$.

\subsection{Type Checking}
\label{sec:typechecking}
\begin{figure}
    \begin{minipage}{0.3\textwidth}
        \centering
        \resizebox{\textwidth}{!}{
\begin{tikzpicture}

% Define the lattice elements
\node (1) at (0,0) {$\top$};
\node (2) at (-2.25,-1.125) {$\ct$};
\node (3) at (0,-1.125) {$\symexp$};
\node (4) at (2.25,-1.125) {$\polyexp$};
\node (5) at (1.125,-2.25) {$\float$};
\node (6) at (-2.25,-3.125) {$\bool$};
\node (7) at (-1.125,-3.125) {$\typenoise$};
\node (8) at (1.125,-3.125) {$\intt$};
\node (9) at (2.25,-3.125) {$\typeneuron$};
\node (10) at (0,-4.5) {$\bot$};

% Draw the lattice relations
\draw[->] (10) -- (6);
\draw[->] (10) -- (7);
\draw[->] (10) -- (8);
\draw[->] (10) -- (9);
\draw[->] (2) -- (1);
\draw[->] (3) -- (1);
\draw[->] (4) -- (1);
\draw[->] (5) -- (3);
\draw[->] (5) -- (4);
\draw[->] (6) -- (2);
\draw[->] (7) -- (3);
\draw[->] (8) -- (5);
\draw[->] (9) -- (4);

\end{tikzpicture}
}
        \caption{Subtyping Lattice}
        \label{fig:subtyping}
    \end{minipage}
    \hspace{2mm}
    % \vrule
    \hspace{1mm}
    \begin{minipage}{0.66\textwidth}
        \fbox{$\footnotesize \cdot \ \vdash \Pi \ : \ \Gamma, \tau_s \qquad \cdot \ \vdash d \ : \ \tau_s \qquad \Gamma, \tau_s \vdash s \ : \ \Gamma'$}
        \centering
        \resizebox{\textwidth}{!}{
        $
            \begin{array}{c}
            \inferrule*[lab={\textsc{T-program}}]
            {
            \cdot \vdash d : \tau_s \quad
            \cdot, \tau_s \vdash s : \Gamma 
            }
            {
            \cdot \ \vdash d \ ; \ s \ : \ \Gamma, \tau_s
            }
            \hspace{0.5cm}
            \inferrule*[lab={\textsc{T-shape}}]
            {
            \tau_s = [x_1 \mapsto \types_1, \cdots x_n \mapsto \types_n]  \\\\
            \forall i \in [n], \bot \sqsubset \types_i \sqsubset \top
            }
            {
            \cdot \ \vdash \defshape \ (\types_1 \ x_1, \cdots, \types_n \ x_n) : \tau_s
            }
            \\\\
            \inferrule*[lab={\textsc{T-transformer}}]
            {
            \var \not\in \Gamma \quad 
            \Gamma' = \Gamma[\curr \mapsto \typeneuron)][\prev \mapsto \overline{\typeneuron}] \\\\
            \forall i \in [m], \Gamma', \tau_s \vdash \theta_{r_i} : (\types^1_i, \cdots, \types^n_i) \\\\
            \forall j \in [n], \types^j = \sqcup_{i \in [m]}(\types^j_i) \quad  
            \forall j \in [n], \types^j \sqsubseteq \tau^j_s
            }
            {
            \Gamma, \tau_s \vdash \transformer \ \var = \{\eta_1 : \theta_{r_1}, \cdots\} : \qquad \qquad \\\\
            \qquad \qquad \Gamma[x \mapsto (\typeneuron \times \overline{\typeneuron}) \rightarrow (\types^1, \cdots)]
            } 
            \end{array}
        $
        }
    \caption{Type-checking Rules ($\ttt$)}
    \label{fig:typecheckingmain}
    \end{minipage}
\end{figure}

We define a subtyping relation \(\sqsubset\) for the basic types in \oldtool, organized as a lattice (Fig.~\ref{fig:subtyping}). An expression is type-checked to ensure that it has a type other than \(\top\) or \(\bot\). Type-checking involves recording the types of the members of the abstract shape in a record \(\tau_s\) (referred to as \textsc{T-shape} in Fig.~\ref{fig:typecheckingmain}). A static environment \(\Gamma\) maps program identifiers to their respective types, and the tuple \((\Gamma, \tau_s)\) forms the typing context in \oldtool (\textsc{T-program}). We utilize standard function types of the form \(\types_1 \times \cdots \times \types_n \rightarrow \types\), where \(\types_i\) are the argument types and \(\types\) is the return type. The \transformer construct encapsulates the abstract transformers associated with each DNN operation. In rule \textsc{T-transformer}, the output of an abstract transformer \(\theta_{r}\) is a tuple of expressions that undergo recursive type-checking to ensure consistency with \(\tau_s\). The implicit inputs to \transformer are \curr and \prev. For \(n\) members in the user-defined abstract shape and \(m\) DNN operations, the corresponding abstract transformers yield tuples of types \((\types^1_i, \cdots, \types^n_i)\). For each abstract shape element, we define the type \(\types^j = \sqcup_{i \in [m]} \types^j_i\). The transformer type checks if \(j \in [n]\) and \(\types^j_i \sqsubseteq \tau^j_s\), where \(\tau^j_s\) is the type of the \(j\)-th shape member. The type of \curr is \typeneuron, while the type of \prev depends on the DNN operation; for simplicity, we assume \prev is of type \(\overline{\typeneuron}\). If all abstract transformers in the \transformer construct pass type-checking, a new binding is created in \(\Gamma\) mapping the transformer name to the type \(\typeneuron \times \overline{\typeneuron} \rightarrow (\types^1, \cdots, \types^m)\). The detailed description of type-checking in \oldtool can be found in Appendix B.

\subsection{Operational Semantics}
\label{sec:operationalsemantics}
The input concrete DNN is represented as a record \(\hh_C\) that maps the metadata and abstract shape members of all neurons to their respective values. While executing statements in \oldtool, two stores are maintained: (i) \(\fstore\), which maps function names to their arguments and return expressions, and (ii) \(\tstore\), which maps transformer names to their definitions. The general form for the operational semantics of statements in \oldtool is given by: \(\langle s, \fstore, \tstore, \hh_C \rangle \Downarrow \fstore', \tstore', \hh'_C.\) Function definitions add entries to \(\fstore\), while transformer definitions add entries to \(\tstore\). The \flow construct applies transformer \(\theta_c\) to the neurons in the DNN \(\hh_C\), modifying it to \(\hh'_C\).

Each expression in \oldtool evaluates to a value (\(\val\)), with the formal definition of values provided in Appendix C. A record \(\store\) maps variables in \oldtool to concrete values. The general form for the operational semantics of expressions in \oldtool is: \(\langle \expr, \fstore, \store, \hh_C \rangle \Downarrow \val\), with most operations, including unary and binary, following their natural operational semantics.

\begin{figure}
    \begin{minipage}{\textwidth}
        \fbox{$\footnotesize \langle \Pi, \hh_C \rangle \Downarrow \hh_C' \qquad \langle s, \env \rangle \Downarrow \fstore', \tstore', \hh_C' \qquad \langle \expr, \context \rangle \Downarrow \val$}\\
        \centering
        $
        \begin{array}{c}
            \inferrule*[lab = \textsc{OP-map}]
            {
            \langle \expr, \context \rangle \Downarrow \constant_0 + \sum_{i=0}^{i=l} \constant_i \cdot \ver_i \\\\
            \forall i \in [l], \ \langle f_c(\ver_i, \constant_i), \context \rangle  \Downarrow \val_i 
            }
            {
            \langle \expr.\map(f_c), \context \rangle \Downarrow \constant_0 + \sum_{i=0}^{i=l} \val_i
            }
            \hspace{1cm}
            \inferrule*[lab = \textsc{OP-traverse-2}]
            {
            \vset' = \priority(\vset, f_{c_1}, \context) \quad 
            \val = \constant + \val_{\vset'} + \val_{\overline{\vset'}} \\\\
            \langle \val_{\vset'}.\map(f_{c_3}), \context \rangle \Downarrow \val' \\\\
            \val'' = \constant + \val' + \val_{\overline{\vset'}} \\\\
            \vset'' = \filter((\vset \setminus \vset') \cup \neigh(\vset', \dir), f_{c_2}, \context) \\\\
            \langle \val''.\traverse, \context, \vset'' \rangle \Downarrow \val'''
            }
            {
            \langle \val.\traverse(\dir, f_{c_1}, f_{c_2}, f_{c_3}), \context, \vset \rangle \Downarrow \val'''
            }
            \end{array}
        $
    \caption{Big-step Operational Semantics ($\ooo$) of \oldtool}
    \label{fig:operationalsemantics}
    \end{minipage}
\end{figure}

The operational semantics of \map (\textsc{OP-map} in Fig.~\ref{fig:operationalsemantics}) begins by recursively evaluating the input expression \(\expr\), yielding a polyhedral or symbolic expression denoted as \(\bval\). The input function \(f\) is then applied to each component of \(\bval\), resulting in individual outputs \(\val_i\) that are summed to produce the final output. For \traverse, the input expression \(\expr\) is first evaluated to yield a polyhedral value \(\val\). Then, an active vertex set \(\vset\) is established by retrieving constituent neurons from \(\val\)  and filtering out neurons that satisfy the stopping condition \(f_{c_2}\), i.e., \(\vset \gets \filter(\vertices(\val), f_{c_2}, \context)\). This set initializes \(\vset\) and is iterated upon until it is empty. In each iteration, shown in \textsc{OP-traverse-2} (Fig.~\ref{fig:operationalsemantics}), the priority function \(f_{c_1}\) is applied to each neuron in \(\vset\), selecting the highest-priority neurons: \(\vset' \gets \priority(\vset, f_{c_1}, \context)\). The value \(\val\) can be decomposed into three parts: a constant \(\constant\), the value associated with neurons in \(\vset'\), and the value for neurons not in \(\vset'\): \(\val = \constant + \val_{\vset'} + \val_{\overline{\vset'}}\). The replacement function \(f_{c_3}\) is applied only to \(\val_{\vset'}\), retaining the coefficients of the other neurons, resulting in a new polyhedral value: \(\val'' = \constant + \val' + \val_{\overline{\vset'}}\). The active set is updated by removing neurons from \(\vset'\) and adding their neighbors, filtered again to satisfy the stopping condition: \(\vset'' = \filter((\vset \setminus \vset') \cup \neigh(\vset', \dir), f_{c_2}, \context)\). This process continues until the final value is computed. More detailed operational semantics for \traverse and other constructs can be found in Appendix D.

\subsection{Type Soundness}
We demonstrate that if a program type-checks according to the rules of \oldtool, then applying the program according to operational semantics produces an updated abstract element for the input neural network (Theorem~\ref{thm:welltyped}). Lemmas~\ref{lemma:optype-checking} and \ref{lemma:statementtype-checking} establish that if an expression or statement type-checks, it will evaluate according to operational semantics, with the output type consistent with the type computed during type-checking.
Detailed proofs are in Appendix E.

\begin{lemma}
\label{lemma:optype-checking}
Given $(\Gamma, \tau_s)$ and $(\fstore, \store, \hh_C)$ with finite $\hh_C$ such that $(\fstore, \store, \hh_C)$ is consistent with $(\Gamma, \tau_s)$, if $\ \Gamma, \tau_s \vdash \expr : \types$ and $\bot \sqsubset \types \sqsubset \top$, then $\langle \expr, \fstore, \store, \hh_C \rangle \Downarrow \val$ and $\vdash \val : \types'$ s.t. $\types' \sqsubseteq \types$.
\end{lemma}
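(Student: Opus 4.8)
The plan is to proceed by structural induction on the typing derivation $\Gamma, \tau_s \vdash \expr : \types$, mirroring the case analysis on the BNF grammar for expressions in Fig.~\ref{fig:syntax}. The consistency hypothesis between $(\fstore, \store, \hh_C)$ and $(\Gamma, \tau_s)$ must be stated precisely first: it should say that every variable $x \in \dom(\Gamma)$ has $\store(x)$ (or the relevant entry of $\hh_C$ for neurons/shapes) a value of a type that is a subtype of $\Gamma(x)$, and that $\hh_C$ assigns to each neuron's shape members values consistent with $\tau_s$. For the base cases --- constants $\constant$, variables $\var$, and $\sym$ --- evaluation is immediate: constants step to themselves, variables are looked up in $\store$ (using consistency to get the subtyping on the resulting value), and $\sym$ allocates a fresh noise symbol of type $\typenoise$, which is $\sqsubseteq$ the declared type. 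For the inductive cases I would invoke the induction hypothesis on each subexpression to obtain both an evaluation and a type bound, then check that the corresponding operational rule's premises are discharged and that the result type is a subtype of the type assigned by the typing rule.

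The interesting inductive cases are the overloaded binary operators $\expr_1 \oplus \expr_2$, the shape-access $\expr[x]$, function calls $f_c(\expr_1, \cdots)$, $\expr.\map(f_c)$, and $\var.\traverse(\dir, f_{c_1}, f_{c_2}, f_{c_3})\{\expr\}$. For $\oplus$, the subtyping lattice in Fig.~\ref{fig:subtyping} is doing real work: when an argument has type, say, $\float$ but the join with $\typeneuron$ lands in $\polyexp$, the operational semantics must actually produce a polyhedral value, so I would rely on a helper observation that the arithmetic operations are defined on all pairs of types whose join is below $\top$, and that the produced value's type matches that join. For $\expr[x]$, consistency of $\hh_C$ with $\tau_s$ gives that the shape member $x$ is present and its value has the type recorded in $\tau_s$. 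For $\map$, I apply the IH to $\expr$ to learn it evaluates to $\constant_0 + \sum \constant_i \cdot \ver_i$ of type $\polyexp$ (or $\symexp$), then apply the IH to each call $f_c(\ver_i, \constant_i)$ --- here I need that the function store $\fstore$ is consistent, i.e. $f_c$'s body type-checks under the argument bindings, which is part of the consistency invariant --- and observe the sum of the results stays in $\polyexp$.

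The main obstacle is $\traverse$: rule \textsc{OP-traverse-2} is defined by an auxiliary recursion on the active vertex set $\vset$, which shrinks or is refilled via $\neigh(\vset', \dir)$, so the termination of that inner loop is not structural and must be argued separately --- using that $\hh_C$ is finite (hence only finitely many neurons) and that the traversal direction $\dir$ induces a well-founded order (the DNN is a DAG, and neurons already stopped are never revisited), so the multiset of remaining active neurons strictly decreases in an appropriate ordering. Given termination, the type preservation part is comparatively routine: each iteration applies $\map(f_{c_3})$ to $\val_{\vset'}$, which by the $\map$ case keeps the value in $\polyexp$, and recombining with $\constant$ and $\val_{\overline{\vset'}}$ preserves $\polyexp$, so an induction on the number of remaining iterations closes it. I would also need a small sub-argument that $\priority$, $\filter$, and $\neigh$ are well-defined on any finite $\vset$, which follows from the IH applied to the priority and stopping functions $f_{c_1}, f_{c_2}$. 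The $\solver$ case is handled by the function-contract convention: it returns a fresh variable of the return type declared for the solver call, which trivially satisfies the subtyping requirement.
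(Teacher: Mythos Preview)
Your proposal is correct and follows essentially the same approach as the paper: structural induction on the expression syntax, with the only genuinely non-trivial case being termination of \traverse, argued via a well-founded measure coming from the DAG structure of the finite $\hh_C$. The paper makes your ``appropriate ordering'' concrete by fixing a topological order $n_0,\dots,n_k$ and encoding the active set $\vset$ as the binary number $\ell_\vset(n_k)\cdots\ell_\vset(n_0)$, then showing that removing the highest-priority neurons and adding their $\dir$-neighbors (which lie strictly earlier in the topological order) decreases this number by at least~1; this is equivalent to your multiset-extension argument, but note that your parenthetical ``neurons already stopped are never revisited'' is neither true in general nor needed---termination rests purely on the topological drop, not on any no-revisit invariant.
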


\begin{lemma}
\label{lemma:statementtype-checking}
Given $(\Gamma, \tau_s)$ and $(\fstore, \store, \hh_C)$ with finite $\hh_C$ such that $(\fstore, \store, \hh_C)$ is consistent with $(\Gamma, \tau_s)$, if $\Gamma, \tau_s \vdash s : \Gamma'$, then $\langle s, \fstore, \store, \hh_C \rangle \Downarrow \fstore', \store', \hh'_C$ s.t. $(\fstore', \store', \hh'_C)$ is consistent with $(\Gamma', \tau_s)$.
\end{lemma}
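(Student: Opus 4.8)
The plan is to argue by structural induction on the typing derivation $\Gamma, \tau_s \vdash s : \Gamma'$, with a case split on the last rule used; the statement forms are sequential composition, a function definition, a transformer definition (rule \textsc{T-transformer}), and $\flow$. All expression-level obligations will be discharged with Lemma~\ref{lemma:optype-checking}, and I will also use that on a finite $\hh_C$ every statement of \oldtool evaluates (totality of the operational semantics), so that the only real content of the lemma is that the resulting state $\fstore', \tstore', \hh'_C$ meets the consistency invariant for the updated context. For sequencing $s_1 ; s_2$, the derivation splits into $\Gamma, \tau_s \vdash s_1 : \Gamma_1$ and $\Gamma_1, \tau_s \vdash s_2 : \Gamma'$; the induction hypothesis on $s_1$ produces an intermediate state consistent with $(\Gamma_1, \tau_s)$, and a second application of the induction hypothesis on $s_2$, composed with the operational rule for sequencing, gives the result.

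For a function definition, the typing rule extends $\Gamma$ with $x \mapsto \types_1 \times \cdots \times \types_n \to \types$, where $\types$ is the type computed for the body under $\Gamma$ augmented by the argument bindings; operationally the only change is a new entry in $\fstore$. Consistency of $\fstore'$ with $\Gamma'$ unfolds to the requirement that, whenever the stored body is run on arguments of the declared types, it yields a value of type $\sqsubseteq \types$ --- exactly Lemma~\ref{lemma:optype-checking} in the augmented environment. The transformer case is the same idea but first needs a small auxiliary extension of Lemma~\ref{lemma:optype-checking} to the transformer-return forms $\theta_r$, i.e.\ tuples $(\expr_1, \expr_2, \cdots)$ and the ternary $(\expr \ ? \ \theta_{r_1} : \theta_{r_2})$: componentwise evaluation gives, for the $j$-th shape member, a value of type below $\types^j_i$, hence below the join $\types^j = \sqcup_{i} \types^j_i$; the side condition $\types^j \sqsubseteq \tau^j_s$ in \textsc{T-transformer} then ensures these values can be written back into an abstract shape consistently with $\tau_s$, so the new entry in $\tstore$ keeps $\tstore'$ consistent with $\Gamma'$.

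The $\flow$ case is where the work is. Here $\Gamma' = \Gamma$ and the function/transformer stores are unchanged, so the entire obligation is that $\hh'_C$ is still consistent with $\tau_s$ --- every neuron keeps, for each shape member $j$, a value of type $\sqsubseteq \tau^j_s$. The operational semantics of $\flow$ runs a worklist iteration over the neurons of $\hh_C$: it repeatedly evaluates the priority function $f_{c_1}$ to pick the maximal not-yet-finished neurons, evaluates the stopping function $f_{c_2}$, and overwrites the selected neurons' shapes with the output of the relevant abstract transformer from $\tstore$. I would wrap this in a subsidiary induction on that iteration, using finiteness of $\hh_C$ to get termination and taking ``$\hh_C$ is consistent with $\tau_s$'' as the loop invariant: it holds at entry by hypothesis, and each step preserves it because (i) $f_{c_1}$ and $f_{c_2}$ type-check to numeric and $\typebool$ types, so their evaluations are well-defined by Lemma~\ref{lemma:optype-checking}, and (ii) applying the already-type-checked transformer produces, per shape member, a value of type $\sqsubseteq \types^j \sqsubseteq \tau^j_s$ by the transformer case above. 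One subtlety to isolate explicitly: a transformer for a neuron $n$ may read other neurons' shapes --- directly (e.g.\ $\prev[L]$) or through an internal $\traverse$ or $\map$ over an arbitrary neuron set --- but since the loop invariant asserts consistency of the \emph{whole} $\hh_C$ at each step, those reads are on well-typed data and the step goes through.

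I expect the $\flow$ case to be the main obstacle: setting up the loop invariant and the termination argument cleanly, and in particular threading the whole-store consistency invariant through the nested calls to $\traverse$ and $\map$ that a transformer may perform over arbitrarily many neurons, rather than trying to argue locally about the single neuron being updated. The function- and transformer-definition cases, by contrast, reduce almost immediately to Lemma~\ref{lemma:optype-checking} once the consistency relation for $\fstore$ and $\tstore$ is spelled out and the return-form extension of that lemma is in place.
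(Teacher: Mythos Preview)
Your proposal is correct and follows essentially the same approach as the paper: structural induction on $s$, with the function- and transformer-definition cases reducing to Lemma~\ref{lemma:optype-checking} and the $\flow$ case handled by observing that each transformer application preserves consistency of $\hh_C$ with $\tau_s$. The paper's own proof is much more terse (a few sentences in Appendix~E), so your explicit treatment of sequencing, the auxiliary extension of Lemma~\ref{lemma:optype-checking} to $\theta_r$, and the loop-invariant/termination argument for $\flow$ are genuine elaborations of details the paper leaves implicit.
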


\begin{theorem}
\label{thm:welltyped}
    A well-typed program in \oldtool successfully terminates according to the operational semantics, i.e., $\ttt \models \ooo$. Formally, if $\cdot \vdash \Pi : \Gamma, \tau_s$ then $\langle\Pi, \hh_C\rangle \Downarrow \hh'_C$
\end{theorem}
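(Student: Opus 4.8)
The plan is to obtain the theorem as a direct corollary of Lemma~\ref{lemma:statementtype-checking}, once that lemma's hypotheses are discharged for the initial configuration. By inversion on rule \textsc{T-program}, the derivation $\cdot \vdash \Pi : \Gamma, \tau_s$ factors into $\cdot \vdash d : \tau_s$ (rule \textsc{T-shape}) and $\cdot, \tau_s \vdash s : \Gamma$, where $\Pi = d\,;\,s$. On the operational side, running $\Pi$ on a concrete DNN $\hh_C$ amounts to processing the shape declaration $d$ --- which fixes the names and types of the abstract-shape fields but does not alter the concrete values already stored in $\hh_C$ --- and then evaluating $s$ from the empty function store $\fstore = \cdot$, the empty transformer store $\tstore = \cdot$, and $\hh_C$. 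Hence it suffices to show $\langle s, \cdot, \cdot, \hh_C\rangle \Downarrow \fstore', \store', \hh'_C$ for some $\hh'_C$, and then read off $\langle \Pi, \hh_C\rangle \Downarrow \hh'_C$ from the program rule.

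The first real step is to check that the initial triple $(\cdot, \cdot, \hh_C)$ is consistent with $(\cdot, \tau_s)$ in the sense required by the lemma. Finiteness of $\hh_C$ holds because a DNN is a finite DAG, and the empty stores are vacuously consistent with the empty static environment. The only substantive obligation is that every metadata entry and shape member recorded for each neuron in $\hh_C$ carries a value of the type recorded in $\tau_s$; this is exactly the well-formedness condition on a valid input concrete DNN, and rule \textsc{T-shape} guarantees that each declared field type $\types_i$ satisfies $\bot \sqsubset \types_i \sqsubset \top$, so consistency is against genuine, non-degenerate types. I would make this well-formedness of the input DNN (together with its finiteness) an explicit standing assumption, since the theorem implicitly presumes it.

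With the hypotheses in place, Lemma~\ref{lemma:statementtype-checking} applied to $\cdot, \tau_s \vdash s : \Gamma$ yields $\langle s, \cdot, \cdot, \hh_C\rangle \Downarrow \fstore', \store', \hh'_C$ with $(\fstore', \store', \hh'_C)$ consistent with $(\Gamma, \tau_s)$; in particular the evaluation terminates and $\hh'_C$ is a well-formed updated abstract element for the network, which is the content of the theorem. So the top-level argument is short: all of the work sits inside Lemmas~\ref{lemma:optype-checking} and \ref{lemma:statementtype-checking}, which would be proved by mutual induction --- structural induction on expressions for the former, induction on the statement typing derivation for the latter, the two tied together because the priority, stopping, and replacement functions and the transformer bodies are themselves \oldtool expressions.

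The main obstacle lies in the two loop-like constructs, \flow and \traverse, whose termination is not visible from the syntax but depends on the programmer-supplied stopping function and on the topology of $\hh_C$. For \traverse I would exhibit a well-founded measure on the configuration $\langle \val.\traverse, \context, \vset\rangle$ of rule \textsc{OP-traverse-2}: assign to each neuron $n$ the length $h(n)$ of the longest path from $n$ toward the terminal layer dictated by $\dir$ (finite since $\hh_C$ is acyclic), and take the measure to be the finite multiset $\{\,h(n) : n \in \vset\,\}$ over $(\mathbb{N}, <)$. Each iteration replaces the nonempty highest-priority frontier $\vset'$ by a filtered subset of $(\vset \setminus \vset') \cup \neigh(\vset', \dir)$, and every newly added neuron is a $\dir$-neighbor of some $n \in \vset'$, hence has strictly smaller $h$; the $\filter$ step only removes neurons. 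So the multiset strictly decreases in the Dershowitz--Manna order, which is well-founded, and termination follows; the \textsc{OP-map} step inside each iteration terminates because finiteness of $\hh_C$ bounds the number of constituent neurons of a polyhedral or symbolic value. The \flow construct is handled analogously --- it is a traversal of $\hh_C$ that updates each neuron's shape at most once and advances monotonically in direction $\dir$ --- and the dispatch inside it reduces to evaluating the transformer-return tuples, again by Lemma~\ref{lemma:optype-checking}. Assembling these pieces, i.e.\ making the mutual induction bottom out at the \traverse and \flow measures under an arbitrary fixed finite DNN, is where the care is required.
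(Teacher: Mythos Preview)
Your high-level structure matches the paper exactly: the theorem follows from Lemmas~\ref{lemma:optype-checking} and~\ref{lemma:statementtype-checking} once consistency of the initial configuration is checked, with the real content inside those lemmas and the delicate case being termination of \traverse (and, analogously, \flow). Where you diverge is in the termination measure for \traverse. The paper fixes a topological order $n_0,\dots,n_k$ of the DNN's neurons and encodes the active set $\vset$ as the integer $L(\vset)=\sum_{n_i\in\vset}2^{i}$, then shows that $L$ strictly decreases each iteration: removing a neuron at position $p$ and adding only $\dir$-neighbors, which sit at positions strictly below $p$, drops $L$ by $2^{p}$ while raising it by at most $2^{p}-1$. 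Your multiset of longest-path lengths under the Dershowitz--Manna order is an equally valid ranking function and arguably more transparent about why acyclicity forces progress; it does need the small extra observation that the element of $\vset'$ with maximal $h$ cannot reappear via $\neigh(\vset',\dir)$, which guarantees the multiset strictly drops. The paper's integer encoding buys a completely elementary comparison (no multiset-order machinery) at the cost of fixing an arbitrary topological enumeration. Both arguments rest on the same fact---$\dir$-neighbors lie strictly closer to the terminal layer---so the difference is one of packaging rather than substance.
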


\begin{proof}[Proof sketch]
Theorem~\ref{thm:welltyped} follows directly from Lemmas~\ref{lemma:optype-checking} and \ref{lemma:statementtype-checking}. The lemmas are proved by induction on the structures of \(\expr\) and \(s\). For Lemma~\ref{lemma:optype-checking}, the case where \(\expr \equiv \var \cdot \traverse(\dir, f_1, f_2, f_3)\{\_\}\) is particularly intricate as it involves traversing the DNN. We demonstrate this by constructing a bit vector \(B\) representing the neurons in the DNN, ordered topologically (as a DAG), where 1 indicates the presence in the active set and 0 indicates absence. We show that the value of \(B\) is bounded and decreases by at least 1 in each iteration.
\end{proof}

\section{\cf---Bounded Automatic Verification}
\label{sec:verification}
We present bounded automated verification for the soundness verification of every abstract transformer specified for a DNN certifier. Bounds are assumed on the maximum number of neurons in the previous layer ($n_{prev}$), and the maximum number of \cf symbolic variables used by the certifier ($n_{sym}$). We reduce this verification task to a first-order logic query which can be handled with an off-the-shelf SMT solver. In this section, the terms \textit{symbolic variables} and \textit{constraints} refer to SMT symbolic variables and constraints over them, not the \cf symbolic variables $\noise$ or constraints unless stated otherwise. When executing the certifier using operational semantics, the input is a concrete DNN. So, the soundness of the certifier must be verified for all possible inputs, i.e., all possible DNNs. Our key insight is a \textit{Symbolic DNN} that can represent arbitrary concrete DNNs within the above-stated bounds. In a nutshell, given a \cf program, we perform the following steps: (i) create a symbolic DNN (\S~\ref{sec:graphcreation}), (ii) expand the symbolic DNN to be able to execute the program (\S~\ref{sec:graphexpansion}), (iii) execute the program on the symbolic DNN using symbolic semantics (\S~\ref{sec:symexecution}), (iv) generate the verification query and verify the query using an off-the-shelf SMT solver (\S~\ref{sec:smtverification}). We prove the \textit{soundness} of the symbolic semantics w.r.t. the operational semantics (\S~\ref{sec:verificationsoundness}). So, verifying the soundness of a certifier for a symbolic DNN ensures the soundness of any concrete DNN within the bounds.

\subsection{Symbolic DNN Creation}
\label{sec:graphcreation}
We introduce the concept of a \textit{Symbolic DNN} to represent an arbitrary DNN and the corresponding abstract shapes symbolically. It represents all subgraphs of any arbitrary neural network on which the specified transformer can be applied. So, it consists of symbolic values representing neurons necessary for executing the transformer. 
\begin{definition}
    A symbolic DNN is a graph $\langle V, E, \hh_S, \cc \rangle$, where $V$ is the set of neurons and $E$ is the set of edges representing the DNN operations (e.g., \affine, \relu). Each node is associated with an abstract shape and metadata. $\hh_S$ is a record that maps each neuron, its shape members, and metadata to symbolic variables and $\cc$ represents constraints over the symbolic variables.
\end{definition}
As explained in \S~\ref{sec:overviewgraphcreation}, \ref{sec:overviewgraphexpand}, for each DNN operation $\eta$ (e.g., \relu), we initialize a symbolic DNN with neurons representing \prev and \curr that are respectively the input and output of $\eta$. The edges are only between \curr and \prev neurons and represent the operation $\eta$. $\cc$ encodes $\eta$ and the assumption of the user-specified property $\prop$ over all of the neurons in the symbolic DNN. Each shape member and metadata associated with these neurons is set to symbolic variables in $\hh_S$. In subsequent sections, we omit $V$ and $E$ and refer to $\hh_S, \cc$ as a symbolic DNN. 
\begin{figure}
\centering 
\resizebox{\textwidth}{!}{
    $
    \begin{array}{c}
        \inferrule*[lab = \textsc{E-shape-b}]
        {
        \langle \expr, \fstore, \sstore, \hh_s, \cc \rangle \downarrow n, \_ \\\\
        \exn(n, \var, \tau_s, \hh_S, \cc, \prop) = \hh'_S, \cc'
        }
        {
        \ex(\expr[x], \tau_s, \symc, \prop) = \hh'_S, \cc' 
        }
        \hspace{1cm}
        \inferrule*[lab = \textsc{G-map}]
        {
        \tau_s, \scontext, \prop \models \expr \leadsto \hh_{S}', \cc' \\\\
        \ex(\expr, \tau_s, \fstore, \sstore, \hh_{S}', \cc', \prop) = \hh_{S_0}, \cc_0 \\\\
        \langle \expr, \fstore, \sstore, \hh_{S_0}, \cc_{0} \rangle \downarrow \sval_{b_0} + \sum^j_{i=1} n_i * \sval_{b_i} \\\\
        \forall i \in [j] \quad \tau_s, \fstore, \sstore_i, \hh_{S_{i-1}}, \cc_{i-1} \models f_c(n_i, \sval_{b_i}) \leadsto \hh_{S_i}, \cc_i
        }
        {
        \tau_S, \symc, \prop \models \expr\cdot\map(f_c) \leadsto \hh_{S_j}, \cc_{S_j} 
        }
        \\\\
        \inferrule*[lab = \textsc{Expand-poly-r}]
        {
        \tau_s(\var) = \polyexp \qquad
        \hh_S[n[\var]] = \sval_{b_r} \qquad
        \nn = [n'_1, \cdots n'_j] \\\\
        \hh_{S_0} = \hh_S \qquad
        \forall i \in [j], \hh_{S_i}, \cc_i = \addg(n'_i, \tau_s, \hh_{S_{i-1}}, \prop, \cc_{i-1}) \\\\
        \bsval = \sval_{r_0} + \sum^j_{i=1} \sval_{r_i} * n'_i \qquad
        \hh_S' = \hh_{S_j}[n[\var] \mapsto \bsval]
        }
        {
        \exn(n, \var, \tau_s, \hh_S, \cc_0, \prop) = \hh'_S, \cc_j
        }
        \\\\
        \inferrule*[lab = \textsc{G-traverse}]
        {
        \nn = [n_1, \cdots n_j] \\\\
        \tau_s, \scontext, \prop \models \expr \leadsto \hh_{S_0}, \cc_0 \qquad  
        \forall i \in [j], \hh_{S_i}, \cc_i = \addg(n_i, \tau_s, \hh_{S_{i-1}}, \prop, \cc_{i-1}) \\\\
        \bsval = \sval_{b_0} + \sum^j_{i=1} \sval_{b_i} * n_i \quad \sval_{b}, \sval_{b_0}, \sval_{b_i}\text{ are fresh symbolic variables}\\\\
        \hh'_{S_0} = \hh_{S_j} \qquad 
        \cc'_0 = \cc_j \qquad 
        \forall i \in [j], \tau_s, \fstore, \sstore, \hh'_{S_{i-1}}, \cc'_{i-1}, \prop \models f_{c_2}(n_i, \sval_{b_i}) \leadsto \hh'_{S_i}, \cc'_i \\\\
        \hh''_{S_0} = \hh'_{S_j} \qquad 
        \cc''_0 = \cc'_j \qquad
        \forall i \in [j], \tau_s, \fstore, \sstore, \hh''_{S_{i-1}}, \cc''_{i-1}, \prop \models f_{c_3}(n_i, \sval_{b_i}) \leadsto \hh''_{S_i}, \cc''_i
        }
        {
        \tau_s, \scontext, \prop \models \var\cdot\trav(\dir, f_{c_1}, f_{c_2}, f_{c_3})\{\expr\} \leadsto \hh''_{S_j}, \cc''_j
        }
    \end{array}
    $
    }
    \caption{Symbolic DNN Expansion}
    \vspace{-2mm}
    \label{fig:graphexpansion}
\end{figure}
Next, to enable symbolic execution of the specified transformer, we may need to expand the symbolic DNN. For example, in the case of the expression $\expr \equiv \prev[U].\map(\texttt{\footnotesize{foo}})$, where \texttt{\footnotesize{foo}} is a user-defined function, $\prev[U]$ must be expanded before we can apply $\texttt{\footnotesize{foo}}$. The symbolic DNN expansion step is written in the form $\tau_s, \symc, \prop \models \expr \leadsto \hh_S', \cc'$ (\S~\ref{sec:graphexpansion}). After the symbolic DNN expansion step of an expression $\expr$, it can be symbolically executed using the symbolic semantics. The symbolic semantics are defined in the form $\langle \expr, \fstore, \sstore, \hh_S, \cc\rangle \downarrow \sval, \cc'$ (\S~\ref{sec:symexecution}).   

\subsection{Symbolic DNN Expansion}
\label{sec:graphexpansion}
The expansion step is done by statically analyzing the transformer specification and expanding the symbolic DNN accordingly. A subset of the rules for symbolic DNN expansion is shown in Fig.~\ref{fig:graphexpansion}. The complete set of rules can be found in Appendix G. This step analyzes the expression $\expr$ for the presence of one of three constructs - \map, function call, or $\traverse$. The rules for \map and \traverse are shown in rules \textsc{G-map} and \textsc{G-traverse} (in Fig.~\ref{fig:graphexpansion}). In the rule \textsc{G-map}, the graph expansion is recursively applied to the input expression $\expr$ in the first line. Then, since it is a \map construct, it must be ensured that the output of $\expr$ is in expanded form. This is done in the second line. The third line asserts that the output from the symbolic execution of $\expr$ is already in the expanded form $\sval_{b_0} + \sum^j_{i=1} n_i * \sval_{b_i}$. Since the \map construct applies the function call to all the individual summands of the output, the DNN expansion step is applied to each function call before symbolically executing it. This is shown in the fourth line of \textsc{G-map} rule. 

Now, we explain the $\ex(\expr, \tau_s, \fstore, \sstore, \hh_S, \cc, \prop)$ rules used to ensure that the output of symbolically executing $\expr$ is in expanded form. Here, $\ex$ takes in an expression, $\expr$, $\tau_S$, $\fstore$, $\sstore$, $\hh_S$, $\cc$, and the abstract shape constraint definition $\prop$. The output of $\ex$ is $\hh'_S$, which can contain new shape members and expanded versions of existing shape members, and $\cc'$, which is extended to include the soundness property assumptions on any new neurons added to the symbolic DNN or the constraint $-1 \leq \noise \leq 1$ for any new \cf symbolic variables. In Fig.~\ref{fig:graphexpansion}, we show one of the base cases of this step, \textsc{Expand-poly-r}, where we expand the accessed polyhedral shape member of the input neuron. In the first line, we symbolically execute $\expr$ to get the neuron $n$. Then, if $x$ is of the type $\polyexp$ or $\symexp$, we add new symbolic variables to the symbolic DNN accordingly. 

Another interesting case for graph expansion is the expressions $x.\trav(d, f_{c_1}, f_{c_2}, f_{c_3})$ shown in the rule \textsc{G-traverse}, where we recursively call the graph expansion for the invariant $\expr$ in line 1. Since we cannot symbolically execute the \traverse construct due to it being a loop with an undetermined number of iterations at the analysis time, we declare new neurons to represent the output. In line 2, these new neurons and their corresponding metadata are added to the symbolic DNN. So, the output of symbolically executing \traverse is represented as $\bsval = \sval_{b_0} + \sum^j_{i=1} \sval_{b_i} * n_i$ in line 3. When generating the query, we also need to assume that the stopping condition ($f_{c_2}$) is true on all summands of the final output, and also the function $f_{c_3}$ is applied to all the summands. So, in lines 4-5, we recursively apply the symbolic DNN expansion on all the summands using $f_{c_2}$ and $f_{c_3}$. 

\subsection{Symbolic Semantics}
\label{sec:symexecution}
\begin{figure}
\centering
    $
    \begin{array}{c}
        \inferrule*[lab = \textsc{Sym-ternary}]
        {
        \langle \expr_1, \symc \rangle \downarrow \sval_{1}, \cc_1 \quad
        \langle \expr_2, \fstore, \sstore, \hh_S, \cc_1 \rangle \downarrow \sval_2, \cc_2 \quad 
        \langle \expr_3, \fstore, \sstore, \hh_S, \cc_2 \rangle \downarrow \sval_3, \cc_3 
        } 
        {\langle (\expr_1 ? \expr_2 : \expr_3), \symc \rangle \downarrow If(\sval_{1}, \sval_2, \sval_3), \cc_3}
        \\\\
        \inferrule*[lab = \textsc{Check-induction}]
        {
        \nn = [n'_1, \cdots, n'_j] \qquad
        \bsval = \sval^\rrr_{0} + \sum^j_{i=1} \sval^\rrr_{i} * n'_i \qquad
        \sstore' = \sstore[\var \mapsto \bsval] \\\\ 
        \langle \expr, \fstore, \sstore', \hh_S, \cc \rangle \downarrow \bsval', \cc_0 \qquad
        \forall i \in [j], 
        \langle f_{c_2}(n_i, \sval_{r_i}), \fstore, \sstore', \hh_S, \cc_{i-1} \rangle \downarrow \sval'_i, \cc_i \\\\ 
        \cc'_0 = \cc_j\qquad
        \forall i \in [j], \langle f_{c_3}(n_i, \sval_{r_i}), \fstore, \sstore', \hh_{S}, \cc'_{i-1} \rangle \downarrow \sval''_i, \cc'_i \\\\
        \sval'' = \sval_{r_0} + \sum^j_{i=1} If(\sval'_i, \sval''_i, \sval_{r_i}*n_i) \qquad 
        \sstore'' = \sstore[\var \mapsto \sval''] \qquad 
        \langle \expr, \fstore, \sstore'', \hh_S, \cc'_j \rangle \downarrow \sval''', \cc'' 
        }
        {
        \cinduction(\var \cdot \traverse(\dir, f_{c_1}, f_{c_2}, f_{c_3})\{\expr\}, \fstore, \sstore, \hh_S, \cc) = \unsat(\neg(\cc_0 \wedge \bsval' \implies \cc''_j \wedge \sval'''))
        } 
        \\\\
        \inferrule*[lab = \textsc{Check-invariant}]
        {
        \langle \expr, \symc \rangle \downarrow \sval, \cc' \quad
        \bsval = \unsat(\neg(\cc' \implies \sval))
        \\\\
        \bsval' = \cinduction(\var \cdot \traverse(\dir, f_{c_1}, f_{c_2}, f_{c_3})\{\expr\}, \fstore, \sstore, \hh_S, \cc) 
        }
        {
        \cinvariant(\var \cdot \traverse(\dir, f_{c_1}, f_{c_2}, f_{c_3})\{\expr\}, \fstore, \sstore, \hh_S, \cc) = \bsval \wedge \bsval', \cc'
        }
        \\\\
        \inferrule*[lab = \textsc{Sym-traverse}]
        {
        \cinvariant(\var\cdot\traverse(\dir, f_{c_1}, f_{c_2}, f_{c_3})\{\expr\}, \symc) = \true, \cc' \\\\
        \bsval = \sval_{0} + \sum^j_{i=1} \sval_{i} * \sval'_i \qquad 
        \sstore' = \sstore[x \mapsto \bsval] \qquad
        \langle \expr, \fstore, \sstore', \hh_S, \cc' \rangle \downarrow \sval, \cc''  
        }
        {
        \langle \var\cdot\traverse(\dir, f_{c_1}, f_{c_2}, f_{c_3})\{\expr\}, \symc \rangle \downarrow \bsval, \sval \wedge \cc''
        }
    \end{array}
    $
    \caption{Symbolic Semantics ($\sss$) for \cf expressions: $\langle \expr, \symc \rangle \downarrow \sval, \cc'$}
    \label{fig:symsemantics}
\end{figure}
Like operational semantics, symbolic semantics ($\sss$) use $\fstore$ which maps function names to their formal arguments and return expressions. However, instead of the concrete store $\store$ used in operational semantics, it uses a symbolic store, $\sstore$, which maps the identifiers to their symbolic values $\sval$ in expanded form. Symbolic semantics output a symbolic value $\sval$, and also add additional constraints to $\cc$, i.e., $\langle \expr, \fstore, \sstore, \hh_S, \cc \rangle \downarrow \sval, \cc'$. Constants, variables, and the introduction of new \cf symbolic variables using the $\noise$ construct are the base cases of the symbolic semantics of \cf. Unary, binary, and ternary operations are straightforward recursive cases. We show \textsc{Sym-ternary} in Fig.~\ref{fig:symsemantics}, where the three expressions $\expr_1, \expr_2,$ and $\expr_3$ are recursively executed to output $\sval_1, \sval_2, \sval_3$, respectively. The output value of the ternary operation is thus returned as $\text{If}(\sval_1, \sval_2, \sval_3)$, where $\text{If}$ is a Z3 construct. Also, the constraints are accumulated in the recursive calls. The symbolic semantics for \map construct are similar to the operational semantics and are therefore omitted here. We now discuss the semantics for the more challenging \traverse construct. Detailed semantics for other constructs are available in Appendices F and G.

Due to the lack of DNN architecture information, full symbolic execution of the loop specified by the \traverse construct is not feasible. So, we validate the user-provided invariant's soundness and subsequently use it for the symbolic semantics of \traverse. In the rule \textsc{Sym-traverse} in Fig.~\ref{fig:symsemantics}, $\expr$ is the user-defined invariant for the traversal, $\bsval$ is the output symbolic polyhedral expression, and $\sval$ is the result of applying the invariant $\expr$ to $\bsval$. We check the soundness of this invariant in two steps (\textsc{Check-invariant} in Fig.~\ref{fig:symsemantics}). First, we verify that the invariant is satisfied at the initial state. Here, $\sval$ represents the evaluated invariant expression $\expr$ applied to the input state of \traverse. $\unsat(\neg(\cc' \implies \sval))$ implies that $\sval$ is true under the conditions, $\cc'$, which are valid before executing $\traverse$. Second, we verify that the invariant is inductive (\textsc{Check-induction}). In $\cinduction$, $\unsat(\neg(\cc_0 \wedge \bsval' \implies \cc''_j \wedge \sval'''))$ means that under the assumption that the invariant holds before an iteration of \traverse, the invariant must hold after the iteration of \traverse. If the invariant is validated, we create a symbolic value of the form $\sval_{0} + \sum^j_{i=1} \sval_{i} *\sval'_{i}$ to represent the output of $\var.\traverse(d, f_{c_1}, f_{c_2}, f_{c_3})\{\expr\}$ and assume, in $\cc$, that the invariant holds on this output. 

\subsection{Queries for Verification}
\label{sec:smtverification}
Initially, it is assumed that the property $\prop$ holds for all the neurons in the symbolic DNN. To compute the new abstract shape, the user-specified abstract transformer is executed using the symbolic semantics as described in \S~\ref{sec:symexecution}. This results in the new abstract shape (for \curr) - a tuple of symbolic values $(\sval_1, \cdots, \sval_n)$ and a condition, $\cc'$ that encodes constraints over $\sval_i$. To verify the soundness of the abstract transformer, we need to check that if the property $\prop$ holds for all the neurons in the symbolic DNN ($\forall n \in \hh_S, \prop(\alpha_n, n)$), then it also holds for the new symbolic abstract shape values, $\prop(\alpha'_{\curr}, \curr)$, where $\alpha'_{\curr} = (\sval_1, \cdots, \sval_n)$. We split the query into two parts: (i) antecedent $p$---encoding the initial constraints on the symbolic DNN, the computations of the new abstract shape for $\curr$, represented by $\rr$, the semantic relationship $\eta$ between \curr and \prev, and any path conditions relevant to the specific computations we are verifying, $\cc'$. $p \triangleq (\forall n \in \hh_S, \prop(\alpha_n, n)) \wedge \curr=\eta(\prev) \wedge \rr \wedge \cc'$ (ii) consequent $q$---encoding the property $\prop$ applied to the new abstract shape of $\curr$. $q \triangleq \prop(\alpha'_{\curr},\curr)$. So, the final query is $\textsf{checkValid}(p \implies q)$.  

\begin{figure}
    % \resizebox{\textwidth}{!}{
    \begin{tikzpicture}
        \node[circle, draw, minimum size=1cm] (node) at (0,0) {$n$};
        \node[right=0.5cm of node, yshift=0.9cm] (l) {$l  = -1$};
        \node[right=0.5cm of node, yshift=0.3cm] (u) {$u  = 3$};
        \node[right=0.5cm of node, yshift=-0.9cm] (L) {$L = 4 + 5n_1 + 6n_2$};
        \node[right=0.5cm of node, yshift=-0.3cm] (U) {$U = 3 - 2n_1$};
        \draw[dotted] (node.east) -- (l.west);
        \draw[dotted] (node.east) -- (u.west);
        \draw[dotted] (node.east) -- (L.west);
        \draw[dotted] (node.east) -- (U.west);

        \draw[dotted, ->] (-150:1.5) -- (node.west);
        \draw[dotted, <-] (-170:1.5) -- (node.west);
        \draw[dotted, ->] (-190:1.5) -- (node.west);
        \draw[dotted, <-] (-210:1.5) -- (node.west);

        \node[draw, above=of current bounding box, yshift=-0.7cm, xshift=-2cm] (name) {$\hh_C$};
        \node[draw, right=of name, xshift=6cm] {$\hh_S$};

        \begin{scope}[shift={(10cm,0cm)},local bounding box=f1box]
            \node (node) at (0,0) {$\prev$};
            
            \node[below=0cm of node.center, rotate=270, right=0.01cm] (node2) {$\mapsto$};
            \node[below=0.1cm of node2.center] (node3) {$\sval_p$};
            
            \node[left=0.5cm of node, yshift=0.9cm] (l) {$l  \mapsto \sval_p^l$};
            \node[left=0.5cm of node, yshift=0.3cm] (u) {$u  \mapsto \sval_p^u$};
            \node[left=0.5cm of node, yshift=-0.3cm] (L) {$L \mapsto \sval_p^L$};
            \node[left=0.5cm of node, yshift=-0.9cm] (U) {$U \mapsto \sval_{r}^1 + \sval_{r}^2*\sval_{n_1} + \sval_{r}^3*\sval_{n_2}$};

            \draw[dotted] (node.west) -- (l.east);
            \draw[dotted] (node.west) -- (u.east);
            \draw[dotted] (node.west) -- (L.east);
            \draw[dotted] (node.west) -- (U.east);

            \draw[dotted, ->] (node) -- (0:1.5);
            
        \end{scope}
    \end{tikzpicture}
    % }
    % \vspace{-2mm}
    \caption{Parts of Concrete DNN $\hh_C$ and Symbolic DNN $\hh_S$}
% \vspace{-4mm}
    \label{fig:overapproxneuron}
\end{figure}
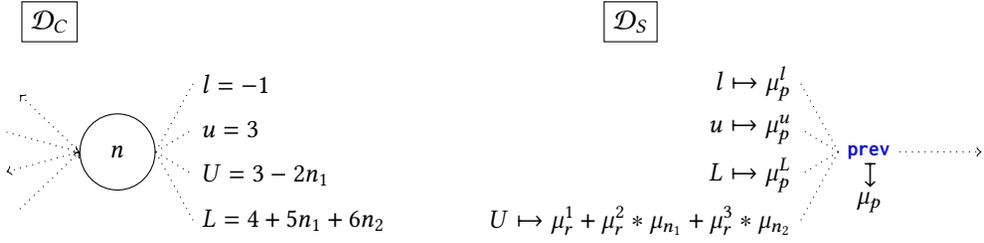
    
\subsection{Correctness of Verification Procedure}
\label{sec:verificationsoundness}
We define a notion of \textit{over-approximation} of a concrete DNN by a symbolic DNN, a concrete value by a symbolic value, etc. So, any property proved by our verification algorithm for a symbolic DNN also holds for any concrete DNN that is over-approximated by the symbolic DNN. This notion lets us establish the correctness of the \cf verification procedure.

\subsubsection{Over-Approximation}
Fig.~\ref{fig:overapproxneuron} shows parts of a concrete DNN $\hh_C$ and a symbolic DNN $\hh_S$ from Fig.~\ref{fig:symbolicdnn}b. The neuron \prev in $\hh_S$ over-approximates the neuron $n$ in the concrete DNN $\hh_C$ if $\varphi$ is satisfiable, where $\varphi \equiv (\sval_p^l = -1) \wedge (\sval_p^u = 3) \wedge (\sval_p^L = 4 + 5n_1 + 6n_2) \wedge (\sval_{r}^1 + \sval_{r}^2*\sval_{n_1} + \sval_{r}^3*\sval_{n_2} = 3 - 2n_1)$. Further, if $\sval_p, \sval_{n_1}, \sval_{n_2}$ represent $n, n_1, n_2$ respectively, they must also be equal, i.e., $\varphi_1 \equiv \varphi \wedge (\sval_p = n) \wedge (\sval_{n_1} = n_1) \wedge (\sval_{n_2} = n_2)$ must be satisfiable. Note that the neurons in $\hh_C$ are not assigned any values and are therefore symbolic themselves. So, $\varphi_1$ must be satisfiable for all possible values of $n, n_1, n_2$ in $\hh_C$. Further, the symbolic DNN has another component $\cc$ which imposes constraints on $\sval_i$. So, the formula must be satisfiable under the constraints $\cc$, i.e., $\varphi_2$ must be true.
\begin{equation}
\label{eq:eq1}
    \varphi_2 = \forall \{n, n_1,n_2\} \cdot \exists \{\sval_p, \sval_{n_1}, \sval_{n_2}, \sval_p^l, \cdots\} \cdot \big(\varphi \wedge (\sval_p = n) \wedge (\sval_{n_1} = n_1) \wedge (\sval_{n_2} = n_2) \wedge \cc \big)
\end{equation}
In the symbolic DNN, $\cc$ contains (i) the constraints encoded by the property $\prop$ assumed on all the neurons in the symbolic DNN, and (ii) the edge relationship between \curr and \prev. 
\begin{definition}
    A symbolic DNN $\hh_S, \cc$ over-approximates a concrete DNN $\hh_C$ if $\forall Y \cdot \exists W \cdot (\cc \bigwedge_{t \in \dom(\hh_S)}\hh_S(t) = \hh_C(t))$, where $Y$ is the set of neurons and \cf symbolic variables in $\hh_C$ and $W$ is the set of all SMT symbolic variables in $\hh_S$. 
\end{definition}

Further, in Equation~\ref{eq:eq1}, all the variables inside universal quantifier ($n, n_1, n_2$) are set equal to variables in the existential quantifier $\sval_p, \sval_{n_1}, \sval_{n_2}$. So, the equation can be rewritten by simply replacing the variables within the universal quantifier with corresponding variables in the existential quantifier, and removing the corresponding equality constraints, i.e., $\varphi_3 = \varphi_2$, where $\varphi_3 = \forall \{\sval_p, \sval_{n_1}, \sval_{n_2}\} \cdot \exists \{\sval_p^l, \sval_p^u, \sval_p^L, \sval_{r}^1, \sval_r^2, \sval_r^3\} \cdot (\varphi \wedge \cc)$.  

In our example in Fig.~\ref{fig:overapproxneuron}, $Y = \{\sval_p, \sval_{n_1}, \sval_{n_2}$\}, and $W$ is the set of all the other symbolic variables used in $\hh_S$. So, a symbolic DNN $\hh_S, \cc$ over-approximates a concrete DNN $\hh_C$ if $\forall Y \cdot \exists W \cdot (\cc \wedge \bigwedge_{t \in \dom(\hh_S)}(\hh_S(t) = \hh_C(t)))$. There are two types of symbolic variables in $W$---ones that represent constants during concrete execution and ones that represent polyhedral or symbolic expressions. So, we partition $W$ into two sets, $X$ and $Z$, where $X$ contains the symbolic variables representing constants, while $Z$ contains the other symbolic variables. So, we can then re-write $\varphi_3$ as $\varphi_4 = \forall Y \cdot \exists X \cdot \exists Z \cdot (\cc \wedge \bigwedge_{t \in \dom(\hh_S)}(\hh_S(t) = \hh_C(t)))$. Note that in the example above, $X = \{\sval_p^l, \sval_p^u, \sval_r^1, \sval_r^2, \sval_r^3\}, Z = \{\sval_p^L\}$. From Equation~\ref{eq:eq1}, since $\sval_p^l, \sval_p^u, \sval_r^1, \sval_r^2, \sval_r^3$ are independent of $n, n_1, n_2$, we bring the set $X$ out of the $\forall$ quantifier. Generalizing this notion, we use the definition \textsc{Over-approx-DNN} in Fig.~\ref{fig:defs}. The over-approximation of a concrete DNN $\hh_C$ by a symbolic DNN $\hh_S, \cc$ is represented as $\hh_C \prec_\cc \hh_S$. The definitions for a symbolic value over-approximating a concrete value and a symbolic store over-approximating a concrete store can be found in (Appendix H). 

Using these definitions of over-approximation, we prove two important properties. First, if a symbolic DNN over-approximates the concrete DNN, then expanding the symbolic DNN maintains the over-approximation. Second, we show that given a \cf expression that type-checks, if one starts with a symbolic DNN $\hh_S, \cc$ and a concrete DNN $\hh_C$ such that $\hh_C \prec_\cc \hh_S$, then the output of applying symbolic semantics on $\hh_S, \cc$ over-approximated the output of applying operational semantics on $\hh_C$. We prove this using bisimulation (rule \textsc{Bisimulation} in Fig.~\ref{fig:defs}), where we simultaneously apply the operational semantics to the concrete DNN and the symbolic semantics to the symbolic DNN $\hh_S, \cc$. The complete details can be found in Appendix I.

\begin{figure}
    \small
    $
    \begin{array}{c}
        \inferrule*[lab={\textsc{Over-approx-DNN}}]
        {
        \dom(\hh_S) \subseteq \dom(\hh_C) \qquad 
        X = \cs(\hh_S, \cc) \qquad  
        Y = \ns(\hh_S, \cc) \cup \es(\hh_S, \cc) \\\\
        Z = \ps(\hh_S, \cc) \cup \zs(\hh_S, \cc) \cup \cts(\hh_S, \cc) \\\\
        \exists X \cdot \forall Y \cdot \exists Z \cdot \Big(\cc \wedge \bigwedge_{t \in \dom(\hh_S)}\hh_S(t) = \hh_C(t) \Big)
        }
        {
        \hh_C \prec_{\cc} \hh_S
        } 
        \\\\
        \inferrule*[lab={\textsc{Bisumation}}]
        {
        \langle \expr, \fstore, \store, \hh_C \rangle \Downarrow \val \qquad 
        \langle \expr, \fstore, \sstore, \hh_S, \cc \rangle \downarrow \sval, \cc' \\\\
        \exists! X \cdot \forall Y \cdot\exists!Z \cdot \Big( CS(\sval,\val) \wedge \cc' \wedge \m  \wedge \bigwedge_{t \in \dom(\hh_S)} \hh_S(t) = \hh_C(t) \Big)
        }
        {
        \ll \expr, \fstore, \store, \sstore, \hh_C, \hh_S, \cc \gg \updownarrow \val, \sval, \cc', \m
        } 
    \end{array}
    $
    \caption{Definitions for Over-approximation and Bisimulation}
    \label{fig:defs}
\end{figure}

\subsubsection{Soundness and Completeness}
We show that if \cf concludes that the abstract transformers specified in the program are verified to maintain the user-defined property $\prop$, then executing the program on any concrete DNN also maintains the property $\prop$. We prove this by initially creating a symbolic DNN with only the neurons representing \curr and \prev and edges representing their corresponding DNN operation ($\eta$ - for example \relu). This over-approximates any part of an arbitrary concrete DNN (within the bounds of verification) which is the output of $\eta$. Next, the over-approximation is maintained during symbolic DNN expansion and executing symbolic semantics. Finally, the query is generated over symbolic values that overapproximate the corresponding concrete values. So, if the SMT solver concludes that the property $\prop$ is maintained over the symbolic DNN, then we can conclude that $\prop$ will also be maintained over all over-approximated concrete DNNs. Further, since the symbolic semantics are not exact only for \traverse and \solver constructs, \cf is complete, excluding these constructs.
\begin{theorem}[Soundness]
\label{soundnesstheorem}
    For a well-typed program $\Pi$, if \cf verification procedure proves it maintains the property $\prop$, then upon executing $\Pi$ on all concrete DNNs within the bounds of verification, the property $\prop$ will be maintained at all neurons in the DNN. 
\end{theorem}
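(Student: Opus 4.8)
The plan is to split the claim into a \emph{global} part, handled by an induction over the execution of \flow, and a \emph{local} part, which is exactly what \cf verifies for each abstract transformer. For the global part, observe that the operational semantics of \flow dispatch, at each neuron $n$, to the transformer declared for $n$'s DNN operation $\eta$, update only the abstract shape of $n$, and leave every other neuron's shape unchanged; by Theorem~\ref{thm:welltyped} this process visits finitely many neurons. So, ordering neurons along a topological order of the DAG and inducting along it, it suffices to show: whenever a transformer for $\eta$ is applied at $n$ on a concrete DNN $\hh_C$ in which every neuron (with its current abstract shape) satisfies $\prop$, the resulting abstract shape of $n$ still satisfies $\prop$. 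The base case (input neurons, whose shapes come from $\alpha$) holds by assumption, and the inductive step is precisely the local claim.

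Fix $\eta$ with transformer body $\theta_r$, an arbitrary concrete DNN $\hh_C$ within the bounds $n_{prev}, n_{sym}$, and a neuron $n$ that is the $\eta$-output of $p_1,\dots,p_k$ with $k \le n_{prev}$. First I would show that the symbolic DNN $\hh_S, \cc$ that \cf creates for $\eta$ (\S~\ref{sec:graphcreation}) over-approximates the relevant subgraph of $\hh_C$ in the sense of \textsc{Over-approx-DNN} (Fig.~\ref{fig:defs}): the witness maps the symbolic variables of \curr, \prev and their shape members to the corresponding concrete data of $n, p_1,\dots,p_k$ in $\hh_C$, pads the surplus $\prev_i$ slots with zero weight, and satisfies $\cc$ because $\cc$ asserts only the semantics of $\eta$ (true of $n$) and $\prop$ on \curr, \prev (true by the outer induction hypothesis). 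Next I would establish the two properties stated in \S~\ref{sec:verificationsoundness}. The first (expansion preserves over-approximation): if $\hh_C \prec_\cc \hh_S$ and $\tau_s, \symc, \prop \models \theta_r \leadsto \hh'_S, \cc'$, then $\hh_C \prec_{\cc'} \hh'_S$; this is an induction on the rules of Fig.~\ref{fig:graphexpansion}, the interesting cases being \textsc{G-map}/\textsc{Expand-poly-r}, where a polyhedral member is replaced by $\sval_{r_0} + \sum_{i=1}^{j}\sval_{r_i} \cdot n'_i$ and the bound $n_{sym}$ guarantees $j$ covers the neurons actually occurring in the concrete polyhedral expression (surplus ones taking coefficient $0$), and \textsc{G-traverse}, where fresh output neurons and the recursive expansions under $f_{c_2}, f_{c_3}$ are added.

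The second property is the bisimulation statement (rule \textsc{Bisimulation}, Fig.~\ref{fig:defs}): starting from $\hh_C \prec_\cc \hh_S$, if the operational semantics evaluate $\theta_r$ to $\val$ on $\hh_C$ and the symbolic semantics evaluate it to $\sval, \cc'$ on $\hh_S, \cc$, then some witness makes $CS(\sval,\val) \wedge \cc'$ hold together with the over-approximation equalities, the constant-valued symbolic variables quantified outside and the polyhedral-coefficient ones inside the universal quantifier over concrete neuron values (the quantifier rearrangement illustrated by the step $\varphi_3 \to \varphi_4$ in \S~\ref{sec:verificationsoundness}). This is a structural induction on expressions, invoking the first property wherever expansion is triggered; the constant, variable, arithmetic, ternary (\textsc{Sym-ternary}: $If(\sval_1,\sval_2,\sval_3)$ collapses under the witness for $\sval_1$ to the branch taken operationally) and \map cases are routine mirrorings of the corresponding operational rules. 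Finally, the generated query is $\textsf{checkValid}(p \implies q)$ with $p$ the conjunction of $\prop$ over all symbolic neurons, $\curr=\eta(\prev)$, the shape-computation relation $\rr$, and $\cc'$, and $q$ the assertion that the updated shape of \curr satisfies $\prop$; if the solver validates $p \implies q$, then instantiating $p$ with the over-approximation/bisimulation witnesses for $\hh_C$ makes every conjunct true, so $q$ holds of $n$ — which is the local claim, closing the outer induction.

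The main obstacle is the \traverse case of the bisimulation. \cf does not symbolically execute this loop; it validates a programmer-supplied invariant (\textsc{Check-invariant}, \textsc{Check-induction}) and then represents the loop's result by a \emph{fresh} bounded-size polyhedral expression merely asserted to satisfy that invariant (\textsc{Sym-traverse}). Relating this to \textsc{OP-traverse-2} requires: (i) an inner induction on the number of loop iterations — finite by the bit-vector/termination argument behind Theorem~\ref{thm:welltyped} — using the bisimulations for the relevant sub-expressions ($f_{c_1}, f_{c_2}, f_{c_3}$, and the invariant $\expr$) to show that the invariant, true of the entering state by \textsc{Check-invariant}, is preserved by each iteration by \textsc{Check-induction}, hence true of the concrete output; (ii) exhibiting, using the $n_{sym}$ bound once more to absorb surplus neurons with zero coefficients, a witness under which the fresh symbolic output equals the concrete output; and (iii) threading the constant-versus-coefficient quantifier alternation through all of the above. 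The \solver construct is handled analogously, substituting its function contract $e_2 \implies x \le e_1$ for a symbolic evaluation — a sound (though not complete) over-approximation of any solver output, which is why \cf stays sound in its presence. The remaining bookkeeping — well-typedness guaranteeing that every invoked function terminates (Lemmas~\ref{lemma:optype-checking}, \ref{lemma:statementtype-checking}) and the abstract shapes of un-updated neurons staying fixed — is routine.
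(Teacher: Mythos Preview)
Your proposal is correct and follows essentially the same route as the paper: reduce to a per-transformer local claim, show the initial symbolic DNN over-approximates the relevant concrete subgraph, show expansion preserves over-approximation, establish the bisimulation between symbolic and operational semantics (handling \traverse via the validated invariant and \solver via its contract), and then instantiate the verified implication $p\Rightarrow q$ with the bisimulation witness. The paper packages the two main steps as Lemmas~\ref{lemma:graphexpand} and~\ref{lemma:overapprox} (supported by Lemmas~\ref{lemma:hexpand} and~\ref{lemma:multistephextendnew}) and leaves the outer \flow induction implicit; your explicit global/local split is a clean way to organize the same argument.
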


\begin{theorem}[Completeness]
\label{completenesstheorem}
    If executing a well-typed program $\Pi$ that does not use \traverse and \solver constructs on all concrete DNNs within the bounds of verification maintains the property $\prop$ for all neurons in the DNN, then it can be proved by the \cf verification procedure.
\end{theorem}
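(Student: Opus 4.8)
\textbf{Approach.} The plan is to prove the contrapositive: if the \cf verification procedure does \emph{not} prove that $\Pi$ maintains $\prop$, then there is a concrete DNN within the verification bounds on which executing $\Pi$ violates $\prop$. Since $\Pi$ uses neither \traverse nor \solver, the only queries \cf emits are the per-transformer soundness queries of \S~\ref{sec:smtverification}, so failure means that for some abstract transformer $f^\sharp$ (of some DNN operation $\eta$, e.g.\ \relu or \affine) and some control-flow path through the ternary branches of its body --- summarised by a path condition $\varphi_0$ --- the query $p \implies q$ is invalid. Hence there is a model $M \models \varphi_0 \wedge \varphi_1 \wedge \varphi_2 \wedge \varphi_3 \wedge \neg\varphi_4$, where $\varphi_1$ asserts $\prop$ on every neuron of the expanded symbolic DNN, $\varphi_2$ the edge relation $\curr = \eta(\prev)$, $\varphi_3$ the symbolic computation $\rr$ of the new abstract shape of \curr, and $\varphi_4$ the postcondition $\prop$ on that new shape.

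\textbf{Extracting a concrete DNN from the model.} First I would read a concrete DNN $\hh_C$ off $M$. Its neurons are those of the expanded symbolic DNN (\curr, the $\le n_{prev}$ inputs \prev, and every neuron introduced during Symbolic DNN Expansion); the only edges are the $\eta$-edges between \prev and \curr, and every remaining neuron is taken to be an input neuron --- legitimate because expansion makes no architectural assumptions and polyhedral shape members may freely reference earlier neurons. For each $t \in \dom(\hh_S)$ I set $\hh_C(t) = M(\hh_S(t))$, reconstructing the affine shape members $L, U$ from the model's coefficient valuations, and I choose the network's input region so that the input neurons take exactly the values $M$ assigns them. By construction $\hh_C$ is within the bounds $(n_{prev}, n_{sym})$ and $\hh_C \prec_\cc \hh_S$ with $M$ furnishing the witnesses; in particular $M \models \varphi_1$ gives that the abstract element $a$ of $\hh_C$ satisfies $c \subseteq \gamma(a)$, and, with \prev and the expansion neurons taken as input neurons whose abstract shapes the \flow never changes, $\Pi$'s \flow reaches \curr and applies $f^\sharp$ to exactly the element $a$ given by $M$.

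\textbf{From the model to an operational violation.} The core of the argument is an \emph{exactness} lemma strengthening the bisimulation of \S~\ref{sec:verificationsoundness}: on the \traverse/\solver-free fragment, symbolic semantics do not merely over-approximate operational semantics, they represent them exactly. Concretely, whenever $\langle \expr, \fstore, \sstore, \hh_S, \cc \rangle \downarrow \sval, \cc'$ and $M$ models the relevant constraints, there is an operational evaluation $\langle \expr, \fstore, \store, \hh_C \rangle \Downarrow \val$ with $\val = M(\sval)$, and likewise for statements. I would prove this by structural induction on $\expr$ and $s$, reusing the bisimulation proof: the uniqueness quantifiers ($\exists!$) already present in the bisimulation rule make the symbolic-to-concrete correspondence functional, and in this fragment symbolic execution is deterministic and loop-free, so every case (constants, variables, $\noise$, arithmetic, shape accesses, \map, function calls, ternaries) transports backwards; the only constructs where this fails --- because their symbolic semantics use a possibly non-strongest invariant (\traverse) or contract (\solver) --- are excluded by hypothesis. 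Applying the lemma to $f^\sharp$ along the path $\varphi_0$ shows that executing $\Pi$ on $\hh_C$ gives \curr exactly the new abstract shape encoded by $M$; since $M \models \varphi_2 \wedge \neg\varphi_4$, the concrete value $\eta(\,$value of \prev$)$ of \curr lies outside the concretisation of that shape, so $\prop$ is violated at \curr (equivalently, $f^\sharp$ is unsound in the sense of Definition~\ref{def:soundness}), contradicting the assumption that $\Pi$ maintains $\prop$ on all concrete DNNs within the bounds. This contradiction proves Theorem~\ref{completenesstheorem}.

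\textbf{Main obstacle.} The hard part is the exactness lemma, specifically its \emph{realizability} direction: that \emph{every} model of the antecedent $p$ --- the constraints $\cc$ of the expanded symbolic DNN together with $\rr$, the edge relation, and $\varphi_0$ --- genuinely arises from some concrete DNN within the bounds. This rests on the symbolic DNN imposing no constraints among the expansion-introduced neurons and their shape members beyond $\prop$ itself, so that any $\cc$-consistent valuation is realisable by a concrete architecture; care is also needed for polyhedral or \oldtool-symbolic shape members that the static expansion analysis leaves as a single symbolic variable, which must be exhibited as a concrete affine/symbolic expression (e.g.\ a constant) agreeing with $M$, and for checking that the expansion analysis expands exactly the members the transformer inspects so that unexpanded members never obstruct the construction. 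A secondary, routine obstacle is threading the path condition $\varphi_0$ uniformly through the nested ternaries of a transformer body when invoking the lemma.
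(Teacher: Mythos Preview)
Your proposal is correct and follows essentially the same approach as the paper: both argue the contrapositive, invoke exactness of the symbolic semantics on the \traverse/\solver-free fragment (proved by structural induction), and extract from a falsifying model a concrete DNN within the bounds on which the transformer violates $\prop$. Your write-up is considerably more detailed than the paper's short sketch---in particular your explicit discussion of realizability of arbitrary models and of threading path conditions through nested ternaries goes beyond what the paper spells out---but the underlying strategy is the same.
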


\section{Evaluation}
\label{sec:evaluation}
We demonstrate that designing the formal semantics for \oldtool and the verification procedure \cf enables users to design and verify new DNN certifiers. The new designs include---(i) variations to the existing certifiers, (ii) supporting new DNN operations within the existing abstract domains, and (iii) completely new abstract domains and transformers. In practice, the implementations of existing DNN certifiers~\cite{interval, deeppoly, zhang2018crown, deepz, refinezono, forwardbackward} employ various techniques to adjust the scalability vs precision tradeoff. Incorporating such modifications to the original algorithms unintentionally alters their mathematical logic. However, the original pen-and-paper proofs do not ensure the correctness of the certifiers with these modifications. In \S~\ref{sec:casestudydeeppoly}, we demonstrate that these modified certifiers can be verified using \cf by specifying them in \oldtool. In \S~\ref{sec:newoperations}, we extend DNN certifiers to support new operations such as \abs, \hsigmoid, etc. by designing abstract transformers, which has not been addressed by any existing work~\cite{deeppoly}. We also show the verification of their soundness using \cf. In \S~\ref{sec:newcertifiers}, we design new abstract domains and their corresponding transformers in \oldtool and verify their soundness using \cf. 

Finally, in \S~\ref{sec:existing}, we show that \oldtool can specify and verify the above-mentioned diverse existing DNN certifiers, covering various abstract domains, transformers, and flow directions. We evaluated a diverse set of state-of-the-art DNN certifiers, including IBP~\cite{interval}, DeepPoly~\cite{deeppoly}, CROWN~\cite{zhang2018crown}, DeepZ~\cite{deepz}, RefineZono~\cite{refinezono}, Vegas~\cite{forwardbackward}, and Hybrid Zonotope~\cite{hybzono}. For all our experiments, we demonstrate that our verification procedure, \cf, can automatically prove the soundness of the certifiers specified in \oldtool or detect unsoundness. The benchmarks for testing the unsoundness detection using \cf were created by introducing random bugs programmatically in the DNN certifiers, following a methodology established in prior research~\cite{bugs}. The details are provided in Appendix K.1. 

\subsubsection*{DNN Operations}
We focus on the widely used DNN operations, including primitive operations like \relu, \cfkeywords{Max}, \cfkeywords{Min}, \cfkeywords{Add}, \cfkeywords{Mult}, etc., and composite operations like \affine, \maxpool, etc. The primitive operations are the ones that take a small, fixed number of inputs, like the addition or multiplication of 2 neurons. Since these can be composed to define composite operations, such as Attention layers, the corresponding abstract transformers can also be composed accordingly. Although verifying transformers for primitive operations directly implies the soundness of arbitrary compositions, in some cases, transformers can be more precise if specified directly for composite operations. In such cases, we show the specification and verification for composite operations. 

We focus on the abstract transformers where the verification problem is known to be decidable. Although it is possible to express transformers for activation functions like Sigmoid and Tanh in \oldtool (Appendix K.2), their verification may become undecidable~\cite{DBLP:journals/corr/abs-2305-06064}. In the future, \cf verification can be extended to handle these transformers using $\delta$-complete decision procedures~\cite{Gao2012CompleteDP}. Currently, our verification queries fall under SMT of Nonlinear Real Arithmetic (NRA), decidable with a doubly exponential runtime in the worst case~\cite{Khanh2012SMTFP}. 

\subsubsection*{Verification Bounds}
For verification of composite operations - \affine and \maxpool, the parameters, $n_{prev}$ (maximum number of neurons in a layer) and $n_{sym}$ (maximum length of a polyhedral or symbolic expression) are used during the graph expansion step and impact the verification times. For our experiments, we set $n_{sym} = n_{prev}$. Note that $n_{prev}$ is an upper bound for the maximum number of neurons in a single layer, without restricting the total neuron count in the DNN. Therefore, the DNN can have an arbitrary number of layers, each with at most $n_{prev}$ neurons, thereby, allowing for an arbitrary total number of neurons in the DNN. We set these parameters based on the sizes of layers within DNNs that existing certifiers currently handle~\cite{gpupoly, deeppoly, zhang2018crown, cloud}. For \maxpool, \minpool, and \avgpool, existing certifiers handle at most $10$ neurons at a time, so we set $n_{prev} = n_{sym} = 10$. The \affine layer includes DNN operations like convolution layers and fully-connected layers. In Table~\ref{table:complexcertifiers}, we present the computation times for \affine with $n_{prev}=n_{sym}=2048$. In Fig.~\ref{fig:affine}, we show how the verification time scales with parameter values ($n_{prev}=n_{sym}$), ranging from $32$ to $8192$, for \affine transformers. Note that $n_{prev} = 8192$ corresponds to over 64 million parameters per layer. Existing DNN certifiers~\cite{gpupoly, deeppoly, zhang2018crown, cloud} usually do not operate on larger sizes than this, but the verification time for larger sizes can be extrapolated from the graph for higher values.

\subsubsection*{Experimental setup}
We implemented the automated verification procedure in Python and used Z3 SMT solver ~\cite{z3} to verify the generated queries. All our experiments were run on a 2.50 GHz 16 core 11th Gen Intel i9-11900H CPU with a main memory of 64 GB.

\subsection{Verifying Modified DNN Certifiers}
\label{sec:casestudydeeppoly}
Implementations of DNN certifiers often include modifications to balance the scalability vs. precision tradeoff. It is crucial to ensure the soundness of the modified certifiers. Verifying them using pen-and-paper proofs can be complicated. In contrast, \oldtool and \cf provide a way to specify and verify these certifiers respectively. For illustration purposes, we focus on the DeepPoly abstract domain and key DNN operations—\affine, \maxpool, and \relu. However, the concepts introduced can be applied to other certifiers and DNN operations. We present two case studies: \deeppolyH and \deeppolyNew, and show the evaluation results in  Table~\ref{table:primitivecertifiersnew}.

\subsubsection{\deeppolyH (Balanced Efficiency and Precision Certifier)}
We use the same abstract shape as the DeepPoly certifier and design transformers that balance precision and efficiency.\\
\textbf{Affine.} The most expensive part of the DeepPoly certifier is the backsubstitution step in the \affine transformer. To improve efficiency, albeit with reduced precision, \deeppolyH employs a custom stopping function within the \traverse construct to stop the backsubstitution at an intermediate layer, specifically, two layers back rather than always proceeding to the input layer. \\
\textbf{ReLU.} In the case of unstable neurons, there are two commonly used lower polyhedral bounds - 0 and \prev. In \deeppolyH, a heuristic determines which polyhedral lower bound to store based on $\prev[l]$ and $\prev[u]$. \\
\textbf{MaxPool.} 
For \maxpool, we use the new abstract transformer designed in ~\cite{constraintflow}, which is more precise than DeepPoly. We compute a list of neurons whose concrete lower bound is greater than or equal to the concrete upper bounds of all other neurons in $\prev$. If this list is non-empty, we set the polyhedral lower and upper bounds to the average of the neurons in this list. Otherwise, we use the same polyhedral bounds used in DeepPoly. The complete code can be found in Appendix K.4.

\subsubsection{\deeppolyNew (Reused Bounds for Enhanced Efficiency)}
In an existing implementation of DeepPoly~\cite{eran}, the certifier stores previously computed polyhedral bounds from earlier layers to reuse them instead of recalculating them for current layer bounds. This approach prioritizes efficiency while accepting a slight trade-off in precision. In \oldtool, this can be easily specified by additionally storing the cached polyhedral bounds as separate members of the abstract shape $L_c, U_c$. For the \affine abstract transformer, the users can first use the new polyhedral bounds. If the results are not sufficiently precise (based on a heuristic), then the computation falls back to the original computation using the \traverse construct. This transformer significantly boosts efficiency by leveraging cached values of previous \affine layer backsubstitutions rather than computing them anew at each layer. The transformers for \relu and \maxpool can be similarly defined for \deeppolyNew. The complete code can be found in Appendix K.4.
\begin{table}
    \centering
    \captionsetup{justification=centering}
    \caption{Query generation time (G), verification time (V) for correct implementation, and bug-finding time for randomly introduced bugs (B) in seconds for new DNN certifiers (\S~\ref{sec:casestudydeeppoly}, \S~\ref{sec:newoperations}).}
    \begin{subtable}[t]{\textwidth}
        \centering
        \captionsetup{justification=centering}
        \caption{New Transformers introduced in \S~\ref{sec:casestudydeeppoly}}
        \label{table:primitivecertifiersnew}
        \resizebox{0.75\textwidth}{!}{
        \begin{tabular}{@{}l | c  c c  | c c c | c c c@{}}
        \toprule
        Certifiers & \multicolumn{3}{c}{\affine} & \multicolumn{3}{c}{\maxpool} & \multicolumn{3}{c}{\relu}\\
        \text{ } & G & V  & B & G & V  & B & G & V  & B \\ 
        \midrule
        \deeppolyH & 0.230 & 1.921 & 0.318 & 0.172 & 0.844 & 0.069 & 0.252 & 1.397 & 0.099 \\ 
        \deeppolyNew & 0.263 & 2.843 & 0.667 & 0.176 & 1.029 & 0.073 & 0.242 & 2.895 & 0.359\\ 
        \bottomrule
        \end{tabular}
        }
        
    \end{subtable}
    
    \begin{subtable}[t]{\textwidth}
        \centering
        \captionsetup{justification=centering}
        \caption{New DNN operations introduced in \S~\ref{sec:newoperations}}
        \label{table:newoperations}
        \resizebox{\textwidth}{!}{
        \begin{tabular}{@{}l | c  c c | c c c | c  c c | c c c | c c c@{}}
        \toprule
        Certifiers & \multicolumn{3}{c}{\leakyrelu} & \multicolumn{3}{c}{\abs} & \multicolumn{3}{c}{\hsigmoid} & \multicolumn{3}{c}{\htanh} & \multicolumn{3}{c}{\hswish}\\
        \text{ } & G & V  & B & G & V  & B & G & V  & B & G & V  & B & G & V  & B\\ 
        \midrule
        DeepPoly/CROWN & 0.299 & 2.454 & 0.543 & 0.199 & 5.252 & 0.069 & 0.319 & 2.238 & 0.147 & 0.304 & 3.016 & 0.354 & 0.277 & 2.963 & 0.383 \\ 
        Vegas(Backward)& 0.216 & 1.264 & 0.145 & 0.078 & 0.237 & 0.102 & 0.206 & 0.900 & 0.076 & 0.166 & 1.154 & 0.095 & 0.186 & 0.812 & 0.065  \\ 
        DeepZ & 0.150 & 1.25 & 0.363 & 0.116 & 0.462 & 0.369 & 0.172 & 1.634 & 0.550 & 0.148  & 2.677 &  0.526 & 0.290  & 3.457 &  0.886 \\ 
        RefineZono & 0.233 & 2.084 & 0.347 &  0.165 & 0.870 & 0.128 & 0.259 & 2.847 & 0.150 & 0.178 & 2.444 & 0.657 & 0.542 & 2.42  & 0.564 \\ 
        IBP & 0.102 & 0.237 & 0.289 & 0.147 & 0.455 & 0.059 & 0.098 & 0.228 & 0.071  & 0.123 & 0.269 & 0.065 & 0.205 & 0.653 & 0.218 \\ 
        Hybrid Zonotope &  0.109 & 0.388 & 0.456 & 0.125 & 0.930 & 0.121 & 0.118 & 0.369 & 0.403 & 0.175 & 0.405 & 0.197 & 0.238 & 2.256 & 0.065 \\ 
        \deeppolyH & 0.230 & 1.921 & 0.318 & 0.172 & 0.844 & 0.069 & 0.252 & 1.397 & 0.099 & 0.229 & 2.433 & 0.083 & 0.198 & 2.070 & 0.462 \\ 
        \deeppolyNew & 0.263 & 2.843 & 0.667 & 0.176 & 1.029 & 0.073 & 0.242 & 2.895 & 0.359 & 0.227 & 4.354 & 0.446 & 0.234  & 3.733 & 0.121 \\ 
        \bottomrule
        \end{tabular}
        }
        
    \end{subtable}
    
    % \vspace{-5mm}
\end{table}

\subsection{Abstract Transformers for New DNN Operations}
\label{sec:newoperations}
As deep learning frameworks continually introduce new activations, the need for designing \textit{sound} abstract transformers becomes increasingly critical. We demonstrate the effectiveness of \oldtool syntax and formal semantics and \cf verification procedure in this context by specifying and verifying abstract transformers for novel DNN operations not currently supported by existing DNN certifiers. These new operations include \leakyrelu, \abs, \hsigmoid, \htanh, and \hswish. Detailed transformers for each operation can be found in Appendix K. Evaluation results across different DNN certifiers are presented in Table~\ref{table:newoperations}, demonstrating that most transformers for these operations can be verified (or disproved) within 1 second. For illustration, 
we show the DeepPoly transformer for \hswish ($\hswish(x) = x \cdot \min(1, \min(0, \frac{x+3}{6}))$). 

\begin{lstlisting}
Func slope(Real x1, Real x2) = ((x1 * (x1 + 3)) - (x2 * (x2 + 3))) / (6 * (x1-x2));
Func intercept(Real x1, Real x2) = x1 * ((x1 + 3) / 6) - (slope(x1, x2) * x1);
Func f1(Real x) = x < 3 ? x * ((x + 3) / 6) : x;
Func f2(Real x) = x * ((x + 3) / 6);
Func f3(Neuron n) = max(f2(n[l]), f2(n[u]));
Transformer DeepPoly{
    HardSwish -> 
    (prev[l] < -3) ? 
        (prev[u] < -3 ? 
            (0, 0, 0, 0) : 
            (prev[u] < 0 ? 
                (-3/8, 0, -3/8, 0) : 
                (-3/8, f1(prev[u]), -3/8, f1(prev[u]) * (prev - prev[l])))) : 
            ((prev[l] < 3) ? ((prev[u] < 3) ? 
                (-3/8, f3(prev), -3/8, prev*slope(prev[u], prev[l]) + intercept(prev[u],prev[l])):
                (-3/8, prev[u], -3/8, prev[u] * ((prev + 3) / (prev[u] + 3)))) :
        (prev[l], prev[u], prev, prev)); 
}
\end{lstlisting}

\subsection{Designing New DNN Certifiers with New Abstract Domains}
\label{sec:newcertifiers}
We show that \cf allows verifying the soundness of new DNN certifiers based on completely new abstract domains and transformers. Specifying them in \oldtool is only possible due to the novel formalism including type system and semantics introduced in this work. 

\subsubsection*{\sympoly DNN Certifier}
Several state-of-the-art DNN certifiers, including DeepPoly, CROWN, etc., approximate the value of each neuron in the DNN by imposing polyhedral constraints over each of them. However, in the case of piecewise-linear activation functions, these certifiers rely on heuristics to choose appropriate polyhedral bounds from more than one possible choice. For instance, in the case of an unstable \relu neuron, there are infinite possibilities for a potential lower polyhedral bound. We argue that in general, the lower polyhedral bound can be of the form $c \cdot \prev$ where c is any real coefficient s.t. $0 \leq c \leq 1$. The two most commonly used lower bounds - $\prev$ and $0$ are only two extreme cases of the general lower bound. Using the \oldtool syntax and semantics, the users can directly specify the general transformer, i.e., $\curr[L] \gets \frac{1+\sym}{2} * \prev$. \cf can be used to prove the soundness of this lower bound. In this way, \cf allows a user to verify the soundness of a space of abstract transformers, which can be leveraged to automatically synthesize the optimal transformer using a cost function encoding the precision of the transformer based on the DNN certification problem. Further, since each invocation of the $\sym$ construct outputs a new symbolic value, different values of the symbolic coefficient can be chosen for different neurons in the DNN. A slightly different version is explored in the DNN certifier $\alpha-$CROWN~\cite{alphacrown}, where $\alpha$ is a concrete but learnable coefficient, learned using gradient descent.   

Based on this idea, the DNN certifier \sympoly can be found in Fig.~\ref{fig:alpha-crown}. The abstract domain consists of two concrete bounds $\texttt{\footnotesize{l, u}}$ and two polyhedral bounds with symbolic coefficients $\texttt{\footnotesize{symL, symU}}$. The abstract transformer for \relu is specified in 3 cases - (i) $\curr[u] < 0$, (ii) $\curr[l]>0$, and (iii) $\curr[l] \leq 0 \leq \curr[u]$. In the more challenging third case, the lower polyhedral bound is set to $\frac{1+\sym}{2} * \prev$. 
The abstract transformers can be similarly designed for activations such as $\htanh$, $\hsigmoid$, $\hswish$, $\abs$, etc. These can be found in Appendix K.4. Notably, we can verify the soundness of these transformers in runtimes similar to the DeepPoly certifier.
\begin{figure}
     \centering
     \begin{subfigure}[b]{\textwidth}
         \begin{lstlisting}
Def shape as (Real l, Real u, PolyExp symL, PolyExp symU) {(curr[l]<=curr) and (curr[u]>=curr) and (curr[symL]<=curr) and (curr[symU]>=curr)};
Transformer SymPoly{
  Relu -> prev[l] > 0 ? (prev[l], prev[u], prev, prev) : 
            (prev[u] < 0 ? (0, 0, 0, 0) : 
                (0, prev[u], ((1+sym)/2) * prev, ((prev[u] / (prev[u] - prev[l])) * prev) - ((prev[u] * prev[l]) / (prev[u] - prev[l]))));
}\end{lstlisting}
    \caption{SymPoly}
    \label{fig:alpha-crown}
     \end{subfigure}
     \hfill
     \begin{subfigure}[b]{\textwidth}
         \begin{lstlisting}
Def shape as (Real l, Real u, PolyExp L, PolyExp U, SymExp Z) {curr[l]<=curr and curr[u]>=curr and curr[L]<=curr and curr[U]>=curr and curr <> curr[Z]};
Func min_symexp(Sym e, Real c) = c > 0 ? -c : c;
Func lower_sym(Neuron List prev, Neuron curr) = (prev[Z] * curr[w] + curr[b]).map(min_symexp);
Func lower_poly(Neuron List prev, Neuron curr) = backsubs_lower(prev * curr[w] + curr[b]);
Transformer PolyZ{
    Affine ->  (max(lower_sym(prev, curr), lower_poly(prev, curr)), 
                min(upper_sym(prev, curr), upper_poly(prev, curr)),
                prev * curr[w] + curr[b], prev * curr[w] + curr[b], prev[Z] * curr[w] + curr[b]);
}\end{lstlisting}
        \caption{PolyZ}
        \label{fig:PolyZ}
     \end{subfigure}
        \caption{Code Sketches for new DNN certifiers. The complete codes can be found in Appendix K.4}
        % \vspace{-5mm}
\end{figure}

\subsubsection*{PolyZ DNN Certifier}
\label{sec:poly}
We show another new abstract domain - PolyZ - a reduced product of the popular DeepZ and DeepPoly domains using polyhedral and symbolic constraints. The abstract shape consists of 5 members - two concrete interval bounds, $l$ and $u$ of the type $\float$, two polyhedral bounds $L$ and $U$ of the type $\polyexp$, and a symbolic expression $Z$ of the type $\symexp$. The shape constraints state that the neuron's value satisfies the bounds $l$, $u$, $L$, and $U$ and $\curr \ \embed \ Z$. We also define the abstract transformers for this new domain. The \affine transformer is shown in Fig.~\ref{fig:PolyZ} and the complete specification is in Appendix K.4. PolyZ is more precise than both DeepPoly and DeepZ, and we can verify its soundness using the \cf verification procedure.

\subsection{State-of-the-Art DNN Certifiers}
\label{sec:existing}
The existing DNN certifiers evaluated in this section include IBP~\cite{interval} (Interval Bound Propagation), DeepPoly~\cite{deeppoly} (or CROWN~\cite{zhang2018crown}), DeepZ~\cite{deepz}, RefineZono~\cite{refinezono}, Vegas~\cite{forwardbackward},  and Hybrid Zonotope~\cite{hybzono}. The abstract shapes of DeepPoly, CROWN, and Vegas include polyhedral expressions represented by the $\polyexp$ datatype and use the \traverse construct to compute the concrete bounds. DeepZ, RefineZono, and Hybrid Zonotope use symbolic expressions represented by $\symexp$ in their abstract shapes. RefineZono uses $\ct$ to encode constraints over the possible values of the neurons. RefineZono and Vegas use the \solver construct to compute the concrete bounds. The users can define functions using the \func construct, promoting code reusability and facilitating a modular design. The \oldtool codes for these DNN certifiers are presented in Appendix K. 
    \begin{figure}
        \centering
        \begin{lstlisting}
Func lower(Neuron n1, Neuron n2) = min([n1[l]*n2[l], n1[l]*n2[u], n1[u]*n2[l], n1[u]*n2[u]]);
Func upper(Neuron n1, Neuron n2) = max([n1[l]*n2[l], n1[l]*n2[u], n1[u]*n2[l], n1[u]*n2[u]]);
Transformer DeepPoly{
    Max -> (prev0[l] >= prev1[u]) ? (prev0[l], prev0[u], prev0, prev0) : ((prev1[l] >= prev0[u]) ? 
                    (prev1[l], prev1[u], prev1, prev1) : 
                    (max(prev0[l], prev1[l]), max(prev0[u], prev1[u]), max(prev0[l], prev1[l]), 
                        max(prev0[u], prev1[u])));
    Mult -> (lower(prev0, prev1), upper(prev0, prev1), lower(prev0, prev1), upper(prev0, prev1));
}\end{lstlisting}
    \caption{Max and Mult transformers for DeepPoly Certifier}
    \label{fig:attention}
    \end{figure}
    \begin{figure}
        \begin{minipage}{0.6\textwidth}
            % \raggedright
    \begin{lstlisting}
Def shape as (Real l, Real u, PolyExp L, PolyExp U) {...};

Transformer DeepPoly_forward{ReLU -> ... ;}
Transformer DeepPoly_backward{rev_ReLU -> ... ;}

Flow(forward, ..., ... , DeepPoly_forward);
Flow(backward, ..., ..., DeepPoly_backward);    \end{lstlisting}
        \caption{Code Sketch for Vegas Certifier}
        \label{fig:fbincomplete}
        \end{minipage}
        \hspace{0.5cm}
        \begin{minipage}{0.35\textwidth}
            \includegraphics[width=\linewidth]{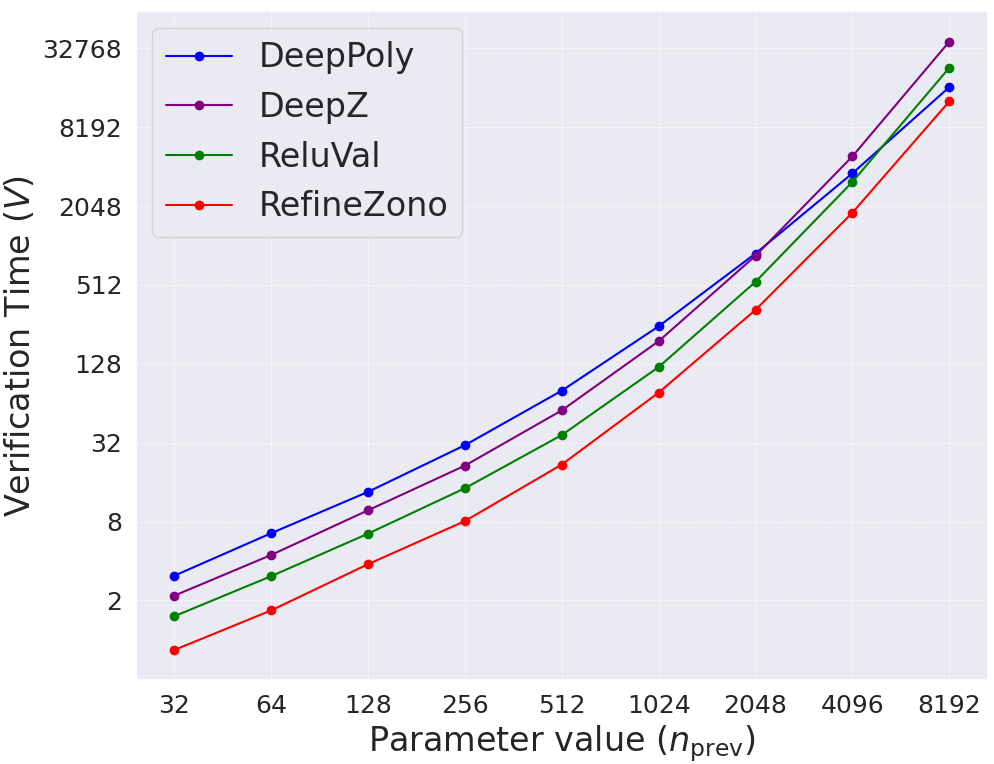}
            \captionsetup{justification=centering}
            \captionof{figure}{Verification time (in s) for \affine transformers.}
            \label{fig:affine}
        \end{minipage}
    \end{figure}
    \begin{table}
    \centering
    \captionsetup{justification=centering}
    \caption{Query generation time (G), verification time (V) for correct implementation, and bug-finding time for randomly introduced bugs (B) in seconds for transformers of existing DNN certifiers (\S~\ref{sec:existing}).}
    \begin{subtable}[t]{\textwidth}
        \centering
        \captionsetup{justification=centering}
        \caption{Primitive operations}
        \label{table:primitivecertifiers}
        \resizebox{\textwidth}{!}{
        \begin{tabular}{@{}l | c  c c | c c c | c  c c | c c c | c c c@{}}
        \toprule
        Certifiers & \multicolumn{3}{c}{\relu} & \multicolumn{3}{c}{\cfkeywords{Max}} & \multicolumn{3}{c}{\cfkeywords{Min}} & \multicolumn{3}{c}{\cfkeywords{Add}} & \multicolumn{3}{c}{\cfkeywords{Mult}}\\
        \text{ } & G & V  & B & G & V  & B & G & V  & B & G & V  & B & G & V  & B\\ 
        \midrule
        DeepPoly/CROWN & 0.196 & 1.526 & 0.066 & 0.095 & 2.618 & 0.074 & 0.128 & 2.829 & 0.601 & 0.0812 & 0.136 & 0.205 & 0.209 & 2.104 & 0.129 \\ 
        Vegas(Backward)& 0.142 & 0.584 & 0.221 & 0.047 & 0.139 & 0.084 & 0.052 & 0.115 & 0.087  & 0.056 & 0.097 & 0.153 & 0.388 & 0.486 & 0.110   \\ 
        DeepZ & 0.0832 & 0.534 & 0.336 & 0.115 & 0.703 & 0.145 & 0.119 & 0.691 & 0.215 & 0.0815  & 0.091 & 0.256 &  0.234 & 0.498 & 0.427 \\ 
        RefineZono & 0.158 & 0.980 & 0.071 &  0.199 & 1.235 & 0.262 & 0.213 & 1.263 & 0.331 & 0.089 & 0.117 & 0.242 & 0.404  & 17.197 & 0.468 \\ 
        IBP & 0.112 & 0.493 & 0.364 & 0.132 & 0.508 & 0.081 & 0.136 & 0.545 & 0.333 & 0.0716 & 0.060 & 0.158 & 0.217 & 1.160 & 0.259 \\ 
        Hybrid Zonotope & 0.260 & 1.003 & 0.341 & 0.132 & 0.775 & 0.292 & 0.132 & 0.724 & 0.626 & 0.086 & 0.286 & 0.204 & 0.209 & 0.520 & 1.397 \\ 
        \bottomrule
        \end{tabular}
        }
        
    \end{subtable}
    
    \begin{subtable}[t]{\textwidth}
        \centering
        \captionsetup{justification=centering}
        \caption{Composite operations}
        \label{table:complexcertifiers}
        \resizebox{\textwidth}{!}{
        \begin{tabular}{@{}l | c  c c | c c c | c  c c | c c c@{}}
        \toprule
        \multirow{2}{*}{Certifiers} & \multicolumn{3}{c}{\affine} & \multicolumn{3}{c}{\maxpool} & \multicolumn{3}{c}{\minpool} & \multicolumn{3}{c}{\avgpool} \\ 
        & G & V  & B & G & V  & B & G & V  & B & G & V  & B   \\ 
        \midrule
        DeepPoly / CROWN &  5.496 & 889.607 & 9.825 & 14.744 & 196.651 & 1396.132 & 13.917 & 194.871 & 1419.119 & 0.137 & 0.363 &  0.131\\ 
        Vegas (Backward) & 2.436 & 25.447 & 25.898 & - & - & - & - & - & - & - & - & -\\ 
        DeepZ & 4.569 & 854.548 & 833.314  & 54.217 & 364.859 &  1780.938 &  52.140 & 292.806 & 1366.977 & 0.0818 & 0.265 &  0.763 \\ 
        RefineZono & 5.436 & 329.994 & 152.825  & 54.788 & 376.177 & 1451.729 &  56.427 & 308.570 & 1799.091 & 0.095 & 0.306 & 0.301\\ 
        IBP & 2.997 & 540.865 & 183.707 & 0.089 & 4.077 & 0.253 & 0.090 & 4.114 & 4.605 & 0.067 & 0.0117 & 0.921 \\ 
        Hybrid Zonotope & - & - & - & 1.816 & 10.610 & 2.892 & 1.503 & 10.598 & 3.395 & 0.318 & 11.499 & 2.568 \\ 
        \bottomrule
        \end{tabular}
        }
        
    \end{subtable}
    
    % \vspace{-5mm}
\end{table}
% \end{minipage}

Notably, with the formal syntax and the operational semantics, \oldtool can handle various flow directions effectively. For instance, the Vegas certifier~\cite{forwardbackward}, which employs both forward and backward flows, is easily expressed in \oldtool. We provide the code for the Vegas certifier in Fig.~\ref{fig:fbincomplete}. The abstract shape and the transformer for the forward direction are the same as the DeepPoly analysis, while the transformer for the backward analysis replaces operations like $\relu$ with $\revrelu$. We can also verify its soundness using \cf (Tables~\ref{table:primitivecertifiers},~\ref{table:complexcertifiers}).

For primitive operations like \cfkeywords{Max}, \cfkeywords{Mult}, etc., there are two implicit inputs to the transformer definitions, namely the input neurons - \cfkeywords{prev0} and \cfkeywords{prev1}. DeepPoly transformers for \cfkeywords{Max} and \cfkeywords{Mult} are shown in Fig.~\ref{fig:attention}. 
The primitive operations - \relu, \cfkeywords{Max}, \cfkeywords{Min}, \cfkeywords{Add}, \cfkeywords{Mult} shown in Table~\ref{table:primitivecertifiers} can be verified in fractions of a second. In Table~\ref{table:complexcertifiers}, we show the evaluation results for the composite operations. For \maxpool and \minpool, the DeepZ and RefineZono transformers are harder to verify because their queries are doubly quantified due to the  \embed operator in their specifications. IBP is the easiest to verify because the limited abstract shape does not allow it to be as precise as other transformers for \maxpool and \minpool. Also, for Vegas, the backward transformers for \maxpool, \minpool, and \avgpool are not available in existing works. Similarly, for the Hybrid Zonotope, the transformer for \affine is defined in terms of transformers for primitive operations. So, we skip these in Table~\ref{table:complexcertifiers}. For \affine, DeepPoly is the hardest because it uses the \traverse construct, which requires additional queries to check the validity of the invariant. Vegas takes the least time because of a relatively simpler verification query. Note that the verification times are not correlated to the runtimes of certifiers on concrete DNNs. In Appendix K.3, we provide the \oldtool code for several of these certifiers. The complexity inherent in these certifiers and their implementations suggests that verifying them solely through pen-and-paper proofs or automated theorem provers is impractical. 

\section{Related Work}
\label{sec:related}
\paragraph{DNN Certification. }
The recent advancements in DNN certification techniques~\cite{albarghouthi} have led to the organization of competitions to showcase DNN certification capabilities~\cite{vnncomp}, the creation of benchmark datasets~\cite{vnnlib}, the introduction of a DSL for specifying certification properties~\cite{reliableneuralspecification, dnnv}, and the development of a library for DNN certifiers~\cite{li2023sok, socrates}. However, these platforms lack formal soundness guarantees and do not offer a systematic approach to designing new certifiers.

\paragraph{DSL for Abstract Interpretation. }
Although~\cite{constraintflow} proposed a preliminary design for \oldtool using a few examples, the absence of formal semantics hinders its use for designing and verifying new DNN certifiers. We equip \oldtool with a BNF grammar, type-system, operational semantics, and symbolic semantics that enable users to specify existing DNN certifiers, design new ones, and verify their soundness using \cf. 

Similarly, ~\cite{tsl} designs TSL---a DSL for abstract interpreters for conventional programs. TSL allows users to specify the concrete semantics and the abstract domain and automatically produces an abstract interpreter based on these specifications. However, it does not provide any specialized datatypes needed to specify DNN certifiers easily. It also does not guarantee the soundness of the abstract interpreter. In contrast, \cf can verify the soundness of the certifier specification.   

\paragraph{Symbolic Execution. }
Similar to \cf DNN expansion step, ~\cite{lazy, pathfinder} employ lazy initialization for symbolic execution of complex data structures like lists, trees, etc. The object fields are initialized with symbolic values only when accessed by the program. Unlike these works, which possess prior knowledge of the exact structure of the objects, DNN certifiers deal with arbitrary DAGs representing DNNs. The graph nodes (neurons) are intricate data structures with unknown graph topology. We believe that we are the first to create a symbolic DNN with sufficient generality to represent arbitrary graph topologies to verify the soundness of DNN certifiers.

\paragraph{Correctness of Symbolic Execution. }
Some existing works prove the correctness of the symbolic execution w.r.t. the language semantics~\cite{10.1145/3547628}. However, these methods do not establish correctness in cases where symbolic execution also represents symbolic variables used in concrete executions. On the other hand, we provide elaborate proofs establishing the correctness of \cf where we encode the symbolic variables within the program as SMT symbolic variables. 
\section{Discussion and Future Work}
\label{sec:discussion}
We develop \cf, a novel bounded automated verification procedure to automatically verify the overapproximation-based soundness of abstract interpretation-based DNN certifiers. We also develop a formal syntax, type-system, operational semantics, and symbolic semantics for \oldtool. For the first time, we can verify the soundness of DNN certifiers for arbitrary (but bounded) DAG topologies. Given the growing concerns about AI safety, we believe that \cf, coupled with \oldtool, allows the development of new DNN certifiers without proving their soundness manually. This work allows the following future directions: 
\paragraph{Multi-neuron specifications} - \cf can be extended to verify multi-neuron abstract shapes \cite{prima} by allowing their specification in \oldtool.
\paragraph{Sequence of Operations} - \cf can also be extended to automatically verify a sequence of DNN operations, like $\affine+\relu$. To do so, while generating the final query, we would execute the concrete semantics of the composition of \affine and \relu. 
\paragraph{Automating Abstract Interpretation} - \cf and \oldtool facilitate the automated generation of abstract transformers~\cite{automatingabs, synthabs, posthat} by offering all the basic components - (i) a DSL for defining the search space of candidate transformers, (ii) the semantics of the DSL, and (iii) a procedure for verifying the soundness of each candidate. This can be explored in future research.
\paragraph{Verification Property} - Currently, the verification property is the over-approximation-based soundness of a DNN certifier. Nevertheless, given that all the necessary formalism for verification has been established, the property can be extended to encompass more intricate aspects, such as encoding precision of a DNN certifier w.r.t. a baseline.
\clearpage
\section{Data-Availability Statement}
The artifact\cite{artifact} consists of \cf implementation and the \oldtool specifications of the DNN certifiers presented in Section~\ref{sec:evaluation} and Appendix K. The code, accompanied by the instructions to run it, can be found  \href{https://zenodo.org/records/14597703}{here}. 
\section*{Acknowledgments}
We thank the anonymous reviewers for their insightful comments. This work was supported in part by NSF Grants No. CCF-2238079, CCF-2316233, CNS-2148583.
\bibliographystyle{ACM-Reference-Format}
\bibliography{main}

\appendix
\clearpage
\section{\oldtool DSL}
\label{appendix:grammar}
\begin{grammar}
<Types> $\types$ ::= $\intt$ | $\float$ | $\bool$ | $\typeneuron$ | $\typenoise$ | $\polyexp$ | $\symexp$ | $\ct$ | $\overline{\types}$

<Binary-op> $\oplus$ ::= $+$ | $-$ | $*$ | $/$ | $\andx$ | $\orx$ | $\geq$ | $\leq$ | $==$ | $\embed$

<Unary-op> $\sim$ ::= $\notx$ 

<Neuron-metadata> $m$ ::= $\weight$ | $\bias$ | $\layer$ | $\equations$

<List-operations> $F_l$ ::= $\maxx$ | $\minn$

<Solver-op> $\lpop$ ::= $\maximize$ | $\minimize$

<Function-call> $f_c$ ::= $\var$

<Transformer-call> $\theta_c$ ::= $\var$

<Direction> $\dir$ ::= $\backward$ | $\forward$

<Expression> $\expr$ ::= $\constant$ | $\var$ | $\sym$ 
\alt $\expr_1 \ \oplus \ \expr_2$ | $\sim \expr$
\alt $\expr_1 \ ? \ \expr_2 \ : \ \expr_3$
\alt $\expr [ m ]$ | $\expr [ x ]$ 
\alt $\var.\traverse(\dir, f_{c_1}, f_{c_2}, f_{c_3})\{\expr\}$ | $\solver(\lpop, \expr_1, \expr_2)$
\alt $F_l(\expr)$ | $F_l(\expr_1, \expr_2)$ | $\argmax(\expr, f_c)$ 
\alt $\summ(\expr)$ | $\avg(\expr)$ | $\len(\expr)$ | $\expr_1.\dott(\expr_2)$ | $\expr_1.\concat(\expr_2)$
\alt $\expr.\map(f_c)$ | $\expr.\mapl(f_c)$
\alt $f_c(\expr_1, \expr_2, \cdots)$

% <Tuple> $u$ ::= $(\expr_1, \expr_2, \cdots)$

<Shape-decl> $d$ ::= $\defshape \ (\types_1 \ x_1, \types_2 \ x_2, \cdots) \{\expr\}$  

<Function-decl> $f_d$ ::= $\func \ \var(\types _1 \ \var_1, \types_2 \ \var_2, \cdots)$

<Function-definition> $f$ ::= $f_d = e$

<DNN-operation> $\eta$ ::= $\affine$ | $\relu$ | $\maxpool$ | $\dotprod$ | $\cfkeywords{Sigmoid}$ | $\cfkeywords{Tanh}$ | $\cdots$

<Transformer-decl> $\theta_d$ ::= $\transformer \ \var$

<Transformer-ret> $\theta_r$ ::= $(\expr_1, \expr_2, \cdots)$ | $(\expr \ ? \ \theta_{r_1} : \theta_{r_2})$

<Transformer> $\theta$ ::= $\theta_d  \{\eta_1 \rightarrow \theta_{r_1}; \eta_2 \rightarrow \theta_{r_2}; \cdots\}$

<Statement> $s$ ::= $\flow(\dir, f_{c_1}, f_{c_2}, \theta_c)$ | $f$ | $\theta$ | $s_1 \ ; \ s_2$

<Program> $\Pi$ ::= $d \ ; \ s$

\end{grammar}

Other than standard types such as $\intt, \float, \bool$, we provide datatypes that are specific to DNN certification including $\typeneuron, \typenoise, \polyexp, \symexp, \ct$. Further, the expression of type $\overline{\types}$ are lists with all the elements of type $\types$.  

We have the standard unary, binary, and ternary operators for expressions. The list operations include computing the sum of a list, finding the maximum or minimum element of a list, the average of a list, the length of a list, the dot product of two lists, and concatenating two lists. Further, the construct \argmax can be used to define the maximum of a list of neurons which takes in a user-defined function that compares two neurons. Other operations are \map, \mapl, \traverse, \solver. function call. These are explained in the main text of the paper. 

The metadata of a neuron is configurable. For now, we provide weight, bias, layer number, and equations as the metadata. This can be extended to include other things that are a part of the DNN architecture. For example, we can add boolean metadata such as the ones indicating whether a neuron is a part of the input layer or a part of the output layer. The metadata $\equations$ is used to refer to a list of equations relating the neurons from the current layer to the neurons of the next layer. When traversing the DNN in the backward direction, Each neuron in \prev corresponds to one equation in $\curr[\equations]$. The metadata weight and bias are referred to as w and b in the main text.

% For backward abstract transformer uses the abstract shapes of neurons in the next layer to refine the neuron's abstract shape members. Here, instead of the operations being $\relu$, $\affine$, etc., they are $\revrelu$, $\revaffine$, etc. In addition to \curr and \prev, the abstract transformer has another formal argument \currlist, which is a list of neurons in the same layer as \curr. Also, in this code, the metadata $\equations$ is used to refer to a list of equations relating the neurons from the current layer to the neurons of the next layer. Each neuron in \prev corresponds to one equation in $\curr[\equations]$. For the $\revaffine$ operation, the code uses the \solver construct to find the minimum and maximum value of the neuron given the bounds of the neurons in the next layer.
\clearpage
\section{Type checking}
\label{appendix:typechecking}
\subsection{Type checking Rules for Expressions}
The general form for type-checking for expressions is $\g \vdash \expr \ : \ \types$. $\Gamma$ is a static record that stores the types of identifiers defined in the code. These identifiers include function names, transformer names, and the formal arguments of functions and transformers. $\tau_s$ stores the types of the shape members defined by the user. The binary operators like $+, and, \leq$ are overloaded to be also used for the new datatypes defined in \oldtool. As in the rule \textsc{T-comparison-1}, the comparison of integers or real numbers outputs a boolean, however, in rule \textsc{T-comparison-2}, the comparison of two $\polyexp$ expressions outputs a constraint of the type $\ct$. 
\label{appendix:typecheckingexpressions}
\begin{figure}[H]
\raggedright
%     \begin{subfigure}{\textwidth}
         \fbox{$\g \vdash \expr \ : \ \types$}
         % \resizebox{\textwidth}{!}{
  \[
  \begin{array}{c}
    \inferrule*[lab={\textsc{T-var}}]
    {
    }
    {
    \g \vdash x : \Gamma(x)
    }
    \hspace{1cm}
    \inferrule*[lab={\textsc{T-noise}}]
    {
    }
    {
    \g \vdash \noise : \typenoise
    } 
    \hspace{1cm}
    \inferrule*[lab={\textsc{T-unary}}]
    {
    \g \vdash \expr : \bool
    }
    {
    \g \vdash \sim\expr : \bool
    }  
    \\\\
    % \hspace{1cm}
    \inferrule*[lab={\textsc{T-comparison-1}}]
    {
    \g \vdash \expr_1 : \types_1 \quad 
    \g \vdash \expr_2 : \types_2 \\\\ 
    \types_1, \types_2 \in \{\intt, \float \} \quad 
    \oplus \in \{\geq, \leq, == \}
    }
    {
    \g \vdash \expr_1 \oplus \expr_2 : \bool  
    }
    \hspace{1cm}
    \inferrule*[lab={\textsc{T-comparison-2}}]
    {
    \g \vdash \expr_1 : \types_1 \quad 
    \g \vdash \expr_2 : \types_2 \\\\ 
    \types_1 \sqcup \types_2 = \types \quad 
    \float \sqsubset \types \sqsubset \ct \\\\
    % \types_1 \sqcup \types_2 \in \{\typeneuron, \typenoise, \polyexp, \symexp\} \\\\ 
    \oplus \in \{\geq, \leq, == \}
    }
    {
    \g \vdash \expr_1 \oplus \expr_2 : \ct   
    }
    \\\\
    % \hspace{1cm}
    \inferrule*[lab={\textsc{T-comparison-3}}]
    {
    \g \vdash \expr_1 : \types_1 \quad 
    \g \vdash \expr_2 : \types_2 \\\\ 
    \types_1 \sqsubseteq \polyexp \\\\ 
    \types_2 \sqsubseteq \symexp
    }
    {
    \g \vdash \expr_1 \ \embed \ \expr_2 : \ct   
    }
    % \\\\
    \hspace{1cm}
    \inferrule*[lab={\textsc{T-binary-bool}}]
    {
    \g \vdash \expr_1 : \types_1 \\\\ 
    \g \vdash \expr_2 : \types_2 \\\\ 
    \oplus \in \{\andx, \orx \} \quad 
    \types_1, \types_2 \in \{\bool, \ct\}
    }
    {
    \g \vdash \expr_1 \oplus \expr_2 : \types_1 \sqcup \types_2  
    }
    \\\\
    % \hspace{1cm}
    \inferrule*[lab={\textsc{T-binary-arith-1}}]
    {
    \g \vdash \expr_1 : \types_1 \quad 
    \g \vdash \expr_2 : \types_2 \\\\ 
    \types_1 \sqcup \types_2 \in \{\polyexp, \symexp\} \\\\ 
    \oplus \in \{+, -\}
    }
    {
    \g \vdash \expr_1 \oplus \expr_2 : \types_1 \sqcup \types_2  
    }
    \hspace{1cm}
    % \\\\
    \inferrule*[lab={\textsc{T-binary-arith-2}}]
    {
    \g \vdash \expr_1 : \types_1 \quad 
    \g \vdash \expr_2 : \types_2 \\\\ 
    \types_2 \sqsubseteq \types_1 \quad 
    \types_2 \sqsubseteq \float \\\\ 
    \oplus \in \{*, /\}
    }
    {
    \g \vdash \expr_1 \oplus \expr_2 : \types_1
    }
    % \hspace{1cm}
    \\\\
    \inferrule*[lab={\textsc{T-binary-arith-3}}]
    {
    \g \vdash \expr_1 : \types_1 \quad 
    \g \vdash \expr_2 : \types_2 \\\\ 
    \types_1 \sqsubseteq \typenoise \quad 
    \types_2 \in \{\typeneuron, \polyexp\} 
    % \oplus \in \{*, /\}
    }
    {
    \g \vdash \expr_1 * \expr_2 : \polyexp
    }
    \hspace{1cm}
    % \\\\
    \inferrule*[lab={\textsc{T-binary-arith-4}}]
    {
    \g \vdash \expr_1 : \types_1 \quad 
    \g \vdash \expr_2 : \types_2 \\\\ 
    \types_1 \in \{\typeneuron, \polyexp\} \quad 
    \types_2 \sqsubseteq \typenoise \\\\ 
    \oplus \in \{*, /\}
    }
    {
    \g \vdash \expr_1 \oplus \expr_2 : \polyexp
    }
    \\\\
    % \hspace{1cm}
    \inferrule*[lab={\textsc{T-binary-mult}}]
    {
    \g \vdash \expr_1 : \types_1 \quad 
    \g \vdash \expr_2 : \types_2 \\\\ 
    \types_1 \sqcup \types_2 \in \{\polyexp, \symexp\} \\\\
    \types_1 \sqsubset \types_2 \ \orx \ \types_2 \sqsubset \types_1
    % \types_1 \in \{\intt, \float\},  \types_2 \in \{\typeneuron, \polyexp, \symexp\} \text{ or }\\\\
    % \types_2 \in \{\intt, \float\},  \types_1 \in \{\typeneuron, \polyexp, \symexp\}
    % \\\\
    % \types_1, \types_2 \in \{\intt, \float, \polyexp, \symexp \} \\\\ 
    % \types_1 < \types_2 \text{ or }\types_2 < \types_1 \quad
    % \joinable(L_1, L_2)
    }
    {
    \g \vdash \expr_1 * \expr_2 : \types_1 \sqcup \types_2
    }
    % \inferrule*[lab={\footnotesize{T-BINARY-BOOL}}]
    % {
    % \g \vdash \expr_1 : \bool, L_1 \\\\ 
    % \g \vdash \expr_2 : \bool, L_2 \\\\ 
    % \oplus \in \{\andx, \orx \} \quad
    % \joinable(L_1, L_2)
    % }
    % {
    % \g \vdash \expr_1 \oplus \expr_2 : \bool, L_1 \join L_2  
    % }
    \hspace{1cm}
    % \\\\
    \inferrule*[lab={\textsc{T-ternary}}]
    {
    \g \vdash \expr_1 : \bool \\\\ 
    \g \vdash \expr_2 : \types_1  \\\\
    \g \vdash \expr_3 : \types_2 
    }
    { 
    \g \vdash (\expr_1 \ ? \  \expr_2 : \expr_3) : \types_1 \sqcup \types_2  
    }
    % \\\\
    % \hspace{1cm}
    % \inferrule*[lab={\footnotesize{T-METADATA-W}}]
    % {
    % \g \vdash \expr : \typeneuron, L \\\\ 
    % \affine \preceq L
    % }
    % {
    % \g \vdash \expr[\weight] : \overline{\float}, \affine
    % }
    \hspace{1cm}
    \inferrule*[lab={\textsc{T-metadata-b}}]
    {
    \g \vdash \expr : \typeneuron 
    % \affine \preceq L
    }
    {
    \g \vdash \expr[\bias] : \float
    }
    % \\\\
    % 
    % \\\\
    % \inferrule*[lab={\footnotesize{T-SUM-1}}]
    % {
    % \g \vdash \expr : \overline{\types}, L \\\\
    % \types \in \{\intt, \float\}
    % }
    % {
    % \g \vdash \add(\expr) : \types, L
    % }
    % \hspace{1cm}
    % % \\\\
    % \inferrule*[lab={\footnotesize{T-SUM-2}}]
    % {
    % \g \vdash \expr : \overline{\typeneuron}, L
    % }
    % {
    % \g \vdash \add(\expr) : \polyexp, L
    % }
    % \hspace{1cm}
    % \inferrule*[lab={\footnotesize{T-DOT}}]
    % {
    % \g \vdash \expr_1 : \overline{\typeneuron}, L_1 \\\\ 
    % \g \vdash \expr_2 : \tau, L_2 \quad 
    % \tau \leq \overline{\float}\\\\
    % \joinable(L_1, L_2)
    % }
    % {
    % \g \vdash \expr_1.\dotx(\expr_2) : \overline{\typeneuron}, L_1 \join L_2
    % }
    % \\\\
    \end{array}
    \]
    % }
\end{figure}
  %    \end{subfigure}
  %    \ContinuedFloat
  %    \begin{subfigure}{\textwidth}
\begin{figure}[H]
\raggedright
%     \resizebox{\textwidth}{!}{
    \[
  \begin{array}{c}
    % \hspace{1cm}
    % \inferrule*[lab={\footnotesize{T-METADATA-L}}]
    % {
    % \g \vdash \expr : \typeneuron, L
    % }
    % {
    % \g \vdash \expr[\layer] : \intt, L
    % }
    % \hspace{1cm}
    \inferrule*[lab={\textsc{T-shape}}]
    {
    \g \vdash \expr : \typeneuron \\\\ 
    \tau_s(x) = \types 
    }
    {
    \g \vdash \expr[x] : \types 
    }
    \hspace{1cm}
    % \inferrule*[lab={\footnotesize{T-METADATA-W2}}]
    % {
    % \g \vdash \expr : \overline{\typeneuron}, L \\\\ 
    % \affine \preceq L
    % }
    % {
    % \g \vdash \expr[\weight] : \overline{\overline{\float}}, \affine
    % }
    % \\\\
    \inferrule*[lab={\textsc{T-metadata-b2}}]
    {
    \g \vdash \expr : \overline{\typeneuron}
    }
    {
    \g \vdash \expr[\bias] : \overline{\float}
    }
    % \hspace{1cm}
    % \inferrule*[lab={\footnotesize{T-METADATA-L2}}]
    % {
    % \g \vdash \expr : \overline{\typeneuron}, L
    % }
    % {
    % \g \vdash \expr[\layer] : \overline{\intt}, L
    % }
    \hspace{1cm}
    \inferrule*[lab={\textsc{T-metadata-shape2}}]
    {
    \g \vdash \expr : \overline{\typeneuron} \\\\ 
    \tau_s(x) = \types 
    }
    {
    \g \vdash \expr[x] : \overline{\types}
    }
    \\\\
    \inferrule*[lab={\textsc{T-min-max}}]
    {
    F_l \in \{\minn, \maxx\} \\\\
    \g \vdash \expr : \overline{\types} \quad 
    \types \sqsubseteq \float 
    }
    {
    \g \vdash F_l(\expr) : \types
    }
    \hspace{1cm}
    \inferrule*[lab={\textsc{T-compare}}]
    {
    % F_l \in \{\argmin, \argmax\} \\\\
    \g \vdash \expr : \overline{\types} \\\\ 
    \Gamma(f_c) = \types \times \types \rightarrow \bool
    }
    {
    \g \vdash \argmax(\expr, f_c) : \overline{\types}
    }
    \hspace{1cm}
    % \\\\
    \inferrule*[lab={\textsc{T-function-call}}]
    {
    \Gamma(f_c) = (\Pi_i^n \types_i)\rightarrow \types \\\\
    \forall i, \g \vdash \expr_i : \types_i
    }
    {
    \g \vdash f_c(\expr_1, \cdots, \expr_n) : \types
    }
    \\\\
    \inferrule*[lab={\textsc{T-map-poly}}]
    {
    \g \vdash \expr : \polyexp \\\\
    \g \vdash f_c : \typeneuron \times \float \rightarrow \types \\\\
    \types \sqsubseteq \polyexp 
    % \\\\
    % \types = \typeneuron \implies \types' = \polyexp \quad \types \not= \typeneuron \implies \types' = \types
    }
    {
    \g \vdash \expr\cdot\map(f_c) : \polyexp 
    }
    \hspace{1cm}
    \inferrule*[lab={\textsc{T-map-sym}}]
    {
    \g \vdash \expr : \symexp \\\\
    \g \vdash f_c : \typenoise \times \float \rightarrow \types \\\\
    \types \sqsubseteq \symexp 
    % \\\\
    % \types = \typeneuron \implies \types' = \polyexp \quad \types \not= \typeneuron \implies \types' = \types
    }
    {
    \g \vdash \expr\cdot\map(f_c) : \polyexp 
    }
    \\\\
    \inferrule*[lab={\textsc{T-map-list}}]
    {
    \g \vdash \expr : \overline{\types} \\\\
    \g \vdash f_c : \types \rightarrow \types' \\\\
    \types \sqsubseteq \polyexp 
    \\\\
    \types \in \{\typeneuron, \polyexp\} \implies \types' = \polyexp \\\\ 
    \types \in \{\typenoise, \symexp\} \implies \types' = \symexp \\\\ 
    \text{otherwise }\types' = \types
    }
    {
    \g \vdash \expr\cdot\map(f_c) : \types' 
    }
    \hspace{1cm}
    \inferrule*[lab={\textsc{T-solver}}]
    {
    \g \vdash \expr_1 : \polyexp \\\\ 
    \g \vdash \expr_2 : \ct
    }
    {
    \g \vdash \solver(\lpop, \expr_1, \expr_2) : \float
    }
    \\\\
    \inferrule*[lab={\textsc{T-traverse}}]
    {
    \g \vdash \expr_1 : \polyexp \quad 
    \g \vdash \expr_2 : \ct \\\\ 
    \g \vdash f_{c_1} : \typeneuron \rightarrow \types' \quad 
    \g \vdash f_{c_2} : \typeneuron \rightarrow \bool \\\\ 
    \g \vdash f_{c_3} : \typeneuron \times \float \rightarrow \types'' \\\\
    \bot \sqsubset \types' \sqsubseteq \float \quad
    \bot \sqsubset \types'' \sqsubseteq \polyexp
    % \types \in \{\typeneuron, \polyexp\} \implies \types'' = \polyexp \\\\
    % \types \not\in \{\typeneuron, \polyexp\} \implies \types'' = \types 
    }
    {
    \g \vdash \expr_1.\traverse(\dir, f_{c_1}, f_{c_2}, f_{c_3})\{\expr_2\} : \polyexp
    }
    % \hspace{1cm}
    % 
    \end{array}
    \]
    % }
%      \end{subfigure}
%     \caption{Caption}
%     \label{fig:enter-label}
\end{figure}
% Following the subtyping lattice for the types in \cf. To complete the lattice, we introduce the types $\top, \bot$. An expression type-checks if the type of the expression is not $\top$ or $\bot$.
% \begin{enumerate}
%     \item $\intt \sqsubset \float$
%     \item $\bool \sqsubset \ct$
%     \item $\typenoise \sqsubset \symexp$
%     \item $\float \sqsubset \symexp$
%     \item $\typeneuron \sqsubset \polyexp$
%     \item $\float \sqsubset \polyexp$
% \end{enumerate}
% \begin{figure}[H]
%     \centering
%     \input{sections/subtyping}
%     \caption{Subtyping relations}
%     % \label{fig:subtyping}
% \end{figure}

\subsection{Type checking Rules for Statements}
The rules for type-checking statements update either $\tau_s$ or $\Gamma$. For function definition statements where the arguments are of the type $\types_1, \cdots, \types_n$ and the return expression is of type $\types$, the type of the function is $(\Pi_i^n\types_i) \rightarrow \types$. $\Gamma$ is updated to map the function name to this type. For transformer definition statements, if they take in the inputs \curr and \prev, which are optional, the inputs are of type $\typeneuron$ and $\overline{\typeneuron}$. The type of the output tuples returned by transformers are the join of the output tuples returned by the return expression of each operation. The type of each transformer is also stored in $\Gamma$. The type of the shape declaration statement is $[\types_1, \cdots, \types_n]$, where $\types_i$ is the type of the $i$th declared shape member.\\

% \section{Type checking of Statements in \cf}
% \begin{figure}[H]
% \raggedright
  \fbox{$\Gamma, \tau_s \vdash s \ : \ \Gamma'$}
  % \fbox{$\Gamma \vdash \ct \ : \ \types_c$}
  % \resizebox{\textwidth}{!}{
  % $
  \[
  \begin{array}{c}
    \inferrule*[lab={\footnotesize{\textsc{T-func}}}]
    {
    \Gamma' = \Gamma[x_1\mapsto \types_1, \cdots x_n \mapsto \types_n] \\\\
    \var \not\in \Gamma \quad
    \Gamma', \tau_s \vdash \expr : \types \quad 
    \bot \sqsubset \types \sqsubset \top
    }
    {
    \Gamma, \tau_s \vdash \func \ x(\types_1 \ x_1, \cdots \types_n \ x_n) = \expr : \Gamma[x \mapsto (\Pi_i^n\types_i) \rightarrow \types]
    }
    \hspace{1cm}
    \inferrule*[lab = \textsc{T-transformer-call}]
    {\Gamma(\theta_c) = \tau}
    {\g \vdash \theta_c : \tau}
    \\\\
    \inferrule*[lab={\textsc{T-transformer-ret-1}}]
    {
    \forall i \in [n], \g \vdash \expr_i : \types_i \quad 
    \types_i \sqsubseteq \tau_s(i)
    }
    {
    \g \vdash (\expr_1, \cdots, \expr_n) : (\types_1, \cdots, \types_n)
    }
    \hspace{1cm}
    \inferrule*[lab={\textsc{T-transformer-ret-2}}]
    {
    \g \vdash \expr : \bool \\\\
    \g \vdash \theta_{r_1} : (\types'_1, \cdots ,\types'_n) \\\\
    \g \vdash \theta_{r_2} : (\types''_1, \cdots ,\types''_n) \\\\
    \forall i \in [n], \types_i = \types'_i \sqcup \types''_i \sqsubset \top
    }
    {
    \g \vdash \mathif(\expr, \theta_{r_1}, \theta_{r_2}) : (\types_1, \cdots ,\types_n)
    }
    \\\\
    \inferrule*[lab={\textsc{T-transformer}}]
    {
    \Gamma' = \Gamma[\curr \mapsto \typeneuron)][\prev \mapsto \overline{\typeneuron}] \\\\
    \forall i \in [m], \Gamma', \tau_s \vdash \theta_{r_i} : (\types^1_i, \cdots, \types^n_i)\\\\
    \var \not\in \Gamma \quad 
    \forall j \in [n], \types^j = \sqcup_{i \in [m]}(\types^j_i) \quad 
    \types^j \sqsubseteq \tau^j_s
    }
    {
    \Gamma, \tau_s \vdash \transformer \ \var(\curr, \prev) = \{\eta_1 : \theta_{r_1}, \cdots\} : \Gamma[x \mapsto (\typeneuron \times \overline{\typeneuron}) \rightarrow (\types^1, \cdots)]
    } 
    \\\\
    \inferrule*[lab={\textsc{T-flow}}]
    {
    % \g \vdash \expr : \typepoly \\\\ 
    \g \vdash f_{c_1} : \typeneuron \rightarrow \types' \\\\ 
    \g \vdash f_{c_2} : \typeneuron \rightarrow \bool \\\\ 
    \theta_c \in \Gamma \quad
    % \g \vdash \theta_c : \tau \\\\
    \types' \sqsubseteq \float
    % \types = \typeneuron \implies \types' = \typepoly \quad \types \not= \typeneuron \implies \types' = \types
    }
    {
    \g \vdash \flow(\dir, f_{c_1}, f_{c_2}, \theta_{c}) : \Gamma
    }
    \hspace{1cm}
    \inferrule*[lab={\textsc{T-seq}}]
    {
    \g \vdash s_1 : \Gamma' \\\\
    \Gamma', \tau_s \vdash s_2 : \Gamma''
    }
    {
    \g \vdash s_1 \ ; \ s_2 : \Gamma''
    } 
\end{array}
  \]
  % }
  
% \caption{Typing judgment for statements}
% \label{fig:typestatement}
% \end{figure}

% \begin{figure}[H]
% \raggedright
  \fbox{$\cdot \ \vdash d \ : \ \tau_s$}
  % \fbox{$\Gamma \vdash \ct \ : \ \types_c$}
  % \resizebox{\textwidth}{!}{
  % $
  \[
  \begin{array}{c}
    \inferrule*[lab={\textsc{T-shape}}]
    {
    \tau_s = [x_1 \mapsto \types_1, \cdots x_n \mapsto \types_n]  \\\\
    \forall i \in [n], \bot \sqsubset \types_i \sqsubset \top
    % \tau_{s_2} = [1 \mapsto \tau_1, \cdots n \mapsto \tau_n] 
    }
    {
    \cdot \ \vdash \defshape \ (\types_1 \ x_1, \cdots, \types_n \ x_n) : \tau_s
    }
\end{array}
  % $
  % }
  \]
% \caption{Typing judgment for shape declaration}
% \label{fig:typekappa}
% \end{figure}

% \begin{figure}[H]
% \raggedright
%\resizebox{0.5\textwidth}{!}{
  \fbox{$\cdot \ \vdash \Pi \ : \ \Gamma, \tau_s$}
  % \fbox{$\Gamma \vdash \ct \ : \ \types_c$}
  % \resizebox{\textwidth}{!}{
  % $
  \[
  \begin{array}{c}
    \inferrule*[lab={\textsc{T-program}}]
    {
    \cdot \vdash d : \tau_s \quad
    \cdot, \tau_s \vdash s : \Gamma 
    }
    {
    \cdot \ \vdash d \ ; \ s \ : \ \Gamma, \tau_s
    }
\end{array}
  % $
% }
  \]
% \caption{Typing judgment for Program}
% \label{fig:typeprogram}
% \end{figure}

% \begin{figure}[H]
%     \centering
%     \input{sections/subtyping}
%     \caption{Subtyping relations}
%     % \label{fig:subtyping}
% \end{figure}

\clearpage
\section{Concrete Values in \oldtool}
\label{appendix:value}
\subsection{Definition}
\begin{grammar}
<constant> $\constant$ ::= $1$ | $2$ | $3$ | $\cdots$ | $\true$ | $\false$

<PolyVal> $\pval$ ::= $\constant$ | $\constant * n$ | $\pval' + \pval''$

<SymVal> $\zval$ ::= $\constant$ | $\constant * \noise$ | $\zval' + \zval''$

<CtVal> $\cval$ ::= $\pval == \pval$ | $\pval\  \embed\  \zval$

<Base-val> $\bval$ ::= $\constant$ | $\pval$ | $\zval$

<List-val> $\lval$ ::= $[\val_{b_1}, \val_{b_2}, \cdots]$

<Val> $\val$ ::= $\bval$ | $\lval$
\end{grammar}

These are the possible values a concrete expression can evaluate using the operational semantics. The value can be a constant, integer, real number, or boolean. The value can also be a neuron or symbolic variable, or an affine combination of either, which would result in a polyhedral expression or symbolic expression. Since we allow solvers in the operational semantics, constraints, stating that two polyhedral values are equal or a polyhedral value is embedded in a symbolic expression, are also values. Lists of values are also valid values.

\subsection{Operations on Concrete Values}
% \begin{figure}[H]
% \raggedright
    \fbox{$\maxx(\lval)$}
    % \resizebox{\textwidth}{!}{
    % $
    \[
    \begin{array}{c}
        \inferrule*[lab = \textsc{V-max-emp}]
        {
        \lval = []
        }
        {
        \maxx(\lval) = 0 
        } 
        \hspace{1cm}
        \inferrule*[lab = \textsc{V-max-non-emp}]
        {
        \lval = [\bval]
        }
        {
        \maxx(\lval) = \bval
        } 
        \hspace{1cm}
        \inferrule*[lab = \textsc{V-max-non-emp-r}]
        {
        \lval = \bval''::\lval' \quad 
        \bval' = \maxx(\lval') \\\\ 
        \bval'' \geq \bval' \implies \bval = \bval'' \\\\
        \bval'' < \bval' \implies \bval = \bval' 
        }
        {
        \maxx(\lval) = \bval
        } 
    \end{array}
    \]
    These are the rules for computing the maximum value of a list. The lists on which this operation is supported can only be of types $\overline{\intt}$ and $\overline{\float}$ so the operations $\ge$ and $<$ are defined. These rules recursively compute the maximum element of the list by comparing each element to the maximum of the remaining elements in the list.\\
%     $
%     }
% \end{figure}
% \noindent
% \begin{figure}[H]
% \raggedright

    \fbox{$\minn(\lval)$}
    % \resizebox{\textwidth}{!}{
    % $
    \[
    \begin{array}{c}
        \inferrule*[lab = \textsc{V-min-emp}]
        {
        \lval = []
        }
        {
        \minn(\lval) = 0 
        } 
        \hspace{1cm}
        \inferrule*[lab = \textsc{V-min-non-emp}]
        {
        \lval = [\bval]
        }
        {
        \minn(\lval) = \bval
        } 
        \hspace{1cm}
        \inferrule*[lab = \textsc{V-min-non-emp-r}]
        {
        \lval = \bval''::\lval' \quad 
        \bval' = \minn(\lval') \\\\ 
        \bval'' \leq \bval' \implies \bval = \bval'' \\\\
        \bval'' > \bval' \implies \bval = \bval' 
        }
        {
        \minn(\lval) = \bval
        } 
    \end{array}
    \]
%     $
%     }
% \end{figure}
These rules are similar to the above rules, except they compute the minimum element of a list, instead of the maximum one.

\noindent
% \begin{figure}[H]
% \raggedright
    \fbox{$\argmax(\lval, f_c, \context)$}
    % \resizebox{\textwidth}{!}{
    % $
    \[
    \begin{array}{c}
        \inferrule*[lab = \textsc{Compare}]
        {
        \lval = [\val_{b_1}, \cdots \val_{b_n}] \quad 
        \fstore[f_c] = (\var_1, \var_2), \expr \\\\ 
        \forall i \in [n], \store_i = \store[\var_1 \mapsto \bval, \var_2 \mapsto \val_{b_i}] \\\\ 
        \langle \expr, \fstore, \store_i, \hh_C \rangle \Downarrow \val'_{b_i} \quad
        \bval' = \bigwedge^{n}_{i=1}\val'_{b_i} \\\\
        \bval' = \true \implies \lval'' = \bval::\lval' \quad 
        \bval' = \false \implies \lval'' = \lval'
        }
        {
        \argmax(\bval, \lval, \lval', f_c, \context) = \lval''
        } 
        \hspace{0.5cm}
        \inferrule*[lab = \textsc{Compare-non-emp}]
        {
        \val_{l_1} = \bval::\lval \\\\
        \
        \val_{l_4} = \argmax(\bval, \val_{l_2}, \val_{l_3}, \context) \\\\
        \val_{l_5} = \argmax(\lval, \val_{l_2}, \val_{l_4}, f_c, \context)
        }
        {
        \argmax(\val_{l_1}, \val_{l_2}, \val_{l_3}, f_c, \context) = \val_{l_5}
        } 
        \\\\
        \inferrule*[lab = \textsc{Compare-emp}]
        {
        \val_{l_1} = []
        }
        {
        \argmax(\val_{l_1}, \val_{l_2}, \val_{l_3}, f_c, \context) = \val_{l_3}
        % \argmaxr(\lval', \lval'', \bval, f_c, \fstore, \rho) = \bval
        } 
        \hspace{1cm}
        % 
        % \\\\
        \inferrule*[lab = \textsc{V-compare-emp}]
        {
        \lval = []
        }
        {
        \argmax(\lval, f_c, \context) = \lval 
        } 
        % \hspace{1cm}
        \\\\
        \end{array}
    \]
%     $
%     }
% \end{figure}
% \begin{figure}
% \raggedright
% \resizebox{\textwidth}{!}{
%     $
\[
    \begin{array}{c}
        \inferrule*[lab = \textsc{V-compare-non-emp}]
        {
        \lval = [\bval]
        }
        {
        \argmax(\lval, f_c, \context) = \lval
        }
        \hspace{1cm}
        \inferrule*[lab = \textsc{V-compare-non-emp-r}]
        {
        \lval = [\val_{b_1}, \cdots \val_{b_n}] \\\\ 
        \lval' = \argmax(\lval, \lval, [], f_c, \context)
        }
        {
        \argmax(\lval, f_c, \context) = \lval'
        } 
    \end{array}
    \] \\
%     $
%     }
% \end{figure}

These rules define the $\argmax$ operation. This operation takes as input a list and a function. The function has to take in two arguments, which will be members of the list, and return true if the first argument is greater than the second argument. The $\argmax$ operation returns a list of all of the elements in the input list that are greater than all other elements. Depending on the function provided, this list could be empty, have one element, or more than one element. The rules \textsc{Compare}, \textsc{Compare-non-emp}, and \textsc{Compare-emp} compute the list of maximum elements while keeping track of an accumulated list of maximums. The rules \textsc{V-compare-emp}, and \textsc{V-compare-non-emp} state that the maximum of a list with 0 or 1 element is that same list. 

\subsection{Value typing}
% \begin{figure}[H]
% \raggedright
% \resizebox{\textwidth}{!}{
%     $
\[
  \begin{array}{c}
    \inferrule*[lab={\textsc{V-int}}]
    {
    }
    {
    \vdash \constant_i : \intt
    }
    \hspace{1cm}
    \inferrule*[lab={\textsc{V-real}}]
    {
    }
    {
    \vdash \constant_r : \float
    }
    \hspace{1cm}
    \inferrule*[lab={\textsc{V-true}}]
    {
    }
    {
    \vdash \true : \bool
    }
    \hspace{1cm}
    \inferrule*[lab={\textsc{V-false}}]
    {
    }
    {
    \vdash \false : \bool
    }
    \\\\
    \inferrule*[lab={\textsc{V-sym}}]
    {
    }
    {
    \vdash \noise : \typenoise
    }
    \hspace{1cm}
    \inferrule*[lab={\textsc{V-neuron}}]
    {
    }
    {
    \vdash n : \typeneuron
    } 
    \hspace{1cm}
    \inferrule*[lab={\textsc{V-polyexp}}]
    {
    }
    {
    \vdash \constant + \sum \constant*n : \polyexp
    }
    \hspace{1cm}
    \inferrule*[lab={\textsc{V-symexp}}]
    {
    }
    {
    \vdash \constant + \sum \constant*\noise : \symexp
    }
    \\\\
    \inferrule*[lab={\textsc{V-ct-1}}]
    {
    \vdash \val_1 : \types_1 \quad
    \vdash \val_2 : \types_2 \\\\
    \types_1 \sqcup \types_2 \in \{\polyexp, \typeneuron\} \\\\
    \oplus \in \{\geq, \leq, ==\}
    }
    {
    \vdash \val_1 \oplus \val_2 : \ct
    }
    \hspace{1cm}
    \inferrule*[lab={\textsc{V-ct-2}}]
    {
    \vdash \val_1 : \polyexp \quad
    \vdash \val_2 : \symexp 
    }
    {
    \vdash \val_1 \embed \val_2 : \ct
    }
    \hspace{1cm}
    \inferrule*[lab={\textsc{V-list}}]
    {
    \vdash \val_i : \types_i \quad  
    \types = \bigsqcup^n_{i=1} \types_i
    }
    {
    \vdash [\val_1, \cdots \val_n] : \overline{\types}
    }
  \end{array}
  \]
  % $
  % }
% \caption{Value typing Judgment}
% \label{fig:value-typing}
% \end{figure}

These are the types of possible values in concrete execution. Integers, real numbers, and boolean values have the standard types. Neurons, polyhedral expressions, symbolic variables, and symbolic expressions have the types $\typeneuron$, $\polyexp$, $\typenoise$ and $\symexp$ respectively. Constraints also use the specialized type, $\ct$. A list of values, which have the type $\types$, has the type $\overline{\types}$
\clearpage
\section{Operational Semantics of \oldtool}
\label{appendix:operationalsemantics}
\subsection{Operational semantics for expressions}
The general form for operational semantics for expressions is $\langle \expr, \fstore, \store, \hh_C \rangle \Downarrow \val $. $\fstore$ is a static record that stores the function names, their formal arguments, and the return expression from the function definitions. $\store$ maps variables to values. $\hh_C$ represents the concrete DNN. It maps each neuron's shape members and metadata to the corresponding concrete value, which comes from the input to the DNN certifier algorithm. The rules \textsc{Op-shape} and \textsc{Op-metadata} evaluate the expression before the shape member or metadata access, and then for the neuron, or each neuron in a list, output the value of the specified shape member or metadata from $\hh_C$. The rule \textsc{Op-compare} refers to \textsc{Compare}, \textsc{Compare-non-emp}, and \textsc{Compare-emp} defined above. The \textsc{Traverse} rule maintains an active set, that starts from the neurons in its input polyhedral expression. Then, it filters out the elements of the active set for which the stopping condition, $f_{c_2}$, evaluates to true, using the rule \textsc{Filter}. After this, it applies the replacement function, $f_{c_3}$, to the neurons, and their corresponding coefficient, in the active set with the highest priority (found using the rule \textsc{Priority}). For each neuron to which the replacement function is applied, this neuron is removed from the active set and replaced by its neighbors (found using the rule \textsc{Neighbour}). The neighbors are defined as the nodes with outgoing edges to a neuron when traversing in the backward direction and the neurons with incoming edges from a neuron when traversing in the forward direction. All of these steps are repeated while the active set is not empty.\\

 % \begin{figure}[H]
    \fbox{$\langle \expr, \context \rangle \Downarrow \val$}\\
    \[
    \begin{array}{c}
        \inferrule*[lab = \textsc{Op-const}]
        { }
        {\langle \constant, \context \rangle \Downarrow \constant}
        \hspace{1cm}
        \inferrule*[lab = \textsc{Op-var}]
        { }
        {\langle \var, \context \rangle \Downarrow \store(\var)}
        \hspace{1cm}
        \inferrule*[lab = \textsc{Op-sym}]
        { }
        {\langle \noise, \context \rangle \Downarrow \noise_{new}}
        % \hspace{1cm}
        \\\\
        \inferrule*[lab = \textsc{Op-unary}]
        {\langle \expr, \context \rangle \Downarrow \val} 
        {\langle \sim\expr, \context \rangle \Downarrow \ \sim\val}
        % \\\\
        \hspace{1cm}
        \inferrule*[lab = \textsc{Op-binary}]
        {
        \langle \expr_1, \context \rangle \Downarrow \val_1 \\\\
        \langle \expr_2, \context \rangle \Downarrow \val_2
        } 
        {\langle \expr_1 \oplus \expr_2, \context \rangle \Downarrow \val_1 \binop \val_2}
        \hspace{1cm}
        \inferrule*[lab = \textsc{Op-ternary}]
        {
        \langle \expr_1, \context \rangle \Downarrow \val_{b_1} \\\\
        \langle \expr_2, \context \rangle \Downarrow \val_2 \\\\
        \langle \expr_3, \context \rangle \Downarrow \val_3 \\\\
        \val_{b_1} = \true \implies \val = \val_2 \\\\
        \val_{b_1} = \false \implies \val = \val_3 
        } 
        {\langle (\expr_1 ? \expr_2 : \expr_3), \context \rangle \Downarrow \val}
        % \hspace{1cm}
        \\\\
        \inferrule*[lab = \textsc{Op-metadata}]
        {
        \langle \expr, \context \rangle \Downarrow n \\\\
        \val = \hh_C[n[m]] 
        } 
        {\langle \expr[m], \context \rangle \Downarrow \val}
        % \\\\
        \hspace{1cm}
        \inferrule*[lab = \textsc{Op-shape}]
        {
        \langle \expr. \context \rangle \Downarrow n \\\\
        \val = \hh_C[n[x]] 
        } 
        {\langle \expr[x], \context \rangle \Downarrow \val}
        % \hspace{1cm}
        \\\\
        \inferrule*[lab = \textsc{Op-max}]
        {
        \langle \expr, \context \rangle \Downarrow \val'
        } 
        {\langle \maxx(\expr), \context \rangle \Downarrow \max(\val')}
        \hspace{1cm}
        \inferrule*[lab = \textsc{Op-compare}]
        {
        \langle \expr, \context \rangle \Downarrow \val' \\\\
        \val = \argmax(\val', f_c, \context)
        } 
        {\langle \argmax(\expr, f_c), \context \rangle \Downarrow \val}
        \\\\
        \\\\
        \end{array}
    \]
    \[
    \begin{array}{c}
        \inferrule*[lab = \textsc{Op-func-call}]
        {
        \forall i \in [n], \ \langle \expr, \context \rangle \Downarrow \val_i \\\\
        \fstore(f_c) = (\var_1, \cdots, \var_n), \expr \\\\
        \store' = \store[\var_1 \mapsto \val_1, \cdots \var_n \mapsto \val_n] \\\\ 
        \langle \expr, \fstore, \store', \hh_C \rangle \Downarrow \val 
        % \fstore, \store', \ver, \dir \models \expr \Downarrow \val  
        }
        {
        \langle f_c(\expr_1, \cdots, \expr_n), \context \rangle  \Downarrow \val
        }
        \hspace{1cm}
        \inferrule*[lab = \textsc{Op-map}]
        {
        \langle \expr, \context \rangle \Downarrow \bval' \quad 
        \bval' = \constant_0 + \sum_{i=0}^{i=l} \constant_i \cdot \ver_i \\\\
        \forall i \in [l], \ \langle f_c(\ver_i, \constant_i), \context \rangle  \Downarrow \val_i \quad 
        \bval = \constant_0 + \sum_{i=0}^{i=l} \val_i
        }
        {
        \langle \expr.\map(f_c), \context \rangle \Downarrow \bval
        }
        \\\\
        \inferrule*[lab = \textsc{Filter}]
        {
        \vset' = \{\ver' \ | \ (\ver' \in \vset) \wedge (\langle f_c(\ver'), \context \rangle \Downarrow \false)\}
        }
        {
        \filter(\vset, f_c, \context) = \vset'
        }
        % \hspace{1cm}
        \\\\
        \inferrule*[lab = \textsc{Priority}]
        {
        \forall \ver_i \in \vset, \langle f_c(\ver_i), \context \rangle \Downarrow \val_i \\\\
        m = \max_i(\val_i)\\\\
        \vset' = \{\ver' \ | \ (\ver' \in \vset) \wedge (\langle f_c(\ver'), \context \rangle \Downarrow m\}
        }
        {
        \priority(\vset, f_c, \context) = \vset'
        }
        \\\\
        \inferrule*[lab = \textsc{Neighbour}]
        {
        \dir = \forward \implies \vset' = \{\ver' \ | \ (\ver, \ver' \in E)\} \\\\
        \dir = \backward \implies \vset' = \{\ver' \ | \ (\ver', \ver \in E)\}
        }
        {
        \neigh(\vset, \dir) = \vset'
        }
        % \hspace{1cm}
        \\\\
        \inferrule*[lab = \textsc{Vertices}]
        {
        \val = \constant_0 + \constant_1 \cdot \ver_1 + \cdots + \constant_l \cdot \ver_l
        }
        {
        \vertices(\val) = \{\ver_1, \cdots ,\ver_l\}
        }
        \hspace{1cm}
        \inferrule*[lab = \textsc{Vertices-2}]
        {
        \val = \constant_0 + \constant_1 \cdot \ver_1 + \cdots + \constant_l \cdot \ver_l
        }
        {
        \val_\vset = \sum_{\ver_j \in \vset} \constant_j \cdot \ver_j
        }
        % \hspace{1cm}
        \\\\
        \inferrule*[lab = \textsc{Op-traverse-1}]
        { }
        {
        \langle \val.\traverse(\dir', f_{c_1}, f_{c_2}, f_{c_3}), \context, \{\} \rangle \Downarrow \val
        }
        % \hspace{1cm}
        \\\\
        \inferrule*[lab = \textsc{Op-traverse-2}]
        {
        \vset' = \priority(\vset, f_{c_1}, \context) \quad 
        \val = \constant + \val_{\vset'} + \val_{\overline{\vset'}} \\\\
        \langle \val_{\vset'}.\map(f_{c_3}), \context \rangle \Downarrow \val' \\\\
        \val'' = \constant + \val' + \val_{\overline{\vset'}} \\\\
        \vset'' = \filter((\vset \setminus \vset') \cup \neigh(\vset', \dir), f_{c_2}, \context) \\\\
        % \vset''' = \filter(\vset'', f_{c_2}, \context) \\\\
        \langle \val''.\traverse, \context, \vset'' \rangle \Downarrow \val'''
        }
        {
        \langle \val.\traverse(\dir, f_{c_1}, f_{c_2}, f_{c_3}), \context, \vset \rangle \Downarrow \val'''
        }
        \\\\
        \inferrule*[lab = \textsc{Op-traverse}]
        {
        \langle \expr, \context \rangle \Downarrow \val \quad 
        \vset = \filter(\vertices(\val), f_{c_2}, \context) \\\\
        \langle \val.\traverse(\dir, f_{c_1}, f_{c_2}, f_{c_3}), \context, \vset \rangle \Downarrow \val'
        }
        {
        \langle \expr_1.\traverse(\dir, f_{c_1}, f_{c_2}, f_{c_3})\{\_\}, \context \rangle \Downarrow \val'
        }
    \end{array}
    \]
% \end{figure}

\subsection{Operational semantics for \oldtool statements}
In operational semantics for statements, function definition statements modify $\fstore$ by adding the function name, arguments, and return expression. Transformer definitions similarly modify $\tstore$ by adding the return tuples for each operation to $\tstore$, which is mapped to by the transformer name. The rules for the flow statement specify that if the constraints flow in the forward direction, the initial active set of neurons is the ones in the input layer. Then, this set is filtered to only retain the neurons for which the stopping condition is evaluated to $\false$. After this, the transformer specified in the flow statement is applied to each neuron in the active set. Lastly, each neuron in the active set is replaced with its neighbors. These steps are repeated until the active set of neurons in empty.
\begin{figure}[H]
\raggedright
    \fbox{$\langle s, \env \rangle \Downarrow \fstore', \tstore', \hh_C'$}
    \[
    \begin{array}{c}
        \inferrule*[lab = \textsc{Op-func-def}]
        { }
        {
        \langle \func \ \var(\types _1 \ \var_1, \types_2 \ \var_2, \cdots) = \expr, \env \rangle \Downarrow \fstore[\var \mapsto ((\var_1, \var_2, \cdots), \expr)], \tstore, \hh_C
        }
        \\\\
        \inferrule*[lab = \textsc{Op-transformer-def}]
        { }
        {
        \langle \transformer \ \var = \{\eta_1 \rightarrow \theta_{r_1}; \cdots\}, \env \rangle \Downarrow \fstore, \tstore[\var \mapsto (\{\eta_1 \rightarrow \theta_{r_1}; \cdots\})], \hh_C
        }
        \\\\
        \inferrule*[lab = \textsc{Op-transformer-ret}]
        {
        \forall i \in [n], \langle \expr_i, \context \rangle \Downarrow \val_i
        }
        {
        \langle (\expr_1, \cdots, \expr_n), \context \rangle \Downarrow (\val_1, \cdots, \val_n)
        }
        \hspace{1cm}
        \inferrule*[lab = \textsc{Op-transformer-ret-if}]
        {
        \langle \expr, \context \rangle \Downarrow \val \\\\
        \langle \theta_{r_1}, \context \rangle \Downarrow t_1 \quad
        \langle \theta_{r_2}, \context \rangle \Downarrow t_2 \\\\
        \val = \true \implies t = t_1 \quad
        \val = \false \implies t = t_2
        }
        {
        \langle \mathif(\expr, \theta_{r_1}, \theta_{r_2}), \context \rangle \Downarrow t
        }
        \\\\
        \inferrule*[lab = \textsc{Transformer-call}]
        {
        \tstore(\theta_c) = \{\eta_1 \rightarrow \theta_{r_1}, \cdots, \eta_m \rightarrow \theta_{r_m}\} \\\\
        \forall \ver_i \in \vset, \ \store_i = [\curr \mapsto \ver_i][\prev \mapsto \neigh(\ver_i, \dir)] \\\\
        \ver_i.\type = \eta_j \quad
        \langle \theta_{r_j}, \fstore, \store_i, \hh_C \rangle \Downarrow (\val^1_i, \cdots, \val^n_i) \\\\
        \hh'_C = \hh_C[\ver_i.\shape \mapsto (\val^1_i, \cdots, \val^n_i)]
        }
        {
        \langle \theta_c, \env, \vset, \dir \rangle \Downarrow \hh_C'
        }
        \\\\
        \inferrule*[lab = \textsc{Flow-emp}]
        { }
        {
        \langle \flow(\dir, f_{c_1}, f_{c_2}, \theta_c), \env, \{\} \rangle \Downarrow \hh_C
        }
        \hspace{1cm}
        \inferrule*[lab = \textsc{Flow-non-emp}]
        {
        \vset' = \priority(\vset, f_{c_1}, \context) \\\\
        \langle \theta_c, \env, \vset', \dir \rangle \Downarrow \hh_C' \\\\
        \vset'' = \filter((\vset - \vset') \cup \neigh(\vset', \dir), f_{c_2}, \context) \\\\
        % \vset''' = \filter(\vset'', f_{c_2}, \fstore, \dir) \\\\
        \langle \flow(\dir, f_{c_1}, f_{c_2}, \theta_c), \env', \vset'' \rangle \Downarrow \hh_C''
        }
        {
        \langle \flow(\dir, f_{c_1}, f_{c_2}, \theta_c), \env, \vset \rangle \Downarrow \hh_C''
        }
        \\\\
        \inferrule*[lab = \textsc{Op-flow}]
        {
        \dir = \forward \implies \vset = \filter(\{\ver \ | \ \ver.\inputt = \true\}, f_{c_2}, \context)\\\\
        \dir = \backward \implies \vset = \filter(\{\ver \ | \ \ver.\outputt = \true\}, f_{c_2}, \context) \\\\
        \langle \flow(\dir, f_{c_1}, f_{c_2}, \theta_c), \env, \vset \rangle \Downarrow \hh_C'
        }
        {
        \langle \flow(\dir, f_{c_1}, f_{c_2}, \theta_c), \env \rangle \Downarrow \env'
        }
        \\\\
        \inferrule*[lab = \textsc{Op-seq}]
        {
        \langle s_1; s_2, \env \rangle \Downarrow \fstore', \store', \hh_C' \\\\
        \langle s_1; s_2, \fstore', \store', \hh_C' \rangle \Downarrow \fstore'', \store'', \hh_C''
        }
        {
        \langle s_1; s_2, \env \rangle \Downarrow \fstore'', \store'', \hh_C''
        }
    \end{array}
    \]
\end{figure}
\clearpage
\section{Type Soundness}
\label{appendix:typesoundness}

\subsection{Type-checking for expressions}
\renewcommand{\thelemma}{\ref{lemma:optype-checking}}
\begin{lemma}
% \label{appendixlemma:tcexpressions}
If,
\begin{enumerate}
    \item $\Gamma, \tau_s \vdash \expr : \types$
    \item $\bot \sqsubset \types \sqsubset \top$
    \item $\fstore \sim \Gamma, \tau_s$
    \item $\store \sim \Gamma$
    \item $\hh_C \sim \tau_s$
    \item $\hh_C$ is finite
    % \item $V$ is the set of Neurons in $\range(\rho)$
\end{enumerate}
then,
\begin{enumerate}
    \item $\langle \expr, \context \rangle \Downarrow \val$
    \item $\vdash \val : \types'$
    \item $\types' \sqsubseteq \types$
\end{enumerate}
\end{lemma}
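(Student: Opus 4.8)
The plan is to prove Lemma~\ref{lemma:optype-checking} by induction on the derivation of $\Gamma, \tau_s \vdash \expr : \types$, equivalently by structural induction on $\expr$. For each syntactic form I will name the applicable big-step rule (Fig.~\ref{fig:operationalsemantics} and the full set in the operational-semantics appendix), discharge its side conditions using the consistency hypotheses $\fstore \sim \Gamma,\tau_s$, $\store \sim \Gamma$, $\hh_C \sim \tau_s$ and the finiteness hypothesis on $\hh_C$, apply the induction hypothesis to the immediate subexpressions, and then reconcile the type $\types'$ of the resulting value with the type $\types$ produced by the type checker using the subtyping lattice of Fig.~\ref{fig:subtyping} — chiefly monotonicity of $\sqcup$ and the fact that the value-typing rules are the image of the expression-typing rules.

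For the routine cases this is bookkeeping. Constants and $\noise$ follow immediately from \textsc{V-int}/\textsc{V-real}/\textsc{V-sym}; a variable $\var$ is handled by $\store \sim \Gamma$, which makes $\store(\var)$ have a type $\sqsubseteq \Gamma(\var)$; unary, binary, and ternary operators follow from the IH on subexpressions plus the observation that every arithmetic/comparison/boolean value-typing rule yields exactly the join appearing in the matching type rule, so $\types' \sqsubseteq \types$; shape and metadata accesses $\expr[x]$ and $\expr[m]$ use $\hh_C \sim \tau_s$, which guarantees $\hh_C$ maps each field of each neuron to a value of the declared type; a function call $f_c(\expr_1,\dots,\expr_n)$ uses $\fstore \sim \Gamma,\tau_s$ — the IH gives well-typed argument values, the extended store $\store[\var_i \mapsto \val_i]$ is consistent with the callee's local environment, and the IH on the body closes the case; and $\expr.\map(f_c)$ uses the IH to reduce $\expr$ to a normal form $\constant_0 + \sum_i \constant_i\cdot\ver_i$, then the IH on each $f_c(\ver_i,\constant_i)$, the sum of the results being again a \polyexp\ (or \symexp) by the grammar of values. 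The list operations, $\argmax$, and $\solver$ are analogous.

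The one genuinely delicate case is $\var.\traverse(\dir, f_{c_1}, f_{c_2}, f_{c_3})\{\_\}$, because rule \textsc{Op-traverse} unfolds into an a priori unbounded chain of \textsc{Op-traverse-2} steps, so even to conclude $\langle \expr,\context\rangle \Downarrow \val$ I must prove this inner recursion terminates. Here I will reuse the measure argument sketched for Theorem~\ref{thm:welltyped}: since $\hh_C$ is finite, fix the topological order on the DAG underlying $\hh_C$ induced by the direction $\dir$, and encode the active set $\vset$ as a bit vector $B$ over the neurons in that order. Using the IH on the first-order calls occurring inside $\priority$, $\filter$, $\neigh$, and $\vertices$, these auxiliary relations are total (in particular priorities are reals by \textsc{T-traverse}, so $\max$ in \textsc{Priority} is defined), and each \textsc{Op-traverse-2} step replaces the highest-priority block $\vset'$ of active neurons by a filtered subset of its $\dir$-neighbours, which lie strictly later in the chosen order or are dropped by $f_{c_2}$; hence $B$, read as a binary numeral, is bounded below and strictly decreases, so the chain halts with \textsc{Op-traverse-1} as base case. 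Finally, since the replacement function's return type is $\sqsubseteq \polyexp$ and the running value is rebuilt by summation each step, the terminal value has type \polyexp, matching the type rule; the $\var_1 : \polyexp$ premise of \textsc{T-traverse} plus the IH gives the polyhedral value that \textsc{Op-traverse} feeds to $\vertices$.

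\textbf{Main obstacle.} Everything except \traverse\ is a mechanical lattice computation; the real work is the termination measure for \traverse\ (and the parallel measure for \flow\ needed in Lemma~\ref{lemma:statementtype-checking}). The subtlety is that newly added neighbours count as ``later'' only with respect to the order induced by $\dir$, so the measure must be chosen accordingly and the $\dir = \forward$ and $\dir = \backward$ sub-cases of \textsc{Neighbour} handled separately; one must also check that filtering by $f_{c_2}$ never re-introduces an already-processed neuron. I note that this lemma is deliberately stated for a \emph{fixed, finite} $\hh_C$ — it is precisely that finiteness that powers the measure — whereas the universal quantification over arbitrary DNN topologies is handled later by the symbolic-DNN construction of \S~\ref{sec:verification}, not here.
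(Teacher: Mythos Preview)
Your proposal is correct and follows essentially the same approach as the paper: structural induction on $\expr$, with the routine cases discharged by the consistency hypotheses and the subtyping lattice, and the \traverse\ case handled by the same bit-vector ranking function over a topological order of the (finite) DAG. The paper fixes a single topological order and reads the bit vector with the last neuron as most-significant bit, arguing (WLOG for the backward direction) that removing a neuron at position $p$ and adding its neighbours contributes at most $\sum_{i<p} 2^i < 2^p$; your ``order induced by $\dir$'' is an equivalent reformulation, and your side remark about re-introduction is already absorbed by that inequality.
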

\renewcommand{\thelemma}{\Alph{section}.\arabic{lemma}}

\begin{proof}[Proof sketch]
This lemma states that if an expression type-checks, then it will terminate in the operational semantics and evaluate to a value. This value will either be the same type that the expression type checks to, or a subtype. We prove this lemma for all expressions, using induction on the structure of an expression. 
To prove the termination of every expression in \oldtool, the only non-trivial case is $\traverse$. For this case, we use a ranking function and show that it is bounded by 0 and decreases by at least 1 in each iteration.
\end{proof}

\begin{proof}Proof by induction on the structure of $\expr$\\
\textbf{Base Cases:}\\
$\mathbf{\expr \equiv \constant}$
\setcounter{number}{1}
\begin{align*}
    &\expr \equiv \constant_i \wedge \Gamma, \tau_s \models \expr : \intt \text{ or } \expr \equiv \constant_f \wedge \Gamma, \tau_s \models \expr : \float  && \text{From \textsc{T-const}} && \mycounter & \\
    &\text{From \textsc{Op-const}} \langle \expr, \fstore, \store, \hh_C \rangle \Downarrow \constant && \text{Consequent (1)} && \mycounter &\\
    &\text{From \textsc{V-int} and \textsc{V-real} } \vdash \constant_i : \intt \text{ and } \vdash \constant_f : \float && \text{Consequent 2} && \mycounter & \\
    &\intt \sqsubseteq \intt \text{ and } \float \sqsubseteq \float && \text{Consequent (3)} && &\\
\end{align*}
$\mathbf{\expr \equiv \var}$
\setcounter{number}{1}
\begin{align*}
    & \var \in \Gamma && \text{From (1) and \textsc{T-var}} && \mycounter &\\
    &\var \in \store(x) &&\text{From Antecedent (4)} && \mycounter &\\
    & \text{From (2) and \textsc{Op-var} } \langle \var,\fstore,\store,\hh_C \rangle \Downarrow \store(\var) && \text{Consequent (1)} && \mycounter &\\
    &\text{From Antecedent (4) } \Gamma, \tau_s \vdash \store(\var) : \Gamma(\var) \quad \Gamma(\var) \leq \Gamma(\var)&& \text{Consequents (2) and (3)}&& \mycounter &\\
\end{align*}

\textbf{Induction Cases:}\\
$\mathbf{\expr \equiv \var.\traverse(\dir, f_{c_1}, f_{c_2}, f_{c_3})}\{\expr_1\}$
\setcounter{number}{1}
\begin{align*}
& \inferrule*[lab = \textsc{T-traverse}]
    {
    \g \vdash \var : \polyexp \quad 
    \g \vdash \expr_1 : \ct \\\\ 
    \g \vdash f_{c_1} : \typeneuron \rightarrow \types' \quad 
    \g \vdash f_{c_2} : \typeneuron \rightarrow \bool \\\\ 
    \g \vdash f_{c_3} : \typeneuron \times \float \rightarrow \types \\\\
    \types' \leq \float \quad
    \types \leq \polyexp 
    % \quad \joinable(L, L_1, L_2, L_3)\\\\
    \types \in \{\typeneuron, \polyexp\} \implies \types'' = \polyexp \\\\
    \text{otherwise }\types'' = \types
    }
    {
    \g \vdash \var.\traverse(\dir, f_{c_1}, f_{c_2}, f_{c_3})\{\expr_1\} : \types''
    } && && \mycounter &\\
    &\text{From the induction hypothesis using (1) and Antecedents (3),(4) and (5)}\\
    & \langle \expr_1, \fstore, \store, \hh_C \rangle \Downarrow \val_1 && && \mycounter &\\
    &\vdash \val_1 :\types_1  \quad \types_1 \sqsubseteq \ct && && \mycounter &\\
    & \langle \var, \fstore, \store, \hh_C \rangle \Downarrow \val_2 && && \mycounter &\\
    &\vdash \val_2 :\types_2  \quad \types_2 \sqsubseteq \polyexp && && \mycounter &\\
    &\text{From \textsc{Op-traverse, Op-traverse-1, Op-traverse-2},}\\
    &\textsc{Neighbour, Filter, Priority, Vertices and Vertices-2 } \\
    &\text{The following are the steps to the traverse operational semantics:} && && \mycounter &\\
\end{align*}
\begin{enumerate}
    \item[a.] Start with an active set of neurons, $\vertices(\val_2)$ equals to the neurons in the polyhedral expression $\val_2$. 
    \item[b.] Create $\vset = \filter(\vertices(\val_2),f_{c_2}, \fstore, \store, \hh_C)$
    \item[c.] If $\vset = \emptyset$, output $\val_2$. Otherwise, do steps d through g below.
    \item[d.] Create $\vset' =  \priority(\vset, f_{c_1}, \context)$
    \item[e.] $\val_2 = \constant + \val_{\vset'} + \val_{\overline{\vset'}}$\\
    $\val'_2 = \constant + \val_{\overline{\vset'}} + \sum_{n} f_{c_3}(c_n,n)$ \\
    where $n$ is all the neurons in $\val_2$ that are also in $\vset'$ and $c_n$ is the coefficient of $n$ in $\val_2$.
    \item[f.] Create $\vset'' = \filter((\vset - \vset') \cup \neigh(\vset', \dir), f_{c_2}, \context)$
    \item[g.] $\vset = \vset'' \quad \val_2 = \val'_2$ Go back to step c. 
\end{enumerate}
To prove that this algorithm terminates we will define $L(V)$ as follows. Since the DNN has to be a DAG, there is a topological ordering of the neurons in the DNN, $n_0, n_1, \cdots, n_k$. Let's assume, without loss of generality, that we are traversing the DNN in the backward direction. We will define a function 
$$
\ell_{\vset}(n)=
\begin{cases}
1, n \in \vset \\
0,n \notin \vset \\
\end{cases}
$$
We will define $L(\vset)$ to be the binary number represented by  $\ell_{\vset}(n_k)\ell_{\vset}(n_{k-1})\cdots\ell_{\vset}(n_1)\ell_{\vset}(n_0)$\\
$L(\vset)$ is bounded by 0 by the definition of $\ell_{\vset}$.\\
 We will show that the algorithm above terminates by showing that $L(\vset)$ decreases by at least 1 in each iteration of the loop.\\
 \begin{enumerate}
     \item P returns the subset of $\vset$ containing all of the neurons with the highest priority. $\vset' \subseteq \vset$
     \item From (1) and the definition of $L$, $L(\vset) = L(\vset') + L(\vset \setminus \vset')$
     \item $\vset' = {n_{v_1}, \cdots, n_{v_j}}$. We know $\vset'$ is not empty because $\vset$ is not empty(otherwise the loop would be over), and since there are a finite number of neurons in the DNN, and therefore, a finite number in $\vset$ there has to be at least one neuron in $\vset$ with the highest priority. We can define $\vset'_0=\vset'$ and for $i \in [j]$, we can define $\vset'_i = \vset'_{i-1} \setminus n_{v_i} \cup \{n' | (n', n) \in E \}$  
     \item When going in the backward direction, $L(\vset'_{i-1} \setminus n_{v_i}) = L(\vset'_{i-1}) - 2^p$ where $p$ is the index of $n_{v_i}$ on the topological order. $L(\vset'_i = \vset'_{i-1} \setminus n_{v_i} \cup \{n' | (n', n) \in E \}) <= L(\vset'_{i-1}) - 2^p + \sum_{i=1}^{p-1} 2^i$ because when traversing backwards, every neighbor has to be before $n_{v_i}$ in the topological sort. Therefore, $L(\vset'_i) < L(\vset'_{i-1})$.
     \item $\neigh(\vset', \backward) = \vset'_j $ From (4), $L(\neigh(\vset', \backward)) = L(\vset'_j) < L(\vset')$
     \item $L((\vset - \vset') \cup \neigh(\vset', d)) <= L(\vset - \vset') + L(\neigh(\vset', d)) $
     \item From (5), $L(\vset - \vset') + L(\neigh(\vset', d))  < L(\vset - \vset') + L(\vset') = L(\vset)$
     \item $\filter$ returns a subset of the set of neurons passed to it. \\
     Therefore, $L(\filter((\vset - \vset') \cup \neigh(\vset', d)), f_{c_2}, \fstore, \store, \hh_C) <= L((\vset - \vset') \cup \neigh(\vset', d))$
     \item From (6), (7) and (8), $L(\filter((\vset - \vset') \cup \neigh(\vset', d)), f_{c_2}, \fstore, \store, \hh_C) < L(\vset) $
     \item From (9), $L(\vset'') < L(\vset)$ Therefore, $L(\vset)$ decreases in each iteration of the loop. Since it is a binary number that is bounded by 0, this means the loop must terminate.
 \end{enumerate}
Now that we have shown that this algorithm terminates and produces a value, $\val$, under operational semantics (Consequent (1)), we still have to prove that $\Gamma, \tau_s \vdash \val : \types$.
\setcounter{number}{1}
% \begin{figure}[H]
\begin{align*}
&\vdash \val_2 : \types_2 \quad \types_2 \sqsubseteq \polyexp && \text{From (5) above} && \mycounter &&\\
&\g \vdash f_{c_1} : \typeneuron \rightarrow \types'_1 \quad \types'_1 \sqsubseteq \float &&\text{From Antecedent (1) and \textsc{T-traverse}} && \mycounter &&\\
&\g \vdash f_{c_2} : \typeneuron \rightarrow \bool  &&\text{From Antecedent (1) and \textsc{T-traverse}} && \mycounter &&\\
& \g \vdash f_{c_3} : \typeneuron \times \float \rightarrow \types'_2 \quad \types'_2 \sqsubseteq \polyexp &&\text{From Antecedent (1) and \textsc{T-traverse}} && \mycounter &&\\
&\val'_2 = \constant + \val_{\overline{\vset'}} + \sum_{n} f_{c_3}(c_n,n) &&\text{From step e above }  && \mycounter &&\\
&\vdash \constant + \val_{\overline{\vset'}} : \types'_3 \quad \types'_3 \leq \polyexp && \text{From (1), (4) and (5)} && \mycounter &&\\
% \end{align*}
% \end{figure}
% \begin{figure}[H]
% \begin{align*}
&\types'_2 = \intt \implies \vdash \sum_{n} f_{c_3}(c_n,n) : \intt && \text{From \textsc{V-int}} && \mycounter &\\
&\types'_2 = \float \implies \vdash \sum_{n} f_{c_3}(c_n,n) : \float && \text{From \textsc{V-real}} && \mycounter &\\
&\types'_2 = \typeneuron \implies \vdash \sum_{n} f_{c_3}(c_n,n) : \polyexp && \text{From \textsc{V-polyexp}} && \mycounter &\\
&\types'_2 = \polyexp \implies \vdash \sum_{n} f_{c_3}(c_n,n) : \polyexp && \text{From \textsc{V-polyexp}} && \mycounter &\\
&\text{From (4),(7),(8),(9),(10) and \textsc{T-traverse}}\\
&\vdash \val : \types' \quad \types' \sqsubseteq \types && \text{Consequent (2) and (3)} && \mycounter &\\
\end{align*}
% \end{figure}

\noindent
$\mathbf{\expr \equiv \lp(\lpop,\expr_1,\expr_2)}$
\setcounter{number}{1}
\begin{align*}
    & \Gamma, \tau_s \vdash \expr_1 : \types_1 \quad \types_1 \sqsubseteq \float \quad \types = \types_1 && \text{From (1) and \textsc{T-solver}} && \mycounter &\\
    & \Gamma, \tau_s \vdash \expr_2 : \ct && \text{From (1) and \textsc{T-solver}} && \mycounter &\\
    &\text{From \textsc{Op-solver}, there are 3 possibilities:}\\
    &\text{From induction hypothesis using (1), (2)}\\
    &\text{and Antecedents (3),(4) and (5)}\\
    &\langle \expr_1, \fstore, \store, \hh_C \rangle \Downarrow \val_1 \quad \vdash \val_1 : \types_1 && && \mycounter &\\
    % &\text{From induction hypothesis using (2) and Antecedents (3),(4) and (5)}\\
    &\langle \expr_2, \fstore, \store, \hh_C \rangle \Downarrow \val_2 \quad \vdash \val_2 : \ct && && \mycounter &\\
    &\types_1= \intt \text{ or } \expr_2 \text{ is a conjunction of linear constraints containing}\\
    &\text { at least one expression, $s=n[x]$, such that $\tau_s(x) = \intt$:}\\
    &\text{An external call to a solver is made} && \text{From \textsc{Op-solver}} && \mycounter &\\
    &\langle \lp(\lpop,\expr_1,\expr_2), \fstore, \store, \hh_C \rangle \Downarrow \val  && \text{From (3),(4),(5)} && \mycounter &\\
    &\types_1 = \intt \implies \vdash \val : \intt \quad \types_1 = \float \implies \vdash \val : \float &&\text{From (5) and (6)} && \mycounter &\\
    &\expr_2 \text{ is a conjunction of linear constraints:}\\
    &\text{An external call to a solver is made} && \text{From \textsc{Op-solver}} && \mycounter &\\
    &\langle \lp(\lpop,\expr_1,\expr_2), \fstore, \store, \hh_C \rangle \Downarrow \val  && \text{From (3),(4),(8)} && \mycounter &\\
    &\vdash \val : \float &&\text{From (8) and (9)} && \mycounter &\\
    &\types_1 = \float && \text{From (1)} && \mycounter &\\
    &\expr_2 \text{ is not a conjunction of linear constraints:}\\
    &\text{An external call to an SMT solver is made} && \text{From \textsc{Op-solver}} && \mycounter &\\
    &\langle \lp(\lpop,\expr_1,\expr_2), \fstore, \store, \hh_C \rangle \Downarrow \val  && \text{From (3),(4),(12)} && \mycounter &\\
    &\types_1 = \intt \implies \vdash \val : \intt \quad \types_1 = \float \implies \vdash \val : \float  &&\text{From (12) and (13)} && \mycounter &\\
    &\text{From (6),(8) and (13) } \langle \lp(\lpop,\expr_1,\expr_2), \fstore, \store, \hh_C \rangle \Downarrow \val  && \text{Consequent (1)} && \mycounter &\\
    &\text{From (1),(7),(10),(11) and (14) } \vdash \val : \types \quad \types \sqsubseteq \types&& \text{Consequent (2) and (3)} && \mycounter &\\
\end{align*}
\newpage
$\mathbf{\expr \equiv \expr_1.\map(f)}$
\setcounter{number}{1}
\begin{align*}
&\text{Using \textsc{T-map-poly, T-map-sym}, } \\
& \Gamma, \tau_s \vdash \expr_1 : \polyexp \text{ or } \Gamma, \tau_s \vdash \expr_1 : \symexp &&  && \mycounter &\\
&\text{Case 1: } \Gamma, \tau_s \vdash \expr_1 : \polyexp&&&& \mycounter &\\
&\text{From the induction hypothesis using (2) and }\\
&\text{Antecedents (3),(4) and (5)}\\
&\langle \expr_1, \fstore, \store, \hh_C \rangle \Downarrow \val_1 && && \mycounter&\\
&\vdash \val_1 : \types'_1 \quad \types'_1 \sqsubseteq \polyexp && && \mycounter &\\
& \val_1 = \constant_0 + \sum_{i=0}^{i=l} \constant_i \cdot \ver_i  && \text{From (4)} && \mycounter &\\
&\Gamma, \tau_s \vdash f :  \typeneuron \times \float \rightarrow \types'' \quad \types'' \sqsubseteq \polyexp && \text{From \textsc{T-map-poly}}&& \mycounter&\\
&\forall i \in [l], \vdash c_i : \float \quad \vdash \ver_i : \typeneuron && \text{From (5)}&& \mycounter &\\
&\text{From (6),(7) and \textsc{T-func-call}} \\
&\forall i \in [l], \Gamma, \tau_s \vdash f(c_i,\ver_i) : \types'' && && \mycounter &\\
&\text{From induction hypothesis using (6),(8) and }\\
&\text{Antecedents (3),(4) and (5)}\\
&\langle f(c,\ver_i),\fstore, \store, \hh_C \rangle \Downarrow \val'_i && && \mycounter &\\
&\vdash \val'_i : \types'_i \quad \types'_i \sqsubseteq \polyexp && && \mycounter &\\
&\text{From (5), (10) and \textsc{Op-map}}\\
&\langle \expr_1.\map(f), \fstore, \store, \hh_C \rangle \Downarrow \constant_0 + \sum_{i=0}^{i=l} \val_i&& \text{Consequent (1)}&& \mycounter &\\
&\text{From (7),(10) and \textsc{V-polyexp} } \vdash \constant_0 + \sum_{i=0}^{i=l} \val_i : \types' \wedge \types' \sqsubseteq \types &&\text{Consequents (2) and (3)} && \mycounter&\\
&\text{Case 2: } \Gamma, \tau_s \vdash \expr_1 : \symexp &&&&  \mycounter&\\
&\text{From the induction hypothesis using (13) and }\\
&\text{Antecedents (3),(4) and (5)}\\
&\langle \expr_1, \fstore, \store, \hh_C \rangle \Downarrow \val_1 && && \mycounter&\\
&\vdash \val_1 : \types'_1 \quad \types'_1 \sqsubseteq \symexp && && \mycounter &\\
& \val_1 = \constant_0 + \sum_{i=0}^{i=l} \constant_i \cdot \ver_i  && \text{From (15)} && \mycounter &\\
&\Gamma, \tau_s \vdash f :  \typenoise \times \float \rightarrow \types'' \quad \types'' \sqsubseteq \symexp && \text{From \textsc{T-map-sym}}&& \mycounter&\\
&\forall i \in [l], \vdash c_i : \float \quad \vdash \ver_i : \typenoise && \text{From (16)}&& \mycounter &\\
&\text{From (17), (18) and \textsc{T-func-call}} \\
&\forall i \in [l], \Gamma, \tau_s \vdash f(c_i,\ver_i) : \types'' && && \mycounter & \\
&\text{From induction hypothesis using (17),(19) and }\\
&\text{Antecedents (3),(4) and (5)}\\
\end{align*}
\begin{align*}
&\langle f(c,\ver_i),\fstore, \store, \hh_C \rangle \Downarrow \val'_i && && \mycounter &\\
&\vdash \val'_i : \types'_i \quad \types'_i \sqsubseteq \symexp && && \mycounter &\\
&\text{From (16), (21) and \textsc{Op-map}}\\
&\langle \expr_1.\map(f), \fstore, \store, \hh_C \rangle \Downarrow \constant_0 + \sum_{i=0}^{i=l} \val_i&& \text{Consequent (1)}&& \mycounter &\\
&\text{From (18),(21) and \textsc{V-symexp} } \vdash \constant_0 + \sum_{i=0}^{i=l} \val_i : \types' \wedge \types' \sqsubseteq \types &&\text{Consequents (2) and (3)} && \mycounter&\\
\end{align*}

\noindent
$\mathbf{\expr \equiv f_c(\expr_1, \cdots, \expr_n)}$
\setcounter{number}{1}
\begin{align*}
&\text{From \textsc{T-func-call} and Antecedent (1)} \\
&\Gamma(f_c) = (\Pi_i^n \types_i)\rightarrow \types && && \mycounter &\\
&\text{From \textsc{T-func-call} and Antecedent (1)} \\
&\forall i, \g \vdash \expr_i : \types_i && && \mycounter &\\
&\bot \sqsubset \types \sqsubset \top \quad \bot \sqsubset \types_i \sqsubset \top&& \text{From \textsc{T-func}}&& \mycounter &\\
&\text{From the induction hypothesis using (2),(3) and}\\
&\text{Antecedents (3),(4) and (5)}\\
&\langle \expr_i, \context \rangle \Downarrow \val_i && && \mycounter &\\
&\vdash \val_i : \types_i && && \mycounter &\\
&f_c \in \dom(\fstore) \quad \fstore(f_c) = (\var_1, \cdots, \var_n),\expr' && \text{From Antecedent (4)} && \mycounter &\\
&\store' = \store[\var_1 \mapsto \val_1, \cdots, \var_n \mapsto \val_n] && && \mycounter &\\
&\text{From (2),(7) and Antecedents (3) and (4)} \\
&\store' \sim \Gamma && &&\mycounter &\\
&\Gamma, \tau_s \vdash \expr' : \types &&\text{From (1) and Antecedent (3)} && \mycounter &\\
&\text{From induction hypothesis using (3),(8),(9),}\\
&\text{and Antecedents (3) and (5)}\\
&\langle \expr', \fstore, \store', \hh_C \rangle \Downarrow \val  && && \mycounter &\\
&\vdash \val : \types'' \quad \types'' \sqsubseteq \types && \text{Consequents (2) and (3)}&& \mycounter &\\
&\text{From (4),(6),(7), (10) and \textsc{Op-func-call}}\\
&\langle f_c(\expr_1, \cdots, \expr_n), \fstore, \store', \hh_C \rangle \Downarrow \val &&\text{Consequent (1)} && &\\
\end{align*}
\newpage
\noindent
$\mathbf{\expr \equiv \expr_1 + \expr_2}$
\setcounter{number}{1}
\begin{align*}
&\text{From \textsc{T-binary-arith-1} }\Gamma, \tau_s \vdash \expr_1 : \types_1 \quad \Gamma, \tau_s \vdash \expr_2 : \types_2 &&  &&\mycounter &\\
&\text{From \textsc{T-binary-arith-1} }\types_1 \sqcup \types_2 \in \{ \polyexp, \symexp \} &&  && \mycounter &\\
&\text{From (2) }\bot \sqsubset \types_1 \sqsubset \top \quad \bot \sqsubset \types_2 \sqsubset \top && && \mycounter &\\
&\text{From the induction hypothesis using (1),(3) and Antecedents (3),(4) and (5)}\\
&\langle \expr_1, \fstore, \store, \hh_C \rangle \Downarrow \val_1 \quad \vdash \val_1 : \types'_1 \quad \types'_1 \sqsubset \types_1 && && \mycounter &\\
&\text{From the induction hypothesis using (1),(3) and Antecedents (3),(4) and (5)}\\
&\langle \expr_2, \fstore, \store, \hh_C \rangle \Downarrow \val_2 \quad \vdash \val_2 : \types'_2 \quad \types'_2 \sqsubset \types_2 && && \mycounter &\\
&\text{Here are all of the possible combinations of types, $\types_1,\types_2$ using (2)}\\
&\text{Using (4), (5), \textsc{Op-binary}}\\
&\types_1,\types_2 \in \{ \intt\} \text{:} \\
&\langle \expr, \fstore, \store, \hh_C \rangle \Downarrow \val_1 + \val_2 \quad  \vdash \val_1 + \val_2: \types_3  \quad \types_3 \sqsubseteq \intt = (\intt \sqcup \intt) &&   && \mycounter &\\
&\types_1,\types_2 \in \{ \intt, \float \} \text{:} \\
&\langle \expr, \fstore, \store, \hh_C \rangle \Downarrow \val_1 + \val_2 \quad  \vdash \val_1 + \val_2: \types_3  \quad \types_3 \sqsubseteq \float = (\intt \sqcup \float) &&   && \mycounter &\\
&\types_1,\types_2 \in \{ \float \} \text{:} \\
&\langle \expr, \fstore, \store, \hh_C \rangle \Downarrow \val_1 + \val_2 \quad  \vdash \val_1 + \val_2: \types_3  \quad \types_3 \sqsubseteq \float = (\float \sqcup \float) &&   && \mycounter &\\
&\types_1,\types_2 \in \{ \intt, \polyexp \} \text{:} \\
&\langle \expr, \fstore, \store, \hh_C \rangle \Downarrow \val_1 + \val_2 \quad  \vdash \val_1 + \val_2: \types_3  \quad \types_3 \sqsubseteq \polyexp = (\intt \sqcup \polyexp) &&   && \mycounter &\\
&\types_1,\types_2 \in \{ \float, \polyexp \} \text{:} \\
&\langle \expr, \fstore, \store, \hh_C \rangle \Downarrow \val_1 + \val_2 \quad  \vdash \val_1 + \val_2: \types_3  \quad \types_3 \sqsubseteq \polyexp = (\float \sqcup \polyexp) &&   && \mycounter &\\
&\types_1,\types_2 \in \{ \typeneuron, \polyexp \} \text{:} \\
&\langle \expr, \fstore, \store, \hh_C \rangle \Downarrow \val_1 + \val_2 \quad  \vdash \val_1 + \val_2: \types_3  \quad \types_3 \sqsubseteq \polyexp = (\typeneuron \sqcup \polyexp) &&   && \mycounter &\\
&\types_1,\types_2 \in \{ \polyexp, \polyexp \} \text{:} \\
&\langle \expr, \fstore, \store, \hh_C \rangle \Downarrow \val_1 + \val_2 \quad  \vdash \val_1 + \val_2: \types_3 \quad \types_3 \sqsubseteq \polyexp = (\polyexp\sqcup \polyexp) &&   && \mycounter &\\
&\types_1,\types_2 \in \{ \typeneuron, \typeneuron \} \text{:} \\
&\langle \expr, \fstore, \store, \hh_C \rangle \Downarrow \val_1 + \val_2 \quad  \vdash \val_1 + \val_2: \types_3  \quad \types_3 \sqsubseteq \polyexp = (\typeneuron \sqcup \typeneuron) &&   && \mycounter &\\
&\types_1,\types_2 \in \{ \intt, \symexp \} \text{:} \\
&\langle \expr, \fstore, \store, \hh_C \rangle \Downarrow \val_1 + \val_2 \quad  \vdash \val_1 + \val_2: \types_3 \quad \types_3 \sqsubseteq \symexp = (\intt \sqcup \symexp) &&   && \mycounter &\\
&\types_1,\types_2 \in \{ \float, \symexp \} \text{:} \\
&\langle \expr, \fstore, \store, \hh_C \rangle \Downarrow \val_1 + \val_2 \quad  \vdash \val_1 + \val_2: \types_3  \quad \types_3 \sqsubseteq \symexp = (\float \sqcup \symexp) &&   && \mycounter &\\
&\types_1,\types_2 \in \{ \typenoise, \symexp \} \text{:} \\
&\langle \expr, \fstore, \store, \hh_C \rangle \Downarrow \val_1 + \val_2 \quad  \vdash \val_1 + \val_2: \types_3 \quad \types_3 \sqsubseteq \symexp = (\typenoise \sqcup \symexp) &&   && \mycounter &\\
&\types_1,\types_2 \in \{ \symexp, \symexp \} \text{:} \\
&\langle \expr, \fstore, \store, \hh_C \rangle \Downarrow \val_1 + \val_2 \quad  \vdash \val_1 + \val_2: \types_3 \quad \types_3 \sqsubseteq \symexp = (\polyexp\sqcup \symexp) &&   && \mycounter &\\
&\types_1,\types_2 \in \{ \typenoise, \typenoise \} \text{:} \\
&\langle \expr, \fstore, \store, \hh_C \rangle \Downarrow \val_1 + \val_2 \quad  \vdash \val_1 + \val_2: \types_3 \quad \types_3 \sqsubseteq \symexp = (\typenoise \sqcup \typenoise &&   && \mycounter &
\end{align*}
\begin{align*}
&\text{In all of the cases above:}\\
&\langle \expr, \fstore, \store, \hh_C \rangle \Downarrow \val_1 + \val_2 && \text{Consequent (1)} && &\\
&\vdash \val_1 + \val_2 : \types_3 \quad \types_3 \sqsubseteq \types_1 \sqcup \types_2 && \text{Consequent (2)} && \mycounter &\\
\end{align*}

$\mathbf{\expr \equiv \expr_1 \leq \expr_2}$
\setcounter{number}{1}
\begin{align*}
&\text{From \textsc{T-comparison-1, T-comparison-2}} \\
&\Gamma, \tau_s \vdash \expr_1 : \types_1 \quad \Gamma, \tau_s \vdash \expr_2 : \types_2 &&  &&\mycounter &\\
&\text{From \textsc{T-comparison-1, T-comparison-2}} \\
&\types_1,\types_2 \in \{\intt, \float \} \text{ or } \types_1 \sqcup \types_2 \in \{ \polyexp \} &&  && \mycounter &\\
&\text{From Induction hypothesis using (1),(2) and}\\
&\text{Antecedents (3),(4) and (5)}\\
&\langle \expr_1, \fstore, \store, \hh_C \rangle \Downarrow \val_1  \quad \vdash \val_1 : \types'_1 \quad \types'_1 \sqsubseteq \types_1 && && \mycounter &\\
&\text{From Induction hypothesis using (1),(2) and}\\
&\text{Antecedents (3),(4) and (5)}\\
&\langle \expr_2, \fstore, \store, \hh_C \rangle \Downarrow \val_2  \quad \vdash \val_2 : \types'_2 \quad \types'_2 \sqsubseteq \types_2 && && \mycounter &\\
&\text{If }\types_1,\types_2 \in \{\intt, \float \} &&&& \mycounter &\\
&\types = \bool &&\text{From \textsc{T-comparison-1}} && \mycounter &\\
&\langle \expr_1 \leq \expr_2, \fstore, \store, \hh_C \rangle \Downarrow \val_1 \leq \val_2 && \text{From \textsc{Sym-binary}} &&\mycounter &\\
&\vdash \val_1 \leq \val_2 : \bool &&\text{From (3),(4), and (5)} &&\mycounter  &\\
&\text{If }\types_1 \sqcup \types_2 \in \{ \polyexp \} &&&& \mycounter &\\
&\types = \ct &&\text{From \textsc{T-comparison-2}} && \mycounter &\\
&\langle \expr_1 \leq \expr_2, \fstore, \store, \hh_C \rangle \Downarrow \val_1 \leq \val_2 && \text{From \textsc{Sym-binary}} &&\mycounter &\\
&\vdash \val_1 \leq \val_2 : \ct &&\text{From (3),(4), and (9)} &&\mycounter  &\\
&\text{From ((5),(6),(7),(8),(9),(10),(11) and (12)}\\
&\langle \expr_1 \leq \expr_2, \fstore, \store, \hh_C \rangle \Downarrow \val_1 \leq \val_2 \quad \vdash \val_1 \leq \val_2 : \types_3 \quad \types_3 \sqsubseteq \types && \text{Consequents (1) and (2)}&& &\\
\end{align*}
\end{proof}

\newpage

\subsection{Type-checking for statements}
\renewcommand{\thelemma}{\ref{lemma:statementtype-checking}}
\begin{lemma}
% \label{appendixlemma:tcstatements}
If,
\begin{enumerate}
    \item $\Gamma, \tau_s \vdash s : \Gamma'$
    \item $\fstore \sim \Gamma, \tau_s$
    \item $\store \sim \Gamma$
    \item $\hh_C \sim \tau_s$
    \item $\hh_C$ is finite
\end{enumerate}
then,
\begin{enumerate}
    \item $\langle s, \fstore, \store, \hh_C \rangle \Downarrow \fstore', \store', \hh'_C$
    \item $\fstore' \sim \Gamma', \tau_s$
    \item $\store' \sim \Gamma'$
    \item $\hh'_C \sim \tau_s$
\end{enumerate}
\end{lemma}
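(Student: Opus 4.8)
The plan is to proceed by induction on the derivation of $\Gamma, \tau_s \vdash s : \Gamma'$, case-splitting exactly as in the statement typing rules (\textsc{T-func}, \textsc{T-transformer}, \textsc{T-flow}, \textsc{T-seq}). In each case we must exhibit an operational derivation $\langle s, \fstore, \store, \hh_C \rangle \Downarrow \fstore', \store', \hh'_C$ and check that the resulting contexts remain consistent with $(\Gamma', \tau_s)$. The two definitional cases are the easy ones and reduce to Lemma~\ref{lemma:optype-checking}; the \flow case is the crux.

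For $s \equiv \func\ x(\types_1\ x_1, \dots, \types_n\ x_n) = e$, rule \textsc{T-func} gives $\Gamma' = \Gamma[x \mapsto (\Pi_{i}^{n}\types_i) \rightarrow \types]$ with $\Gamma[x_1 \mapsto \types_1, \dots, x_n \mapsto \types_n], \tau_s \vdash e : \types$ and $\bot \sqsubset \types \sqsubset \top$. The operational rule \textsc{Op-func-def} leaves $\store$ and $\hh_C$ untouched and sets $\fstore' = \fstore[x \mapsto ((x_1,\dots,x_n), e)]$, so consequents 1, 3, 4 are immediate; for consequent 2 we must verify the new $\fstore$-entry is consistent, i.e.\ that whenever a store binds the $x_i$ to values of type $\types_i$, $e$ evaluates to a value whose type refines $\types$---this is precisely Lemma~\ref{lemma:optype-checking} applied to $e$. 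The transformer case $s \equiv \transformer\ x = \{\eta_1 \rightarrow \theta_{r_1}, \dots\}$ is analogous: \textsc{T-transformer} records the transformer type in $\Gamma$ and guarantees each $\theta_{r_i}$ type-checks to a tuple whose $j$-th component is $\sqsubseteq \tau^j_s$; \textsc{Op-transformer-def} adds the clause list to the transformer store; consistency of the new entry again follows from Lemma~\ref{lemma:optype-checking}, lifted to transformer-return tuples via \textsc{T-transformer-ret-1}, \textsc{T-transformer-ret-2}, \textsc{Op-transformer-ret} and \textsc{Op-transformer-ret-if}. The sequence case $s \equiv s_1; s_2$ is routine: \textsc{T-seq} gives $\Gamma, \tau_s \vdash s_1 : \Gamma'$ and $\Gamma', \tau_s \vdash s_2 : \Gamma''$; apply the IH to $s_1$ to obtain consistent $\fstore', \store', \hh'_C$ (with $\hh'_C$ still finite, since no statement adds neurons), then the IH to $s_2$ under these to conclude.

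The main obstacle is $s \equiv \flow(\dir, f_{c_1}, f_{c_2}, \theta_c)$. Here \textsc{T-flow} leaves $\Gamma$ unchanged ($\Gamma' = \Gamma$), so we only need termination of the flow traversal and that the final $\hh'_C$ still satisfies $\hh'_C \sim \tau_s$. Termination is handled by the same device used for \traverse in the proof of Lemma~\ref{lemma:optype-checking}: fix a topological ordering of the (finitely many) neurons, represent the active set as a bit vector, and show that each iteration of \textsc{Flow-non-emp} strictly decreases it---the highest-priority neurons selected by $\priority$ are removed and replaced by $\neigh$-neighbors that occur strictly earlier (for $\backward$) or strictly later (for $\forward$) in the order, and $\filter$ only shrinks the set---so the loop is well-founded. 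For the invariant, we argue inductively over the flow iterations that $\hh_C \sim \tau_s$ is maintained: in \textsc{Transformer-call}, for each active neuron $\ver_i$ with $\ver_i.\type = \eta_j$, we run $\langle \theta_{r_j}, \fstore, \store_i, \hh_C \rangle \Downarrow (\val^1_i, \dots, \val^n_i)$ where $\store_i$ binds \curr to $\ver_i$ and \prev to $\neigh(\ver_i, \dir)$ consistently, and Lemma~\ref{lemma:optype-checking} together with \textsc{T-transformer} gives $\vdash \val^k_i : \types'^k$ with $\types'^k \sqsubseteq \tau^k_s$; hence the updated record $\hh_C[\ver_i.\shape \mapsto (\val^1_i,\dots,\val^n_i)]$ remains consistent with $\tau_s$. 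Since only finitely many such updates are applied and none changes the neuron set, $\hh'_C \sim \tau_s$ and $\hh'_C$ finite hold at the end. A minor point to handle carefully is the consistency of the transformer store with $\Gamma$, so that $\theta_c$ resolves to a well-typed clause list before \textsc{Transformer-call} is applied.
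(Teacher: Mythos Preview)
Your proposal is correct and follows essentially the same approach as the paper: induction on the structure of $s$, reducing the definitional cases (\textsc{T-func}, \textsc{T-transformer}) to Lemma~\ref{lemma:optype-checking}, and handling \flow by preserving $\hh_C \sim \tau_s$ through each transformer application via that same lemma. You are more explicit than the paper's (very terse) proof in spelling out the termination argument for the \flow loop via the bit-vector ranking, but that is exactly the device the paper advertises for \traverse and implicitly relies on here as well.
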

\renewcommand{\thelemma}{\Alph{section}.\arabic{lemma}}

\begin{proof}
    For function declaration statements and transformer definition statements, the expressions within them are successfully type-checked. So, according to Lemma~\ref{lemma:optype-checking}, the updated $\fstore$ and $\tstore$ remains consistent with $\Gamma', \tau_s$ after adding the new mapping.
    For \flow statements, an already defined abstract transformer is applied to the DNN $\hh_C$. Since $\tstore$ is consistent with $\Gamma, \tau_s$, after each application of the abstract transformer, the updated DNN $\hh'_C$ remains consistent with $\Gamma', \tau_s$.
\end{proof}

\subsection{Type soundness theorem}

\renewcommand{\thetheorem}{\ref{thm:welltyped}}
\begin{theorem}
    A well-typed program in \oldtool successfully terminates according to the operational semantics, i.e., $\ttt \models \ooo$. Formally, if $\cdot \vdash \Pi : \Gamma, \tau_s$ then $\langle\Pi, \hh_C\rangle \Downarrow \hh'_C$
\end{theorem}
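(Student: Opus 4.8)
The plan is to obtain Theorem~\ref{thm:welltyped} as a direct corollary of Lemmas~\ref{lemma:optype-checking} and~\ref{lemma:statementtype-checking}, after checking the base-case consistency obligations for a program run from scratch. First I would unfold the grammar: a program has the form $\Pi \equiv d \, ; \, s$, and any typing derivation $\cdot \vdash \Pi : \Gamma, \tau_s$ must end with the rule \textsc{T-program}, whose premises supply $\cdot \vdash d : \tau_s$ and $\cdot, \tau_s \vdash s : \Gamma$. The shape declaration $d$ only populates the record $\tau_s$ of shape-member types (rule \textsc{T-shape}) and has no runtime effect, so operationally the evaluation of $\Pi$ on $\hh_C$ reduces to evaluating the statement $s$ in the initial stores.

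Next I would discharge the antecedents of Lemma~\ref{lemma:statementtype-checking} at the initial state: the empty function store is vacuously consistent with the empty $\Gamma$ and with $\tau_s$; the empty variable store is consistent with the empty $\Gamma$; and the input DNN $\hh_C$ is finite by hypothesis and, because it provides exactly the shape members and metadata declared by $d$, is consistent with $\tau_s$. With these in hand, Lemma~\ref{lemma:statementtype-checking} applied to $\cdot, \tau_s \vdash s : \Gamma$ gives $\langle s, \fstore, \store, \hh_C\rangle \Downarrow \fstore', \store', \hh'_C$; threading this through the operational rule for programs yields $\langle \Pi, \hh_C\rangle \Downarrow \hh'_C$, i.e.\ $\ttt \models \ooo$.

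The substance lies in the two lemmas, both proved by structural induction (on $\expr$ for Lemma~\ref{lemma:optype-checking}, on $s$ for Lemma~\ref{lemma:statementtype-checking}), and the main obstacle is the \traverse case of Lemma~\ref{lemma:optype-checking}, since \traverse is the only construct whose loop has an iteration count that is not syntactically bounded and instead depends on the DAG topology of $\hh_C$. Here I would prove termination via a ranking function: fix a topological order $n_0, \dots, n_k$ of the DAG, let $\ell_\vset$ be the membership indicator of the active worklist $\vset$, and read $\ell_\vset(n_k)\cdots\ell_\vset(n_0)$ as a nonnegative binary number $L(\vset)$. Each iteration selects the highest-priority subset $\vset' \subseteq \vset$, replaces each of its members by its neighbours---which, traversing backward, occur strictly earlier in the order---and re-filters by the stopping predicate; a small computation shows $L$ strictly decreases, so the loop halts. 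Once termination is established, type preservation for \traverse is a routine case split over the subtyping lattice (Fig.~\ref{fig:subtyping}) using the premises of \textsc{T-traverse} and the value-typing rules \textsc{V-polyexp}, \textsc{V-symexp}, and the like. The remaining inductive cases---function calls (via the induction hypothesis applied to the body under an extended, still-consistent store), \map, \solver, and the arithmetic and comparison operators---follow the same shape and require only careful bookkeeping of the lattice joins $\types_1 \sqcup \types_2$; and Lemma~\ref{lemma:statementtype-checking} is then immediate, since function and transformer definitions merely extend $\fstore$/$\tstore$ with well-typed entries and \flow applies an already-well-typed transformer across the neurons of $\hh_C$.
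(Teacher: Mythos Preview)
Your proposal is correct and follows essentially the same approach as the paper: the theorem is reduced to Lemmas~\ref{lemma:optype-checking} and~\ref{lemma:statementtype-checking}, and the only nontrivial case, \traverse, is handled by the identical ranking-function argument (topological order on the DAG, indicator bit-vector read as a binary number that strictly decreases). Your write-up adds a bit more explicit bookkeeping on discharging the initial-state consistency hypotheses, but the structure and key idea match the paper's proof exactly.
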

\renewcommand{\thetheorem}{\Alph{section}.\arabic{theorem}}

The proof follows from Lemmas~\ref{lemma:optype-checking} and ~\ref{lemma:statementtype-checking}.

\clearpage
\section{Symbolic Values for Verification Procedure}
\label{appendix:sval}
\subsection{Definition}
\begin{grammar}
<Base-sym-val> $\bsval$ ::= $\sval_r$ | $\sval_i$ | $\sval_b$ | $\constant$
\alt $\neg \sval_1$ | $\sval_1 \oplus \sval_2$
\alt If$(\sval_1, \sval_2, \sval_3)$

<List-sym-val> $\lsval$ ::= $[\sval_{b_1}, \sval_{b_2}, \cdots ]$
\alt $If(\bsval, \sval_{l_1}, \sval_{l_2})$

<Sym-val> $\sval$ ::= $\bsval$ | $\lsval$
\end{grammar}

These are the possible values a symbolic expression can be evaluated to using symbolic semantics. A symbolic value can be a symbolic variable of type real, integer, or boolean, a unary or binary operation applied to a symbolic variable, and a conditional value of the form  $if(\sval_1, \sval_2, \sval_3)$. Integer, boolean, or real constants can also be symbolic values because sometimes the exact constant is known even during symbolic execution. Lists of symbolic values and conditional operations on lists of symbolic values are also considered symbolic values. 

\subsection{Operations on Symbolic Values}
% \begin{figure}[H]
    \fbox{$\height(\lsval)$}
    \[
    \begin{array}{c}
        \inferrule*[lab = \textsc{Sval-height-b}]
        {
        \lsval = [\sval_{b_1}, \sval_{b_2}, \cdots ] 
        }
        {
        \height(\lsval) = 0
        } 
        \hspace{1cm}
        \inferrule*[lab = \textsc{Sval-height-r}]
        {
        \lsval = If(\bsval, \sval_{l_1}, \sval_{l_2}) \\\\
        \constant_1 = \height(\sval_{l_1}) \quad 
        \constant_2 = \height(\sval_{l_2}) \\\\ 
        \constant = 1 + \max(\constant_1, \constant_2)
        }
        {
        \height(\lsval) = \constant
        } 
    \end{array}
    \]
% \end{figure}
$\height(\lsval)$ is used to determine the maximum number of nested conditional values in the branches of a conditional value. $\height(\lsval)$ is 0 when called on a symbolic value that is not conditional. This is used to prove lemmas that involve values of the form $[\sval_{b_1}, \cdots, \sval_{b_n}]$, which can be proved by induction on $\height(\lsval)$. A similar notion can be defined for basic values to find the number of nested conditional symbolic values before a basic symbolic value that is not conditional is reached.\\

\noindent
% \begin{figure}[H]
    \fbox{$\expand(\sval)$}
    \[
    \begin{array}{c}
        \inferrule*[lab = \textsc{Expanded-poly}]
        {
        \bsval = \sval_{b_0} + \sum \sval_{b_i} * n_i 
        }
        {
        \expand(\bsval)
        } 
        \hspace{1cm}
        \inferrule*[lab = \textsc{Expanded-sym}]
        {
        \bsval = \sval_{b_0} + \sum \sval_{b_i} * \epsilon_i 
        }
        {
        \expand(\bsval)
        } 
        \\\\
        \inferrule*[lab = \textsc{Expanded-list}]
        {
        \lsval = [\sval_{b_0}, \cdots, \sval_{b_n}] \\\\
        \forall i \in [n], \expand(\sval_{b_i})
        }
        {
        \expand(\lsval)
        }
        \hspace{1cm}
        \inferrule*[lab = \textsc{Expanded-if}]
        {
        \sval = If(\sval_1, \sval_2, \sval_3) \\\\
        \expand(\sval_2) \quad 
        \expand(\sval_3) 
        }
        {
        \expand(\sval)
        }
        \\\\
        \inferrule*[lab = \textsc{Expanded-binary}]
        {
        % \sval = \sval_1 \oplus \sval_2 \\\\
        \expand(\sval_1) \quad 
        \expand(\sval_2) 
        }
        {
        \expand(\sval_1 \oplus \sval_2)
        }
    \end{array}
    \]
% \end{figure}
These rules define expanded values that are necessary when symbolically executing functions like $\map$ and $\traverse$ because these operations apply a function to each pair of coefficient and neuron, or coefficient and symbolic variable, in a polyhedral or symbolic expression. Here, a polyhedral expression or symbolic expression has to have each of its constituent neurons or symbolic variables represented explicitly. as opposed to the whole polyhedral or symbolic variable being defined by one symbolic variable of type real, which is how these expressions are initially defined.\\

\noindent
% \begin{figure}[H]
    \fbox{$\add(\lsval)$}
    \[
    \begin{array}{c}
        \inferrule*[lab = \textsc{Sval-sum-b}]
        {
        \lsval = [\sval_{b_1}, \sval_{b_2}, \cdots \sval_{b_n}] \\\\
        \bsval' = \sum^n_{i=1}\sval_{b_i}
        }
        {
        \add(\lsval) = \bsval'
        } 
        \hspace{1cm}
        \inferrule*[lab = \textsc{Sval-sum-r}]
        {
        \sval = If(\bsval, \sval_{l_1}, \sval_{l_2}) \\\\
        \sval'_{b_1} = \add(\sval_{l_1}) \quad 
        \sval'_{b_2} = \add(\sval_{l_2}) \\\\ 
        \bsval' = If(\bsval, \sval'_{b_1}, \sval'_{b_2})
        }
        {
        \add(\lsval) = \bsval'
        } 
    \end{array}
    \]
% \end{figure}
These rules compute the sum of a list symbolic value. The sum of a list is the sum of each element in the list. The sum of a conditional symbolic value, $if(\sval_b, \sval_{l_1}, \sval_{l_2})$ is either the sum of $\sval_{l_1}$ or the sum of $\sval_{l_2}$ depending on the condition $\sval_b$.\\

\noindent
% \begin{figure}[H]
    \fbox{$\dott(\sval_{l_1}, \sval_{l_2})$}
    \[
    \begin{array}{c}
        \inferrule*[lab = \textsc{Sval-dot-b}]
        {
        \sval_{l_1} = [\sval_{b_1}, \sval_{b_2}, \cdots \sval_{b_n}] \\\\
        \sval_{l_2} = [\sval'_{b_1}, \sval'_{b_2}, \cdots \sval'_{b_n}] \\\\
        \bsval' = (\sum^{\min(n, m)}_{i=1}\sval_{b_i} * \sval'_{b_i})
        }
        {
        \dott(\sval_{l_1}, \sval_{l_2}) = \bsval'
        } 
        \hspace{1cm}
        \inferrule*[lab = \textsc{Sval-dot-r1}]
        {
        \sval_{l_1} = [\sval_{b_1}, \sval_{b_2}, \cdots \sval_{b_n}] \\\\
        \sval_{l_2} = If(\bsval, \sval'_{l_2}, \sval''_{l_2}) \\\\
        % lu_2 = If(bu, lu'_2, lu'_3) \\\\
        \bsval' = \dott(\sval_{l_1}, \sval'_{l_2}) \\\\
        \bsval'' = \dott(\sval_{l_1}, \sval''_{l_2}) \\\\
        % bu'_3 = \dott(lu_1, lu'_3) \\\\ 
        \bsval''' = If(\bsval, \bsval', \bsval'')
        }
        {
        \dott(\sval_{l_1}, \sval_{l_2}) = \bsval'''
        } 
        \hspace{1cm}
        \inferrule*[lab = \textsc{Sval-dot-r2}]
        {
        \sval_{l_1} = If(\bsval, \lsval', \lsval'') \\\\
        % lu_1 = If(bu, lu', lu'') \\\\
        % lu_2 = If(bu_1, lu', lu'') \\\\
        \bsval' = \dott(\lsval', \sval_{l_2}) \\\\
        \bsval'' = \dott(\lsval'', \sval_{l_2}) \\\\ 
        \bsval''' = If(\bsval, \bsval', \bsval'')
        }
        {
        \dott(\sval_{l_1}, \sval_{l_2}) = \bsval'''
        } 
    \end{array}
    \]
% \end{figure}
These rules compute the dot product of two lists. Similar to $\add$, this operation is defined recursively. However, since there are two lists, the recursion needs to be done on both.\\

\noindent
% \begin{figure}[H]
    \fbox{$\map(\bsval, f_c, \scontext)$}
    \[
    \begin{array}{c}
        \inferrule*[lab = \textsc{Sval-map-poly}]
        {
        \bsval = \sval_{b_0} + \sum_{i=0}^l \sval_{b_i} \cdot n_i \\\\
        \cc_0 = \cc \quad 
        \fstore(f_c) = (\var_1, \var_2), \expr \\\\
        \forall i \in [l], \ 
        \sstore_i = \sstore[\var_1 \mapsto \sval_{b_i}, \var_2 \mapsto n_i] \\\\
        \langle \expr, \fstore, \sstore_i, \hh_S, \cc_{i-1} \rangle \downarrow \sval'_{b_i}, \cc_i \quad 
        \bsval' = \sval_{b_0} + \sum_{i=0}^l \sval'_{b_i}
        }
        {
        \map(\bsval, f_c, \symc) = \bsval', \cc_l 
        } 
        \hspace{1cm}
        \inferrule*[lab = \textsc{Sval-map-sym}]
        {
        \bsval = \sval_{b_0} + \sum_{i=0}^l \sval_{b_i} \cdot \epsilon_i \\\\
        \cc_0 = \cc \quad 
        \fstore(f_c) = (\var_1, \var_2), \expr \\\\
        \forall i \in [l], \ 
        \sstore_i = \sstore[\var_1 \mapsto \sval_{b_i}, \var_2 \mapsto \epsilon_i] \\\\
        \langle \expr, \fstore, \sstore_i, \hh_S, \cc_{i-1} \rangle \downarrow \sval'_{b_i}, \cc_i \quad 
        \bsval' = \sval_{b_0} + \sum_{i=0}^l \sval'_{b_i}
        }
        {
        \map(\bsval, f_c, \scontext) = \bsval', \cc_l 
        } 
        \\\\
        \inferrule*[lab = \textsc{Sval-map-r}]
        {
        \bsval = If(\sval_{b_1}, \sval_{b_2}, \sval_{b_3}) \\\\
        % bu = If(bu_1, bu_2, bu_3) \\\\
        \map(\sval_{b_2}, f_c, \symc) = \sval'_{b_2}, \cc_2 \\\\
        \map(\sval_{b_3}, f_c, \fstore, \sstore, \hh_S, \cc_2) = \sval'_{b_3}, \cc_3 \\\\
        \bsval' = If(\sval_{b_1}, \sval'_{b_2}, \sval'_{b_3})
        }
        {
        \map(\bsval, f_c, \symc) = \bsval', \cc_3 
        } 
    \end{array}
    \]
% \end{figure}

The $\map$ function applies $f_c$ to each pair of neuron and coefficient in a polyhedral expression or symbolic variable and coefficient in a symbolic expression and then returns the sum of the outputs of each application of $f_c$. This has to be defined recursively because the top-level symbolic value might be a conditional symbolic value.\\

\noindent
% \begin{figure}[H]
    \fbox{$\mapl(\lsval, f_c, \symc)$}
    \[
    \begin{array}{c}
        \inferrule*[lab = \textsc{Sval-maplist-b}]
        {
        \lsval = [\sval_{b_1}, \cdots, \sval_{b_n}] \\\\
        \fstore(f_c) = (\var_1), \expr \\\\
        \cc_0 = \cc \quad 
        \forall i \in [l], \ 
        \sstore_i = \sstore[\var_1 \mapsto \sval_{b_i}] \\\\
        \langle \expr, \fstore, \sstore_i, \hh_S, \cc_{i-1} \rangle \downarrow \sval'_{b_i}, \cc_i \quad 
        \lsval' = [\sval'_{b_1}, \cdots, \sval'_{b_n}]
        }
        {
        \mapl(\lsval, f_c, \symc) = \lsval', \cc_n 
        } 
        \hspace{1cm}
        \inferrule*[lab = \textsc{Sval-maplist-r}]
        {
        \lsval = If(\bsval, \sval_{l_1}, \sval_{l_2}) \\\\
        \mapl(\sval_{l_1}, f_c, \symc) = \sval'_{l_1}, \cc_1 \\\\
        \mapl(\sval_{l_2}, f_c, \fstore, \sstore, \hh_S, \cc_1) = \sval'_{l_2}, \cc_2 \\\\ 
        \lsval' = If(\bsval, \sval'_{l_1}, \sval'_{l_2})
        }
        {
        \mapl(\lsval, f_c, \symc) = \lsval', \cc_2
        } 
    \end{array}
    \]
% \end{figure}

The $\mapl$ function applies $f_c$ to each element in a list, and outputs a list of the outputs of each application of $f_c$. Similar to $\map$ above, this function also has to be defined recursively.\\

\noindent
% \begin{figure}[H]
    \fbox{$\maxx(\bsval', \bsval'')$}
    \[
    \begin{array}{c}
        \inferrule*[lab = \textsc{Sval-max}]
        {
        \bsval = If(\bsval' \geq \bsval'', \bsval', \bsval'')
        }
        {
        \maxx(\bsval', \bsval'') = \bsval 
        } 
        \hspace{1cm}
        \inferrule*[lab = \textsc{Sval-min}]
        {
        \bsval = If(\sval_{b_1} \leq \sval_{b_2}, \sval_{b_1}, \sval_{b_2})
        }
        {
        \minn(\sval_{b_1}, \sval_{b_1}) = \bsval 
        } 
    \end{array}
    \]
% \end{figure}

\noindent
% \begin{figure}[H]
    \fbox{$\maxx(\lsval)$}
    \[
    \begin{array}{c}
        \inferrule*[lab = \textsc{Sval-max-emp}]
        {
        \lsval = []
        }
        {
        \maxx(\lsval) = 0 
        } 
        \hspace{1cm}
        \inferrule*[lab = \textsc{Sval-max-b-non-emp-1}]
        {
        \lsval = [\bsval]
        }
        {
        \maxx(\lsval) = \bsval
        } 
        \\\\
        \inferrule*[lab = \textsc{Sval-max-b-non-emp-r}]
        {
        \lsval = \bsval::\lsval' \\\\ 
        \bsval' = \maxx(\lsval')
        }
        {
        \maxx(\lsval) = If(\bsval \geq \bsval', \bsval, \bsval')
        } 
        \hspace{1cm} 
        \inferrule*[lab = \textsc{Sval-max-r}]
        {
        \lsval = If(\bsval, \lsval', \lsval'') \\\\ 
        \bsval' = \maxx(\lsval') \quad 
        \bsval'' = \maxx(\lsval'') 
        }
        {
        \maxx(\lsval) = If(\bsval, \bsval', \bsval'')
        } 
    \end{array}
    \]
% \end{figure}

The functions $\maxx$ and $\minn$ are overloaded since they are defined on two symbolic variables and defined on a list of symbolic variables. In the former case, they output a conditional value representing the maximum or minimum of the two input symbolic values. In the latter case, they output a conditional value representing the maximum or minimum element in the list of symbolic values. Again, for the list operation, the function must be defined recursively. Here, it is valid to compare elements with $\ge$ and $\le$ because these operations are defined symbolically for the types of basic symbolic values stated above.

\noindent
% \begin{figure}[H]
    \fbox{$\argmax(\lsval, f_c, \symc)$}
    \[
    \begin{array}{c}
        \inferrule*[lab = \textsc{Sval-compare}]
        {
        \lsval = [\sval_{b_1}, \cdots \sval_{b_n}] \quad 
        \fstore[f_c] = (\var_1, \var_2), \expr \quad 
        \cc_0 = \cc \\\\
        \forall i \in [n], \sstore_i = \sstore[\var_1 \mapsto \bsval, \var_2 \mapsto \sval_{b_i}] \\\\ 
        \langle \expr, \fstore, \sstore_i, \hh_C, \cc_{i-1} \rangle \downarrow \sval'_{b_i}, \cc_i \\\\
        \bsval' = \bigwedge^{n}_{i=1}\sval'_{b_i} \quad
        \lsval'' = If(\bsval', \bsval::\lsval', \lsval')}
        {
        \argmax(\bsval, \lsval, \lsval', f_c, \symc) = \lsval'', \cc_n
        } 
        \\\\
        \inferrule*[lab = \textsc{Sval-compare-emp}]
        {
        \sval_{l_1} = []
        }
        {
        \argmax(\sval_{l_1}, \sval_{l_2}, \sval_{l_3}, f_c, \symc) = \sval_{l_3}, \cc
        } 
        % \hspace{1cm}
        \inferrule*[lab = \textsc{Sval-compare-non-emp}]
        {
        \sval_{l_1} = \bsval::\lsval \\\\
        \
        \sval_{l_4}, \cc' = \argmax(\bsval, \val_{l_2}, \sval_{l_3}, \symc) \\\\
        \sval_{l_5}, \cc'' = \argmax(\lsval, \val_{l_2}, \sval_{l_4}, f_c, \fstore, \sstore, \hh_S, \cc')
        }
        {
        \argmax(\sval_{l_1}, \sval_{l_2}, \sval_{l_3}, f_c, \scontext) = \sval_{l_5}, \cc''
        } 
        \\\\
        \inferrule*[lab = \textsc{Sval-compare-emp}]
        {
        \lsval = []
        }
        {
        \argmax(\lsval, f_c, \symc) = \lsval, \cc 
        } 
        \hspace{0.5cm}
        \inferrule*[lab = \textsc{Sval-compare-non-emp}]
        {
        \lsval = [\bsval]
        }
        {
        \argmax(\lsval, f_c, \symc) = \lsval, \cc 
        }
        \\\\
        \inferrule*[lab = \textsc{Sval-compare-non-emp-r}]
        {
        \lsval = [\sval_{b_1}, \cdots \sval_{b_n}] \\\\ 
        \lsval', \cc' = \argmax(\lsval, \lsval, [], f_c, \symc)
        }
        {
        \argmax(\lsval, f_c, \symc) = \lsval', \cc'
        } 
        \hspace{1cm}
        \inferrule*[lab = \textsc{Sval-compare-if}]
        {
        \lsval = If(\bsval, \sval_{l_1}, \sval_{l_2}) \\\\ 
        \sval'_{l_1}, \cc' = \argmax(\sval_{l_1}, f_c, \symc) \\\\ 
        \sval'_{l_2}, \cc'' = \argmax(\sval_{l_2}, f_c, \fstore, \sstore, \hh_S, \cc') \\\\
        \lsval' = If(\bsval, \sval'_{l_1}, \sval'_{l_2})
        }
        {
        \argmax(\lsval, f_c, \symc) = \lsval', \cc''
        } 
        \\\\
    \end{array}
    \]
% \end{figure}

These rules are similar to the rules for $\argmax(\lval, f_c, \context)$. The difference is that these rules are defined on symbolic values using symbolic semantics instead of concrete values and concrete semantics. The main change is that instead of computing the maximum element according to $f_c$, the output is a nested conditional symbolic value where every combination of maximum elements appears. For example, if the list were $[\sval_1, \sval_2, \sval_3]$, the output of $\argmax(\lsval, f_c, \symc)$ is $if(f_c(\sval_1, \sval_2) \land f_c(\sval_1, \sval_3), \sval_1 :: (if(f_c(\sval_1, \sval_2) \land f_c(\sval_1, \sval_3), \sval_2 :: \cdots, \cdots ), \cdots)$.
\clearpage
\section{Symbolic Semantics of \oldtool}
\label{appendix:symsemantics}
\subsection{Symbolic DNN Expansion}
$\ex(\expr, \tau_s, \symc, \prop)$ outputs a new symbolic DNN, where the shape members/metadata that are of type $\polyexp$ or $\symexp$ that are represented by one variable are expanded. Expanding a $\polyexp$ variable entails adding neurons to $\hh_S$, adding the shape constraint applied to those neurons to $\cc$, and declaring real-valued symbolic variables to represent the constants in the $\polyexp$. Expanding a $\symexp$ variable entails declaring real-valued symbolic variables for the constants in the expanded symbolic expression and adding a constraint for each concrete symbolic variable.

\noindent
% \begin{figure}[H]
    \fbox{$\ex(\expr, \tau_s, \symc, \prop) = \hh_S', \cc'$}
    \[
    \begin{array}{c}
        \inferrule*[lab = \textsc{Add-metadata-elem}]
        {
        \vdash m : \types' \quad 
        \types = \reduced(\types') \\\\
        \sval = \sval^\types_{new} \quad
        \hh_S' = \hh_S[n[m] \mapsto \sval]
        }
        {
        \addg(n, m, \hh_S) = \hh_S'
        }
        \hspace{1cm}
        \inferrule*[lab = \textsc{Add-shape-elem}]
        {
        \tau_s(x) = \types' \quad
        \reduced(\types') = \types \\\\
        \sval = \sval^\types_{new} \quad
        \hh_S' = \hh_S[n[x] \mapsto \sval]
        }
        {
        \addg(n, x, \tau_s, \hh_S) = \hh_S'
        }
        \\\\
        \inferrule*[lab = \textsc{Add-neuron-b}]
        {
        n \in \hh_S
        }
        {
        \addg(n, \tau_s, \hh_S, \prop, \cc) = \hh_S, \cc 
        }
        \hspace{1cm}
        \inferrule*[lab = \textsc{Add-neuron-r}]
        {
        n \not\in \hh_S \\\\
        \textsf{Metadata}[m_1, \cdots, m_k] \quad 
        \hh'_{S_0} = \hh_S \\\\
        \forall i \in [k], \hh_{S_i}' = \addg(n, m, \hh'_{S_{i-1}}) \\\\
        \textsf{Shape}[\var_1, \cdots, \var_l] \quad 
        \hh''_{S_0} = \hh'_{S_k} \\\\
        \forall i \in [l], \hh_{S_i}'' = \addg(n, \var_i, \tau_s, \hh''_{S_{i-1}})
        }
        {
        \addg(n, \tau_s, \hh_S, \prop, \cc) = \hh_{S_l}'', \cc \wedge \prop((\hh''_{S_l}(n[\var_1]), \cdots), n)
        }
        \\\\
        \inferrule*[lab = \textsc{Expand-poly-b}]
        {
        n[\var] = \sval_{b_0} + \sum \sval_{b_i} * n_i 
        }
        {
        \exn(n, \var, \tau_s, \hh_S, \prop) = \hh_S, \true 
        }
        \hspace{1cm}
        \inferrule*[lab = \textsc{Expand-sym-b}]
        {
        n[\var] = \sval_{b_0} + \sum \sval_{b_i} * \epsilon_i 
        }
        {
        \exn(n, \var, \tau_s, \hh_S, \prop) = \hh_S, \true
        }
        \\\\
        \inferrule*[lab = \textsc{Expand-poly-r}]
        {
        \tau_s(\var) = \polyexp \quad
        \hh_S{n[\var]} = \sval_{b_r} \\\\
        \nn = [n'_1, \cdots n'_j] \quad
        \hh_{S_0} = \hh_S \\\\
        \forall i \in [j], \hh_{S_i}, \cc_i = \addg(n'_i, \tau_s, \hh_{S_{i-1}}, \prop, \cc_{i-1}) \\\\
        \bsval = \sval_{r_0} + \sum^j_{i=1} \sval_{r_i} * n'_i \\\\
        \hh_S' = \hh_{S_j}[n[\var] \mapsto \bsval]
        }
        {
        \exn(n, \var, \tau_s, \hh_S, \prop, \cc_0) = \hh'_S, \cc_j
        }
        \hspace{1cm}
        \inferrule*[lab = \textsc{Expand-sym-r}]
        {
        \tau_s(\var) = \symexp \quad
        \hh_S{n[\var]} = \sval_{b_r} \\\\
        \ee = [\mu'_1, \cdots \mu'_j] \\\\
        \forall i \in [j], \cc_i = (-1 \leq \mu'_i \leq 1) \wedge \cc_{i-1} \\\\
        \bsval = \sval_{0} + \sum^j_{i=1} \sval_{i} * n'_i \\\\
        \hh_S' = \hh_{S_j}[n[\var] \mapsto \bsval]
        }
        {
        \exn(n, \var, \tau_s, \hh_S, \prop, \cc_0) = \hh'_S, \cc_j
        }
        % \hspace{1cm}
        % \inferrule*[lab = EXPAND-SYM-R]
        % {
        % \tau_s(\var) = \typezono \quad
        % n[\var] = \sval_{b_r} \\\\
        % \ee = [\epsilon'_1, \cdots \epsilon'_j] \\\\
        % % \forall i \in [j], \hh_{S_i} = \add(n'_i, \tau_s, \hh_{S_{i-1}}) \\\\
        % \bsval = \sval_{r_0} + \sum^j_{i=1} \sval_{r_i} * \epsilon'_i \\\\
        % \hh_S' = \hh_S[n[\var] \mapsto \bsval] \\\\
        % \forall i \in [j], \cc_i = -1 \leq \epsilon'_i \leq 1
        % }
        % {
        % \exn(n, \var, \tau_s, \hh_S, \prop) = \hh'_S, \bigwedge^j_{i=1}\cc_i
        % }
        \\\\
        \inferrule*[lab = \textsc{E-const}]
        { }
        {\ex(\constant, \tau_s, \symc, \prop) = \hh_S, \cc}
        \hspace{1cm}
        \inferrule*[lab = \textsc{E-binary}]
        {
        \ex(\expr_1, \tau_s, \symc, \prop) = \hh_S', \cc' \\\\
        \ex(\expr_2, \tau_s, \fstore, \sstore, \hh_S', \cc', \prop) = \hh_S'', \cc''
        }
        {
        \ex(\expr_1 \oplus \expr_2, \tau_s, \scontext, \prop) = \hh_S'', \cc''
        }
        \\\\
    %     \end{array}
    % \]
    % \[
    %     \begin{array}{c}
        \inferrule*[lab = \textsc{E-shape-b}]
        {
        \ex(e,\fstore, \sstore, \hh_S,\cc) = \hh'_S, \cc'
        \\\\
        \langle \expr, \fstore, \sstore, \hh'_s, \cc' \rangle \downarrow n, \_ \\\\
        \exn(n, \var, \tau_s, \hh'_S, \prop) = \hh''_S, \cc''
        }
        {
        \ex(\expr[x], \tau_s, \symc, \prop) = \hh''_S, \cc' \wedge \cc'' 
        }
        % \\\\
        % \hspace{1cm}
        \inferrule*[lab = \textsc{E-shape-r}]
        {
        \langle \expr, \symc \rangle \downarrow [n_1, \cdots n_q] \quad
        \hh_{S_0} = \hh_S \\\\
        \forall i \in [q], \exn(n, \var, \tau_s, \hh_{S_{i-1}}, \prop) = \hh_{S_i}, \cc_i
        }
        {
        \ex(\expr[x], \tau_s, \scontext, \prop) = \hh_{S_q}, \bigwedge^q_{i=1} \cc_i
            }
        % \\\\
        % \inferrule*[lab = E-VAR]
        % % { \ex(\var, \tau_s, \fstore, \sstore, \hh_C, \cc, \prop) = \hh_S', \cc'}
        % { }
        % {\ex(\var, \tau_s, \fstore, \sstore, \hh_S, \cc, \prop) = \hh_S', \cc'}

    \end{array}
    \]
    % \caption{Caption}
    % \label{fig:my_label}
% \end{figure}

For the symbolic DNN expansion rules for $\expr$, the symbolic DNN has to be recursively expanded for each expression within $\expr$. This can be seen in G-SOLVER. The rest of the rules presented here are rules that require more than only recursively applying the DNN expansion procedure. For \map, the $\ex$ function has to be called on the input expression to \map. Below, the function $\foo$ is used because the symbolic semantics can output a conditional value. Since $\sstore$ only contains expanded values, for any function call expression, $\ex$ must be called on each argument expression. In \traverse, the DNN has to be expanded to add new neurons because the output of \traverse is a polyhedral expression with fresh variables. 

% \noindent
\begin{figure}[H]
\raggedright
    \fbox{$\tau_s, \symc, \prop \models \expr \leadsto \hh_S', \cc'$}
    \resizebox{\textwidth}{!}{
    $
        \begin{array}{c}
        \\\\
            \inferrule*[lab = \textsc{G-map-poly}]
            {
            \sval = \sval_{b_0} + \sum^j_{i=1} n'_i * \sval_{b_i} \\\\
            \fstore(f_c) = (\var_1, \var_2), \expr \\\\
            \forall i \in [j] \quad \sstore_i = \sstore[\var_1 \mapsto \sval_{b_i}, \var_2 \mapsto n'_i ]\\\\
            \tau_s, \fstore, \sstore_i, \hh_{S_{i-1}}, \cc_{i-1} \models \expr \leadsto \hh_{S_i}, \cc_i
            }
            {
            \foo(\tau_s, \fstore, \sstore, \hh_{S_0}, \cc_0, \prop, f_c, \sval) \leadsto \hh_{S_j}, \cc_j
            }
             \hspace{0.2cm}
            \inferrule*[lab = \textsc{G-map-sym}]
            {
            \sval = \sval_{b_0} + \sum^j_{i=1} \epsilon'_i * \sval_{b_i} \\\\
            \fstore(f_c) = (\var_1, \var_2), \expr \\\\
            \forall i \in [j] \quad \sstore_i = \sstore[\var_1 \mapsto \sval_{b_i}, \var_2 \mapsto \epsilon'_i ]\\\\
            \tau_s, \fstore, \sstore_i, \hh_{S_{i-1}}, \cc_{i-1} \models \expr \leadsto \hh_{S_i}, \cc_i
            }
            {
            \foo(\tau_s, \fstore, \sstore, \hh_{S_0}, \cc_0, \prop, f_c, \sval) \leadsto \hh_{S_j}, \cc_j
            }
            \\\\
            \inferrule*[lab = \textsc{G-func-call}]
            {
            \hh_{S_0} = \hh_S \quad \cc_0 = \cc \\\\
            \forall i \in [n], 
            \tau_s, \fstore, \sstore', \hh_{S_{i-1}}, \cc_{i-1}, \prop \models \expr_i \leadsto \overline{\hh_{S_i}}, \overline{\cc_{i}}  \\\\
            \ex(\expr_i, \tau_s, \fstore, \sstore, \overline{\hh_{S_i}}, \overline{\cc_{i}}) = \hh_{S_i}, \cc_i \\\\
            \hh'_S = \hh_{S_n} \quad \cc'_0 = \cc_n \\\\
            \forall i \in [n], \langle \expr_i, \fstore, \sstore, \hh'_S, \cc'_{i-1} \rangle \downarrow \sval_i, \cc'_i \\\\
            \fstore(f_c) = (\var_1, \cdots, \var_n), \expr \\\\
            \sstore' = \sstore[\var_1 \mapsto \sval_1, \cdots \var_n \mapsto \sval_n] \\\\
            \tau_s, \fstore, \sstore', \hh'_S, \cc_n, \prop \models \expr \leadsto \hh_S'', \cc'' 
            }
            {
            \tau_s, \scontext, \prop \models f_c(\expr_1, \cdots \expr_n) \leadsto \hh''_{S}, \cc''
            }
             \hspace{0.2cm}
             \inferrule*[lab = \textsc{G-map}]
            {
            \tau_s, \scontext, \prop \models \expr \leadsto \hh_{S_0}, \cc_0 \\\\
            \ex(\expr, \tau_s, \sstore, \hh_{S_0}, \cc_0, \prop) = \hh_{S}', \cc' \\\\
            \langle \expr, \fstore, \sstore, \hh'_S, \cc' \rangle \downarrow \sval, \_ \\\\
            \expand(\sval) = \true \\\\
            \foo(\tau_s, \fstore, \sstore, \hh'_S, \cc', \prop, f_c, \sval) = \hh''_S, \cc'''
            }
            {
            \tau_S, \symc, \prop \models \expr\cdot\map(f_c) \leadsto \hh_S'', \cc''' 
            }
            \\\\
            \inferrule*[lab = \textsc{G-solver}]
            {
            \tau_s, \symc, \prop \models \expr_1 \leadsto \hh_S', \cc' \\\\
            \tau_s, \fstore, \sstore, \hh_S', \cc', \prop \models \expr_2 \leadsto \hh_S'', \cc''
            }
            {
            \tau_s, \symc, \prop \models \lp(\lpop, \expr_1, \expr_2) \leadsto \hh_S'', \cc''
            }
             \hspace{0.2cm}
            \inferrule*[lab = \textsc{G-map-r}]
            {
            \sval = If(\sval', \sval_1, \sval_2) \\\\
            \foo(\tau_s, \fstore, \sstore, \hh_{S_0}, \cc_0, \prop, f_c, \sval_1) = \hh_{S_1}, \cc_1 \\\\
            \foo(\tau_s, \fstore, \sstore, \hh_{S_1}, \cc_1, \prop, f_c, \sval_2) = \hh_{S_2}, \cc_2
            }
            {
            \foo(\tau_s, \fstore, \sstore, \hh_{S_0}, \cc_0, \prop, f_c, \sval) \leadsto \hh_{S_2}, \cc_2
            }
            \\\\
            \inferrule*[lab = \textsc{G-traverse}]
            {
            \tau_s, \scontext, \prop \models \expr \leadsto \hh_{S_0}, \cc_0 \\\\ 
            % \cinput(\var \cdot \trav(\dir, f_{c_1}, f_{c_2}, f_{c_3})\{\expr\}, \fstore, \sstore, \hh_S', \m', \cc') = \true \\\\
            \nn = [n'_1, \cdots n'_j] \quad
            % \hh_{S_0}= \hh_S \quad 
            \forall i \in [j], \hh_{S_i}, \cc_i = \addg(n'_i, \tau_s, \hh_{S_{i-1}}, \prop, \cc_{i-1}) \\\\
            \hh''_{S_0} = \hh_{S_j} \quad 
            \cc''_0 = \cc_j \quad 
            \bsval = \sval_{b_0} + \sum^j_{i=1} \sval_{b_i} * n'_i \\\\
            \forall i \in [j], \tau_s, \fstore, \sstore, \hh''_{S_{i-1}}, \cc''_{i-1}, \prop \models f_{c_2}(n'_i, \sval_{b_i}) \leadsto \hh''_{S_i}, \cc''_i \\\\
            \hh'''_{S_0} = \hh''_{S_j} \quad 
            \cc'''_0 = \cc''_j \\\\
            \forall i \in [j], \tau_s, \fstore, \sstore, \hh'''_{S_{i-1}}, \cc'''_{i-1}, \prop \models f_{c_3}(n'_i, \sval_{b_i}) \leadsto \hh'''_{S_i}, \cc'''_i
            }
            {
            \tau_s, \scontext, \prop \models \var\cdot\trav(\dir, f_{c_1}, f_{c_2}, f_{c_3})\{\expr\} \leadsto \hh'''_{S_j}, \cc'''_j
            }
        \end{array}
    $
    }    
\end{figure}
\subsection{Symbolic Semantics for Expressions in \oldtool}
The general form for symbolic semantics is $\langle \expr, \fstore, \sstore, \hh_S, \cc \rangle \downarrow \sval, \cc'$. In the same way as operational semantics, $\fstore$ contains mappings of function names to their arguments and return expressions. $\sstore$ contains mappings from variables to expanded symbolic values. $\hh_S$ contains mappings from the shape members and metadata for each neuron in the symbolic DNN to symbolic values. $\cc$ is a conjunction of the constraints associated with each neuron's abstract shape and any constraints on fresh variables that have been generated by the symbolic semantics. The output of symbolic semantics is a symbolic value and a conjunction of $\cc$ and any new constraints introduced for the symbolic variables in $\sval$. The symbolic semantics for \traverse and \lp are where additional constraints are introduced. In \traverse, the symbolic semantics verifies the user-defined inductive invariant and then outputs a polyhedral expression with fresh variables and adds the constraint that this polyhedral expression satisfies the inductive invariant. For $\lp(\minimize, \expr_1, \expr_2)$, the symbolic semantics outputs a fresh real valued symbolic variable, $\sval$, and adds the constraint that when $\expr_2$ is true, $\sval \le \expr_1$. \\

% \begin{figure}[H]
    \fbox{$\langle \expr, \symc \rangle \downarrow \sval, \cc'$}\\
    \[
    \begin{array}{c}
        \inferrule*[lab = \textsc{Sym-const}]
        { }
        {\langle \constant, \symc \rangle \downarrow \constant, \cc}
        \hspace{1cm}
        \inferrule*[lab = \textsc{Sym-var}]
        { }
        {\langle \var, \symc \rangle \downarrow \sstore(\var), \cc}
        % \hspace{1cm}
        \\\\
        \inferrule*[lab = \textsc{Sym-noise}]
        { }
        {
        \langle \noise, \symc \rangle \downarrow \noise_{new}, \cc \wedge (-1 \leq \epsilon_{new} \leq 1)
        }
        % \hspace{1cm}
        % \inferrule*[lab = SYM-UNARY]
        % {\langle \expr, \symc \rangle \downarrow \sval} 
        % {\langle \sim\expr, \symc \rangle \downarrow \ \sim\sval}
        \\\\
        \inferrule*[lab = \textsc{Sym-binary}]
        {
        \langle \expr_1, \symc \rangle \downarrow \sval_1, \cc_1 \\\\
        \langle \expr_2, \fstore, \sstore, \hh_S, \cc_1 \rangle \downarrow \sval_2, \cc_2
        } 
        {\langle \expr_1 \oplus \expr_2, \symc \rangle \downarrow \sval_1 \binop \sval_2, \cc_2}
        \hspace{1cm}
        \inferrule*[lab = \textsc{Sym-ternary}]
        {
        \langle \expr_1, \symc \rangle \downarrow \sval_{b_1}, \cc_1 \\\\
        \langle \expr_2, \fstore, \sstore, \hh_S, \cc_1 \rangle \downarrow \sval_2, \cc_2 \\\\
        \langle \expr_3, \fstore, \sstore, \hh_S, \cc_2 \rangle \downarrow \sval_3, \cc_3 \\\\
        \sval = If(\sval_{b_1}, \sval_2, \sval_3)
        % \sval_1 = \true \implies \sval = \sval_2 \quad
        % \sval_1 = \false \implies \sval = \sval_3 
        } 
        {\langle (\expr_1 ? \expr_2 : \expr_3), \symc \rangle \downarrow \sval, \cc_3}
        \\\\
        \inferrule*[lab = \textsc{Sym-metadata}]
        {
        \langle \expr, \symc \rangle \downarrow n, \cc' \\\\
        \sval = \hh_S[n[m]] 
        } 
        {\langle \expr[m], \symc \rangle \downarrow \sval, \cc'}
        \hspace{1cm}
        % \\\\
        \inferrule*[lab = \textsc{Sym-shape}]
        {
        \langle \expr. \symc \rangle \downarrow n, \cc' \\\\
        \sval = \hh_S[n[x]] 
        } 
        {\langle \expr[x], \symc \rangle \downarrow \sval, \cc'}
        % \hspace{1cm}
        \\\\
        \inferrule*[lab = \textsc{Sym-max}]
        {
        \langle \expr, \symc \rangle \downarrow \sval, \cc'
        } 
        {\langle \maxx(\expr), \symc \rangle \downarrow \maxx(\sval), \cc'}
        \hspace{1cm}
        \inferrule*[lab = \textsc{Sym-compare}]
        {
        \langle \expr, \symc \rangle \downarrow \sval, \cc' \\\\
        \sval', \cc'' = \argmax(\sval, f_c, \fstore, \sstore, \hh_S, \cc')
        } 
        {\langle \argmax(\expr, f_c), \symc \rangle \downarrow \sval', \cc''}
        \\\\
        \inferrule*[lab = \textsc{Sym-func-call}]
        {
        \cc_0 = \cc \\\\ 
        \forall i \in [n], \ \langle \expr, \fstore, \sstore, \hh_S, \cc_{i-1} \rangle \downarrow \sval_i, \cc_i' \\\\
        \fstore(f_c) = (\var_1, \cdots, \var_n), \expr \\\\
        \sstore' = \sstore[\var_1 \mapsto \sval_1, \cdots \var_n \mapsto \sval_n] \\\\ 
        \langle \expr, \fstore, \sstore', \hh_S, \cc_n \rangle \downarrow \sval, \cc'
        }
        {
        \langle f_c(\expr_1, \cdots, \expr_n), \symc \rangle  \downarrow \sval, \cc'
        }
        \hspace{1cm}
        \inferrule*[lab = \textsc{Sym-map}]
        {
        \langle \expr, \symc \rangle \downarrow \bsval', \cc' \\\\
        \bsval, \cc'' = \map(\bsval', \symc')
        % \sval' = \constant_0 + \sum_{i=0}^{i=l} \constant_i \cdot \ver_i \\\\
        % \forall i \in [l], \ \langle f_c(\ver_i, \constant_i), \symc \rangle  \downarrow \sval_i \quad 
        % \sval = \constant_0 + \sum_{i=0}^{i=l} \sval_i
        }
        {
        \langle \expr.\map(f_c), \symc \rangle \downarrow \bsval, \cc''
        }
        % \\\\
        % \inferrule*[lab = INV-INPUT]
        % {
        % \langle \expr, \symc \rangle \downarrow \sval \\\\
        % \bsval = \unsat(\neg(\cc \implies \sval))
        % }
        % {
        % \cinput(\expr, \fstore, \sstore, \hh_S, \m, \cc) = \bsval
        % }
        % \hspace{1cm} 
        \end{array}
    \]
    \[
        \begin{array}{c}
        \inferrule*[lab = \textsc{Chech-induction}]
        {
        \nn = [n'_1, \cdots, n'_j] \quad
        \bsval = \sval_{r_0} + \sum^j_{i=1} \sval_{r_i} * n'_i \\\\
        \sstore' = \sstore[\var \mapsto \bsval] \quad 
        \langle \expr, \fstore, \sstore', \hh_S, \cc \rangle \downarrow \bsval', \cc_0 \\\\
        % \cc_0 = \cc \quad
        \forall i \in [j], \quad
        \langle f_{c_2}(n_i, \sval_{r_i}), \fstore, \sstore', \hh_S, \cc_{i-1} \rangle \downarrow \sval'_i, \cc_i \\\\
        \cc'_0 = \cc_j\\\\
        \forall i \in [j],         \langle f_{c_3}(n_i, \sval_{r_i}), \fstore, \sstore', \hh_{S}, \cc'_{i-1} \rangle \downarrow \sval''_i, \cc'_i \\\\
        \sval'' = \sval_{r_0} + \sum^j_{i=1} If(\sval'_i, \sval''_i, \sval_{r_i}*n_i) \quad 
        \sstore'' = \sstore[\var \mapsto \sval''] \\\\
        \langle \expr, \fstore, \sstore'', \hh_S, \cc'_j \rangle \downarrow \sval''', \cc'' \quad 
        \bsval'' = \unsat(\neg(\cc_0 \wedge \bsval' \implies \cc''_j \wedge \sval'''))
        }
        {
        \cinduction(\var \cdot \trav(\dir, f_{c_1}, f_{c_2}, f_{c_3})\{\expr\}, \fstore, \sstore, \hh_S, \cc) = \bsval''
        }
         \\\\
        \inferrule*[lab = \textsc{Chech-invariant}]
        {
        \langle \expr, \symc \rangle \downarrow \sval, \cc' \\\\
        \bsval = \unsat(\neg(\cc' \implies \sval))
        \\\\
        % \bsval' = \cinput(\expr, \fstore, \sstore, \hh_S, \m, \cc') \\\\
        \bsval' = \cinduction(\var \cdot \trav(\dir, f_{c_1}, f_{c_2}, f_{c_3})\{\expr\}, \fstore, \sstore, \hh_S, \cc)
        }
        {
        \cinvariant(\var \cdot \trav(\dir, f_{c_1}, f_{c_2}, f_{c_3})\{\expr\}, \fstore, \sstore, \hh_S, \cc) = \bsval \wedge \bsval', \cc'
        }
        \\\\
        \inferrule*[lab = \textsc{Sym-traverse}]
        {
        \cinvariant(\var\cdot\traverse(\dir, f_{c_1}, f_{c_2}, f_{c_3})\{\expr\}, \symc) = \true, \cc' \\\\
        \bsval = \sval_{0} + \sum^j_{i=1} \sval_{i} * \sval'_i \quad 
        \sstore' = \sstore[x \mapsto \bsval] \quad
        \langle \expr, \fstore, \sstore', \hh_S, \cc' \rangle \downarrow \sval, \cc''  
        }
        {
        \langle \var\cdot\traverse(\dir, f_{c_1}, f_{c_2}, f_{c_3})\{\expr\}, \symc \rangle \downarrow \bsval, \sval \wedge \cc''
        }
        % \hspace{1cm}
        \\\\
        \inferrule*[lab = \textsc{Sym-solver-min}]
        {
        \langle \expr_1, \symc \rangle \downarrow \sval_1, \cc_1 \\\\
        \langle \expr_2, \symc_1 \rangle \downarrow \sval_2, \cc_2 \\\\
        \sat(\cc_2 \wedge \sval_2) \\\\
        \bsval = \sval_r \quad 
        \cc' = \cc_2 \wedge (\sval_2 \implies \bsval \leq \sval_1)
        }
        {
        \langle \lp(\minimize, \expr_1, \expr_2), \symc \rangle \downarrow \bsval, \cc'
        }
        \\\\
        % \hspace{1cm}
        \inferrule*[lab = \textsc{Sym-solver-max}]
        {
        \langle \expr_1, \symc \rangle \downarrow \sval_1, \cc_1 \\\\
        \langle \expr_2, \symc_1 \rangle \downarrow \sval_2, \cc_2 \\\\
        \sat(\cc_2 \wedge \sval_2) \\\\
        \bsval = \sval_r \quad 
        \cc' = \cc_2 \wedge (\sval_2 \implies \bsval \geq \sval_1)
        }
        {
        \langle \lp(\maximize, \expr_1, \expr_2), \symc \rangle \downarrow \bsval, \cc'
        }
    \end{array}
    \]
% \end{figure}

\clearpage
\section{Definitions for Over-Approximation}
\label{appendix:definitions}
% \begin{figure}[H]
    \fbox{$\hh_C \prec_{\cc} \hh_S$}
    \[
    \begin{array}{c}
        \inferrule*[lab={\textsc{Over-approx-dnn}}]
        {
        \dom(\hh_S) \subseteq \dom(\hh_S) \\\\
        X = \cs(\hh_S, \cc) \quad 
        Y = \ns(\hh_S) \cup \es(\hh_S) \\\\
        Z = \ps(\hh_S) \cup \zs(\hh_S) \cup \cts(\hh_S) \\\\
        % Y = \vars(\hh_S) - \vars(\hh_C) \\\\
        \exists X \forall Y \exists Z (\cc \wedge \bigwedge_{t \in \dom(\hh_S)}\hh_S(t) = \hh_C(t))
        }
        {
        \hh_C \prec_{\cc} \hh_S
        }  
    \end{array}
    \]
% \end{figure}
\noindent
% \begin{figure}[H]
    \fbox{$\val, \hh_C \prec_{\cc} \sval, \hh_S$}
    \[
    \begin{array}{c}
        \inferrule*[lab={\textsc{Base-val-constraints}}]
        {
        \cc = (\bval = \bsval)
        }
        {
        CS(\bval, \bsval) = \cc
        } 
        \\\\
        \inferrule*[lab={\textsc{List-val-constraints-b}}]
        {
        \lval = [\val_{b_1}, \cdots, \val_{b_n}] \\\\
        \lsval = [\sval_{b_1}, \cdots, \sval_{b_n}]
        }
        {
        CS(\lval, \lsval) = \bigwedge^n_{i=1}(\sval_{b_i} = \val_{b_i})
        } 
        \hspace{1cm}
        \inferrule*[lab={\textsc{List-val-constraints-r}}]
        {
        \lval = [\val_{b_1}, \cdots, \val_{b_n}] \\\\
        \lsval = If(\sval_{b_1}, \sval_{l_1}, \sval_{l_2}) \\\\
        \cc_1 = CS(\lval, \sval_{l_1}) \quad 
        \cc_2 = CS(\lval, \sval_{l_2}) 
        }
        {
        CS(\lval, \lsval) = (\sval_{b_1} \implies \cc_1) \wedge (\neg(\sval_{b_1}) \implies \cc_2)
        } 
        \\\\
        \inferrule*[lab={\textsc{Over-approx-val}}]
        {
        \dom(\hh_S) \subseteq \dom(\hh_S) \\\\
        X = \cs(\hh_S, \cc) \quad 
        Y = \ns(\hh_S) \cup \es(\hh_S) \\\\
        Z = \ps(\hh_S) \cup \zs(\hh_S) \\\\
        \exists X \forall Y \exists Z (CS(\val, \sval) \wedge \cc \wedge \bigwedge_{t \in \dom(\hh_S)}\hh_S(t) = \hh_C(t))
        }
        {
        \val, \hh_C \prec_{\cc} \sval, \hh_S
        } 
    \end{array}
    \]
% \end{figure}
\noindent
% \begin{figure}[H]
    \fbox{$[\val], \hh_C \prec_{\cc} [\sval], \hh_S$}
    \[
    \begin{array}{c}
        \inferrule*[lab={\textsc{Over-approx-val-list}}]
        {
        \val \text{ and }\sval \text{ have the same length} \\\\
        \dom(\hh_S) \subseteq \dom(\hh_S) \\\\
        X = \cs(\hh_S, \cc) \quad 
        Y = \ns(\hh_S) \cup \es(\hh_S) \\\\
        Z = \ps(\hh_S) \cup \zs(\hh_S) \\\\
        \exists X \forall Y \exists Z (\bigwedge_{i}CS(\val_i, \sval_i) \wedge \cc \wedge \bigwedge_{t \in \dom(\hh_S)}\hh_S(t) = \hh_C(t))
        }
        {
        [\val], \hh_C \prec_{\cc} [\sval], \hh_S
        } 
    \end{array}
    \]
% \end{figure}
\noindent
% \begin{figure}[H]
    \fbox{$\store, \hh_C \prec_{\cc} \sstore, \hh_S$}
    \[
    \begin{array}{c}
        \inferrule*[lab={\textsc{Over-approx-store}}]
        {
        \dom(\hh_S) \subseteq \dom(\hh_S) \\\\
        \dom(\store) = \dom(\sstore) \\\\
        X = \cs(\hh_S, \cc) \quad 
        Y = \ns(\hh_S) \cup \es(\hh_S) \\\\
        Z = \ps(\hh_S) \cup \zs(\hh_S) \\\\
        % Y = \vars(\hh_S) - \vars(\hh_C) \\\\
        \exists X \forall Y \exists Z (\cc \wedge \bigwedge_{t \in \dom(\hh_S)}\hh_S(t) = \hh_C(t) \wedge \bigwedge_{t \in \dom(\rho)}CS(\rho(t),\rho'(t)))
        }
        {
        \rho, \hh_C \prec_{\cc} \rho', \hh_S
        } 
    \end{array}
    \]
    $\store$ represents the concrete store, which is a mapping from variables to concrete values. $\sstore$ represents the symbolic store, which is a mapping from variables to symbolic values. In our symbolic semantics, we will only add expanded values to $\sstore$. Formally, $\forall t \in \sstore, \expand(\sstore(t))$ \\
    
% \end{figure}
\noindent
% \begin{figure}[H]
    \fbox{$\fstore \sim \Gamma, \tau_s$}
    \fbox{$\tstore \sim \Gamma, \tau_s$}
    \[
    \begin{array}{c}
        \inferrule*[lab={\textsc{Sim-f}}]
        {
        \dom(\fstore) \subseteq \dom(\Gamma) \\\\
        \forall f \in \fstore, \Gamma, \tau_s \vdash \fstore(f) \ : \ \Gamma(f)
        }
        {
        \fstore \sim \Gamma, \tau_s
        } 
        \hspace{1cm}
        \inferrule*[lab={\textsc{Sim-$\tstore$}}]
        {
        \dom(\tstore) \subseteq \dom(\Gamma) \\\\
        \forall \theta \in \tstore, \Gamma, \tau_s \vdash \tstore(\theta) \ : \ \Gamma(\theta)
        }
        {
        \tstore \sim \Gamma, \tau_s
        } 
    \end{array}
    \]
    % \label{fig:my_label}
% \end{figure}
\noindent
% \begin{figure}[H]
    \fbox{$\store \sim \Gamma$}
    \[
    \begin{array}{c}
        \inferrule*[lab={\textsc{Sim-store}}]
        {
        \dom(\store) \subseteq \dom(\Gamma) \\\\
        \forall \var \in \store, \ \cdot \vdash \store(\var) \ : \ \Gamma(\var)
        }
        {
        \store \sim \Gamma
        } 
    \end{array}
    \]
    % \label{fig:my_label}
% \end{figure}
\noindent
% \begin{figure}[H]
    \fbox{$\sstore \sim \Gamma$}
    \[
    \begin{array}{c}
        \inferrule*[lab={\textsc{Sim-sym-store}}]
        {
        \dom(\sstore) \subseteq \dom(\Gamma) \\\\
        \forall \var \in \sstore, \  \vdash \sstore(\var) \ : \ \reduced(\Gamma(\var))
        }
        {
        \sstore \sim \Gamma
        } 
    \end{array}
    \]
    % \label{fig:my_label}
% \end{figure}
\noindent
% \begin{figure}[H]
    \fbox{$\hh_C \sim \tau_s$}
    \fbox{$\hh_S \sim \tau_s$}
    \[
    \begin{array}{c}
        \inferrule*[lab={\textsc{Sim-dnn}}]
        {
        \forall n \in \hh_C \\\\
        \forall \var \in \tau_s, \  \vdash \hh_C(n[\var]) \ : \ \tau_s(\var)
        }
        {
        \hh_C \sim \tau_s
        }
        \hspace{1cm}
%     \end{array}
%     \]
%     % \label{fig:my_label}
% % \end{figure}
% \noindent
% % \begin{figure}[H]
%     \fbox{$\hh_S \sim \tau_s$}
%     \[
%     \begin{array}{c}
        \inferrule*[lab={\textsc{Sim-sym-dnn}}]
        {
        \forall n \in \hh_S \\\\
        \forall \var \in \tau_s, \  \vdash \hh_S(n[\var]) \ : \ \reduced(\tau_s(\var))
        }
        {
        \hh_S \sim \tau_s
        } 
    \end{array}
    \]
    % \label{fig:my_label}
% \end{figure}

\begin{figure}[H]
    \fbox{$\tau_s,\fstore, \sstore, \hh_S, \cc, \prop \leadsto^* 
    \hh'_S, \cc'$}
    \[
    \begin{array}{c}
    \inferrule*[lab={\textsc{Multistep-expand}}]
        {
        \ex(\expr, \tau_s,\fstore, \sstore, \hh_S, \cc, \prop) =
    \hh'_S, \cc'
        }
        {
        \tau_s,\fstore, \sstore, \hh_S, \cc, \prop \leadsto^* 
    \hh'_S, \cc'
        } 
        \hspace{1cm}
        \inferrule*[lab={\textsc{Multistep-step}}]
        {
        \tau_s,\fstore, \sstore, \hh_S, \cc, \prop \models \expr \leadsto 
    \hh'_S, \cc'
        }
        {
        \tau_s,\fstore, \sstore, \hh_S, \cc, \prop \leadsto^* 
    \hh'_S, \cc'
        } 
        \\\\
        \inferrule*[lab={\textsc{Multistep-add}}]
        {
        \addg(n, \tau_s, \hh_S, \prop, \cc) =
    \hh'_S, \cc'
        }
        {
        \tau_s,\fstore, \sstore, \hh_S, \cc, \prop \leadsto^* 
    \hh'_S, \cc'
        } 
        \hspace{1cm}
        \inferrule*[lab={\textsc{Multistep-r}}]
        {
        \tau_s,\fstore, \sstore, \hh_S, \cc, \prop \leadsto^* 
    \hh''_S, \cc''\\\\
        \tau_s,\fstore, \sstore, \hh''_S, \cc'', \prop \leadsto^* 
    \hh'_S, \cc'
        }
        {
        \tau_s,\fstore, \sstore, \hh_S, \cc, \prop \leadsto^* 
    \hh'_S, \cc'
        } 
    \end{array}
    \]
    % \label{fig:my_label}
\end{figure}

\begin{figure}[H]
    \fbox{$\ll \expr, \fstore, \store, \sstore, \hh_C, \hh_S, \cc \gg \updownarrow \val, \sval, \cc', \m'$}
    \[
    \begin{array}{c}
        \inferrule*[lab={\textsc{Bisimulation}}]
        {
        \langle \expr, \fstore, \store, \hh_C \rangle \Downarrow \val \\\\
        \langle \expr, \fstore, \sstore, \hh_S, \cc \rangle \downarrow \sval, \cc' \\\\
        \exists! X \forall Y \exists!Z (CS(\sval,\val) \wedge \cc' \wedge \m \wedge \bigwedge_{t \in \dom(\hh_S)} \hh_S(t) = \hh_C(t))
        }
        {
        \ll \expr, \fstore, \store, \sstore, \hh_C, \hh_S, \cc \gg \updownarrow \val, \sval, \cc', \m
        } 
    \end{array}
    \]
    % \label{fig:my_label}
\end{figure}
\clearpage
\section{lemmas}
\label{appendix:lemmas}
\begin{lemma} 
\label{lemma:uniquez}
If 
\begin{enumerate}
    \item $\exists X \forall Y \exists Z (\bigwedge_{t \in \dom(\hh_S)}\hh_S(t) = \hh_C(t))$
\end{enumerate}
then,
\begin{enumerate}
    \item $\exists X \forall Y \exists! Z (\bigwedge_{t \in \dom(\hh_S)}\hh_S(t) = \hh_C(t))$
\end{enumerate}
\end{lemma}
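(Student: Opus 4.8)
The plan is to strengthen the innermost $\exists Z$ to $\exists! Z$ by observing that, structurally, every variable of $Z$ is \emph{pinned} by the matrix $\Phi \triangleq \bigwedge_{t \in \dom(\hh_S)} \hh_S(t) = \hh_C(t)$: it occurs in $\Phi$ in exactly one conjunct, namely as the bare left-hand side of an equation $z = \hh_C(t_z)$ whose right-hand side mentions no symbolic variable of $\hh_S$. Consequently, whenever a satisfying valuation of $Z$ exists (which is hypothesis~(1)), that valuation is forced, and uniqueness is immediate.

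First I would record the relevant well-formedness invariant of the symbolic DNNs produced in \S\ref{sec:graphcreation}--\S\ref{sec:graphexpansion}: (i) distinct entries of $\hh_S$ carry distinct freshly-introduced symbolic variables, so each $z \in Z = \ps(\hh_S) \cup \zs(\hh_S) \cup \cts(\hh_S)$ is the image $\hh_S(t_z)$ of a \emph{unique} entry $t_z$; (ii) after the expansion rules \textsc{Expand-poly-r}, \textsc{Expand-sym-r} (together with \textsc{Add-neuron-r}), a polyhedral or symbolic entry is mapped to an expression built solely from variables in $\cs(\hh_S,\cc)$ and $\ns(\hh_S)\cup\es(\hh_S)$, so no $Z$-variable ever occurs nested inside another entry's image; and (iii) $\hh_C$ is an independent object whose entries mention only the neurons of $\hh_C$ and numeric constants, hence share no symbolic variable with $\hh_S$. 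I would prove (i)--(iii) by induction on the symbolic DNN expansion derivation, using that creation initializes every field with a fresh variable and that each expansion step overwrites exactly the entry it expands.

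Given the invariant, the conclusion is short. Fix the witness $\overline{X}$ for $X$ supplied by hypothesis~(1), and fix an arbitrary valuation of $Y$ and of the free variables of $\Phi$ (the neurons of $\hh_C$); by hypothesis~(1) some valuation $\overline{Z}$ of $Z$ satisfies $\Phi$. For each $z \in Z$, invariants (i)--(iii) guarantee that the only conjunct of $\Phi$ containing $z$ is $z = \hh_C(t_z)$, whose right-hand side is independent of the valuation of $Z$; hence every satisfying valuation assigns $z$ the value $\hh_C(t_z)$, so $\overline{Z}$ is unique. This yields $\exists X\,\forall Y\,\exists! Z\,(\Phi)$, and the same argument upgrades the corresponding quantifier in \textsc{Over-approx-DNN} once $\cc$ is reinstated, which is the use made of this lemma in the bisimulation argument.

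The main obstacle is step two: the statement is essentially trivial modulo the well-formedness invariant, so the real work is to state that invariant at the right level of generality (injectivity of $\hh_S$ on the $\polyexp$/$\symexp$/$\ct$-typed entries, linear occurrence of each $Z$-variable, and disjointness of the variable sets of $\hh_S$ and $\hh_C$) and to verify that it is preserved by every rule of symbolic DNN creation and expansion. If such an invariant is already maintained elsewhere in the development, the lemma is an immediate corollary.
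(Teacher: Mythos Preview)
Your proposal is correct and takes essentially the same approach as the paper: both argue that every $z \in Z$ occurs as a bare entry $\hh_S(t_z) = z$ in the range of $\hh_S$, so the conjunct $\hh_S(t_z) = \hh_C(t_z)$ forces $z$ to the value $\hh_C(t_z)$, which is determined once $X$ and $Y$ are fixed. The paper asserts the needed structural fact more tersely (simply stating $Z \subseteq \range(\hh_S)$ and that graph expansion adds no $Z$-variables), whereas you propose to derive it by induction on the expansion derivation, but the core argument is identical.
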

\begin{proof}[Proof sketch]
This is a stronger claim than what is needed to prove over-approximation. We can make this claim because given a DNN during concrete execution, there will be concrete polyhedral and symbolic expressions representing the shape members. For example, $\hh_C(n[L])$ might be $4n_5 + 2n_3 + 1$. In this case, $\hh_S(n[L]) = \sval_1$, where $\sval_1 \in Z$. For each assignment of $n_5$ and $n_3$, $4n_5 + 2n_3 + 1$ equals a specific real number. Similarly, for all elements of $\hh_C$, $t$, given an assignment to $Y$, $\hh_C(t)$ is a real number. Since $\hh_S(t) = \hh_C(t)$, $\hh_S(t)$ has to equal the same real number. When we create $\hh_S$, we have a single symbolic variable representing each element $\range(\hh_S)$. During graph expansion, no variables are added to $Z$, so every variable of $Z$ is in $\range(\hh_S)$.
% Detailed proof: Proof~\ref{proof:uniquez}.
\end{proof}
% \clearpage
\begin{lemma} 
\label{lemma:uniquex}
If 
\begin{enumerate}
    \item $X \subseteq \cs(\hh_S)$
    \item $\exists X \forall Y \exists Z (\bigwedge_{t \in \dom(\hh_S)}\hh_S(t) = \hh_C(t))$
\end{enumerate}
then,
\begin{enumerate}
    \item $\exists! X \forall Y \exists Z (\bigwedge_{t \in \dom(\hh_S)}\hh_S(t) = \hh_C(t))$
\end{enumerate}
\end{lemma}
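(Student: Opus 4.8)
\textbf{Proof proposal for Lemma~\ref{lemma:uniquex}.}

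The plan is to leverage Lemma~\ref{lemma:uniquez} together with the structural observation that the variables in $X = \cs(\hh_S)$ represent constants in the concrete DNN. First I would fix an arbitrary assignment to $Y$ (the neurons and \cf symbolic variables in $\hh_C$), which by hypothesis (1) of this lemma and the given satisfiability, makes the formula $\bigwedge_{t \in \dom(\hh_S)}\hh_S(t) = \hh_C(t)$ satisfiable in the remaining variables $X \cup Z$. The key claim is that for each $t \in \dom(\hh_S)$, the value $\hh_C(t)$ is a \emph{fixed} object (a real number, or a concrete polyhedral/symbolic expression) once $Y$ is fixed, and the symbolic variable $\hh_S(t)$ must equal it. Since every variable of $X$ appears as $\range(\hh_S)(t)$ for some shape member or metadata element $t$ whose concrete counterpart $\hh_C(t)$ is a constant (this is precisely what the partition $\cs$ vs. $\ps \cup \zs \cup \cts$ encodes — $\cs$ collects exactly the symbolic variables that stand for constant-valued fields such as $l, u$ and the scalar coefficients), each such variable is pinned to the unique value $\hh_C(t)$. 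This forces uniqueness of $X$.

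The main steps, in order, would be: (i) unfold the definitions of $\cs(\hh_S)$ and recall that each $x \in X$ occurs as $\hh_S(t_x)$ for a distinguished $t_x \in \dom(\hh_S)$ whose type (after $\reduced$) is $\intt$, $\float$, or $\bool$; (ii) observe that for any assignment to $Y$, the constraint $\hh_S(t_x) = \hh_C(t_x)$ appears as a conjunct, and $\hh_C(t_x)$ evaluates to a single constant (it is constant-valued by construction of the symbolic DNN — constants in a concrete DNN do not depend on neuron values); (iii) conclude that $x$ is uniquely determined, hence the entire tuple $X$ is, giving $\exists! X$; (iv) note that the quantifier structure $\exists! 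X \forall Y \exists Z$ is well-formed because, having fixed $X$ to its forced value, the inner $\forall Y \exists Z$ part is exactly what hypothesis (2) (combined with Lemma~\ref{lemma:uniquez} if one also wants $\exists!Z$, though here only $\exists Z$ is claimed) provides.

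I expect the main obstacle to be making precise the informal statement ``$\hh_C(t_x)$ is a constant independent of $Y$.'' This requires appealing to how the symbolic DNN is created and expanded: when a fresh symbolic variable is placed in $\cs(\hh_S)$, it is placed there exactly because the corresponding field is of a scalar type, and the graph-creation and graph-expansion rules (Fig.~\ref{fig:graphexpansion}, rule \textsc{Add-shape-elem} via $\reduced$) guarantee scalar-typed fields get scalar-typed symbolic variables whose concrete images are genuine constants rather than expressions in the neurons $Y$. So the argument hinges on an invariant of the construction — namely that $\cs$, $\ns \cup \es$, and $\ps \cup \zs \cup \cts$ genuinely partition $\vars(\hh_S, \cc)$ by the dependency structure — which I would state as a small auxiliary observation and then invoke. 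Once that invariant is in hand, the uniqueness of $X$ is essentially immediate: a conjunction of equations of the form ``fresh variable $=$ fixed constant'' has a unique solution for those variables.
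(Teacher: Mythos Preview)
Your argument covers only one of the two cases the paper distinguishes, and the case you miss is exactly where the interesting content of the lemma lies. Your step (i) claims that every $x \in X$ occurs \emph{as} $\hh_S(t_x)$ for some scalar-typed $t_x$, so that the conjunct $\hh_S(t_x) = \hh_C(t_x)$ pins $x$ to a constant. This is true for fields like $l, u$, but it is false after graph expansion: when a polyhedral field is expanded via \textsc{Expand-poly-r}, we get $\hh_S(n[\var]) = \sval_{r_0} + \sval_{r_1} n'_1 + \cdots + \sval_{r_j} n'_j$, and the fresh coefficients $\sval_{r_i}$ land in $\cs(\hh_S)$ (hence in $X$) but are \emph{not} themselves of the form $\hh_S(t)$ for any $t$. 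They live inside an expression whose field has type $\polyexp$, not a scalar type.

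For these coefficient variables your strategy of ``fix an arbitrary $Y$, then read off $x$'' fails outright: with $Y$ fixed, the single equation $\sval_{r_0} + \sval_{r_1} n'_1 + \cdots = c_0 + c_1 n'_1 + \cdots$ is one linear constraint in several unknowns and does not determine the $\sval_{r_i}$. The paper's proof uses precisely the opposite observation: because the $\exists X$ sits \emph{outside} the $\forall Y$, the equality must hold for \emph{all} assignments to the neurons $n'_i \in Y$, and two affine forms that agree for all values of their indeterminates have identical coefficients. That coefficient-comparison argument is what forces $\sval_{r_i} = c_i$ uniquely. The paper's proof sketch even flags this: ``If the $\exists X$ quantifier were to the right of the $\forall Y$ quantifier, this lemma would not be true.'' Your write-up needs a second case for coefficient variables, and in that case the uniqueness is driven by the universal quantifier over $Y$, not by $\hh_C(t)$ being $Y$-independent (it isn't).
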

\begin{proof}[Proof sketch]
This is a stronger claim than what is needed to prove overapproximation. This lemma is only true when $X$ only contains variables from $\hh_S$. There are two possibilities for each variable in $X$, either it is in the range of $\hh_S$ or it is not. First, we could have a situation where $\hh_S(t) = x_1$ and $\hh_C(t) = 5$. In this case, there is  a unique assignment to $x_1$ so that $x_1 = 5$. Second, we could have a situation where $\hh_S(t) = x_1n_1 + x_2n_2$ and $\hh_C(t) = 2n_1 + 2n_2$. Since we have to satisfy the condition that $\exists x_1 \forall n_1, n_2 (x_1n_1+x_2n_2=2n_1+2n_2)$, there is a unique assignment to $x_1$ and $x_2$. If the $\exists X$ quantifier were to the right of the $\forall Y$ quantifier, this lemma would not be true.
\end{proof}
\begin{lemma} 
\label{lemma:uniquexz}
If 
\begin{enumerate}
    \item $X \subseteq \cs(\hh_S)$
    \item $\exists X \forall Y \exists Z (\bigwedge_{t \in \dom(\hh_S)}\hh_S(t) = \hh_C(t))$
\end{enumerate}
then,
\begin{enumerate}
    \item $\exists! X \forall Y \exists! Z (\bigwedge_{t \in \dom(\hh_S)}\hh_S(t) = \hh_C(t))$
\end{enumerate}
\end{lemma}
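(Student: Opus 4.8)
The plan is to derive the claim by chaining the two preceding lemmas and then reconciling their conclusions. Write $\varphi$ for the matrix $\bigwedge_{t \in \dom(\hh_S)}\hh_S(t) = \hh_C(t)$, so the hypotheses are $X \subseteq \cs(\hh_S)$ together with $\exists X\,\forall Y\,\exists Z\,\varphi$, and the goal is $\exists! X\,\forall Y\,\exists! Z\,\varphi$.

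First I would apply Lemma~\ref{lemma:uniquez} directly to hypothesis (2), upgrading the innermost quantifier to obtain $\exists X\,\forall Y\,\exists! Z\,\varphi$. Independently, I would apply Lemma~\ref{lemma:uniquex} — using hypothesis (1), $X \subseteq \cs(\hh_S)$, and hypothesis (2) — to upgrade the outermost quantifier, obtaining $\exists! X\,\forall Y\,\exists Z\,\varphi$. Neither statement is yet the target, but together they pin down the situation.

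The reconciliation step is where the actual argument lies. Let $x^{*}$ be the unique $X$-witness furnished by $\exists! X\,\forall Y\,\exists Z\,\varphi$, and let $x'$ be any $X$-witness of $\exists X\,\forall Y\,\exists! Z\,\varphi$. Weakening $\exists!$ to $\exists$ in the inner quantifier, $x'$ is also an $X$-witness of $\exists X\,\forall Y\,\exists Z\,\varphi$; by uniqueness of such a witness, $x' = x^{*}$. Hence $x^{*}$ is the unique $X$-witness of $\forall Y\,\exists Z\,\varphi$, and it additionally satisfies $\forall Y\,\exists! Z\,\varphi$; since it is unique already for the weaker property, it is a fortiori the unique $X$-witness of $\forall Y\,\exists! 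Z\,\varphi$. That is exactly $\exists! X\,\forall Y\,\exists! Z\,\varphi$.

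The main subtlety — not a deep obstacle, but the place where care is needed — is the interaction of the nested quantifiers: uniqueness of $X$ is established by Lemma~\ref{lemma:uniquex} in a context where $Z$ is only existentially quantified, and one must argue it transfers to the context where $Z$ is uniquely quantified. This works precisely because weakening $\exists! Z$ to $\exists Z$ only enlarges the set of admissible $X$-witnesses, so a witness unique for the larger set remains the unique one for the smaller. It is also worth recording explicitly, as in the sketch of Lemma~\ref{lemma:uniquex}, that the prenex order with $\exists X$ outermost is essential: the coefficient-matching reasoning that makes each $x \in X$ uniquely determined would fail if the $\exists X$ quantifier were placed inside $\forall Y$.
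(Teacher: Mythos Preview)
Your proof is correct and follows essentially the same approach as the paper: both derive the result by combining Lemma~\ref{lemma:uniquez} and Lemma~\ref{lemma:uniquex}. The paper's proof is terser---it simply asserts that the conclusion follows from the two lemmas---whereas you spell out the reconciliation of the two quantifier-upgraded statements explicitly, which is a welcome clarification but not a different strategy.
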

\begin{proof}[Proof sketch]
This lemma combines the previous two lemmas to form a stronger claim about how the symbolic graph over-approximates the concrete graph.
\end{proof}
% \begin{lemma}
% \label{appendixlemma:tcexpressions}
% If,
% \begin{enumerate}
%     \item $\Gamma, \tau_s \vdash \expr : \types$
%     \item $\bot \sqsubset \types \sqsubset \top$
%     \item $\fstore \sim \Gamma, \tau_s$
%     \item $\store \sim \Gamma$
%     \item $\hh_C \sim \tau_s$
%     % \item $V$ is the set of Neurons in $\range(\rho)$
% \end{enumerate}
% then,
% \begin{enumerate}
%     \item $\langle \expr, \context \rangle \Downarrow \val$
%     \item $\vdash \val : \types'$
%     \item $\types' \sqsubseteq \types$
% \end{enumerate}
% \end{lemma}
% \begin{proof}[Proof sketch]
% This lemma states that if an expression type-checks, then it will terminate in the operational semantics and evaluate to a value. This value will either be the same type that the expression type checks to, or a subtype. We prove this lemma for all expressions, using induction on the structure of an expression. Once we have proved this lemma for expressions, it is straightforward to prove that a whole program that type-checks will terminate and evaluate to something in the operational semantics. To prove the termination of every expression in \cf, the only non-trivial case is $\traverse$. For this case, we use a ranking function and show that it is bounded by 0 and decreases by at least 1 in each iteration.
% \end{proof}

\begin{lemma}
    \label{lemma:cexpand}
    If
    \begin{enumerate}
        \item $\langle \expr, \symc \rangle \downarrow \sval, \cc'$ 
    \end{enumerate}
    then,
    \begin{enumerate}
        \item $\cc' = \cc \wedge \cc''$
        % \item $\cc' \implies \cc$
    \end{enumerate}
\end{lemma}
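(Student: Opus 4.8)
The plan is to prove this by structural induction on the derivation of $\langle \expr, \symc \rangle \downarrow \sval, \cc'$. The statement is a monotonicity property of the symbolic semantics: evaluating an expression only ever \emph{extends} the incoming constraint set $\cc$ by conjoining fresh constraints (for new \cf symbolic variables, path conditions, and invariant/contract assumptions), and never rewrites or drops a conjunct. So the goal is, for every evaluation, to exhibit some $\cc''$ with $\cc' = \cc \wedge \cc''$.

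First I would dispatch the base cases. For \textsc{Sym-const} and \textsc{Sym-var} the output constraint is literally $\cc$, so $\cc'' = \true$ works; for \textsc{Sym-noise} we have $\cc' = \cc \wedge (-1 \leq \noise_{new} \leq 1)$, so $\cc'' = (-1 \leq \noise_{new} \leq 1)$. The recursive cases for unary, binary, ternary, metadata/shape access, and function call all have the same shape: the rule threads $\cc$ through a finite sequence of sub-evaluations $\langle \cdot, \ldots, \cc_{i-1}\rangle \downarrow \cdot, \cc_i$ with $\cc_0 = \cc$; applying the induction hypothesis to each gives $\cc_i = \cc_{i-1} \wedge \cc''_i$, and composing (using associativity and commutativity of $\wedge$) yields $\cc' = \cc \wedge \bigwedge_i \cc''_i$. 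For the function-call case the body of $f_c$ is itself an expression whose evaluation is a strict sub-derivation, so the induction is well-founded on derivation trees.

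The main obstacle, and where the bulk of the work lies, is the \textsc{Sym-map}, \textsc{Sym-compare}, \textsc{Sym-traverse}, and \textsc{Sym-solver} cases, because these invoke the auxiliary operations on symbolic values ($\map(\bsval, f_c, \symc)$, $\argmax(\lsval, f_c, \symc)$, $\cinvariant$, $\cinduction$, and friends from Appendices F and G), each of which also threads and mutates $\cc$. I would therefore strengthen the claim to a \emph{simultaneous} induction: the ``$\cc$ occurs as a conjunct of the output constraints'' property is asserted for each of these auxiliary judgments as well, and all of them are proved together by one induction on the combined derivation. The $\map$ and $\argmax$ auxiliaries recurse either structurally on the expanded symbolic value or, at the leaves, call back into $\langle \cdot \rangle \downarrow$ on $f_c$'s body, so every recursive appeal is to a smaller derivation. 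For \textsc{Sym-traverse}, note that $\cinvariant$ returns the pair whose constraint component is exactly the $\cc'$ produced by one recursive evaluation of the invariant $\expr$ (monotone by the IH), while $\cinduction$ returns only a boolean $\unsat(\cdots)$ verdict and does not affect the threaded $\cc$; the final output $\sval \wedge \cc''$ then equals $\cc \wedge (\sval \wedge \cc''')$ after reassociating. For \textsc{Sym-solver-min}/\textsc{Sym-solver-max}, $\cc' = \cc_2 \wedge (\sval_2 \implies \bsval \le \sval_1)$ (resp.\ $\bsval \ge \sval_1$) with $\cc_2 = \cc \wedge \cdots$ by the IH on the two sub-evaluations.

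Once the auxiliary judgments are handled, every symbolic-semantics rule closes routinely, so the only genuine care needed is in phrasing the mutual induction hypothesis broadly enough to cover all the helper operations and in checking that none of them ever discards an existing conjunct of $\cc$ — which is immediate from their definitions, since each is written in the linear, constraint-threading style $\langle \cdot, \ldots, \cc_{i-1}\rangle \downarrow \cdot, \cc_i$.
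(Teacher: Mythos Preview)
Your proposal is correct and follows the same approach as the paper: structural induction over the symbolic-semantics rules, observing that each rule either returns $\cc$ unchanged or conjoins fresh conditions. The paper's own proof is extremely terse---it simply points at the rules and notes that only \textsc{Sym-traverse} and \textsc{Sym-solver} genuinely add conjuncts while the rest thread $\cc$ through sub-evaluations---whereas you spell out the need to extend the induction to the auxiliary judgments ($\map$, $\argmax$, $\cinvariant$, $\cinduction$), which is a real detail the paper glosses over but which your simultaneous-induction formulation handles correctly.
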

\begin{proof}[Proof sketch]
This lemma states that the output conditions from symbolic execution are always a conjunction of the input conditions and another set of conditions. This structure comes from the fact that in the rules for symbolic execution semantics, there are two rules that create new conditions and return a conjunction of them with the input $\cc$, and the rest of the rules, just recursively accumulate the conditions outputted by symbolically executing the sub-expressions. The idea of these conditions is to add a condition whenever a fresh variable is generated for $\traverse$ or $\lp$ calls.
\end{proof}

\begin{lemma}
    \label{lemma:csequential}
    If
    \begin{enumerate} 
        \item $\forall i \in [n], \ll \expr_i, \fstore, \store_i, \sstore, \hh_C, \hh_S, \cc_0 \gg \updownarrow \val_i, \sval_i, \cc'_i, \m'_i$
        \item $\exists!X \forall Y \exists!Z (\cc_0 \wedge \bigwedge_{t \in \dom(\hh_S)} \hh_S(t) = \hh_C(t))$
    \end{enumerate}
    then,
    \begin{enumerate} 
        \item $\forall i \in [n], \langle \expr_i, \fstore, \sstore, \hh_S, \cc_{i-1} \rangle \downarrow \sval_i, \cc_i$
    \end{enumerate}
\end{lemma}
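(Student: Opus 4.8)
The plan is to unpack the bisimulation judgements of hypothesis~1, split off the shared part of the accumulated constraints with Lemma~\ref{lemma:cexpand}, and then prove the conclusion by an induction on $i$ whose engine is a \emph{constraint-monotonicity} sub-lemma for symbolic execution.

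First I would apply the \textsc{Bisimulation} rule (Fig.~\ref{fig:defs}) to hypothesis~1: for every $i\in[n]$ it yields both $\langle \expr_i, \fstore, \sstore, \hh_S, \cc_0 \rangle \downarrow \sval_i, \cc'_i$ and the witness fact $\exists!X\,\forall Y\,\exists!Z\,\big(CS(\sval_i,\val_i)\wedge \cc'_i\wedge \m'_i\wedge \bigwedge_{t\in\dom(\hh_S)}\hh_S(t)=\hh_C(t)\big)$, where $X=\cs(\hh_S,\cc_0)$ is the \emph{same} set across all $i$. By Lemma~\ref{lemma:cexpand}, $\cc'_i=\cc_0\wedge\psi_i$ for some $\psi_i$, and the SMT variables occurring in $\psi_i$ are precisely those generated while symbolically executing $\expr_i$ (inside its \traverse{} and \solver{} subterms), hence pairwise disjoint across $i$. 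I then \emph{define} $\cc_i\triangleq \cc_{i-1}\wedge\psi_i$, so that $\cc_i=\cc_0\wedge\psi_1\wedge\cdots\wedge\psi_i$; this is the $\cc_i$ that appears in the conclusion.

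Next I would establish the sub-lemma: if $\langle \expr, \fstore, \sstore, \hh_S, \cc \rangle \downarrow \sval, \cc\wedge\psi$ and $\hat\cc=\cc\wedge\chi$ where $\chi$ constrains only SMT variables not occurring in the derivation of $\langle \expr,\fstore,\sstore,\hh_S,\cc\rangle\downarrow\sval,\cc\wedge\psi$, and $\hat\cc\wedge\bigwedge_t\hh_S(t)=\hh_C(t)$ is satisfiable with the $\exists!X\,\forall Y\,\exists!Z$ shape, then $\langle \expr, \fstore, \sstore, \hh_S, \hat\cc \rangle \downarrow \sval, \hat\cc\wedge\psi$. The proof is by induction on the structure of $\expr$. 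For constants, variables, unary/binary/ternary operators, shape and metadata access, \map, and function calls the value is computed independently of the constraint component, which is merely threaded through, so the claim follows immediately from the inductive hypotheses on subterms. The only nontrivial cases are \traverse{} and \solver{}: in \textsc{Sym-traverse} the produced value is the fresh template $\sval_{0}+\sum_{i=1}^{j}\sval_{i}\cdot\sval'_i$, which is unchanged, and the constraint enters only through the \emph{validity} checks $\unsat(\neg(\cc\implies\sval))$ in \textsc{Check-invariant} and the inductiveness check in \textsc{Check-induction}; validity is preserved under strengthening since $\hat\cc\implies\cc\implies\sval$. In \textsc{Sym-solver-min}/\textsc{Sym-solver-max} the produced value is a fresh $\sval_r$, and the only obstacle is the side-condition $\sat(\cc_2\wedge\sval_2)$: after replacing $\cc$ by $\hat\cc$ this must still hold, which is exactly where the disjointness of $\chi$'s variables and the unique-witness hypothesis are used to transport a satisfying assignment.

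Finally, the outer induction on $i$: the base case $i=1$ has $\cc_{i-1}=\cc_0$ and follows directly from the unpacked bisimulation; for the step I invoke the sub-lemma with $\cc=\cc_0$ and $\hat\cc=\cc_{i-1}$, discharging its satisfiability side-condition by combining the witness facts for $\expr_1,\dots,\expr_{i-1}$ with hypothesis~2 — they share the unique $X$, are universally quantified over the same $Y$, and constrain pairwise-disjoint sets of fresh $Z$-variables, so $\cc_{i-1}\wedge\bigwedge_t\hh_S(t)=\hh_C(t)$ remains satisfiable with the required shape (using Lemmas~\ref{lemma:uniquex}--\ref{lemma:uniquexz}). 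I expect the main obstacle to be exactly the \solver{} case of the sub-lemma together with this bookkeeping of fresh variables: one must argue precisely that prepending the executions of $\expr_1,\dots,\expr_{i-1}$ only ever adds constraints over variables irrelevant to $\expr_i$'s value and to its internal $\sat$ queries, so that every satisfiability check that succeeded under $\cc_0$ still succeeds under $\cc_{i-1}$.
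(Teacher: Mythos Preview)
Your proposal is correct and follows essentially the same approach as the paper: both factor the constraints via Lemma~\ref{lemma:cexpand}, proceed by structural induction on $\expr_i$ combined with induction on $i$, observe that only the \textsc{Check-invariant}/\textsc{Check-induction} and \textsc{Sym-solver} rules actually inspect the incoming constraint, and handle these by (respectively) validity-under-strengthening and extending the satisfying assignment using freshness of the accumulated variables together with the unique-witness hypothesis. Your packaging of the structural induction as an explicit ``constraint-monotonicity'' sub-lemma is slightly more modular than the paper's inline treatment, but the content is the same.
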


\begin{proof}[Proof sketch]
This lemma states that if expressions, $\expr_1, \cdots, \expr_n$ can all be symbolically evaluated under the conditions $\cc_0$, then they can be symbolically evaluated sequentially, where $\expr_i$ is evaluated under the conditions outputted by $\expr_{i-1}$. For most of the symbolic evaluation rules, the input conditions are simply propagated to the output conditions, so changing the input conditions will not affect whether or not the rules can be applied. In two rules, \textsc{Sym-traverse} and \textsc{Sym-solver}, the input conditions are used. For \textsc{Sym-traverse}, we have to prove that when symbolically evaluating $\expr_i$, if $\cinvariant$ was true when called with input condition $\cc_0$, then it will still be true when called with input condition $\cc_{i-1}$, which is the output condition of symbolically evaluating $\expr_{i-1}$. For \textsc{Sym-solver}, we also have to prove that if the conjunction of the input constraint and the condition outputted by the symbolic evaluation of the input constraint were satisfiable before, then that conjunction is still satisfiable when starting with the condition $\cc_{i-1}$ instead of $\cc_0$. These are proved using Lemma~\ref{lemma:cexpand}.
\end{proof}

\begin{lemma}
    \label{lemma:ctogether}
    If
    \begin{enumerate} 
        \item $\forall i \in [n], \ll \expr_i, \fstore, \store_i, \sstore, \hh_C, \hh_S, \cc_0 \gg \updownarrow \val_i, \sval_i, \cc'_i, \m'_i$
        \item $\exists!X \forall Y \exists!Z (\cc_0 \wedge \bigwedge_{t \in \dom(\hh_S)} \hh_S(t) = \hh_C(t))$
    \end{enumerate}
    then,
    \begin{enumerate}
        \item $\forall i \in [n], \langle \expr_i, \fstore, \sstore_i, \hh_S, \cc_{i-1} \rangle \downarrow \sval_i, \cc_i$
        \item $\exists! X' \forall Y' \exists!Z' (\bigwedge_i CS(\sval_i,\val_i) \wedge \cc_n \wedge \m \wedge \bigwedge_{t \in \dom(\hh_S)} \hh_S(t) = \hh_C(t))$
        \item $\cc_n \implies \cc_0$
    \end{enumerate}
\end{lemma}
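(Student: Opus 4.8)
The plan is to prove Lemma~\ref{lemma:ctogether} by strengthening Lemma~\ref{lemma:csequential}: we already know from it (using hypotheses 1 and 2) that the expressions can be symbolically executed sequentially, so conclusion~1 is immediate. The remaining work is conclusions~2 and~3, and the natural route is induction on $n$, threading the over-approximation invariant through the sequential evaluation. First I would set up the induction with the base case $n=0$ (or $n=1$), where conclusion~2 degenerates to hypothesis~2 with $\m$ empty and conclusion~3 is trivial since $\cc_0 \implies \cc_0$.

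For the inductive step, I would assume the statement holds for $\expr_1,\ldots,\expr_{n-1}$, giving $\langle \expr_i, \fstore, \sstore_i, \hh_S, \cc_{i-1}\rangle \downarrow \sval_i, \cc_i$ for $i < n$, the combined uniqueness formula $\exists! X' \forall Y' \exists! Z' (\bigwedge_{i<n} CS(\sval_i,\val_i) \wedge \cc_{n-1} \wedge \m_{n-1} \wedge \bigwedge_t \hh_S(t) = \hh_C(t))$, and $\cc_{n-1} \implies \cc_0$. Then I would use hypothesis~1 at index $n$, which gives $\ll \expr_n, \fstore, \store_n, \sstore, \hh_C, \hh_S, \cc_0 \gg \updownarrow \val_n, \sval_n, \cc'_n, \m'_n$, i.e. the \textsc{Bisimulation} judgement holds starting from $\cc_0$. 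The key move is to transfer this bisimulation judgement from starting context $\cc_0$ to starting context $\cc_{n-1}$: since $\cc_{n-1} \implies \cc_0$ (inductive hypothesis), and by Lemma~\ref{lemma:cexpand} the symbolic execution only conjuncts new constraints onto the input, the symbolic execution of $\expr_n$ from $\cc_{n-1}$ still goes through and yields $\cc_n = \cc_{n-1} \wedge \cc''$, so $\cc_n \implies \cc_{n-1} \implies \cc_0$, establishing conclusion~3. For conclusion~2 I would conjoin the combined uniqueness formula from the inductive hypothesis with the bisimulation formula for $\expr_n$: the existential-uniqueness of $X'$ and $Z'$ is preserved because the new variables introduced by executing $\expr_n$ are disjoint from those already present, and by the same reasoning as in Lemmas~\ref{lemma:uniquex}--\ref{lemma:uniquexz} (constants are uniquely determined, polyhedral/symbolic shape variables are uniquely determined given $Y$), so $\m = \m_{n-1} \wedge \m'_n$ works.

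The main obstacle I expect is the careful bookkeeping of the quantifier structure $\exists! X' \forall Y' \exists! Z'$ under composition — specifically, showing that the witness sets for the $n-1$ earlier expressions and for $\expr_n$ can be merged into a single block of existentials without the $\forall Y'$ in between causing trouble. This is exactly the subtlety flagged in Lemma~\ref{lemma:uniquex}'s proof sketch ("if the $\exists X$ quantifier were to the right of the $\forall Y$ quantifier, this lemma would not be true"), so I would need to verify that every newly-introduced constant variable is genuinely independent of the neurons/noise symbols in $Y'$, which holds because graph expansion and the symbolic semantics introduce constant-typed fresh variables that represent coefficients fixed by the (concrete, architecture-determined) values in $\hh_C$. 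A secondary technical point is ensuring that $\store_n$ — the per-expression concrete store used in the bisimulation hypothesis — is consistent with the symbolic store $\sstore_n$ that threads the accumulated constraints; this should follow from the over-approximation relation being preserved step-by-step, but it is worth stating as an auxiliary invariant in the induction. Once the quantifier merging is handled, conclusions~2 and~3 drop out, and conclusion~1 is just Lemma~\ref{lemma:csequential}.
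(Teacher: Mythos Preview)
Your proposal is correct and follows essentially the same approach as the paper's proof: both invoke Lemma~\ref{lemma:csequential} for conclusion~1, proceed by induction on $n$, and in the inductive step combine the accumulated uniqueness formula with the bisimulation for $\expr_n$ by arguing that the fresh variables introduced at step $n$ are disjoint from those already present (so the witness assignments merge), using Lemma~\ref{lemma:cexpand} for the conjunctive structure of $\cc$ and the reasoning of Lemmas~\ref{lemma:uniquex}--\ref{lemma:uniquexz} for uniqueness. The only cosmetic difference is that the paper works out an explicit $n=2$ base case in detail before the general inductive step, whereas you start from $n\in\{0,1\}$; the substance of the quantifier-merging argument you flag as the main obstacle is exactly what the paper's $n=2$ case spells out.
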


\begin{proof}[Proof sketch]
This lemma states that if expressions, $\expr_1, \cdots, \expr_n$ can all be symbolically evaluated under the conditions $\cc_0$, and the bisumlation holds, then they can all be symbolically evaluated sequentially, where $\expr_i$ is evaluated under the conditions outputted by $\expr_{i-1}$, and the bisimulation holds on all the output symbolic values simultaneously, meaning there exists a unique assignment to $X$, for all assignments to $Y$, there exists a unique assignment to $Z$ such that all of the symbolic values equal the concrete values, each element of the range of $\hh_S$ equals the corresponding element in the range of $\hh_C$, the conditions outputted by symbolically evaluating $\expr_n$ holds, and some condition $M$ holds. In this proof, $M$ is set to a conjunction of each $M_i$ from the original bisumulation of each $\expr_i$. This proof uses Lemma~\ref{lemma:csequential} to show that the expressions can be symbolically evaluated sequentially. Then, it uses the fact that the condition outputted from the symbolic evaluation of each expression only adds constraints on fresh variables that are not present in the symbolic evaluation of other expressions. The second antecedent to this lemma implies that the assignment to the shared variables in the symbolic evaluation of $\expr_1, \cdots, \expr_n$ have the same assignments in their individual bisumulation. Because of this, the assignment to $X'$ in the third consequent can be created by combining the assignments to each $X_i$ used in each original bisimulation argument. 
\end{proof}

\begin{lemma}
\label{lemma:hexpand}
If
    \begin{enumerate}
        \item $\store, \hh_C \prec_{\cc} \sstore, \hh_S$
        \item $\ll \expr, \fstore, \store, \sstore, \hh_C, \hh_S, \cc \gg \updownarrow \val, \sval, \cc'_1, \m'_1$
        \item $\ex(\expr, \tau_s, \symc, \prop) = \hh_S', \cc'$
        \item $\exists! X \forall Y \exists! Z (\bigwedge_{t \in \dom(\hh_S)}\hh_S(t) = \hh_C(t))$
    \end{enumerate}
    then,
    \begin{enumerate}
        \item $\ll \expr, \fstore, \store, \sstore, \hh_C, \hh'_S, \cc' \gg \updownarrow \val, \sval', \cc'_2, \m'_2$
        \item $\expand(\sval')$
        \item $\exists! X' \forall Y' \exists! Z' (\bigwedge_{t \in \dom(\hh'_S)}\hh'_S(t) = \hh_C(t))$
        \item $\store, \hh_C \prec_{\cc'} \sstore, \hh_S'$
    \end{enumerate}
\end{lemma}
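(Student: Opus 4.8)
\textbf{Proof proposal for Lemma~\ref{lemma:hexpand}.}

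The plan is to proceed by structural induction on the expression $\expr$, mirroring the case analysis in the definition of $\ex$ (Fig.~\ref{fig:graphexpansion} and Appendix~G). The lemma says that if we already have a bisimulation between symbolic execution on $\hh_S,\cc$ and operational execution on $\hh_C$, and we expand $\hh_S$ into $\hh_S',\cc'$ by running $\ex(\expr,\tau_s,\symc,\prop)$, then: (1) the bisimulation still holds over the larger symbolic DNN; (2) the symbolic value produced after expansion is in $\expand$ form; (3) the stronger unique-witness over-approximation of $\hh_C$ by $\hh_S'$ is preserved; and (4) the store over-approximation $\store,\hh_C \prec_{\cc'} \sstore, \hh_S'$ is preserved. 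The skeleton of each case is the same: analyze which $\ex$ rule fires, identify the finitely many neurons/\cf symbolic variables newly added (via $\addg$ / \textsc{Expand-poly-r} / \textsc{Expand-sym-r}), and argue the three invariant-preservation claims carry through.

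First I would handle the base cases where $\ex$ adds nothing. For $\expr \equiv \constant$ (rule \textsc{E-const}), $\hh_S' = \hh_S$ and $\cc' = \cc$, so all four consequents are immediate from the antecedents, except consequent (2) which needs a separate observation that constants already satisfy $\expand$. For $\expr \equiv \expr_0[x]$ with $\tau_s(x)\sqsubseteq\polyexp$ (rules \textsc{E-shape-b}, \textsc{Expand-poly-r}): here $\ex$ first recursively expands $\expr_0$ (inductive hypothesis gives us the preserved invariants and $\expand(\sval')$ for the neuron value), then calls $\exn$ which invokes $\addg$ on each fresh neuron $n'_1,\dots,n'_j$ and rewrites $n[x]$ to the fresh polyhedral form $\sval_{r_0}+\sum \sval_{r_i} n'_i$. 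The key sub-argument is that adding $j$ fresh neurons with fresh shape/metadata symbolic variables, together with the conjuncts $\prop(\dots,n'_i)$ added to $\cc$, cannot destroy over-approximation: since $\hh_C \prec_\cc \hh_S$ already, and the new neurons are pristine (no architectural constraints), we can always pick witnesses for the new variables — for the new $X$-variables (constants in the expanded poly) they are uniquely determined by the concrete poly $\hh_C(n[x])$ by matching coefficients (this is exactly the argument of Lemma~\ref{lemma:uniquex}), and for the new $Z$-variables (the new neurons' poly/sym shape members) they are uniquely determined given $Y$ by Lemma~\ref{lemma:uniquez}. I would invoke Lemmas~\ref{lemma:uniquez}, \ref{lemma:uniquex}, \ref{lemma:uniquexz} throughout to upgrade plain $\exists$ to $\exists!$ for the new $X,Z$ blocks. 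For $\tau_s(x)\sqsubseteq\symexp$ (\textsc{Expand-sym-r}) the argument is the same but instead of $\addg$-ing neurons we add $-1\le\noise'_i\le 1$ constraints for the fresh \cf symbolic variables.

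The recursive/compound cases — $\expr_1 \oplus \expr_2$ (\textsc{E-binary}), $f_c(\dots)$, and the $\map$/$\traverse$-related rules — are handled by chaining: $\ex$ on a binary op runs $\ex$ on $\expr_1$ giving $\hh_S',\cc'$, then $\ex$ on $\expr_2$ starting from $\hh_S',\cc'$ giving $\hh_S'',\cc''$. Apply the inductive hypothesis twice, using the output of the first application (including the preserved $\exists!X\forall Y\exists!Z$ form and the preserved $\prec$) as the antecedent for the second. Here I expect to lean on Lemmas~\ref{lemma:cexpand} (the output constraints are always a conjunction $\cc'=\cc\wedge\cc''$, so nothing is lost) and \ref{lemma:csequential}/\ref{lemma:ctogether} to re-establish that the relevant sub-expressions can still be symbolically executed under the enlarged constraint set and that the bisimulation holds simultaneously. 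The bisimulation consequent (1) for the whole expression then follows by re-running the \textsc{Bisimulation} rule with the new $\m$ obtained by conjoining the component $\m_i$'s.

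I expect the main obstacle to be the $\traverse$ case (rule \textsc{G-traverse} of the expansion), because $\ex$ there does something qualitatively different: it cannot symbolically execute the loop, so it declares $j$ fresh output neurons, adds them via $\addg$, and then recursively expands the applications of the stopping function $f_{c_2}$ and replacement function $f_{c_3}$ to each $(n'_i,\sval_{b_i})$. The difficulty is twofold. First, I must show that the invariant machinery (\textsc{Check-invariant}, \textsc{Check-induction}, \textsc{Sym-traverse}) still produces a consistent over-approximating symbolic value after this expansion — i.e.\ that the fresh output poly $\sval_{b_0}+\sum\sval_{b_i}n'_i$ together with the assumed invariant constraint can always be matched to whatever the operational \traverse actually produces on $\hh_C$; this requires knowing the operational-semantics output of \traverse is itself a polyhedral expression over neurons that are present (after expansion) in $\hh_S'$, and that the user invariant is sound (which is discharged by the \textsc{Check-invariant}/\textsc{Check-induction} side conditions, assumed here). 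Second, the $\expand(\sval')$ claim (consequent (2)) is subtle for \traverse: the symbolic value is already in expanded form by construction in \textsc{Sym-traverse}, but I need the nested recursive expansions of $f_{c_2},f_{c_3}$ to not introduce un-expanded sub-values — which follows because $\foo$ / the recursive $\map$-style expansion terminates in expanded form, an observation I would factor out as an auxiliary sub-lemma (or cite from the $\map$ case). Once these two points are nailed down, the over-approximation bookkeeping for the $j$ fresh neurons is identical to the \textsc{Expand-poly-r} case and re-uses the same uniqueness lemmas.
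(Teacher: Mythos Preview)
Your overall structure—structural induction on $\expr$ following the case analysis of the expansion rules—matches the paper exactly, and your treatment of the constant and shape-access cases (invoking \textsc{Expand-poly-r}/\textsc{Expand-sym-r}, $\addg$, and the uniqueness Lemmas~\ref{lemma:uniquez}--\ref{lemma:uniquexz}) is essentially what the paper does.

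There is, however, a conflation in your proposal that would derail the proof as written. The lemma is about the helper $\ex(\expr,\ldots)$, whose rules are \textsc{E-const}, \textsc{E-binary}, \textsc{E-shape-b}, \textsc{E-shape-r} (and the underlying $\exn$/$\addg$ machinery). The rules \textsc{G-traverse}, \textsc{G-map}, \textsc{G-func-call} that you invoke for the ``hard'' cases belong to the \emph{graph-expansion relation} $\leadsto$, which is the subject of Lemma~\ref{lemma:multistephextendnew} and Lemma~\ref{lemma:overapprox}, not of this lemma. In particular, your extended discussion of \textsc{G-traverse}, \textsc{Check-invariant}, \textsc{Check-induction}, and matching the fresh output poly to the operational \traverse result is all reasoning that belongs in the proof of Lemma~\ref{lemma:overapprox}; for $\ex$ itself, \traverse is applied to a variable $\var$ already sitting in $\sstore$ (which by invariant contains only expanded values), so there is essentially nothing to do. The paper's proof sketch makes this explicit: the only non-trivial cases for consequent~(2) ($\expand(\sval')$) are $\expr[x]$ and $\expr[m]$, precisely because those are the only places $\ex$ actually rewrites a symbolic value.

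One smaller point on the binary case: your ``chain the IH twice'' is right in spirit, but you are missing a step the paper makes explicit. After applying the IH to $\expr_1$ over $(\hh_{S_1},\cc_1)$ and to $\expr_2$ over $(\hh_{S_2},\cc_2)$, you need to \emph{lift} the bisimulation for $\expr_1$ from $(\hh_{S_1},\cc_1)$ to the larger $(\hh_{S_2},\cc_2)$ before you can combine them via Lemma~\ref{lemma:ctogether}. The paper does this by observing $\hh_{S_1},\cc_1 \leadsto^* \hh_{S_2},\cc_2$ and invoking Lemma~\ref{lemma:multistephextendnew}; your sketch cites \ref{lemma:csequential}/\ref{lemma:ctogether} but those alone do not give you the lift across an enlarged $\hh_S$.
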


\begin{proof}[Proof sketch]
This lemma states that if bisimulation holds for an expression $\expr$ with a symbolic graph $\hh_S, \cc$ and then the symbolic graph is expanded to $\hh_S', \cc'$ using the $\ex$ function with input expression $\expr$, then, bisimulation holds for $\expr$ with symbolic graph $\hh_S', \cc'$ and the output of symbolically evaluating $\expr$ using $\hh_S'$ and $\cc'$ will be in expanded form. We use induction on the structure of $\expr$ to prove this. Showing that the new symbolic output will always be in expanded form is straightforward for most types of expressions because for most expressions, if the sub-expressions evaluate to expanded symbolic values, then the whole expression will also. Two exceptions to this are the expressions $\expr[\var]$, where $\var$ is a shape member, and $\expr[m]$, where $m$ is a type of metadata. Since $\ex$ expands all shape members and metadata used in $\expr$, the symbolic values of accessing shape members will always be expanded.
\end{proof}

\begin{lemma}
\label{lemma:multistephextendnew}
    If 
    \begin{enumerate}
        \item $\ll \expr, \fstore, \store, \sstore, \hh_C, \hh_S, \cc \gg \updownarrow \val, \sval, \hat{\cc_1}, \m_1$
        \item $\exists! X \forall Y \exists!Z (\bigwedge_{t \in \dom(\hh_S)} \hh_C(t) = \hh_S(t))$
        \item $\tau_s, \symc, \prop \leadsto^* \hh'_S, \cc'$\
        \item $\store, \hh_C \prec_{\cc} \sstore, \hh_S$
    \end{enumerate}
    then,
    \begin{enumerate}
        \item $\ll \expr, \fstore, \store, \sstore, \hh_C, \hh'_S, \cc' \gg \updownarrow \val, \sval', \cc'', \m''$
        \item $\exists! X' \forall Y' \exists!Z' (\bigwedge_{t \in \dom(\hh'_S)} \hh_C(t) = \hh'_S(t))$
        \item $\expand(\sval) \implies \expand(\sval')$
        \item $\store, \hh_C \prec_{\cc'} \sstore, \hh'_S$
    \end{enumerate}
\end{lemma}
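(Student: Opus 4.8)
The plan is to proceed by induction on the derivation of the multi-step expansion $\tau_s, \symc, \prop \leadsto^* \hh'_S, \cc'$, following the four \textsc{Multistep-*} rules. In the composite rule \textsc{Multistep-r} the derivation factors as $\tau_s, \fstore, \sstore, \hh_S, \cc, \prop \leadsto^* \hh''_S, \cc''$ followed by $\tau_s, \fstore, \sstore, \hh''_S, \cc'', \prop \leadsto^* \hh'_S, \cc'$. I would apply the induction hypothesis to the first factor, using antecedents (1), (2), and (4), to obtain a bisimulation for $\expr$ against $\hh''_S, \cc''$, the refined uniqueness statement $\exists! X'' \forall Y'' \exists! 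Z''\,(\bigwedge_{t}\hh''_S(t)=\hh_C(t))$, the implication $\expand(\sval)\Rightarrow\expand(\sval'')$, and $\store, \hh_C \prec_{\cc''} \sstore, \hh''_S$; these are exactly the four antecedents needed to invoke the induction hypothesis on the second factor. Composing the two expandedness implications and carrying the over-approximation facts through then yields all four conclusions, so the work is confined to the three non-recursive base cases.

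For \textsc{Multistep-add}, $\hh'_S = \addg(n,\tau_s,\hh_S,\prop,\cc)$ is obtained (rule \textsc{Add-neuron-r}) by mapping every metadata field and shape member of a fresh neuron $n$ to a fresh symbolic variable of the corresponding reduced type and conjoining $\prop$ instantiated at $n$. Since these variables are fresh they occur neither in $\sval$ nor in $\range(\hh_S)$, so rerunning the symbolic semantics of $\expr$ over $\hh'_S, \cc'$ returns the same value $\sval' = \sval$; conclusions (1) and (3) are then immediate. For (2) and (4) I would extend the witnessing assignments of antecedents (2) and (4): the fresh constant-typed variables join $X'$ and, by the reasoning of Lemmas~\ref{lemma:uniquex} and~\ref{lemma:uniquez}, are pinned to the unique value forced by $\hh'_S(n[\cdot]) = \hh_C(n[\cdot])$, while the freshly conjoined $\prop(\ldots,n)$ is discharged by the standing assumption (part of the over-approximation setup) that $\prop$ holds at every neuron of $\hh_C$. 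The case \textsc{Multistep-expand}, where $\hh'_S, \cc'$ comes from $\ex(\expr_0,\tau_s,\symc,\prop)$ for some expression $\expr_0$, is Lemma~\ref{lemma:hexpand} itself when $\expr_0 = \expr$ (which also gives (3) in its stronger unconditional form); for general $\expr_0$ I would note that $\ex$ only adds fresh neurons and replaces atomic $\polyexp$/$\symexp$-valued shape members by their expanded forms, introducing only fresh variables together with $\prop(\ldots,n')$ or $-1\le\noise'\le 1$ constraints, so the symbolic run of a fixed $\expr$ changes only by substituting an expanded expression for an atomic shape-member access — which can only preserve or increase expandedness, giving $\expand(\sval)\Rightarrow\expand(\sval')$ — and the uniqueness and over-approximation statements extend by the same pinning argument, using Lemmas~\ref{lemma:uniquex}--\ref{lemma:uniquexz}.

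The remaining base case, \textsc{Multistep-step}, invokes the graph-expansion relation $\tau_s, \symc, \prop \models \expr_0 \leadsto \hh'_S, \cc'$; each of \textsc{G-map}, \textsc{G-func-call}, and \textsc{G-solver} is a finite composition of $\ex$ calls, $\addg$ calls, and recursive $\models\leadsto$ calls, so a subsidiary induction on $\expr_0$ reduces them to the cases just handled. The hard part — and the step I expect to be the real obstacle — is the \textsc{G-traverse} clause: it creates fresh neurons $n'_1,\dots,n'_j$ representing the architecture-dependent output of the loop and then further expands $f_{c_2}(n'_i,\cdot)$ and $f_{c_3}(n'_i,\cdot)$, so re-establishing the bisimulation for an $\expr$ containing this \traverse requires lining up \textsc{Sym-traverse} (which replaces the loop by its invariant, gated by the checks \textsc{Check-invariant} and \textsc{Check-induction}) against the concrete fixpoint iteration of \textsc{Op-traverse}. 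That matching is the delicate content of the overall operational-versus-symbolic bisimulation argument, so this lemma is best treated as one mutually-inductive clause of that larger proof: here the bisimulation hypothesis (1) for $\expr$ is already available, and all that must be shown is that the fresh neurons and the invariant-validity constraints introduced by the expansion can be absorbed into the witness sets $X', Y', Z'$ — which, once more, follows from the pinning and uniqueness reasoning of Lemmas~\ref{lemma:uniquex}--\ref{lemma:uniquexz} together with $\prop$ holding at the newly added neurons.
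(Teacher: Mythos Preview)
Your proposal is correct and follows essentially the same structure as the paper's proof: induction on the $\leadsto^*$ derivation, with the recursive case (\textsc{Multistep-r}) dispatched by two chained applications of the induction hypothesis, and the three base cases handled respectively by the fresh-variable argument for $\addg$, by Lemma~\ref{lemma:hexpand} for $\ex$, and by a subsidiary structural induction on the driving expression for the single-step $\leadsto$. You are in fact slightly more careful than the paper in distinguishing the fixed bisimulation expression $\expr$ from the (possibly different) expression $\expr_0$ that drives the expansion; the paper's detailed proof tacitly works the diagonal case $\expr=\expr_0$ and relegates the off-diagonal case to its proof sketch, whereas you spell out why an $\ex$ step for an unrelated $\expr_0$ can only substitute an expanded form for an atomic shape access in the symbolic run of $\expr$.

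Your worry about \textsc{G-traverse} is somewhat overstated. Since the bisimulation for $\expr$ is already given as antecedent (1), there is no need to re-align \textsc{Sym-traverse} with \textsc{Op-traverse} here; that alignment belongs to Lemma~\ref{lemma:overapprox}. In the present lemma, the \textsc{G-traverse} expansion decomposes into $\addg$ calls for the fresh output neurons followed by recursive $\leadsto$ on $f_{c_2}$ and $f_{c_3}$, so exactly the same reduction you invoke for \textsc{G-map}, \textsc{G-func-call}, and \textsc{G-solver} applies --- the paper says precisely this in its sketch (``\textsc{G-traverse} involves $\addg$ so the induction case \ldots will use the proof of the first base case'').
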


\begin{proof}[Proof sketch]
This lemma states that if bisimulation holds for an expression $\expr$ with a symbolic graph $\hh_S, \cc$ and then the symbolic graph is expanded to $\hh_s', \cc'$ using the mutli-step expand rules (for $\leadsto^*$), then  bisimulation holds for $\expr$ with symbolic graph $\hh_s', \cc'$ and if the output of symbolically evaluating $\expr$ using $\hh_S$ and $\cc$ is in expanded form, then the output of symbolically evaluating $\expr$ using $\hh'_S$ and $\cc'$ is in expanded form. The proof for this lemma uses structural induction on the rules defining $\leadsto^*$, which is comprised of arbitrary combinations of the following functions, $\ex$, $\addg$, and $\leadsto$. It uses the Lemma~\ref{lemma:hexpand} to handle the case of $\ex$ and the rules for $\addg$ for the case of $\addg$. For the case of $\leadsto$, this proof uses induction on the structure of $\expr$. The basic argument in all of these cases is that the original bisimulation arguments holds only if the symbolic semantics rules outputted a value for $\expr$, which is only possible if the graph was expanded enough for all of the sub-expressions in $\expr$. If the graph expands for other expressions, the conditions in $\cc'$ should ensure that the symbolic value outputted for $\expr$ using $\hh'_S$ and $\cc'$ is equivalent to the symbolic value outputted for $\expr$ using $\hh_S$ and $\cc$.
\end{proof}

\begin{lemma} If
\label{lemma:overapprox}
    \begin{enumerate}
        \item $\store, \hh_C \prec_{\cc} \sstore, \hh_S$
        \item $\tau_s, \symc, \prop \models \expr \leadsto \hh_S', \cc'$
        \item $\langle \expr, \context \rangle \Downarrow \val$
        \item $\exists! X \forall Y \exists! Z (\bigwedge_{t \in \dom(\hh_S)}\hh_S(t) = \hh_C(t))$
        \item Inductive invariants are true and Solver constraints are feasible
    \end{enumerate}
    then,
    \begin{enumerate}
        \item $\store, \hh_C \prec_{\cc'} \sstore, \hh_S'$
        \item $\ll \expr, \fstore, \store, \sstore, \hh_C, \hh'_S, \cc' \gg \updownarrow \val, \sval, \cc'', \m$
        \item $\exists! X' \forall Y' \exists! Z' (\bigwedge_{t \in \dom(\hh'_S)}\hh'_S(t) = \hh_C(t))$
    \end{enumerate}
\end{lemma}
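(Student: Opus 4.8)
\textbf{Proof proposal for Lemma~\ref{lemma:overapprox}.}
The plan is to proceed by structural induction on the derivation of the symbolic DNN expansion judgment $\tau_s, \symc, \prop \models \expr \leadsto \hh_S', \cc'$, following the rules in Fig.~\ref{fig:graphexpansion} (and their complete set in Appendix G). The three consequents must be established together, since the over-approximation claims on the expanded graph are exactly what feed the bisimulation argument and vice versa. For each rule I would first show that expansion preserves over-approximation of $\hh_C$ (consequent~1 and~3) using the structure of the newly added neurons and symbolic variables, and then invoke the bisimulation machinery already built up --- chiefly Lemma~\ref{lemma:hexpand} (expanding via $\ex$ preserves bisimulation and forces expanded form), Lemma~\ref{lemma:multistephextendnew} (the multi-step closure $\leadsto^*$ preserves bisimulation), and Lemmas~\ref{lemma:csequential}--\ref{lemma:ctogether} (sequential symbolic evaluation of several subexpressions under accumulating constraints, with the joint bisimulation holding). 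The uniqueness of the existential witnesses ($\exists!$) is tracked using Lemmas~\ref{lemma:uniquez}, \ref{lemma:uniquex}, and \ref{lemma:uniquexz}, which tell us that no fresh variables are added to $Z$ during expansion and that the constant-valued variables $X \subseteq \cs(\hh_S)$ have unique assignments determined by $\hh_C$.

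For the base/structural cases --- constants, variables, binary/unary/ternary operators, shape-member and metadata accesses --- the argument is routine: expansion either does nothing new (and consequents follow from antecedent~1 and~4 plus Lemma~\ref{lemma:cexpand}) or it invokes $\ex$/$\addg$ on a subexpression, where Lemma~\ref{lemma:hexpand} and the $\addg$ rules (\textsc{Add-neuron-r}) directly give that the newly added neurons carry the assumed property $\prop$, so $\cc'$ extends $\cc$ conservatively and the domain inclusion $\dom(\hh'_S) \subseteq \dom(\hh_C)$ is maintained because the fresh neurons were already (vacuously) present as symbolic placeholders in the arbitrary concrete DNN within the verification bounds. For the \textsc{G-map} case I would use the recursive expansion on $\expr$, then \textsc{E-...}/$\ex$ to force the output of $\expr$ into expanded form $\sval_{b_0} + \sum n_i \cdot \sval_{b_i}$ (Lemma~\ref{lemma:hexpand}), and then apply the induction hypothesis to each $f_c(n_i, \sval_{b_i})$ in sequence, threading the accumulating constraints with Lemma~\ref{lemma:ctogether}; the \textsc{G-func-call} case is analogous but additionally substitutes symbolic values for formal parameters, which is handled by the over-approximation-of-stores definition (\textsc{Over-approx-store}).

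The main obstacle will be the \textsc{G-traverse} case. Here the symbolic expansion declares fresh neurons $n'_1, \dots, n'_j$ to stand for the (statically unknown) output of the loop, adds the assumed property $\prop$ for each, and then recursively runs expansion of $f_{c_2}$ and $f_{c_3}$ on each summand. On the operational side, the concrete \traverse actually runs the loop to a concrete polyhedral value $\val$. I have to show that the symbolic output $\bsval = \sval_{b_0} + \sum \sval_{b_i} \cdot n'_i$, constrained only by the (verified) inductive invariant via \textsc{Check-invariant}/\textsc{Check-induction}, over-approximates $\val$ --- i.e.\ there is a choice of the constant-witnesses $X'$ and the polyhedral-witnesses $Z'$ making $\bsval$ equal $\val$. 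This relies critically on antecedent~5 (the invariant is genuinely true, so the symbolic loop-summary is not vacuously unsatisfiable) and on the fact that $\val$ itself satisfies the invariant (a consequence of the soundness of symbolic \traverse established in \S~\ref{sec:symexecution}, which we may assume). The delicate point is matching the quantifier structure: the fresh neurons $n'_i$ go into $Y$ (universally quantified in the over-approximation definition), their coefficients $\sval_{b_i}$ go into $X$ (constants), and the polyhedral expressions bound to the new neurons' shape members go into $Z$ --- and I must check that choosing $X$ \emph{before} $Y$ is still possible, since the coefficients in $\val$ are concrete numbers independent of the concrete neuron valuations. Once that quantifier bookkeeping is done, consequents~1 and~3 follow from Lemma~\ref{lemma:multistephextendnew} applied to the nested expansions of $f_{c_2}$ and $f_{c_3}$, and consequent~2 follows by assembling the bisimulation for $\var \cdot \traverse(\dots)$ from the verified-invariant constraint plus the per-summand bisimulations via Lemma~\ref{lemma:ctogether}. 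The \textsc{G-solver} case is a milder version of the same difficulty, using the feasibility assumption in antecedent~5 and the function-contract constraint $\sval_2 \implies \bsval \leq \sval_1$ in place of the invariant.
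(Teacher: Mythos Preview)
Your proposal is correct and follows essentially the same approach as the paper: structural induction on $\expr$ (equivalently, on the syntax-directed expansion derivation), using Lemmas~\ref{lemma:hexpand}, \ref{lemma:multistephextendnew}, and \ref{lemma:ctogether} to thread the over-approximation and bisimulation through sub-derivations, with the \traverse case handled exactly as you describe via the verified invariant, the fresh-variable quantifier placement, and antecedent~5. The one technical detail you omit is that the \textsc{G-map} case also requires a secondary induction on $\height(\sval)$ to handle the situation where the symbolic value of the mapped expression is a conditional $If(\sval_1,\sval_2,\sval_3)$ rather than a flat polyhedral sum, but this is routine once noticed.
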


\begin{proof}[Proof sketch]
This lemma states that if a expression can be evaluated under concrete semantics and the symbolic graph has been created and expanded for $\expr$, then bisimulation is true for the concrete and symbolic values given by the operational and symbolic semantics. The antecedent of this lemma includes that all inductive invariants are true and solver constraints are feasible. The antecedent also includes the symbolic store over-approximating the concrete store and a stronger assumption on the symbolic graph over-approximating the concrete DNN because both of these statements are required to strengthen the induction hypothesis in order to prove this lemma. This lemma is proved by induction on the structure of $\expr$. It uses Lemmas~\ref{lemma:ctogether}~\ref{lemma:hexpand}~\ref{lemma:multistephextendnew}. We present the complicated cases in the induction on the structure of $\expr$. The other cases use similar arguments.
\end{proof}

\begin{lemma} If
\label{lemma:graphexpand}
    \begin{enumerate}
        \item $\Gamma, \tau_s \vdash \expr \ : \ \types$
        \item $\bot \sqsubset \types \sqsubset \top$
        \item $\fstore \sim \Gamma, \tau_s$
        \item $\store \sim \Gamma$
        \item $\hh_C \sim \tau_s$
        \item $\store, \hh_C \prec_{\cc} \sstore, \hh_S$
        \item $\exists!X \forall Y \exists! Z (\bigwedge_{t \in \dom(\hh_S)} \hh_S(t) = \hh_C(t))$
        \item Inductive invariants are true and Solver constraints are feasible
    \end{enumerate}
    then,
    \begin{enumerate}
        \item $\tau_s, \symc, \prop \models \expr \leadsto \hh_S', \cc'$
        \item $\store, \hh_C \prec_{\cc'} \sstore, \hh'_S$
    \end{enumerate}
\end{lemma}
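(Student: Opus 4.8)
\textbf{Proof plan for Lemma~\ref{lemma:graphexpand}.}
The plan is to prove this by structural induction on the expression $\expr$, following the inductive structure of the symbolic DNN expansion judgment $\tau_s, \symc, \prop \models \expr \leadsto \hh_S', \cc'$ (the rules in Fig.~\ref{fig:graphexpansion} and their full version in Appendix G). The key observation is that the expansion rules decompose $\expr$ into subexpressions in a fixed left-to-right order, and at each step they either (i) recursively expand a subexpression, (ii) invoke the auxiliary $\ex$ routine to put an already-expanded symbolic value into fully expanded polyhedral/symbolic form, or (iii) invoke $\addg$ to introduce fresh neurons with their metadata and soundness-property assumptions. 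So the induction hypothesis will be applied to each subexpression, and the remaining work at each node is to glue the pieces together while tracking the over-approximation invariant and the ``unique-$Z$, unique-$X$'' strengthening (antecedent~7).

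The key steps, in order, would be: first, discharge the base cases ($\constant$, $\var$, $\sym$), where expansion does essentially nothing ($\hh_S' = \hh_S$, $\cc' = \cc$ or $\cc' = \cc \wedge (-1 \le \noise \le 1)$), so the conclusions are immediate (or follow from the fact that a fresh $\noise$ variable only adds an independent constraint, preserving over-approximation). Second, handle the straightforward recursive cases (unary, binary, ternary, metadata access, list operations, function calls), where I invoke the IH on each subexpression in sequence; the main bookkeeping is that after expanding for $\expr_1$ we have $\store, \hh_C \prec_{\cc'} \sstore, \hh_S'$ and the strengthened invariant on $\hh_S'$, and I must re-establish antecedents~6 and~7 before applying the IH to $\expr_2$ — this is where Lemmas~\ref{lemma:uniquex}, \ref{lemma:uniquez}, \ref{lemma:uniquexz} are used to upgrade the plain $\exists X \forall Y \exists Z$ over-approximation to the uniqueness form, and Lemma~\ref{lemma:multistephextendnew} is used to carry the bisimulation/expansion facts through the intermediate graph. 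Third, handle the $\ex$-invoking cases: rule \textsc{G-map} first expands for $\expr$, then calls $\ex(\expr, \dots)$ to force $\expand(\sval)$, then walks $\foo$ over the summands; here I lean on Lemma~\ref{lemma:hexpand} to guarantee that $\ex$ preserves over-approximation and produces an expanded value, and on Lemma~\ref{lemma:overapprox} (whose hypotheses include the strengthened invariant and the ``inductive invariants true / solver constraints feasible'' assumption, antecedent~8) to symbolically evaluate $\expr$ and the function bodies. Fourth, handle \textsc{G-traverse}: expand for the invariant $\expr$, then use $\addg$ to add the $j = n_{sym}$ fresh output neurons $n'_1, \dots, n'_j$ with their shape members and the assumption $\prop(\cdot, n'_i)$, then expand recursively for $f_{c_2}$ and $f_{c_3}$ applied to each $(n'_i, \sval_{b_i})$. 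The over-approximation is maintained because the new neurons are added with completely fresh symbolic variables constrained only by $\prop$, so any concrete DNN can be extended to match them (the parameter bound $n_{sym}$ guarantees enough of them).

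The main obstacle I anticipate is the \textsc{G-traverse} case combined with the bookkeeping of the uniqueness-quantifier invariant across the $\addg$ steps. When $\addg$ introduces a new neuron $n'_i$, it adds new variables to the symbolic DNN: the polyhedral/symbolic shape members become new $Z$-variables (or new $X$-variables for the constant members), and I must argue that the strengthened statement $\exists! X \cdot \forall Y \cdot \exists! Z \cdot (\bigwedge_t \hh_S'(t) = \hh_C(t))$ still holds. The subtle point, exactly the content of Lemmas~\ref{lemma:uniquez} and~\ref{lemma:uniquex}, is that this requires the fresh variables to lie in the range of $\hh_S'$ (so that their value is pinned down by the equality constraints) and requires the quantifier ordering $\exists X \cdot \forall Y \cdot \exists Z$ — a fresh constant-valued member genuinely is determined uniquely once $\hh_C$ is fixed, but a fresh polyhedral-valued member is only determined uniquely once we also fix the new neurons it refers to, which is why its coefficients go into $Z$ (innermost) and the new neurons themselves go into $Y$. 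I also need the corresponding concrete DNN $\hh_C$ to actually contain these neurons; this is handled by the ``within the bounds of verification'' caveat (the concrete DNN being over-approximated is assumed large enough, i.e. $\dom(\hh_S') \subseteq \dom(\hh_C)$), so no genuine difficulty arises, only careful tracking. A secondary subtlety is that antecedent~8 (inductive invariants true, solver constraints feasible) must be threaded into every appeal to Lemma~\ref{lemma:overapprox}; since expansion of \traverse/\solver subexpressions does not itself check the invariants but merely records the expanded bounds, this assumption is exactly what lets the downstream symbolic-evaluation lemmas go through, and I would state explicitly that it propagates unchanged to every subexpression.
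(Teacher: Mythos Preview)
Your proposal is correct and follows essentially the same approach as the paper: structural induction on $\expr$, with the inductive step chaining the IH on subexpressions and appealing to Lemmas~\ref{lemma:hexpand}, \ref{lemma:multistephextendnew}, and~\ref{lemma:overapprox} to thread the over-approximation and the strengthened $\exists!X\,\forall Y\,\exists!Z$ invariant through the intermediate expanded graphs. One small point of divergence: where you propose invoking Lemmas~\ref{lemma:uniquex}--\ref{lemma:uniquexz} after each subexpression to upgrade the plain over-approximation to the uniqueness form, the paper instead obtains antecedent~7 for the next subexpression directly as consequent~3 of Lemma~\ref{lemma:overapprox}; to do so it first invokes the type-soundness Lemma~\ref{lemma:optype-checking} to establish that the concrete evaluation $\langle \expr_i, \context\rangle \Downarrow \val_i$ exists (antecedent~3 of Lemma~\ref{lemma:overapprox}), a step you do not mention explicitly but would need.
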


\begin{proof}[Proof sketch]
This lemma states that if an expression, $\expr$, type-checks, the inductive invariants are true and the solver input constraints are feasible, then given a symbolic graph, $\hh_S$ and $\cc$, that over-approximates the concrete DNN, $\hh_S$ and $\cc$ can be expanded for $\expr$. This proof uses Lemma~\ref{lemma:overapprox} because some of the graph expansion rules, such as the ones for \map, require executing the symbolic semantics during graph expansion. We present the complicated cases in induction on the structure of $\expr$. The other cases use similar arguments.
\end{proof}

\renewcommand{\thetheorem}{\ref{soundnesstheorem}}
\begin{theorem}
% \label{soundnesstheorem}
    For a well-typed program $\Pi$, if \cf verification procedure proves it maintains the property $\prop$, then upon executing $\Pi$ on all concrete DNNs within the bound of verification, the property $\prop$ will be maintained at all neurons in the DNN. 
\end{theorem}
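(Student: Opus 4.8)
The plan is to peel the execution of $\Pi$ down to the individual abstract-transformer applications performed by the \flow construct, to show that each such application preserves $\prop$ whenever the \cf query for the relevant DNN operation is valid, and to close by induction along the order in which \flow visits neurons. First I would reduce to a single transformer: by the operational semantics of \flow (rules \textsc{Op-flow}, \textsc{Flow-non-emp}, \textsc{Transformer-call}), running $\Pi$ on a concrete DNN $\hh_C$ amounts to repeatedly selecting an active neuron $\ver$ with DNN operation $\eta$, binding $\curr \mapsto \ver$ and $\prev \mapsto \neigh(\ver, \dir)$, evaluating the matching return tuple $\theta_{r_j}$ of the verified \transformer, and overwriting $\hh_C(\ver.\shape)$ with the result. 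Since \cf verifies $\theta_{r_j}$ for every operation $\eta_j$ declared in the \transformer, it suffices to prove the local claim: if the generated query for $\eta$ is valid and $\prop$ holds on $\curr$, $\prev$, and every neuron consulted during the evaluation, then $\prop$ holds on the updated shape of $\curr$.

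For the local claim I would invoke the over-approximation machinery of \S~\ref{sec:verificationsoundness}. For operation $\eta$, \cf builds an initial symbolic DNN $\hh_S, \cc$ with symbolic neurons for $\curr$ and $\prev_1,\dots,\prev_{n_{prev}}$, edges encoding $\eta$, and $\cc$ asserting $\eta$ together with $\prop$ on all these neurons. By construction this symbolic DNN over-approximates every subgraph of any concrete DNN (within the bounds $n_{prev}, n_{sym}$) that is the output of $\eta$: the existential witnesses for the polyhedral and symbolic shape members can be taken to be exactly the concrete expressions stored in $\hh_C$, and $\cc$ is satisfied because $\prop$ is assumed on the concrete neurons, so $\hh_C \prec_{\cc} \hh_S$ holds for the relevant restriction of $\hh_C$. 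I would then apply Lemma~\ref{lemma:graphexpand} to conclude that symbolic DNN expansion preserves $\prec_{\cc}$, and Lemma~\ref{lemma:overapprox} (the bisimulation lemma) to conclude that symbolically executing $\theta_{r_j}$ on the expanded $\hh'_S, \cc'$ produces symbolic shape values $(\sval_1,\dots,\sval_n)$ and accumulated constraints $\cc''$ whose corresponding concrete values, obtained by the operational semantics, are over-approximated, i.e.\ the bisimulation judgement $\ll \theta_{r_j}, \fstore, \store, \sstore, \hh_C, \hh'_S, \cc' \gg \updownarrow \val, \sval, \cc'', \m$ holds. The \cf query has antecedent $p$ recording $\cc''$, the edge relation $\curr = \eta(\prev)$, and the computation relation $\rr$, and consequent $q \equiv \prop((\sval_1,\dots,\sval_n), \curr)$; validity of $p \implies q$ combined with the over-approximation forces $\prop$ to hold on the concrete updated shape, since any concrete execution induces a satisfying assignment of $p$ and hence must satisfy $q$.

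Finally I would run the induction along the flow order. In the forward direction \flow visits a neuron only after all its predecessors, and dually for the backward direction, with the base case being the input (resp.\ output) layer where $\prop$ holds because $\hh_C$ is a valid abstract element of the certifier's abstract domain; the inductive step is exactly the local claim, and since every other neuron's shape is untouched, the invariant ``$\prop$ holds at every already-processed neuron'' is maintained until \flow terminates, at which point $\prop$ holds at all neurons. The hard part will be the bisimulation lemma (Lemma~\ref{lemma:overapprox}) specialized to the \traverse and \solver constructs: \traverse is a loop with an input-dependent termination condition and therefore cannot be symbolically unrolled, so the argument must instead rely on the programmer-supplied inductive invariant discharged by \textsc{Check-invariant} and \textsc{Check-induction}, and then show that the fresh symbolic neurons introduced to stand for the traversal's output, constrained only by the assumed invariant, genuinely over-approximate every concrete fixpoint that the operational \traverse can reach; matching the $\exists X \forall Y \exists Z$ quantifier alternation against the uniqueness requirements of Lemmas~\ref{lemma:uniquez}--\ref{lemma:uniquexz} over these newly introduced variables is the main technical obstacle.
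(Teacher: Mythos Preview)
Your proposal is correct and follows essentially the same approach as the paper: reduce to a single transformer application, establish that the initial symbolic DNN over-approximates the relevant concrete subgraph, invoke Lemma~\ref{lemma:graphexpand} to carry over-approximation through expansion, invoke Lemma~\ref{lemma:overapprox} (bisimulation) to carry it through symbolic execution, and conclude from validity of the generated query. Your write-up is in fact more explicit than the paper's own proof on two points---the induction along the \flow visitation order and the identification of \traverse/\solver as the locus of difficulty---but these are elaborations of the same skeleton rather than a different route; the paper absorbs the \traverse/\solver issue into the hypothesis ``inductive invariants are true and solver constraints are feasible'' of Lemma~\ref{lemma:overapprox}, and leaves the \flow-level induction implicit in the remark that \flow composes only verified transformers.
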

\renewcommand{\thetheorem}{\Alph{section}.\arabic{theorem}}
\renewcommand{\thetheorem}{\ref{completenesstheorem}}
\begin{theorem}
% \label{completenesstheorem}
    If executing a well-typed program $\Pi$ that does not use \traverse and \solver constructs on all concrete DNNs within the bounds of verification maintains the property $\prop$ for all neurons in the DNN, then it can be proved by the \cf verification procedure.
\end{theorem}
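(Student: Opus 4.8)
The plan is to establish the contrapositive: if the \cf verification procedure does \emph{not} prove $\Pi$, then there is a concrete DNN within the bounds on which executing $\Pi$ fails to maintain $\prop$. Since $\Pi$ uses neither \traverse nor \solver, the two sources of imprecision in the symbolic semantics---the programmer-supplied inductive invariants (rule \textsc{Check-invariant}) and the solver function contracts (rules \textsc{Sym-solver-min}, \textsc{Sym-solver-max})---are absent, so for some DNN operation $\eta$ the emitted query $\textsf{checkValid}(p \implies q)$, with $p$ and $q$ as in \S~\ref{sec:smtverification}, is a quantifier-free formula over nonlinear (and linear integer) arithmetic, and its non-validity yields a genuine satisfying assignment $\varsigma$ of $p \wedge \neg q$.

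From $\varsigma$ I would read off a concrete DNN $\hh_C$ and a point at which $\prop$ is violated. Concretely, let the neuron set of $\hh_C$ be exactly that of the created-and-expanded symbolic DNN $\hh_S$ for $\eta$---the neurons \curr, the \prev neurons, and every neuron introduced during symbolic DNN expansion---with \curr the unique non-input neuron satisfying $\curr = \eta(\prev)$ and all other neurons designated inputs. For each input neuron $v$ and each shape member or metadatum $x$, set $\hh_C(v[x])$ to the $\varsigma$-concretization of $\hh_S(v[x])$: an unexpanded \polyexp or \symexp member becomes the constant expression $\varsigma(\hh_S(v[x]))$, a legal \ps/\zs value, while an expanded member becomes the affine expression over the same neurons with its symbolic coefficients replaced by their $\varsigma$-values. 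By construction $\hh_C$ has at most $n_{prev}$ neurons in the \prev layer and expansions of length at most $n_{sym}$, so it lies within the verification bounds, and it is a well-formed DAG that is type-consistent with $\tau_s$. I would then check the easy facts: $\hh_C(\curr) = \eta(\hh_C(\prev))$ because $\varsigma$ satisfies the edge-relation conjunct of $p$; and $\prop$ holds at every input neuron of $\hh_C$ because $\varsigma \models \cc$, $\cc$ asserts $\prop$ at every neuron, and concretizing a conjunct such as $\sval_v^l \le \sval_v \le \sval_v^u \wedge \sval_v^L \le \sval_v \le \sval_v^U$ yields exactly the concrete $\prop$ instance with all terms equal to their $\varsigma$-values.

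The crux is an \emph{exactness companion} to Lemma~\ref{lemma:overapprox}: for a \traverse/\solver-free, well-typed expression the over-approximation of Fig.~\ref{fig:defs} becomes an exact correspondence. Formally, starting from the $\hh_C$ above and the symbolic DNN $\hh_S, \cc$ (so that $\hh_C \prec_{\cc} \hh_S$ is witnessed by $\varsigma$), if $\langle \expr, \fstore, \store, \hh_C \rangle \Downarrow \val$---which terminates by Theorem~\ref{thm:welltyped}---and $\langle \expr, \fstore, \sstore, \hh_S, \cc \rangle \downarrow \sval, \cc'$, then $\varsigma$ extends to the finitely many fresh \cf-symbolic variables produced by \textsc{Sym-noise} (each choosable anywhere in $[-1,1]$, matching the concrete \sym semantics) so that $CS(\sval,\val)$ and $\cc'$ both hold. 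I would prove this by induction on the structure of \expr, reusing the case analysis of Lemma~\ref{lemma:overapprox} but \emph{dropping} the \traverse and \solver cases; the point needing care is that each Z3 $\mathrm{If}$ in rules \textsc{Sym-ternary}, \textsc{Sym-map-r}, and the list operations resolves, under the fixed assignment $\varsigma$, to precisely the branch selected by the matching operational rule (\textsc{Op-ternary}, etc.), since $\varsigma$ fixes every guard's truth value. Applying this lemma to the transformer return expression of $\eta$ (threading through \textsc{Sym-func-call}, \textsc{Sym-map}, and the ternary structure of $\theta_r$) shows that the shape computed for \curr by the operational semantics agrees with the $\varsigma$-concretization of the symbolic tuple $(\sval_1,\dots,\sval_n)$.

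Finally, running $\Pi$ on $\hh_C$ applies the transformer $\theta_\eta$ exactly once, at \curr, producing that very shape; since $\varsigma \models \neg q = \neg\,\prop(\abshape'_{\curr},\curr)$, the property $\prop$ fails at \curr in the updated DNN, contradicting the assumption that $\Pi$ maintains $\prop$ on all concrete DNNs within the bounds. Hence no transformer's query can fail, i.e.\ \cf proves $\Pi$. The main obstacle is the exactness lemma together with the \emph{faithfulness} of the symbolic DNN: I must show that $\cc$ encodes neither a spurious constraint (which would prevent recovering $\hh_C$ from $\varsigma$) nor too little (which would admit a model $\varsigma$ not realizable by any bounded concrete DNN), and that---absent \traverse and \solver---no symbolic rule ever loses information or leaves a fresh variable unconstrained, so that the single assignment $\varsigma$ threads through the whole symbolic execution and simultaneously pins down every branch, every intermediate value, and every expansion neuron; the bookkeeping for the $\mathrm{If}$-nested list operations and for the uniqueness of the recovered concretization (the exact-setting analogues of Lemmas~\ref{lemma:uniquexz} and \ref{lemma:ctogether}) is where the real work lies.
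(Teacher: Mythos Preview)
Your proposal is correct and follows essentially the same approach as the paper: argue the contrapositive, use that the symbolic semantics are \emph{exact} w.r.t.\ the operational semantics in the absence of \traverse and \solver (proved by structural induction on $\expr$), and read off a concrete counterexample DNN from the satisfying assignment returned when the verification query fails. Your write-up is considerably more detailed than the paper's terse sketch---in particular your explicit construction of $\hh_C$ from $\varsigma$ and your handling of the $\mathrm{If}$ guards and fresh $\noise$ variables---but the decomposition and key lemma are the same.
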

\renewcommand{\thetheorem}{\Alph{section}.\arabic{theorem}}

\clearpage
\section{Proofs}
\label{appendix:proofs}

\begin{proof}
\label{proof:uniquez}
Proof of lemma~\ref{lemma:uniquez}
\begin{enumerate}
    \item Let $m_X$ be a satisfying assignment to $X$ and $m_Y$ be an assignment to $Y$.
    \item For a given assignment to $Y$, for all $t \in \dom(\hh_C)$, $\hh_C(t)$ is a fixed value.
    \item $m_1, m_2$ are two assignments of $Z$. 
    \begin{enumerate}
        \item $m_X \cup m_Y \cup m_1 \models (\bigwedge_{t \in \dom(\hh_S)}\hh_S(t) = \hh_C(t))$
        \item $m_X \cup m_Y \cup m_2 \models (\bigwedge_{t \in \dom(\hh_S)}\hh_S(t) = \hh_C(t))$
    \end{enumerate}
    % \item There are two types of variables in Z
    \item $Z \subseteq \range(\hh_S)$
    \item For each $z$ in $Z$, there is a $t$ in $\dom(\hh_S)$ s.t. $\hh_S(t) = z$
    % \item $\forall z \in Z, \exists t \in \dom(\hh_c)$
    \item From (3a) and (5), for each $z$ in $Z$, $m_1(z) = \hh_C(t)$
    \item From (3b) and (5), for each $z$ in $Z$, $m_2(z) = \hh_C(t)$
    \item From (6) and (7), for each $z$ in $Z$, $m_1(z) = m_2(z)$
    \item So, for given assignments $m_X$ and $m_Y$, there is a unique assignment to $Z$.
    \item Therefore, $\exists X \forall Y \exists! Z (\bigwedge_{t \in \dom(\hh_S)}\hh_S(t) = \hh_C(t))$
\end{enumerate}
\end{proof}
\noindent
\begin{proof}
Proof of lemma~\ref{lemma:uniquex}
\label{proof:uniquex}
    \begin{enumerate}
    \item $m_1$ and $m_2$ are two satisfying assignments of $X$
    \item There are two types of variables in $X$:
    \begin{enumerate}
        \item For some of the $t$ in $\dom(\hh_S)$, $\hh_C(t)$ are polyhedral expressions or symbolic expressions. For such values, there are corresponding variables in $X$ which represent the coefficients and the constants of the polyhedral expressions or the symbolic expressions.  
        \item For the remaining $t$ in $\dom(\hh_S)$, $\hh_C(t)$ is some constant $c$, an $x$ in $X$, represents this constant, s.t., $\hh_S(t) = x$.
    \end{enumerate}
    \item In case (2a),
    \begin{enumerate}
        \item Let $\hh_C(t) = c_0 + c_1 * n_1 + c_2 * n_2 + \cdots$. \\ 
        In this case, $\hh_S(t) = x_0 + x_1 * n_1 + x_2 * n_2 + \cdots$ where $n_1, n_2, \cdots \in Y$. \\ 
        Since the equality $c_0 + c_1 * n_1 + c_2 * n_2 + \cdots = x_0 + x_1 * n_1 + x_2 * n_2 + \cdots$ must hold for all assignments to $Y$, there is a unique assignment to $x_0, x_1, \cdots$.
        \item Let $\hh_C(t) = c_0 + c_1 * \epsilon_1 + c_2 * \epsilon_2 + \cdots$. \\ 
        In this case, $\hh_S(t) = x_0 + x_1 * \epsilon_1 + x_2 * \epsilon_2 + \cdots$ where $\epsilon_1, \epsilon_2, \cdots \in Y$ \\
        Since the equality $c_0 + c_1 * \epsilon_1 + c_2 * \epsilon_2 + \cdots = x_0 + x_1 * \epsilon_1 + x_2 * \epsilon_2 + \cdots$ must hold for all assignments to $Y$, there is a unique assignment to $x_0, x_1, \cdots$.
    \end{enumerate}
    \item In case (2b), since $\hh_C(t)$ is a constant, there is a unique assignment to $x$ such that $\hh_C(t) = \hh_S(t)$, i.e., $x = \hh_C(t)$
    Hence, there is a unique assignment to all the variables in $X$.
    \item Therefore, $\exists! X \forall Y \exists Z (\bigwedge_{t \in \dom(\hh_S)}\hh_S(t) = \hh_C(t))$ 
\end{enumerate}

\end{proof}

% \clearpage

\begin{proof}
Proof of lemma~\ref{lemma:uniquexz} \\
\label{proof:uniquexz}
\begin{enumerate}
    \item $\exists X \forall Y \exists Z (C \wedge \bigwedge_{t \in \dom(\hh_S)}\hh_S(t) = \hh_C(t)) \ \implies \ \exists X \forall Y \exists Z (\bigwedge_{t \in \dom(\hh_S)}\hh_S(t) = \hh_C(t))$
    \item $\exists X \forall Y \exists Z (\bigwedge_{t \in \dom(\hh_S)}\hh_S(t) = \hh_C(t)) \ \implies \ \exists! X \forall Y \exists! Z (\bigwedge_{t \in \dom(\hh_S)}\hh_S(t) = \hh_C(t))$
    \item This follows from lemma~\ref{lemma:uniquex} and lemma~\ref{lemma:uniquez}
\end{enumerate}
\end{proof}
\clearpage
% \input{sections/appendix/typecheckinglemma}
% \begin{lemma}
%     \label{lemma:cexpand}
%     If
%     \begin{enumerate}
%         \item $\langle \expr, \symc \rangle \downarrow \sval, \cc'$ 
%     \end{enumerate}
%     then,
%     \begin{enumerate}
%         \item $\cc' = \cc \wedge \cc''$
%         % \item $\cc' \implies \cc$
%     \end{enumerate}
% \end{lemma}
\begin{proof}[Proof of lemma ~\ref{lemma:cexpand}]
    This can be seen by looking at the rules in the section symbolic semantics for expressions.  We can do induction on the structure of $\expr$ to conclude that in each rule, $\cc'$ is either equal to $\cc$, in which case $\cc'' = \true$ or $\cc'$ is created by taking conjunctions of $\cc$ with other conditions.
\end{proof}

% \begin{lemma}
%     If
%     \begin{enumerate} 
%         \item $\forall i \in [n], \ll \expr_i, \fstore, \store_i, \sstore, \hh_C, \hh_S, \cc_0 \gg \updownarrow \val_i, \sval_i, \cc'_i, \m'_i$
%         \item $\exists!X \forall Y \exists!Z (\cc_0 \wedge \bigwedge_{t \in \dom(\hh_S)} \hh_S(t) = \hh_C(t))$
%     \end{enumerate}
%     then,
%     \begin{enumerate} 
%         \item $\forall i \in [n], \langle \expr_i, \fstore, \sstore, \hh_S, \cc_{i-1} \rangle \downarrow \sval_i, \cc_i$
%     \end{enumerate}
% \end{lemma}
\begin{proof} [Proof of lemma ~\ref{lemma:csequential}]
This can be proven by induction on the structure of $\expr_i$ and induction on $i$. Using Lemma~\ref{lemma:cexpand}, $\forall i \in [m], \cc'_i = \cc_0 \wedge \cc''_i $. There are only three rules in which symbolic evaluation uses $\cc$: \textsc{Inv, Inv-invariant} and \textsc{Sym-solver}. When these rules are not involved, by looking at the rules, it is easy to see that $\langle \expr_i, \fstore, \sstore, \hh_S, \cc_{i-1} \rangle \downarrow \sval_i, \cc'_i \wedge \bigwedge_{k = 1}^{i-1} \cc''_k$ \\
\textbf{\textsc{Inv-invariant}}
\begin{enumerate}
    \item $\cinduction(\var \cdot \trav(\dir, f_{c_1}, f_{c_2}, f_{c_3})\{\expr\}, \fstore, \sstore, \hh_S, \m, \cc) = \bsval''$
    \item From Antecedent (1) and \textsc{Inv-invariant}, $\sval''_b = \cc_0 \wedge \sval'_{b} \implies \cc''_j \wedge \sval''' $
    \item From (1), $\sval''_b = (\bigwedge_{k = 1}^{i-1} \cc''_k \wedge \cc_0 \wedge \sval'_{b}) \implies (\bigwedge_{k = 1}^{i-1} \cc''_k \wedge \cc''_j \wedge \sval''') $
    \item $\cinduction(\var \cdot \trav(\dir, f_{c_1}, f_{c_2}, f_{c_3})\{\expr\}, \fstore, \sstore, \hh_S, \m, \cc \wedge \bigwedge_{k = 1}^{i-1} \cc''_k) = \bsval''$
\end{enumerate}
\textbf{\textsc{Inv}}
\begin{enumerate}
    \item $\cinvariant(\var \cdot \trav(\dir, f_{c_1}, f_{c_2}, f_{c_3})\{\expr\}, \fstore, \sstore, \hh_S, \m, \cc) = \sval_b \wedge \sval'_b, \cc'$
    \item From Antecedent (1) and \textsc{Inv}, $\sval_b = \cc' \implies \sval $
    \item From (1), $\sval_b = \bigwedge_{k = 1}^{i-1} \cc''_k \wedge \cc' \implies \sval $
    \item $\cinvariant(\var \cdot \trav(\dir, f_{c_1}, f_{c_2}, f_{c_3})\{\expr\}, \fstore, \sstore, \hh_S, \m, \cc \wedge \bigwedge_{k = 1}^{i-1} \cc''_k) = \sval_b \wedge \sval'_b, \bigwedge_{k = 1}^{i-1} \cc''_k \wedge \cc'$
\end{enumerate}
\textbf{\textsc{Sym-solver}}
\begin{enumerate}
    \item $\langle \lp(\minimize, \expr_1, \expr_2), \symc \rangle \downarrow \bsval, \cc'$
    \item From \textsc{Sym-solver}, $\sat(\cc_2 \wedge \sval_2)$
    \item From Antecedents (1) and (2), because the symbolic evaluation rules only add variables that are fresh and will not be shared between $\cc''_i$'s, $\exists! X \forall Y \exists! Z(\bigwedge_{k = 1}^{i-1} \cc''_k \wedge \bigwedge_{k=1}^{i-1} \m'_k \wedge \bigwedge_{t \in \dom(\hh_S)} \hh_S(t) = \hh_C(t))$
    \item The satisfying assignment to $\cc_2 \wedge \sval_2$ can be extended to include the unique assignment to the the variables in $\bigwedge_{k = 1}^{i-1} \cc''_k$, since these variables are only in $X$. We know the fresh variables added to $\cc$ are in $X$ by looking at \textsc{Sym-traverse} and \textsc{Sym-solver}.
    \item From (4), $\sat(\bigwedge_{k = 1}^{i-1} \cc''_k \wedge \cc_2 \wedge \sval_2)$
    \item $\langle \lp(\minimize, \expr_1, \expr_2), \fstore, \sstore, \hh_C, \cc \wedge \bigwedge_{k = 1}^{i-1} \cc''_k \rangle \downarrow \bsval, \cc' \wedge \bigwedge_{k = 1}^{i-1} \cc''_k $
\end{enumerate}
\end{proof}
\newpage

% \begin{lemma}
%     If
%     \begin{enumerate} 
%         \item $\forall i \in [n], \ll \expr_i, \fstore, \store_i, \sstore, \hh_C, \hh_S, \cc_0 \gg \updownarrow \val_i, \sval_i, \cc'_i, \m'_i$
%         \item $\exists!X \forall Y \exists!Z (\cc_0 \wedge \bigwedge_{t \in \dom(\hh_S)} \hh_S(t) = \hh_C(t))$
%     \end{enumerate}
%     then,
%     \begin{enumerate}
%         \item $\forall i \in [n], \langle \expr_i, \fstore, \sstore_i, \hh_S, \cc_{i-1} \rangle \downarrow \sval_i, \cc_i$
%         \item $\exists! X' \forall Y' \exists!Z' (\bigwedge_i \sval_i = \val_i \wedge \cc_n \wedge \m \wedge \bigwedge_{t \in \dom(\hh_S)} \hh_S(t) = \hh_C(t))$
%         \item $\cc_n \implies \cc_0$
%     \end{enumerate}
% \end{lemma}
\begin{proof}[Proof of lemma ~\ref{lemma:ctogether}]
\setcounter{number}{1}
\begin{align*}
    &\text{From Antecedent (1) and (2)}  \\
    & \forall i \in [n], \langle \expr_i, \fstore, \sstore_i, \hh_S, \cc_0 \rangle \downarrow \sval_i, \cc'_i && && \mycounter & \\ 
    &\text{From Lemma~\ref{lemma:csequential} using Antecedent (1)} \\
    & \forall i \in [n], \langle \expr_i, \fstore, \sstore_i, \hh_S, \cc_{i-1} \rangle \downarrow \sval_i, \cc_i && \text{Consequent 1} && \mycounter & \\ 
    & \text{We will proceed using induction on $n$}\\
    &\textbf{Base Case: } n=1\\
    & \text{Consequent 2 follow directly from Antecedent (1)}\\
    &\textbf{Base Case: } n=2\\
    &\text{From (1) and Lemma~\ref{lemma:cexpand}}\\
    & \cc_1 = \cc'_1 = \cc_0 \wedge \cc''_1 &&  && \mycounter &\\
    &\text{From (3) and Antecedent (1)} \\
    & \exists! X_1 \forall Y_1 \exists! Z_1 (CS(\sval_1,\val_1) \wedge \cc_1 \wedge \m'_1 \wedge \bigwedge_{t \in \dom(\hh_S)} \hh_S(t) = \hh_C(t)) &&  && \mycounter &\\
    &\text{From Antecedent (1)} \\
    & \exists! X_2 \forall Y_2 \exists! Z_2 (CS(\sval_2,\val_2) \wedge \cc'_2 \wedge \m'_2 \wedge \bigwedge_{t \in \dom(\hh_S)} \hh_S(t) = \hh_C(t)) &&  && \mycounter &\\
&\text{From (1) and Lemma~\ref{lemma:cexpand}} \\
    & \cc'_2 = \cc_0 \wedge \cc''_2 && && \mycounter &\\
    &\text{From (2) and Lemma~\ref{lemma:cexpand}} \\
    & \cc_2 = \cc_1 \wedge \cc''_2 && && \mycounter &\\
    &\text{From (3) and (4)}\\
    & \exists! X_1 \forall Y_1 \exists! Z_1 (CS(\sval_1,\val_1) \wedge \cc_0 \wedge \cc''_1 \wedge \m'_1 \wedge \bigwedge_{t \in \dom(\hh_S)} \hh_S(t) = \hh_C(t)) &&  && \mycounter &\\
    &\text{From (5) and (6)}\\
    & \exists! X_2 \forall Y_2 \exists! Z_2 (CS(\sval_2,\val_2) \wedge \cc_0 \wedge \cc''_2 \wedge \m'_2 \wedge \bigwedge_{t \in \dom(\hh_S)} \hh_S(t) = \hh_C(t)) &&  && \mycounter &\\
    & \text{Given an arbitrary assignment to $Y_1 \cup Y_2$, $m_{Y}$, } \\
    & \text{Call the unique assignment to $X_1$, $m_{X_1}$.}\\
    & \text{There exists a unique assignment to $Z_1$, $m_{Z_1}$ such that }\\
    & m_{X_1} \cup m_{Y} \cup m_{Z_1} \models (CS(\sval_1,\val_1) \wedge \cc_0 \wedge \cc''_1 \wedge \m'_1 \wedge \bigwedge_{t \in \dom(\hh_S)} \hh_S(t) = \hh_C(t)) && \text{From (8)} && \mycounter &\\
    & \text{Call the unique assignment to $X_2$, $m_{X_2}$.}\\
    & \text{There exists a unique assignment to $Z_2$, $m_{Z_2}$ such that }\\
    & m_{X_2} \cup m_{Y} \cup m_{Z_2} \models (CS(\sval_2,\val_2) \wedge \cc_0 \wedge \cc''_2 \wedge \m'_2 \wedge \bigwedge_{t \in \dom(\hh_S)} \hh_S(t) = \hh_C(t)) && \text{From (9)} && \mycounter &\\
    & m_{X_1} \cup m_{Y} \cup m_{Z_1} \models (\cc_0 \wedge \bigwedge_{t \in \dom(\hh_S)} \hh_S(t) = \hh_C(t)) && \text{From (10)} && \mycounter &\\
    \end{align*}
\begin{align*}
    & m_{X_2} \cup m_{Y} \cup m_{Z_2} \models (\cc_0 \wedge \bigwedge_{t \in \dom(\hh_S)} \hh_S(t) = \hh_C(t)) && \text{From (11)} && \mycounter &\\
    & \text{For $a \in \vars(\cc_0) \cup \vars(\hh_S)$, }\\
    &\text{From (12),(13) and Antecedent (2)}  \\
    & (m_{X_1} \cup m_{Y} \cup m_{Z_1})(a) = (m_{X_2} \cup m_{Y} \cup m_{Z_2})(a) && && \mycounter &\\
    & X_{1,2} = X_1 \cup X_2 \quad Y_{1,2} = Y_1 \cup Y_2 \quad Z_{1,2} = Z_1 \cup Z_2 && && \mycounter &\\
    &\vars(\sval_1) \cup \vars(\cc''_1) \subseteq \vars(\cc_0 \wedge \m'_1 \wedge \bigwedge_{t \in \dom(\hh_S)} \hh_S(t) = \hh_C(t)) && && \mycounter&\\
    &\vars(\m'_1)\setminus \vars(\cc_0 \wedge \bigwedge_{t \in \dom(\hh_S)} \hh_S(t) = \hh_C(t)) \text{ are fresh variables}&& && \mycounter&\\
    &\vars(\sval_2) \cup \vars(\cc''_2) \subseteq \vars(\cc_0 \wedge \m'_2 \wedge \bigwedge_{t \in \dom(\hh_S)} \hh_S(t) = \hh_C(t)) && && \mycounter&\\
    &\vars(\m'_2)\setminus \vars(\cc_0 \wedge \bigwedge_{t \in \dom(\hh_S)} \hh_S(t) = \hh_C(t)) \text{ are fresh variables}&& && \mycounter&\\
    & \text{Since the fresh variables in $\m'_1$ are different from those in $\m'_2$ }\\
    &\text{From (14), (17) and (19)}\\
    & \text{$m_{X_1} \cup m_{X_2}$ is a well-defined assignment to the variables in $X_{1,2}$} &&  && \mycounter &\\
    &\text{From (14), (17) and (19)}\\
    & \text{$m_{Z_1} \cup m_{Z_2}$ is a well-defined assignment to the variables in $Z_{1,2}$} &&  && \mycounter &\\
&\text{From (3),(7),(10), (11),(20) and (21)} \\
    & \exists X_{1,2} \forall Y_{1,2} \exists Z_{1,2} (\bigwedge_i CS(\sval_i,\val_i) \wedge \cc_2 \wedge \m'_1 \wedge \m'_2 \wedge \bigwedge_{t \in \dom(\hh_S)} \hh_S(t) = \hh_C(t) ) && && \mycounter &\\
    & \text{For any assignment $m_{1,2}$ to $X_{1,2}$, } \\
    &\text{From (3,6,7,22)}\\
    & m_{1,2} \models \forall Y_{1,2} \exists Z_{1,2} (CS(\sval_1,\val_1) \wedge \cc_0 \wedge \cc''_1 \wedge \m'_1 \wedge \bigwedge_{t \in \dom(\hh_S)} \hh_S(t) = \hh_C(t)) && && \mycounter&\\
    &\text{From (8) and (23)}\\
    & \text{The assignment to $X_{1,2} \cap X_1$ = $m_{x_1}$} &&  && \mycounter &\\
    & \text{For any assignment $m_{1,2}$ to $X_{1,2}$, } \\
    &\text{From (3,6,7,22)}\\
    & m_{1,2} \models \forall Y_{1,2} \exists Z_{1,2} (CS(\sval_2,\val_2) \wedge \cc_0 \wedge \cc''_2 \wedge \m'_2 \wedge \bigwedge_{t \in \dom(\hh_S)} \hh_S(t) = \hh_C(t)) && && \mycounter&\\
    &\text{From (9) and (25)}\\
    & \text{The assignment to $X_{1,2} \cap X_2$ = $m_{x_2}$} &&  && \mycounter &\\
    &\text{(24) and (26)}\\
    & \exists! X_{1,2} \forall Y_{1,2} \exists Z_{1,2} (\bigwedge_i CS(\sval_i,\val_i) \wedge \cc_2 \wedge \m'_1 \wedge \m'_2 \wedge \bigwedge_{t \in \dom(\hh_S)} \hh_S(t) = \hh_C(t) ) &&  && \mycounter &\\
    \end{align*}
\begin{align*}
    & \text{Similar logic can be used to show that the assignment to $Z_{1,2}$ is unique}\\
    & \exists! X_{1,2} \forall Y_{1,2} \exists Z_{1,2} (\bigwedge_i CS(\sval_i,\val_i) \wedge \cc_2 \wedge \m \wedge \bigwedge_{t \in \dom(\hh_S)} \hh_S(t) = \hh_C(t) ) &&&& \text{Consequent 2} &\\
    &\text{From (3) and (7), } \cc_2 \implies \cc_0 &&&&\text{Consequent 3}  &\\
    &\textbf{Induction Case: } n >2 \\
    &\text{From induction hypothesis using Antecedents (1) and (2)} \\
    & \exists! X_1 \forall Y_1 \exists ! Z_1 (\bigwedge_{i=1}^{n-1} CS(\sval_i,\val_i) \wedge \cc_{n-1} \wedge \m_{n-1} \wedge \bigwedge_{t \in \dom(\hh_S)} \hh_S(t) = \hh_C(t)) && && \mycounter &\\
    &\cc_{n-1} \implies \cc_0 && && \mycounter &\\
    & \text{For an assignment $m_1$ to $X_1 \cup Y_1 \cup Z_1$ such that }\\
    & m_1 \models (\bigwedge_{i=1}^{n-1} CS(\sval_i,\val_i) \wedge \cc_{n-1} \wedge \m_{n-1} \wedge \bigwedge_{t \in \dom(\hh_S)} \hh_S(t) = \hh_C(t)) &&&& \mycounter&\\
    &\text{From (28) and (29)}\\
    & m_1 \models (\cc_0 \wedge \bigwedge_{t \in \dom(\hh_S)} \hh_S(t) = \hh_C(t)) && && \mycounter & \\
    &\text{From (1) and Lemma~\ref{lemma:cexpand}}\\
    & \cc'_n = \cc_0 \wedge \cc''_n &&  && \mycounter &\\
    & \text{For an assignment $m_2$ to $X_2 \cup Y_2 \cup Z_2$ such that }\\
    & m_2 \models (CS(\sval_n,\val_n) \wedge \cc'_n \wedge \m'_n \wedge \bigwedge_{t \in \dom(\hh_S)} \hh_S(t) = \hh_C(t)) &&&& \mycounter&\\
% \end{align*}
% \begin{align*}
    &\text{From (32) and Antecedent (1)}\\
    & m_2 \models (\cc_0 \wedge \bigwedge_{t \in \dom(\hh_S)} \hh_S(t) = \hh_C(t)) && && \mycounter & \\
    &\text{From (31), (33) and Antecedent (2)}\\
    & \text{$m_1$ and $m_2$ map $\vars(\cc_0 \wedge \bigwedge_{t \in \dom(\hh_S))} \hh_S(t) = \hh_C(t)$ to the same constants} &&  && \mycounter &\\
    & (X_2 \cup Y_2 \cup Z_2) \setminus \vars(\cc_0 \wedge \bigwedge_{t \in \dom(\hh_S))} \hh_S(t) = \hh_C(t) \text{ only has fresh variables} && && \mycounter&\\
    &\text{From (35) and (36)}\\
    & m_1 \cup m_2 \text{ is a well-defined assignment} && && \mycounter &\\
    &\text{From (28), (30), (33), (37) and Antecedent (1)} \\
    &\exists!X_3 \forall Y_3 \exists! Z_3((\bigwedge_{i=1}^{n} CS(\sval_i,\val_i) \wedge \cc_{n} \wedge \m_{n} \wedge \bigwedge_{t \in \dom(\hh_S)} \hh_S(t) = \hh_C(t))) &&&& \text{Consequent 2} &\\
    &\text{From (2) and (29) } \cc_n \implies \cc_0 &&&& \text{Consequent 3} &
\end{align*}
\end{proof}

\newpage
\begin{proof}[Proof of Lemma~\ref{lemma:hexpand}] Proof by induction on the structure of $\expr$ \\ 
\textbf{Base cases:} \\ 
$\mathbf{\expr \equiv \constant}$
\setcounter{number}{1}
\begin{align*}
    &\ex(\constant, \tau_s, \symc, \prop) = \hh_S, \cc && \text{\textsc{E-const}} && \mycounter & \\ 
    &\text{From (1), }\hh_S' = \hh_S, \cc' = \cc &&&& \mycounter &\\
    &\text{From (2) and Antecedent (1), } \store, \hh_C \prec_{\cc'} \sstore, \hh_S' &&\text{Consequent (4)} && &\\
    &\text{From (2) and Antecedent (2), } \ll \expr, \fstore, \store, \sstore, \hh_C, \hh'_S, \cc' \gg \updownarrow \val, \sval, \cc'_1, \m'_1 &&\text{Consequent (1)} && &\\
    &\text{From \textsc{Expanded-const} } \expand(\constant) && \text{Consequent (2)} && & \\ 
    &\text{From (2) and Antecedent (4), } \exists! X \forall Y \exists! Z (\bigwedge_{t \in \dom(\hh'_S)} \hh'_S(t) = \hh_C(t))&& \text{Consequent (3)} && & \\
\end{align*}
$\mathbf{\expr \equiv \var}$
\setcounter{number}{1}
\begin{align*}
    &\ex(\constant, \tau_s, \symc, \prop) = \hh_S, \cc && \text{\textsc{E-var}} && \mycounter & \\ 
    &\text{From (1), }\hh_S' = \hh_S, \cc' = \cc &&&& \mycounter &\\
    &\text{From (2) and Antecedent (1), } \store, \hh_C \prec_{\cc'} \sstore, \hh_S' &&\text{Consequent (4)} && &\\
    &\text{From (2) and Antecedent (2), } \ll \expr, \fstore, \store, \sstore, \hh_C, \hh'_S, \cc' \gg \updownarrow \val, \sval, \cc'_1, \m'_1 &&\text{Consequent (1)} && &\\
    &\text{From (2) and Antecedent (4), } \exists! X \forall Y \exists! Z (\bigwedge_{t \in \dom(\hh'_S)} \hh'_S(t) = \hh_C(t))&& \text{Consequent (3)} && & \\
    &\var \in \sstore &&\text{From Antecedent (2)} && \mycounter & \\
    &\text{Since $\sstore$ maps variables to expanded values}\\
    &\expand(\sstore(t)) && \text{From (3)} && \mycounter&\\
    &\text{From (4), Antecedent (2) and \textsc{Sym-var}, } \expand(\sval) &&  \text{Consequent (2)} && &\\
\end{align*}

\newpage
\textbf{Induction Cases:}\\
$\mathbf{\expr \equiv \expr_1[\var]}$
\setcounter{number}{1}
\begin{align*}
    &\inferrule*[]
        {
        \ex(\expr_1,\tau_s, \fstore, \sstore, \hh_S,\cc, \prop) = \hh'_S, \cc_1
        \\\\
        \langle \expr_1, \fstore, \sstore, \hh'_s, \cc_1 \rangle \downarrow n, \_ \\\\
        \exn(n, \var, \tau_s, \hh'_S, \prop) = \hh''_S, \cc_3
        }
        {
        \ex(\expr_1[x], \tau_s, \symc, \prop) = \hh''_S, \cc_1 \wedge \cc_3 
        } && \text{\textsc{E-shape-b}} && \mycounter & \\
    & \inferrule*[]
        {
        \langle \expr_1, \context \rangle \Downarrow n' \\\\
        \val = \hh_C[n'[x]] 
        } 
        {\langle \expr_1[x], \context \rangle \Downarrow \val} && \text{\textsc{Op-shape}} && \mycounter &\\
    & \text{From the induction hypothesis using (1),(2), }\\
    & \text{Antecedents (1), (2) and (4) }\\
    & \ll \expr_1, \fstore, \store, \sstore, \hh_C, \hh'_S, \cc_1 \gg \updownarrow n', n, \cc_2, \m_2 && && \mycounter &\\
    & \exists! X \forall Y \exists! Z (\bigwedge_{t \in \dom(\hh'_S)} \hh'_S(t) = \hh_C(t)) && && \mycounter &\\
    & \store, \hh_C \prec_{\cc_2} \sstore, \hh'_S && && \mycounter &\\
    &\ex(n) && && \mycounter &\\
    & \text{Since $n'$ is a neuron, $n = n'$} && \text{From (3)} && \mycounter &\\
    & \val = \hh_C(n[x])&& \text{From (3)} && \mycounter &\\
    &\text{From (3) and \textsc{Expanded-poly}}\\
    & \exists X \forall Y \exists 
    Z(\hh_C(n[x]) = \hh'_S(n[x]))&&  && \mycounter &\\
    &\text{From (1),(8) and \textsc{Sym-shape} } \langle \expr_1[\var], \fstore, \sstore, \hh'_S, \cc \rangle \downarrow \hh'_S[n[x]], \cc'_2 &&  && \mycounter &\\
    &\text{From \textsc{Sym-shape}}\\
    & \sval = \hh'_S[n[x]] &&  && \mycounter&\\
    &\text{From (1) and \textsc{Sym-shape}}\\
    & \sval' = \hh''_S[n[\var]] &&  && \mycounter&\\
    &\text{From (1), \textsc{Expand-poly-r}, \textsc{Expand-sym-r}, }\\
    &\text{\textsc{Expand-poly-b and Expand-sym-b} } \expand(\hh''_S[n[\var]]) && \text{Consequent (2)} &&\mycounter &\\
    & \text{\textsc{Expand-poly-r, Expand-sym-r} adds }\\
    & \text{fresh variables to $\hh'_S$ to get $\hh''_S$,}\\
    &\text{\textsc{Expand-poly-b} and \textsc{Expand-sym-b} returns $\hh''_S = \hh'_S$} && && \mycounter&\\
    &\text{From (14), } \exists! X'' \forall Y'' \exists! Z'' (\bigwedge_{t \in \dom(\hh''_S)} \hh''_S(t) = \hh_C(t)) && \text{Consequent 3} && \mycounter &\\
    & \text{From (14),(15) and Antecedent (1), } \store, \hh_C \prec_{\cc_1 \wedge \cc_3} \sstore, \hh''_S && \text{Consequent (4)} && \mycounter &\\
    & \text{From (2),(12) and (15) } \ll \expr, \fstore, \store, \sstore, \hh_C, \hh'_S, \cc_1 \gg \updownarrow \val, \sval', \cc_2, \m_2 && \text{Consequent (1)} && \mycounter &\\
\end{align*}
\newpage
\fbox{$\mathbf{\expr \equiv \expr_1 \oplus \expr_2}$}
\setcounter{number}{1}
\begin{align*}
&\text{From Antecedent (2) and \textsc{E-binary}}\\
& \inferrule*[]
            {
            \ex(\expr_1, \tau_s, \fstore, \sstore, \hh_{S}, \cc, \prop) = \hh_{S_1}, \cc_1 \\\\
            \ex(\expr_2, \tau_s, \fstore, \sstore, \hh_{S_1}, \cc_1, \prop) = \hh_{S_2}, \cc_2
            }
            {
            \ex(\expr_1 \oplus \expr_2, \tau_s, \symc, \prop) = \hh_{S_2}, \cc_2
            } &&  && \mycounter &\\
&\text{From Antecedent (3) and \textsc{Op-binary}}\\
& \inferrule*[]
            {
        \langle \expr_1, \context \rangle \Downarrow \val_1 \\\\
        \langle \expr_2, \context \rangle \Downarrow \val_2
        } 
        {\langle \expr_1 \oplus \expr_2, \context \rangle \Downarrow \val_1 \binop \val_2} &&  && \mycounter &\\
% & \inferrule*[]
%             {
%         \langle \expr_1, \fstore, \sstore, \hh_{S_2}, \cc_2 \rangle \downarrow \sval_1, \cc'_2 \\\\
%         \langle \expr_2, \fstore, \sstore, \hh_{S_2}, \cc'_2 \rangle \downarrow \sval_2, \cc''_2
%         } 
%         {\langle \expr_1 \oplus \expr_2, \fstore, \sstore, \hh_{S_2}, \cc_2 \rangle \downarrow \sval_1 \binop \sval_2. \cc''_2} && \text{From Antecedent (3) and OP-BINARY} && \mycounter &\\
& \text{From induction hypothesis using (1) (2), antecedents (1,4)} \\
& \ll \expr_1, \fstore, \store, \sstore, \hh_C, \hh_{S_1}, \cc_1 \gg \updownarrow \val_1, \sval''_1, \cc_3, \m_3&& && \mycounter&\\
&\expand(\sval'_1) && && \mycounter &\\
&\exists! X' \forall Y' \exists! Z' (\bigwedge_{t \in \dom(\hh_{S_1})}\hh_{S_1}(t) = \hh_C(t)) &&&& \mycounter &\\
&\store, \hh_C \prec_{\cc_1} \sstore, \hh_{S_1} &&&& \mycounter &\\
& \text{From induction hypothesis using (1,2,5) and antecedent (1)}\\
& \ll \expr_2, \fstore, \store, \sstore, \hh_C, \hh_{S_2}, \cc_2 \gg \updownarrow \val_2, \sval'_2, \cc_4, \m_4 && && \mycounter&\\
&\expand(\sval'_2) && && \mycounter &\\
&\exists! X'' \forall Y'' \exists! Z'' (\bigwedge_{t \in \dom(\hh_{S_2})}\hh_{S_2}(t) = \hh_C(t)) &&\text{Consequent (3)} && \mycounter &\\
&\store, \hh_C \prec_{\cc_2} \sstore, \hh_{S_2} && \text{Consequent (4)} && \mycounter &\\
& \hh_{S_1}, \cc_1 \leadsto^* \hh_{S_2}, \cc_2&& \text{From (1)} && \mycounter &\\
& \text{From Lemma~\ref{lemma:multistephextendnew} using (3), (5) and (11)}\\
& \ll \expr_1, \fstore, \store, \sstore, \hh_C, \hh_{S_2}, \cc_2 \gg \updownarrow \val_1, \sval'_1, \cc_5, \m_5&& && \mycounter&\\
& \text{From Lemma~\ref{lemma:ctogether} using (7),(9) and (12)}\\
&\langle \expr_2, \fstore, \sstore, \hh_{S_2}, \cc_5 \rangle \downarrow \sval'_2, \cc_6 &&&& \mycounter&\\
& \exists! X''' \forall Y''' \exists!Z''' (CS(\sval'_1,\val_1) \wedge CS(\sval'_2,\val_2) \wedge \cc_6 \wedge \m \wedge \\
&\bigwedge_{t \in \dom(\hh_{S_2})} \hh_{S_2}(t) = \hh_C(t)) &&&& \mycounter &\\
&\langle \expr_1 \oplus \expr_2, \fstore, \sstore, \hh_{S_2}, \cc_2 \rangle \downarrow \sval'_1 \oplus \sval'_2, \cc_6 &&\text{From (12) and (13)} && \mycounter &\\
& \exists! X''' \forall Y''' \exists!Z''' (CS(\sval'_1 \oplus \sval'_2,\val_1 \oplus \val_2) \wedge \cc_6 \wedge \m \wedge \\
&\text{From (14) and \textsc{Val-constraints} rules}\\
&\bigwedge_{t \in \dom(\hh_{S_2})} \hh_{S_2}(t) = \hh_C(t)) && && \mycounter &\\
&\text{From (2), (15) and (16) }\\
& \ll \expr, \fstore, \store, \sstore, \hh_C, \hh_{S_2}, \cc_{2} \gg \updownarrow \val_1 \oplus \val_2, \sval'_1 \oplus \sval'_2, \cc_6, \m && \text{Consequent (1)} && \mycounter &\\
&\text{From (4), (8) and \textsc{Expanded-binary} }\expand(\sval'_1 \oplus \sval'_2) && \text{Consequent (2)} && \mycounter &\\
\end{align*}
\end{proof}

% \begin{lemma}
%     If 
%     \begin{enumerate}
%         \item $\ll \expr, \fstore, \store, \sstore, \hh_C, \hh_S, \cc \gg \updownarrow \val, \sval, \hat{\cc_1}, \m_1$
%         \item $\exists! X \forall Y \exists!Z (\bigwedge_{t \in \dom(\hh_S)} \hh_C(t) = \hh_S(t))$
%         \item $\tau_s, \symc, \prop \leadsto^* \hh'_S, \cc'$\
%         \item $\store, \hh_C \prec_{\cc} \sstore, \hh_S$
%     \end{enumerate}
%     then,
%     \begin{enumerate}
%         \item $\ll \expr, \fstore, \store, \sstore, \hh_C, \hh'_S, \cc' \gg \updownarrow \val, \sval', \cc'', \m''$
%         \item $\exists! X' \forall Y' \exists!Z' (\bigwedge_{t \in \dom(\hh'_S)} \hh_C(t) = \hh'_S(t))$
%         \item $\expand(\sval) \implies \expand(\sval')$
%         \item $\store, \hh_C \prec_{\cc'} \sstore, \hh'_S$
%     \end{enumerate}
% \end{lemma}
\begin{proof}[Proof of Lemma~\ref{lemma:multistephextendnew}]
    \item $\leadsto^*$ is recursively defined as a sequence of one of the following three operations: 
    \begin{enumerate}
        \item $\tau_s, \symc, \prop \models \expr \leadsto \hh'_S, \cc'$
        \item $\ex(\expr, \tau_s, \symc, \prop) = \hh'_S, \cc'$
        \item $\addg(n, \tau_s, \hh_S, \prop) = \hh'_S, \cc'$
    \end{enumerate}
    We can prove this lemma using structural induction on $\leadsto^*$.\\
    \textbf{Base Case: }$\addg(n, \tau_s, \hh_S, \prop) = \hh'_S, \cc'$\\
    There are two rules for $\addg(n, \tau_s, \hh_S, \prop) = \hh'_S, \cc'$:
    \begin{enumerate}
        \item ADD-NEURON-B 
        \begin{enumerate}
            \item $\hh'_S = \hh_S \quad \cc' = \cc$
            \item Consequents 1, 2 and 4 follow from (a) and Antecedents 1, 2 and 4.
            \item $\expand(\sval) \implies \expand(\sval)$
            \item Consequent 3 follows from (a) and (c)
        \end{enumerate}
        \item \textsc{Add-neuron-r}
        \begin{enumerate}
            \item $\inferrule*[]
        {
        n \not\in \hh_S \\\\
        \textsf{Metadata}[m_1, \cdots, m_k] \quad 
        \hh'_{S_0} = \hh_S \\\\
        \forall i \in [k], \hh_{S_i}' = \addg(n, m, \hh'_{S_{i-1}}) \\\\
        \textsf{Shape}[\var_1, \cdots, \var_l] \quad 
        \hh''_{S_0} = \hh'_{S_k} \\\\
        \forall i \in [l], \hh_{S_i}'' = \addg(n, \var_i, \tau_s, \hh''_{S_{i-1}})
        }
        {
        \addg(n, \tau_s, \hh_S, \prop, \cc) = \hh_{S_l}'', \cc \wedge \prop(n, \hh''_{S_l})
        }$
            \item Since the same symbolic variables are used for neurons in concrete operational semantics and symbolic operation semantics, $n \in \hh_C$.
            \item $X'$ can be defined as the constants in $\hh_S$ and the constants in the metadata and shape elements of $n$ that are added to form $\hh'_S = \hh''_{S_l}$
            \item $X' \setminus X$ consists of elements in the range of $\hh'_S$. We can create an assignment $m_{x'}$ to the set $X' \setminus X$. From (b), for each $x \in X' \setminus X$, there exists $t$ such that $\hh'_S(t) = x$ and $\hh'_C(t) = c$ for some constant $c$. We can ass $[x \mapsto c]$ to $m_{x'}$. 
            \item We can call the unique assignment to $X$ satisfying antecedent (2), $m_x$ and  create an assignment $m'_x = m_x \cup m_{x'}$. Using $m'_x$, given an arbitrary assignment to $Y$, every expression in the range of $\hh_C$ will have a fixed, constant value. 
            \item $Y' = Y \cup Y''$, where $Y''$ includes $n$ and new neurons and symbolic variables used to define the shape and metadata of $n$.
            \item Given an arbitrary assignment to $Y'$, we can call the unique assignment to $Z$ that satisfies Antecedent (2), $m_z$. We can construct $m_{z'}$. From (b) and (e), for each $z \in Z' \setminus Z$, there exists $t \in \hh'_S$, such that $\hh'_S(t) = z$.  We can add $[z \mapsto \hh_C(t)]$. Call $m'_z = m_z \cup m_{z'}$
            \item From (d) and (g), $\exists X' \forall Y' \exists Z' (\bigwedge_{t \in \dom(\hh'_S)} \hh'_S(t) = \hh_C(t))$
            \item From lemma~\ref{lemma:uniquexz} using (h), $\exists X'! \forall Y' \exists Z'! (\bigwedge_{t \in \dom(\hh'_S)} \hh'_S(t) = \hh_C(t))$. This is Consequent (2).
            \item For all $t \in \dom(\hh_S), \hh'_S(t) = 
            hh_S(t)$ because ADD-NEURON-R does not change any mappings within $\hh_S$. 
            \item $\prop(n, \hh'_S)$ is only defined on the shape and metadata of $n$.
            \item From (j) and (k), $\langle \expr, \fstore, \sstore, \hh'_S, \cc' \rangle \downarrow \sval, \cc'' \quad \sval' = \sval$
            \item Using (i), (l), Lemma~\ref{lemma:cexpand} and Antecedent (1), $\exists X'! \forall Y' \exists Z'! (CS(\sval,\val) \wedge \cc'' \wedge \m' \wedge \bigwedge_{t \in \dom(\hh'_S)} \hh'_S(t) = \hh_C(t))$
            \item Consequent (1) follows from Antecedent (1), (l), and (m).
            \item Consequent (2) follows from (l)\\
            \item Consequent (4) follows from (i) and Antecedents (2) and (4).
        \end{enumerate}
    \end{enumerate}
    \textbf{Base Case: } $\ex(\expr, \tau_s, \symc, \prop) = \hh'_S, \cc'$\\
    Consequents (1),(2),(3) and (4) directly follow from Lemma~\ref{lemma:hexpand} using Antecedents (1),(2),(3) and (4).\\
    \textbf{Base Case: } $\tau_s, \symc, \prop \models \expr \leadsto \hh'_S, \cc'$\\
    We will prove this by structural induction on $\expr$.\\
    \begin{enumerate}
        \item \textbf{Base Cases: }$\expr \equiv \constant $ or $ \expr \equiv \var$ \\
        \begin{enumerate}
            \item $\hh'_S = \hh_S \quad \cc' = \cc$
            \item Consequents 1, 2 and 4 follow from (a) and Antecedents 1, 2 and 4.
            \item $\expand(\sval) \implies \expand(\sval)$
            \item Consequent 3 follows from (a) and (c)
        \end{enumerate}

        \item \textbf{Induction Case: }$\expr \equiv \expr_1 \oplus \expr_2$ \\
        \begin{enumerate}
            \item $ \inferrule*[]
            {
            \tau_s, \symc, \prop \models \expr_1 \leadsto \hh_{S_1}, \cc_{1} \\\\
            \tau_s, \fstore, \sstore, \hh_{S_1}, \cc_1, \prop \models \expr_2 \leadsto \hh_{S_2}, \cc_2
            }
            {
            \tau_s, \symc, \prop \models \expr_1 \oplus \expr_2 \leadsto \hh_{S_2}, \cc_2
            } $ From Antecedent (3) and \textsc{G-binary}
            \item $ \inferrule*[]
            {
        \langle \expr_1, \context \rangle \Downarrow \val_1 \\\\
        \langle \expr_2, \context \rangle \Downarrow \val_2
        } 
        {\langle \expr_1 \oplus \expr_2, \context \rangle \Downarrow \val_1 \binop \val_2}$ from Antecedent (1) and \textsc{Op-binary}
            \item From (a), $\tau_s, \symc, \prop \models \expr_1 \leadsto \hh_{S_1}, \cc_{1}$
            \item From (a), $\tau_s, \fstore, \sstore, \hh_{S_1}, \cc_1 \prop \models \expr_2 \leadsto \hh_{S_2}, \cc_{2}$
            \item $ \inferrule*[]
            {
        \langle \expr_1, \symc \rangle \downarrow \sval_1 \\\\
        \langle \expr_2, \symc \rangle \downarrow \sval_2
        } 
        {\langle \expr_1 \oplus \expr_2, \symc \rangle \downarrow \sval_1 \binop \sval_2}$  from Antecedent (1) and \textsc{Sym-binary}
            \item From the induction hypothesis on $\expr$ using (b), (c) and Antecedents (1),(2) and (4):
            \begin{enumerate}
                \item $\ll \expr_1, \fstore, \store, \sstore, \hh_C, \hh_{S_1}, \cc_1 \gg \updownarrow \val_1, \sval'_1, \cc''_1, \m''_1$
                \item $\exists! X' \forall Y' \exists!Z' (\bigwedge_{t \in \dom(\hh_{S_1})} \hh_C(t) = \hh_{S_1}(t))$
                \item $\expand(\sval_1) \implies \expand(\sval'_1)$
                \item $\store, \hh_C \prec_{\cc_1} \sstore, \hh_{S_1}$
            \end{enumerate}
            \item From the induction hypothesis on $\expr$ using (b), (d), (i), (ii) and (iv):
            \begin{enumerate}
                \item $\ll \expr_2, \fstore, \store, \sstore, \hh_C, \hh_{S_2}, \cc_2 \gg \updownarrow \val_2, \sval'_2, \cc''_2, \m''_2$
                \item $\exists! X'' \forall Y'' \exists!Z'' (\bigwedge_{t \in \dom(\hh_{S_2})} \hh_C(t) = \hh_{S_2}(t))$ This is Consequent (2).
                \item $\expand(\sval_2) \implies \expand(\sval'_2)$
                \item $\store, \hh_C \prec_{\cc_2} \sstore, \hh_{S_2}$. 
                 This is Consequent (4).
            \end{enumerate}
            \item From the induction hypothesis on $\expr$ using (d), (f i), (f ii) and (f iv):
            \begin{enumerate}
                \item $\ll \expr_1, \fstore, \store, \sstore, \hh_C, \hh_{S_2}, \cc_2 \gg \updownarrow \val_1, \sval''_1, \cc'_1, \m'_1$
                \item $\expand(\sval'_1) \implies \expand(\sval''_1)$
            \end{enumerate}
            \item From Lemma~\ref{lemma:ctogether} using (g i) and (h i):
            \begin{enumerate}
                \item $\langle \expr_1, \fstore, \sstore, \hh_{S_2}, \cc'_1 \rangle \downarrow \sval'_2, \cc'_2 $ 
                \item $\exists! X''' \forall Y''' \exists! Z''' (CS(\val_1,\sval''_1) \wedge 
                CS(\val_2,\sval'_2) \wedge \cc'_2 \wedge \m \wedge \bigwedge_{t \in \dom(\hh_{S_2})} \hh_{S_2}(t) = \hh_C(t))$
            \end{enumerate}
            \item From (b), (h i), (i i), (i ii), and \textsc{Sym-binary},  $\ll \expr, \fstore, \store, \sstore, \hh_C, \hh_{S_2}, \cc_2 \gg \updownarrow \val_1 \oplus \val_2, \sval''_1 \oplus \sval'_2, \cc'_2, \m$. This is Consequent (1).
            \item From (f iii), (g iii), (h ii) and \textsc{Expanded-binary}, \\
            $\expand(\sval_1 \oplus \sval_2) \implies \expand(\sval''_1 \oplus \sval'_2)$. This is Consequent (3).
        \end{enumerate}

        \item The other induction cases are similar. \textsc{G-traverse} involves $\addg$ so the induction case for $\expr \equiv \var \cdot \trav(\dir, f_{c_1}, f_{c_2}, f_{c_3})\{\expr_1\} $ will use the proof of the first base case above ($\addg(n, \tau_s, \hh_S, \prop) = \hh'_S, \cc'$). \textsc{G-map} and \textsc{G-func-call} involve $\ex$, so the proof of the induction cases $\expr \equiv \expr\cdot\map(f_c) $ and $\expr \equiv f_c(\expr_1, \cdots \expr_n) $ will use the proof of the second base case above ($\ex(\expr, \tau_s, \symc, \prop) = \hh'_S, \cc'$).\\
    \end{enumerate}
\textbf{Induction Case: }  $ \inferrule*[]
            {
            \tau_s, \symc, \prop \leadsto^* \hh_{S_1}, \cc_{1} \\\\
            \tau_s, \fstore, \sstore, \hh_{S_1}, \cc_1, \prop  \leadsto^* \hh_{S_2}, \cc_{2}
            }
            {
            \tau_s, \symc, \prop \leadsto^* \hh_{S_2}, \cc_{2}
            } $
\begin{enumerate}
    \item $\tau_s, \symc, \prop \leadsto^* \hh_{S_1}, \cc_{1}$
    \item $\tau_s, \fstore, \sstore, \hh_{S_1}, \cc_1, \prop  \leadsto^* \hh_{S_2}, \cc_{2}$
    \item From the induction hypothesis. using (1) and Antecedents (1),(2) and (4):
    \begin{enumerate}
        \item $\ll \expr, \fstore, \store, \sstore, \hh_C, \hh_{S_1}, \cc_1 \gg \updownarrow \val, \sval', \cc'_1, \m'_1$
        \item $\exists! X' \forall Y' \exists!Z' (\bigwedge_{t \in \dom(\hh_{S_1})} \hh_C(t) = \hh_{S_1}(t))$
        \item $\expand(\sval) \implies \expand(\sval')$
        \item $\store, \hh_C \prec_{\cc_1} \sstore, \hh_{S_1}$
    \end{enumerate}
    \item From the induction hypothesis using (2), (3 a), (3 b) and (3 d):
    \begin{enumerate}
        \item $\ll \expr, \fstore, \store, \sstore, \hh_C, \hh_{S_2}, \cc_2 \gg \updownarrow \val, \sval'', \cc'_2, \m'_2$ This is Consequent (1)
        \item $\exists! X'' \forall Y'' \exists!Z'' (\bigwedge_{t \in \dom(\hh_{S_2})} \hh_C(t) = \hh_{S_2}(t))$ This is Consequent (2). 
        \item $\expand(\sval') \implies \expand(\sval'')$
        \item $\store, \hh_C \prec_{\cc_2} \sstore, \hh_{S_2}$ This is Consequent (4).
    \end{enumerate}
    \item Consequent (3) follows from (3 c) and (4 c).
\end{enumerate}

\end{proof}

\clearpage
\begin{proof}[Proof of lemma~\ref{lemma:overapprox}] We prove this by induction on the structure of $\expr$ \\ 
\\
\textbf{Base Cases}: \\ 
\fbox{$\mathbf{\expr \equiv \constant}$}
\setcounter{number}{1}
\begin{align*}
    &\text{From Antecedent (2)}\\
    &\text{From \textsc{G-const}, }\tau_s, \symc, \prop \models \expr \leadsto \hh_S, \cc && &&\mycounter & \\
    &\text{From \textsc{Sym-const}, }\langle \constant, \symc \rangle \downarrow \constant, \cc &&  && \mycounter & \\ 
    &\text{From Antecedent (3)}\\
    &\text{From \textsc{Op-const}, }\langle \constant, \context \rangle \Downarrow \constant, \cc &&  && \mycounter & \\ 
    &\text{From (1), (2), }\hh'_S = \hh_S \quad \cc' = \cc &&&& \mycounter\\
    % &\text{From (4), Antecedent (4), }\exists! X' \forall Y' \exists! Z' (\cc'' \wedge \bigwedge_{t \in \dom(\hh_S')}\hh'_S(t) = \hh_C(t)) && \text{Consequent (3)} && \\
    &\text{From (4), Antecedents (1,4) } \\
    &\exists!X'\forall Y' \exists! Z' (\constant = \constant \wedge \cc \wedge \bigwedge_{t \in \dom(\hh_S')}\hh'_S(t) = \hh_C(t) )  && && \mycounter & \\
    &\text{From (2), (5) and Antecedent (1) } \ll \expr, \fstore, \store, \sstore, \hh_C, \hh_S, \cc \gg \updownarrow \val, \sval, \cc', [] && \text{Consequent (2)} && &\\
    &\text{From (3), Antecedent (1), }\store, \hh_C \prec_{\cc'} \sstore, \hh'_S  && \text{Consequent (1)} && \\
    & \text{From (4), Antecedent (4) } \exists! X' \forall Y' \exists! Z' (\bigwedge_{t \in \dom(\hh_S)}\hh'_S(t) = \hh_C(t)) && \text{Consequent (3)} && &\\
\end{align*}
\fbox{$\mathbf{\expr \equiv \var}$}
\setcounter{number}{1}
\begin{align*}
    &\text{From \textsc{G-var}, }\tau_s, \symc, \prop \models \expr \leadsto \hh_S, \cc && \text{Antecedent (2)}&&\mycounter & \\
    &\text{From Antecedents (1) and (3)}\\
    &\text{From \textsc{Sym-var}, }\langle \constant, \symc \rangle \downarrow \sstore(\var), \cc &&  && \mycounter & \\ 
    &\text{From \textsc{Op-var}, }\langle \constant, \context \rangle \downarrow \store(\var), \cc && \text{Antecedent (3)} && \mycounter & \\ 
    &\text{From (1), (2), }\hh'_S = \hh_S \quad \cc' = \cc'' = \cc &&&& \mycounter\\
    % &\text{From (4), Antecedent (4), }\exists! X' \forall Y' \exists! Z' (\cc'' \wedge \bigwedge_{t \in \dom(\hh_S')}\hh'_S(t) = \hh_C(t)) && \text{Consequent (3)} && \\
    &\text{From (4), Antecedent (1, 4), }\store(\var), \hh_C \prec_{\cc''} \sstore(\var), \hh_S' && \text{Consequent (1)} && \mycounter & \\
    &\text{From (5), }\\
    &\exists!X''\forall Y'' \exists! Z'' (CS(\sstore(\var),\store(\var)) \wedge \cc \wedge \bigwedge_{t \in \dom(\hh_S')}\hh'_S(t) = \hh_C(t) )  && && \mycounter &\\
    &\text{From (2), (6) and Antecedent (1), }\\
    &\ll \expr, \fstore, \store, \sstore, \hh_C, \hh_S, \cc \gg \updownarrow \val, \sval, \cc', [] && \text{Consequent (2)} && &\\
    & \text{From (4), Antecedent (4) } \exists! X' \forall Y' \exists! Z' (\bigwedge_{t \in \dom(\hh_S)}\hh'_S(t) = \hh_C(t)) && \text{Consequent (3)} && &\\
\end{align*}

\newpage
\textbf{Induction cases:} \\
\fbox{$\mathbf{\expr \equiv \var \cdot \trav(\dir, f_{c_1}, f_{c_2}, f_{c_3})\{\expr_1\}}$}
\setcounter{number}{1}
\begin{align*}
    &\text{From Antecedent (2) and  \textsc{G-traverse}}\\
    & \tau_s, \symc, \prop \models \expr_1 \leadsto \hh_{S_0}, \cc_{0} &&  && \mycounter & \\
    &\text{From Antecedent (3), \textsc{Op-traverse}}\\
    & \langle \expr_1, \context \rangle \Downarrow \val_{0} &&  && \mycounter &\\
    &\text{From induction hypothesis using (1), (2),}\\
    &\text{Antecedents (1), (4),(5)} \\
    &\langle \expr_1, \fstore, \sstore, \hh_{S_0}, \cc_0 \rangle \downarrow \hat{\sval_0}, \bar{\cc}_0 &&&& \mycounter \\
    &\exists ! X' \forall Y' \exists ! Z' (\hat{\sval_0} = \val_0 \wedge \bar{\cc}_0 \wedge \bar{\m}_0 \wedge \\
    &\bigwedge_{t \in \dom(\hh_{S_0})}\hh_{S_0}(t) = \hh_C(t)) &&&& \mycounter &\\
    &\store, \hh_C \prec_{\cc_0} \sstore, \hh_{S_0} &&&& \mycounter & \\
    &\exists ! X' \forall Y' \exists ! Z' ( \bigwedge_{t \in \dom(\hh_{S_0})}\hh_{S_0}(t) = \hh_C(t)) &&&& \mycounter &\\
    &\forall i \in [j], \hh_{S_i}, \cc_i = \addg(n'_i, \tau_s, \hh_{S_{i-1}}, \prop, \cc_{i-1}) && \text{From \textsc{G-traverse}} && \mycounter & \\
    &\hh''_{S_0} = \hh_{S_j} \quad \cc''_0 = \cc_j &&\text{From \textsc{G-traverse}}&& \mycounter\\
    & \exists X!'' \forall Y'' \exists! Z'' ( \bigwedge_{t \in \dom(\hh''_{S_0})}\hh''_{S_0}(t) = \hh_C(t)) && \text{From (6,7,8)} && \mycounter & \\
    &\text{Since the concrete values satisfy the properties}\\
    & \exists X!'' \forall Y'' \exists! Z'' (\bigwedge_{t \in \dom(\hh''_{S_0})}\hh''_{S_0}(t) = \hh_C(t)) && \text{From (7,9)} && \mycounter & \\
    &\store, \hh_C \prec_{\cc''_0} \sstore, \hh''_{S_0} &&\text{From (5),(8),(10)}&& \mycounter \\
    & \tau_s, \fstore, \sstore, \hh''_{i-1}, C''_{i-1}, \prop \models f_{c_2}(n'_i, \sval_{b_i}) \leadsto \hh''_{S_i}, \cc''_{i} && \text{From \textsc{G-traverse}} && \mycounter & \\
    &\text{From Antecedent (3), \textsc{Op-traverse}}\\
    & \langle \var , \fstore, \store, \hh_C \rangle \Downarrow \val_{b_0} + \sum_{i=1}^j \val_{b_i} * n'_i &&  && \mycounter & \\
    &\text{From Antecedent (3), \textsc{Op-traverse}}\\
    & \langle f_{c_2}(n'_i, \val_{b_i}), \fstore, \store, \hh_C \rangle \Downarrow \val_i &&  && \mycounter & \\
    &\text{From induction hypothesis, using (9), (11), (12), (14)} && \text{By induction on $i$}\\
    &\text{and Antecedent (5)}\\
    &\langle f_{c_2}(n'_i, \sval_{b_i}), \fstore, \sstore, \hh''_{S_{i}}, \cc''_{i} \rangle \downarrow \hat{\sval_i}, \bar{\cc}''_i &&&& \mycounter &\\
    &\exists ! X_i \forall Y_i \exists ! Z_i (\hat{\sval_i} = \val_i \wedge \cc''_i \wedge \m''_i \wedge \\
    &\bigwedge_{t \in \dom(\hh''_{S_i})}\hh''_{S_i}(t) = \hh_C(t)) &&&& \mycounter &\\
    &\store, \hh_C \prec_{\cc''_i} \sstore, \hh''_{S_i} &&&& \mycounter & \\
    & \exists! X_i \forall Y_i \exists! Z_i (\bigwedge_{t \in \dom(\hh''_{S_i})}\hh''_{S_i}(t) = \hh_C(t)) && && \mycounter & \\
\end{align*}
\begin{align*}
    &\hh'''_{S_0} = \hh''_{S_j} \quad \cc'''_0 = \cc''_j &&&& \mycounter \\
    &\text{From \textsc{G-traverse}}\\
    &\tau_s, \fstore, \sstore, \hh'''_{S_{i-1}}, \cc'''_{i-1}, \prop \models f_{c_3}(n'_i, \sval_{b_i}) \leadsto \hh'''_{S_i}, \cc'''_i &&  && \mycounter & \\
    &\text{From Antecedent (5)}\\
    & \langle f_{c_3}(n'_i, \val_{b_i}), \fstore, \store, \hh_C \rangle \Downarrow \val'_i &&  && \mycounter & \\
    &\text{By induction on $i$}\\
    &\text{From induction hypothesis, using (17), (18), (19), (20), (21)} && \\
    &\text{and Antecedent (5)}\\
    &\langle f_{c_3}(n'_i, \sval_{b_i}), \fstore, \sstore, \hh'''_{S_{i}}, \cc'''_{i} \rangle \downarrow \hat{\sval'_i}, \overline{\cc_i'''} &&&& \mycounter &\\
    & \exists! X'_i \forall Y'_i \exists! Z'_i (\val'_i = \hat{\sval'_i} \wedge \overline{\cc'''_i} \wedge \overline{\m'''_i} \wedge \bigwedge_{t \in \dom(\hh'''_{S_i})}\hh'''_{S_i}(t) = \hh_C(t)) && && \mycounter & \\
    &\store, \hh_C \prec_{\cc'''_i} \sstore, \hh'''_{S_i} && && \mycounter & \\
    & \exists! X'_i \forall Y'_i \exists! Z'_i (\bigwedge_{t \in \dom(\hh'''_{S_i})}\hh'''_{S_i}(t) = \hh_C(t)) && && \mycounter & \\
    &\hh_S' = \hh'''_{S_j} \quad \cc' = \cc'''_j &&&& \mycounter & \\
    &\text{From (24), (26), }\store, \hh_C \prec_{\cc'} \sstore, \hh'_{S} && \text{Consequent 1}&& \mycounter & \\
    & \text{From (25), (26), }\exists X' \forall Y' \exists! Z' (\bigwedge_{t \in \dom(\hh'_{S})}\hh'_{S}(t) = \hh_C(t)) && \text{Consequent 3}&& \mycounter & \\
    &\text{From (1), (7), (12), (20), (26), } \\
    &\hh_{S_0}, \cc_0  \leadsto^* \hh'_S, \cc' &&&& \mycounter & \\
    &\forall i \in [j], \hh''_{S_i}, \cc''_i \leadsto^* \hh'_S, \cc' &&&& \mycounter & \\
    &\forall i \in [j], \hh'''_{S_i}, \cc'''_i \leadsto^* \hh'_S, \cc' &&&& \mycounter & \\
    &\text{From lemma~\ref{lemma:multistephextendnew} using (4), (6) and (29)}\\
    &\langle \expr_1, \fstore, \sstore, \hh'_S, \cc' \rangle \downarrow \sval_0, \hat{\cc''_0}  &&&& \mycounter & \\
    & \ll \expr_1, \fstore, \store, \sstore, \hh_C, \hh'_S, \cc' \gg \updownarrow \val_0, \sval_0, \hat{\cc''_0}, \hat{\m''_0}   &&&& \mycounter & \\
    &\text{From lemma~\ref{lemma:multistephextendnew} using (16), (18) and (30)}\\
    &\forall i \in [j], \langle f_{c_2}(n'_i,\sval_{b_i}), \fstore, \sstore, \hh'_S, \cc' \rangle \downarrow \sval_i, \ddot{\cc''_i}  &&&& \mycounter & \\
    &\ll f_{c_2}(n'_i,\sval_{b_i}), \fstore, \store, \sstore, \hh_C, \hh'_S, \cc' \rangle \downarrow \val_i, \sval_i, \ddot{\cc''_i}, \ddot{\m''_i}  &&&& \mycounter & \\
    &\text{From lemma~\ref{lemma:multistephextendnew} using (23), (25) and (31)}\\
    &\forall i \in [j], \langle f_{c_3}(n'_i,\sval_{b_i}), \fstore, \sstore, \hh'_S, \cc' \rangle \downarrow \sval'_i, \ddot{\cc'''_i}  &&&& \mycounter & \\
    &\ll f_{c_3}(n'_i,\sval_{b_i}), \fstore, \store, \sstore, \hh_C, \hh'_S, \cc' \rangle \downarrow \val'_i, \sval'_i, \ddot{\cc'''_i}, \ddot{\m'''_i}  &&&& \mycounter & \\
    &\text{From lemmas~\ref{lemma:ctogether} using (27),(28),(35) and (36) }\\
    &\langle f_{c_2}(n'_i, \sval_{b_i}), \fstore, \sstore, \hh'_S, \hat{\cc''_{i-1}} \rangle \downarrow \sval_i, \hat{\cc''_{i}} &&&& \mycounter &\\
    & \hat{\cc'''_0} = \hat{\cc''_j} &&&& \mycounter &\\
    &\langle f_{c_3}(n'_i, \sval_{b_i}), \fstore, \sstore, \hh'_S, \hat{\cc'''_{i-1}} \rangle \downarrow \sval'_i, \hat{\cc'''_{i}} &&&& \mycounter &\\
\end{align*}
\begin{align*}
    &\exists! \ddot{X} \forall \ddot{Y} \exists! \ddot{Z} (\bigwedge_{i=0}^j CS(\sval_i,\val_i) \wedge \bigwedge_{i=1}^j CS(\sval'_i,\val'_i) \wedge \hat{\cc'''_{j}} \wedge \m'''_j \wedge \bigwedge_{t \in \dom(\hh'_{S})}\hh'_{S}(t) = \hh_C(t)) &&&& \mycounter &\\
    &\val'' = \val_{b_0} + \sum^j_{i=1} \val_i \quad \sval'' = \sval_{b_0} + \sum^j_{i=1} If(\sval_i, \sval'_i, n'_i * \sval_{b_i})  &&&& \mycounter &\\
    &\text{From (41) and (42), }\\
    &\exists! \ddot{X} \forall \ddot{Y} \exists! \ddot{Z} (CS(\val'',\sval'') \wedge \hat{\cc'''_{j}} \wedge \m'''_j \wedge \bigwedge_{t \in \dom(\hh'_{S})}\hh'_{S}(t) = \hh_C(t)) &&&& \mycounter & \\
    &\store'' = \store[\var \mapsto \val''] \quad \sstore'' = \sstore[\var \mapsto \sval''] &&&& \mycounter & \\
    &\text{From (27), (43), (44), }\store'', \hh_C \prec_{{\cc'''_j}}\sstore'', \hh'_S &&&& \mycounter & \\
    &\text{From (45) and  lemma~\ref{lemma:cexpand} using (38, 39,40) } \store'', \hh_C \prec_{{\cc}}\sstore'', \hh_S &&  && \mycounter & \\
    &\text{From induction hypothesis using (46) and Antecedents (2), (3),(4) and (5)}\\
    &\langle \expr_1, \fstore, \sstore'', \hh'_S, \cc' \rangle \downarrow \sval''', \cc''' &&&& \mycounter & \\
    &\text{From lemma~\ref{lemma:csequential}, using (38), (40) and (47)} \langle \expr, \fstore, \sstore'', \hh'_S, \cc'''_j \rangle \downarrow \overline{\sval'''}, \hat{\cc'''} &&&& \mycounter & \\
    &\text{Antecedent (5) and (33)}\\
    & \cinvariant(\var.\trav(\dir,f_{c_1}, f_{c_2}, f_{c_3})\{\expr_1\}, \fstore, \sstore, \hh'_S, \cc') = \true, \hat{\cc''_2} && && \mycounter & \\
    & \langle \var.\trav(\dir, f_{c_1}, f_{c_2}, f_{c_3}), \fstore, \sstore, \hh'_S, \cc' \rangle \downarrow \sval_b, \ddot{\cc} &&  && \mycounter &\\
    &\text{From \textsc{Sym-traverse}}\\
    &\sval_b = \sval_{r_0} + \sum_{i = 1}^k \sval_{r_i} * n'_i \text{ where $\sval_{r_i}$ are fresh variables representing constants} &&  && \mycounter& \\
    &\text{From (33), }\exists! X_2 \forall Y_2 \exists! Z_2 (Cs(\sval_0,\val_0) \wedge \hat{\cc''_0} \wedge \hat{\m''_0} \wedge \bigwedge_{t \in \dom{\hh'_S}} \hh_C(t) = \hh'_S(t) ) &&  && \mycounter & \\
    &\text{From \textsc{Sym-traverse} }\sstore' = \sstore[\var \mapsto \sval_b] \quad \store' = \store[\var \mapsto \val] && && \mycounter & \\
    &\ll \expr_1, \fstore, \store', \sstore', \hh_C, \hh'_S, \hat{\cc''_0} \gg \updownarrow \val_{inv}, \sval_{inv}, \ddot{\cc}', \ddot{\m'} &&&& \mycounter &\\
    &\text{From \textsc{Op-traverse}, $\val$ is in the form $c_0 + \sum_{i=0}^k c_i * n_i$} &&  && \mycounter & \\
    & \text{Since we use the same symbolic variables to represent neurons }\\
    &\text{in symbolic and concrete expressions, }\\
    & \exists! X_3 \forall Y_2 (\sval_b = \val)&& && \mycounter &\\
    &\text{From (51) } X_3 \cap X_2 = \emptyset \quad X_4 = X_3 \cap X_2 &&  && \mycounter &\\
    &\text{From (51), (52) and (56)}\\
    & \exists! X_4 \forall Y_2 \exists! Z_2 ((\sval_b = \val) \wedge \sval_0 = \val_0 \wedge \hat{\cc''_0} \wedge \hat{\m''_0} \wedge \bigwedge_{t \in \dom{\hh'_S}} \hh_C(t) = \hh'_S(t)) &&  && \mycounter &\\
    &\text{From (54) and (58)}\\
    & \exists! X'_4 \forall Y'_2 \exists! Z'_2 ((\sval_b = \val) \wedge \ddot{\cc'} \wedge \m \wedge \bigwedge_{t \in \dom{\hh'_S}} \hh_C(t) = \hh'_S(t)) &&  && \mycounter &\\
\end{align*}
\begin{align*}
    & \text{The invariant is true for the symbolic output of traverse}\\
    &\text{and the symbolic output overapproximates the concrete output:} \\
    &\text{From (49) (54), } \langle \expr_1, \fstore, \store', \hh_C \rangle \Downarrow \true &&  && \mycounter &\\
    &\text{From (54) and (60)}\\
    & \exists! X'_4 \forall Y'_2 \exists! Z'_2 (\sval_{inv} = \true \wedge (\sval_b = \val) \wedge \ddot{\cc'} \wedge \m \wedge \bigwedge_{t \in \dom{\hh'_S}} \hh_C(t) = \hh'_S(t)) &&  && \mycounter &\\
    &\text{From (50), (54) and \textsc{Sym-traverse}}\\
    & \ddot{\cc} = \ddot{\cc} \wedge \sval_{inv} &&  && \mycounter &\\
    &\text{From (61) and (62)}\\
    &\exists! \hat{X'} \forall \hat{Y'} \exists! \hat{Z'} (\sval_b = \val \wedge \ddot{\cc} \wedge \m \wedge \bigwedge_{t \in \dom{\hh'_S}} \hh_C(t) = \hh'_S(t) ) &&  && \mycounter & \\
    & \text{From (53), (60) and Antecedent (3)}\\
    & \ll \expr, \fstore, \store, \sstore, \hh_C, \hh'_S, \cc' \gg \updownarrow \val, \sval_b, \ddot{\cc}, \m && &&\text{Consequent (2)} &\\
\end{align*}
\fbox{$\mathbf{\expr \equiv \expr_1 \oplus \expr_2}$}
\setcounter{number}{1}
\begin{align*}
    &\text{From Antecedent (2) and \textsc{G-binary}}\\
    & \inferrule*[]
    {
        \tau_s, \symc, \prop \models \expr_1 \leadsto \hh_{S_1}, \cc_{1} \\\\
        \tau_s, \fstore, \sstore, \hh_{S_1}, \cc_1, \prop \models \expr_2 \leadsto \hh_{S_2}, \cc_2
    }
    {
        \tau_s, \symc, \prop \models \expr_1 \oplus \expr_2 \leadsto \hh_{S_2}, \cc_2
    } &&  && \mycounter &\\
    &\text{From Antecedent (3) and \textsc{Op-binary}}\\
    & \inferrule*[]
    {
        \langle \expr_1, \context \rangle \Downarrow \val_1 \\\\
        \langle \expr_2, \context \rangle \Downarrow \val_2
    } 
    {
        \langle \expr_1 \oplus \expr_2, \context \rangle \Downarrow \val_1 \binop \val_2
    } &&  && \mycounter &\\
    & \text{From induction hypothesis using (1) (2),}\\
    &\text{antecedents (1,4,5)} \\
    & \store, \hh_C \prec_{\cc_1} \sstore, \hh_{S_1} && && \mycounter &\\
    & \ll \expr_1, \fstore, \store, \sstore, \hh_C, \hh_{S_1}, \cc_1 \gg \val_1, \hat{\sval_1}, \cc'_1, \m'_1 && && \mycounter &\\
    & \exists! X_1 \forall Y_1 \exists!Z_1 (\bigwedge_{t \in \dom(\hh_{S_1})} \hh_C(t) = \hh_{S_1}(t)) && && \mycounter &\\
    & \text{From induction hypothesis using (1,2,5,6)}\\
    &\text{ and antecedent (5)}\\
    & \store, \hh_C \prec_{\cc_2} \sstore, \hh_{S_2} && \text{Consequent 1} && \mycounter &\\
    & \ll \expr_2, \fstore, \store, \sstore, \hh_C, \hh_{S_2}, \cc_2 \gg \val_2, \sval_2, \cc'_2, \m'_2 && && \mycounter &\\
    & \exists! X_2 \forall Y_2 \exists!Z_2 (\bigwedge_{t \in \dom(\hh_{S_2})} \hh_C(t) = \hh_{S_2}(t)) && \text{Consequent 3} && \mycounter &\\
    & \hh_{S_1}, \cc_1 \leadsto^* \hh_{S_2}, \cc_2&& && \mycounter &\\
\end{align*}
\begin{align*}
    & \text{From lemma~\ref{lemma:multistephextendnew} using (4,5,9)}\\
    & \langle \expr_1, \fstore, \sstore, \hh_{S_2}, \cc_2 \rangle \downarrow \sval_1, \cc_3 && && \mycounter &\\
    & \ll \expr_1, \fstore, \store, \sstore, \hh_{S_2}, \cc_2 \gg \updownarrow \val_1, \sval_1, \cc_3, \m_3 && && \mycounter &\\
    & \exists! X''_1 \forall Y''_1 \exists! Z''_1 (\bigwedge_{t \in \dom(\hh_{S_2})} \hh_C(t) = \hh_{S_2}(t)) && && \mycounter &\\
    & \cc'_2 = \cc_2 \wedge \overline{\cc'_2} && \text{From lemma~\ref{lemma:cexpand} using (7)} && \mycounter &\\
    & \exists! X \forall Y \exists! Z (\cc_2 \wedge \bigwedge_{t \in \dom(\hh_{S_2})} \hh_C(t) = \hh_{S_2}(t)) && \text{From (7), (8) and (13)}&& \mycounter&\\
    & \text{From lemma~\ref{lemma:ctogether} using (7), (11) and (14)}\\
    & \ll \expr_1, \fstore, \store, \sstore, \hh_C, \hh_{S_2}, \cc_2 \gg \downarrow \sval_1, \cc_3, \m_3 && && \mycounter &\\
    & \ll \expr_2, \fstore, \store, \sstore, \hh_C, \hh_{S_2}, \cc_3 \gg \downarrow \sval_2, \cc_4, \m_4 && && \mycounter &\\
    & \exists! X''' \forall Y''' \exists!Z''' (CS(\sval_1,\val_1) \wedge CS(\sval_2,\val_2) \wedge \cc_4 \wedge \m \wedge \\
    &\bigwedge_{t \in \dom(\hh_{S_2})} \hh_S(t) = \hh_C(t)) && && \mycounter &\\
    & \exists! X''' \forall Y''' \exists!Z''' (CS(\sval_1 \oplus \sval_2,\val_1 \oplus \val_2) \wedge \cc_4 \wedge \m \wedge \\
    &\bigwedge_{t \in \dom(\hh_{S_2})} \hh_S(t) = \hh_C(t)) && \text{From (18)} && \mycounter &\\
    &\text{From \textsc{Sym-binary}, (15) and (16)}\\
    & \langle \expr_1 \oplus \expr_2, \fstore, \sstore, \hh_C, \hh_{S_2}, \cc_2 \rangle \downarrow \sval_1 \oplus \sval_2, \cc_4 &&   && \mycounter & \\
    &\text{From (2), (18) and (19)}\\
    &\ll \expr_1 \oplus \expr_2, \fstore, \store, \sstore, \hh_C, \hh_{S_2}, \cc_2 \gg \updownarrow \val_1 \oplus \val_2, \sval_1 \oplus \sval_2, \cc_4, \m && \text{Consequent 2} && &\\ 
\end{align*}
\fbox{$\mathbf{\expr \equiv \lp(\minimize, \expr_1, \expr_2)}$}
\setcounter{number}{1}
\begin{align*}
&\text{From Antecedent (2) and \textsc{G-solver}}\\
    & \inferrule*[]
    {
        \tau_s, \symc, \prop \models \expr_1 \leadsto \hh_{S_1}, \cc_{1} \\\\
        \tau_s, \fstore, \sstore, \hh_{S_1}, \cc_1, \prop \models \expr_2 \leadsto \hh_{S_2}, \cc_2
    }
    {
        \tau_s, \symc, \prop \models 
        \lp(\minimize, \expr_1, \expr_2) \leadsto \hh_{S_2}, \cc_2
    } &&  && \mycounter &\\
&\text{From Antecedent (3) and \textsc{Op-solver}}\\
    & \inferrule*[]
    {
        \langle \expr_1, \context \rangle \Downarrow \val_1 \\\\
        \langle \expr_2, \context \rangle \Downarrow \val_2
    } 
    {
        \langle \lp(\minimize, \expr_1, \expr_2), \context \rangle \Downarrow \minimize(\val_1, \val_2)
    } &&  && \mycounter &\\
    & \text{From induction hypothesis using (1) (2), antecedents (1,4,5)} \\
    & \store, \hh_C \prec_{\cc_1} \sstore, \hh_{S_1} && && \mycounter &\\
    & \ll \expr_1, \fstore, \store, \sstore, \hh_C, \hh_{S_1}, \cc_1 \gg \val_1, \hat{\sval_1}, \cc'_1, \m'_1 && && \mycounter &\\
    & \exists! X_1 \forall Y_1 \exists!Z_1 (\bigwedge_{t \in \dom(\hh_{S_1})} \hh_C(t) = \hh_{S_1}(t)) && && \mycounter &\\
\end{align*}
\begin{align*}
    & \text{From induction hypothesis using (1,2,5,6) and antecedent (5)}\\
    & \store, \hh_C \prec_{\cc_2} \sstore, \hh_{S_2} && \text{Consequent 1} && \mycounter &\\
    & \ll \expr_2, \fstore, \store, \sstore, \hh_C, \hh_{S_2}, \cc_2 \gg \val_2, \sval_2, \cc'_2, \m'_2 && && \mycounter &\\
    & \exists! X_2 \forall Y_2 \exists!Z_2 (\bigwedge_{t \in \dom(\hh_{S_2})} \hh_C(t) = \hh_{S_2}(t)) && \text{Consequent 3} && \mycounter &\\
    & \hh_{S_1}, \cc_1 \leadsto^* \hh_{S_2}, \cc_2&& && \mycounter &\\
    & \text{From lemma~\ref{lemma:multistephextendnew} using (4,5,9)}\\
    & \langle \expr_1, \fstore, \sstore, \hh_{S_2}, \cc_2 \rangle \downarrow \sval_1, \cc_3 && && \mycounter &\\
    & \ll \expr_1, \fstore, \store, \sstore, \hh_{S_2}, \cc_2 \gg \updownarrow \val_1, \sval_1, \cc_3, \m_3 && && \mycounter &\\
    & \exists! X''_1 \forall Y''_1 \exists! Z''_1 (\bigwedge_{t \in \dom(\hh_{S_2})} \hh_C(t) = \hh_{S_2}(t)) && && \mycounter &\\
    &\text{From lemma~\ref{lemma:cexpand} using (7)}\\
    & \cc'_2 = \cc_2 \wedge \overline{\cc'_2} && && \mycounter &\\
    &\text{From (7), (8) and (13)}\\
    & \exists! X''' \forall Y''' \exists! Z''' (\cc_2 \wedge \bigwedge_{t \in \dom(\hh_{S_2})} \hh_C(t) = \hh_{S_2}(t)) && && \mycounter&\\
    & \text{From lemma~\ref{lemma:ctogether} using (7), (11) and (14)}\\
    & \ll \expr_1, \fstore, \store, \sstore, \hh_C, \hh_{S_2}, \cc_2 \gg \downarrow \sval_1, \cc_3, \m_3 && && \mycounter &\\
    & \ll \expr_2, \fstore, \store, \sstore, \hh_C, \hh_{S_2}, \cc_3 \gg \downarrow \sval_2, \cc_4, \m_4 && && \mycounter &\\
    & \exists! X \forall Y \exists!Z (\sval_1 = \val_1 \wedge \sval_2 = \val_2 \wedge \cc_4 \wedge \m \wedge \bigwedge_{t \in \dom(\hh_{S_2})} \hh_S(t) = \hh_C(t)) && && \mycounter &\\
    & \cc_2 \implies \sat(\sval_2) && \text{Antecedent (5)} && \mycounter&\\
    & \text{From (15), (16) and (18) using \textsc{Sym-solver}}\\
    & \langle \lp(\minimize, \expr_1, \expr_2), \fstore, \sstore, \hh_{S_2}, \cc_2 \rangle \downarrow \sval, \cc_4 \wedge (\sval_2 \implies \sval \le \sval_1) && && \mycounter &\\
    &\text{From (2)}\\
    & \val = \minimize(\val_1, \val_2) &&  && \mycounter & \\
    &\text{From \textsc{Sym-solver}, $\sval$ is a fresh variable } &&  && \mycounter &\\
    &\text{From (17) and (21)}\\
    & \exists! X \forall Y \exists!Z (\sval = \val \wedge \bigwedge_{i=1}^2 \sval_i = \val_i \wedge \cc_4 \wedge \m \wedge \bigwedge_{t \in \dom(\hh_{S_2})} \hh_S(t) = \hh_C(t)) && && \mycounter &\\
    &\text{From (20)}\\
    & \text{$\val$ is the minimum value of $\val_1$ under $\val_2$: } \val_2 \implies \val \leq \val_1 &&  && \mycounter &\\
    &\text{From (22),(23)}\\
    & \exists X! \forall Y \exists! Z (\val_2 \implies \val \leq \val_1 \wedge \sval = \val \wedge \sval_1 = \val_1 \wedge \sval_2 = \val_2 &&  &&  &\\
    & \wedge \cc_4 \wedge \m \wedge \bigwedge_{t \in \dom(\hh_{S_2})} \hh_C(t) = \hh_{S_2}(t)) &&  && \mycounter &\\
\end{align*}
\begin{align*}
    &\text{From (24)}\\
    & \exists X! \forall Y \exists! Z (\sval_2 \implies \sval \leq \sval_1 \wedge \sval = \val \wedge \sval_1 = \val_1 \wedge \sval_2 = \val_2  &&&&  &\\
    & \wedge \cc_4 \wedge \m \wedge \bigwedge_{t \in \dom(\hh_{S_2})} \hh_C(t) = \hh_{S_2}(t)) &&  && \mycounter &\\
    &\text{From (25)}\\
    & \exists! X \forall Y \exists! Z (\sval = \val \wedge \sval_2 \implies \sval \leq \sval_1 \wedge \cc_4 \wedge \m \wedge \bigwedge_{t \in \dom(\hh_{S_2})} \hh_C(t) = \hh_{S_2}(t)) &&  && \mycounter &\\
    & \text{From (2), (19) and (26)}\\
    &\ll \tau_s, \fstore, \store, \sstore, \hh_C, \hh_{S_2}, \cc_2 \gg \updownarrow \val, \sval, \cc_4 \wedge \sval_2 \implies \sval \leq \sval_1, \m &&&& \text{Consequent 2}  &\\
\end{align*}
\fbox{$\mathbf{\expr \equiv \expr\cdot\map(f_c)}$}
\setcounter{number}{1}
\begin{align*}
&\text{From Antecedent (2) and \textsc{G-map}}\\
    & \inferrule*[]
    {
            \tau_s, \scontext, \prop \models \expr \leadsto \hh_{S_0}, \cc_0 \\\\
            \ex(\expr, \tau_s, \sstore, \hh_{S_0}, \cc_0, \prop) = \hh_{S}', \cc' \\\\
            \langle \expr, \fstore, \sstore, \hh'_S, \cc' \rangle \downarrow \sval, \_ \\\\
            \expand(\sval) = \true \\\\
            \foo(\tau_s, \fstore, \sstore, \hh'_S, \cc', \prop, f_c, \sval) = \hh''_S, \cc'''
            }
            {
            \tau_S, \symc, \prop \models \expr\cdot\map(f_c) \leadsto \hh_S'', \cc''' 
            } &&  && \mycounter &\\
&\text{From Antecedent (3) and \textsc{Op-map}}\\
    & \inferrule*[]
    {
        \langle \expr, \context \rangle \Downarrow \bval' \quad 
        \bval' = \constant_0 + \sum_{i=0}^{i=l} \constant_i \cdot \ver_i \\\\
        \forall i \in [l], \ \langle f_c(\ver_i, \constant_i), \context \rangle  \Downarrow \val_i \quad 
        \bval = \constant_0 + \sum_{i=0}^{i=l} \val_i
    }
    {
        \langle \expr.\map(f_c), \context \rangle \Downarrow \bval
    } &&  && \mycounter &\\
    &\text{From (1), }\ex(\expr, \tau_s, \sstore, \hh_{S_0}, \cc_0 \prop) = \bar{\hh_{S}'}, \bar{\cc_1} &&&& \mycounter & \\
    &\text{From (2), }\langle \expr, \context \rangle \Downarrow \bval' &&&& \mycounter & \\
    & \text{From induction hypothesis using (1),(4) and Antecedents (1),(4) and (5)}\\
    & \ll \expr, \fstore, \store, \sstore, \hh_C, \hh_{S_0}, \cc_0 \gg \updownarrow \val'_b,\bar{\sval^\#}, \bar{\cc_2^\#}, \bar{\m_2^\#}  && && \mycounter &\\
    & \text{From lemma~\ref{lemma:hexpand} using (3) (5), antecedents (1,4)} \\
    &\expand(\bar{\sval'}) &&&& \mycounter & \\
    & \ll \expr, \fstore, \store, \sstore, \hh_C, \bar{\hh'_{S}}, \bar{\cc_1} \gg \updownarrow \val'_b,\bar{\sval'}, \bar{\cc_2}, \bar{\m_2}  && && \mycounter &\\
    & \exists! X_1 \forall Y_1 \exists!Z_1 (\bigwedge_{t \in \dom(\bar{\hh_{S'}})} \hh_C(t) = \bar{\hh'_{S}}(t)) && && \mycounter &\\
    &\store, \hh_C \prec_{\bar{\cc_1}} \sstore, \bar{\hh'_S} &&&& \mycounter & \\
    &\foo(\tau_s, \fstore, \sstore, \bar{\hh'_S}, \bar{\cc_2}, \prop, f_c, \bar{\sval'}) = \bar{\hh''_S}, \bar{\cc_3} && \text{From (1)} && \mycounter & \\
    &\forall i \in [l], \ \langle f_c(\ver_i, \constant_i), \context \rangle  \Downarrow \val_i && \text{From (2)} && \mycounter & \\
\end{align*}
\begin{align*}
    & \text{We can do induction on }\height(\bar{\sval'}) \\
    & \textbf{Base Case: } \height(\bar{\sval'}) = 0\\
    %&\text{By induction on }\height(\bar{\sval'})\text{ using (6), (8), (9), (10), (11)} &&&& & \\
    & \bar{\sval'} = \bar{\sval'_{b_0}} + \sum_{i=1}^j \alpha'_i * \bar{\sval'_{b_i}} \text{ where $\alpha'_i = n'_i$ or $\alpha'_i = \epsilon'_i$} && \text{From (10)} && \mycounter &\\
    &\text{Since the variables $n'_i$ and $\epsilon'_i$ are stored in $Y$} \\
    &\text{and shared between the concrete and symbolic expressions}\\
    & j = l \quad \ver_i = \alpha'_i && \text{From (7) and (11)} && \mycounter &\\
    & \fstore[f_c] = (\var_1, \var_2), \expr_c && \text{From (10)} && \mycounter &\\
    & \forall i \in [j], \sstore_i = [\var_1 \mapsto \bar{\sval'_{b_i}}, \var_2 \mapsto m_i] \quad m_i = n'_i \text{ or } m_i = \epsilon'_i && && \mycounter & \\
    & \forall i \in [j], \store_i = [\var_1 \mapsto c_i, \var_2 \mapsto \ver_i] && \text{From (13)} && \mycounter & \\
    &\forall i \in [j], \store_i, \hh_C \prec_{\bar{\cc_1}} \sstore_i, \bar{\hh'_S} && \text{From (7) and (9)} && \mycounter &\\
    & \hh_{S_0} = \bar{\hh'_S} \quad \cc_0 = \bar{\cc_1} && && \mycounter &\\
    & \tau_s, \fstore, \sigma_i, \hh_{S_{i-1}}, \cc_{i-1} \vdash \expr_c \leadsto \hh_{S_i}, \cc_i  &&\text{From (10) and (15)}&& \mycounter & \\
    & \text{From induction hypothesis using (8), (17), (19) and }\\
    &\text{\textsc{G-map-poly}, \textsc{G-map-sym} and Antecedent (5)} \\
    & \ll \expr_c, \fstore, \store_i, \sstore_i, \hh_C, \hh_{S_{i-1}}, \cc_{i-1} \gg \updownarrow \val_i, \overline{\sval''_{b_i}}, \cc'_i, \m'_i && && \mycounter &\\
    &\exists!X'_i \forall Y'_i \exists! Z'_i (\bigwedge_{t \in \dom(\hh_{S_i})} \hh_{S_i}(t) = \hh_C(t)) && && \mycounter &\\
    &\store_i, \hh_C \prec_{\cc_{i}} \sstore_i, \hh_{S_i}&& && \mycounter &\\
    &\bar{\hh'_S}, \bar{\cc_1} \leadsto^* \hh_{S_j}, \cc_j && \text{From (18) and (19)} && \mycounter &\\
    &\bar{\hh_{S_{i-i}}}, \cc_{i-1} \leadsto^* \hh_{S_j}, \cc_j && \text{From (18) and (19)} && \mycounter &\\
    & \text{From (6) and lemma~\ref{lemma:multistephextendnew} using (7), (8) and (23)} \\
    &\ll \expr, \fstore, \store, \sstore, \hh_C, \hh_{S_j}, \cc_j \gg \updownarrow \val'_b, \bar{\sval^\#}, \hat{\cc_0}, \hat{\m_0} \quad \expand(\bar{\sval^\#}) && && \mycounter &\\
    &\text{From lemma~\ref{lemma:multistephextendnew} using (20), (21) and (24)} \\
    & \ll \expr_c, \fstore, \store_i, \sstore_i, \hh_C, \hh_{S_{j}}, \cc_{j} \gg \updownarrow \val_i, \sval''_{b_i}, \hat{\cc_i}, \hat{\m_i} &&&& \mycounter &\\
    &\text{From lemma~\ref{lemma:ctogether} using (25) and (26)} \\
    &\cc''_0 = \hat{\cc_0} &&&& \mycounter &\\
    &\forall i \in [j], \langle \expr_c, \fstore, \sstore_i, \hh_{S_{j}}, \cc''_{i-1} \rangle \downarrow \sval''_{b_i}, \cc''_i &&&& \mycounter &\\
    &\exists! X_2 \forall Y_2 \exists! Z_2 (CS(\bar{\sval^\#},\ver'_b) \wedge \bigwedge_{i=1}^j CS(\sval''_{b_i},\ver_i)  &&&&  &\\
    &\wedge \cc''_j \wedge \m \wedge \bigwedge_{t \in \dom(\hh_{S_j})} \hh_{S_j}(t) = \hh_C(t)) &&&& \mycounter &\\
    & \exists! X_2 \forall Y_2 \exists! Z_2 (CS(\sval'_{b_0},c_0) \wedge \bigwedge_{i=1}^j CS(\sval''_{b_i},\ver_i) &&  &&  &\\
    & \wedge \cc''_j \wedge \m \wedge \bigwedge_{t \in \dom(\hh_{S_j})} \hh_{S_j}(t) = \hh_C(t)) &&\text{From (7) and (29) } && \mycounter &\\
\end{align*}
\begin{align*}
    &\text{From (30) } \exists! X_2 \forall Y_2 \exists! Z_2 (CS(\sval'_{b_0} + \sum_{i=0}^j \sval''_{b_i},c_0 + \sum_{i=0}^j \val_i)  &&  &&  &\\
    &\wedge \cc''_j \wedge \m \wedge \bigwedge_{t \in \dom(\hh_{S_j})} \hh_{S_j}(t) = \hh_C(t)) &&  && \mycounter &\\
    &\text{From (7), (12) and (28) }\langle \expr.\map(f_c), \fstore, \sstore, \bar{\hh'_S}, \bar{\cc_1} \rangle \downarrow \sval'_{b_0} + \sum_{i=0}^j \sval''_{b_i}, \cc''_j &&  && \mycounter &\\
    &\text{From (2), (31) and (32)}\\
    &\ll \expr.\map(f_c), \fstore, \store, \sstore, \hh_C, \bar{\hh'_S}, \bar{\cc_1} \gg \updownarrow c_0 + \sum_{i=0}^j \val_i, \sval'_{b_0} + \sum_{i=0}^j \sval''_{b_i}, \cc''_j, \m &&&& \text{Consequent (2)} &\\
    & \exists! X_j \forall Y_j \exists ! Z_j (\bigwedge_{t \in \dom(\hh_{S_j})} \hh_{S_j}(t) = \hh_C(t)) &&&& \text{Consequent (3)}  &\\
    &\text{From (8) and (9)}\\
    & \exists! X \forall Y \exists! Z(\bar{\cc'_1} \wedge \bigwedge_{t \in \dom(\bar{\hh'_S})}\bar{\hh'_S}(t) = \hh_C(t) \wedge \bigwedge_{t \in \dom(\sstore)}\sstore(t) = \store(t)) &&  && \mycounter &\\
    &\text{From (22)}\\
    & \exists! X'' \forall Y'' \exists! Z''(\cc_j \wedge \bigwedge_{t \in \dom(\hh_{S_j})}\hh_{S_j}(t) = \hh_C(t) \wedge \bigwedge_{t \in \dom(\sstore)}\sstore_j(t) = \store_j(t)) &&  && \mycounter &\\\\
    &\text{From (3) and (24)} \\
    & \cc_j = \cc \wedge \cc''' \quad \hh_S \subseteq \hh_{S_j} && && \mycounter &\\
    &\text{From (33), (34) and (35) } \store, \hh_C \prec_{\cc_j} \sstore, \hh_{S_j} &&&& \text{Consequent (1)}  &\\
    & \textbf{Induction Case: } \height(\bar{\sval'}) > 0\\
    &\text{From (10)}\\
    & \bar{\sval'} = If(\bar{\sval'_1}, \bar{\sval'_2}, \bar{\sval'_3}) &&  && \mycounter &\\
    & \exists!X_2 \forall Y_2 \exists!Z_2 (\bar{\sval'_1} \implies (CS(\bar{\sval'_2},\val_b)) \wedge \neg \bar{\sval'_1} \implies (CS(\bar{\sval'_3}, \val_b)) \wedge \\
    &\text{From (7) and (36)}\\
    &\bar{\cc_2} \wedge \bar{\m_2} \wedge \bigwedge_{t \in 
    dom(\bar{\hh'_S})}\bar{\hh'_S}(t) = \hh_C(t)) &&  && \mycounter &\\
    & \text{Given any assignment, $m$, to $X_2 \cup Y_2 \cup Z_2$, $\bar{\sval'_1}$ is either true or false} && && \mycounter &\\
    &\text{If $\bar{\sval'_1}$ is true under $m$} \\
    &\text{from the induction hypothesis on $\height(\bar{\sval'})$ using (1), (2), (7), (8) and (10)} \\
    & \map(\bar{\sval'_2}, f_c, \fstore, \sstore, \bar{\hh'_S}, \bar{\cc_2}) = \sval'_2, \cc_2 &&&& \mycounter&\\
    & \exists!X_{C_2} \forall Y_{C_2} \exists!Z_{C_2}( m \models (CS(\sval'_2,\val_b) \wedge \cc_2 \wedge \m_2)) && && \mycounter &\\
    &\exists! X_2 \forall Y_2 \exists! Z_2 (\bigwedge_{t \in \dom(\hh'_{S_2})} \hh'_{S_2}(t) = \hh_C(t)) \quad \store, \hh_C \prec_{\bar{\cc_2}} \sstore, \hh'_{S_2} && && \mycounter & \\
\end{align*}
\begin{align*}
    &\text{If $\bar{\sval'_1}$ is false under $m$} \\
    &\text{From the induction hypothesis on $\height(\bar{\sval'})$ using (1), (2), (7), (10), (40)} \\
    & \map(\bar{\sval'_3}, f_c, \fstore, \sstore, \bar{\hh'_S}, \cc_2) = \sval'_3, \cc_3 \quad \cc_3 = \bar{\cc_3}&&&& \mycounter&\\
    &\exists!X_{C_3} \forall Y_{C_3} \exists!Z_{C_3}( m \models (CS(\sval'_3,\val_b) \wedge \cc_3 \wedge \m_3)) && && \mycounter &\\
    &\exists! X_3 \forall Y_3 \exists! Z_3 (\bigwedge_{t \in \dom(\hh''_{S_2})} \hh''_{S_2}(t) = \hh_C(t)) \quad \store, \hh_C \prec_{\bar{\cc_3}} \sstore, \hh''_{S_2} && && \mycounter & \\
    &\text{From lemma~\ref{lemma:cexpand} using (31), (43),}\\
    &\text{\textsc{Sval-map-r, Sval-map-poly and Sval-map-sym}}\\
    &\cc_3 = \cc_2 \wedge \cc'_3 \quad \cc_2 = \bar{\cc'_1} \wedge \cc'_1 && && \mycounter &\\
    & \text{Given an assignment to $X \cup Y \cup Z$, $\m_2$ and $\m_3$ contain constraints where}\\
    &\text{the left hand side is a fresh variable and the right hand side is a constant}&& && \mycounter &\\
    &\text{From (39), (40), (42), (43), (45) and (46)}\\
    &\exists! X_3 \forall Y_3 \exists! Z_3 (\bar{\sval'} \implies (CS(\sval'_2,\val_b)) \wedge \neg \bar{\sval'} \implies \\
    &(\sval'_3 = \val_b \wedge \cc'_3 \wedge (\m_3 \setminus \m_2)) \wedge \cc_2 \wedge \m_2 \bigwedge_{t \in \dom(\bar{\hh''_S})} \bar{\hh''_S}(t) = \hh_C(t)) && && \mycounter&\\
    &\text{From (44), (46) and (47)}\\
    &\exists! X_3 \forall Y_3 \exists! Z_3 (\bar{\sval'} \implies (CS(\sval'_2,\val_b)) \wedge \neg \bar{\sval'} \implies \\
    &(\sval'_3 = \val_b) \wedge \cc_3 \wedge \m_3 \bigwedge_{t \in \dom(\bar{\hh''_S})} \bar{\hh''_S}(t) = \hh_C(t)) &&  && \mycounter &\\
    &\text{From (39), (42) and \textsc{Sym-map}}\\
    &\langle \expr.\map(f_c), \fstore, \sstore, \hh''_S, \bar{\cc_3} \rangle \downarrow If(\bar{\sval_1}, \sval'_2, \sval'_3), \cc_3 &&  && \mycounter &\\
    & \text{From (2), (48)}\\
    &\ll \expr.\map(f_c), \fstore, \store, \sstore, \hh_C, \hh''_S, \bar{\cc_3} \gg \updownarrow \val, If(\bar{\sval_1}, \sval'_2, \sval'_3), \cc_3, \m_3 &&&&\text{Consequent 2}  &\\
\end{align*}
\fbox{$\mathbf{\expr \equiv f_c(\expr_1, \cdots \expr_n)}$}
\setcounter{number}{1}
\begin{align*}
&\text{From Antecedent (2), \textsc{G-func-call}}\\
    & \inferrule*[]
    {
            \hh_{S_0} = \hh_S \quad \cc_0 = \cc \\\\
            \forall i \in [n], 
            \tau_s, \fstore, \sstore', \hh_{S_{i-1}}, \cc_{i-1}, \prop \models \expr_i \leadsto \overline{\hh_{S_i}}, \overline{\cc_{i}}  \\\\
            \ex(\expr_i, \tau_s, \fstore, \sstore, \overline{\hh_{S_i}}, \overline{\cc_{i}}) = \hh_{S_i}, \cc_i \\\\
            \hh'_S = \hh_{S_n} \quad \cc'_0 = \cc_n \\\\
            \forall i \in [n], \langle \expr_i, \fstore, \sstore, \hh'_S, \cc'_{i-1} \rangle \downarrow \sval_i, \cc'_i \\\\
            \fstore(f_c) = (\var_1, \cdots, \var_n), \expr \\\\
            \sstore' = \sstore[\var_1 \mapsto \sval_1, \cdots \var_n \mapsto \sval_n] \\\\
            \tau_s, \fstore, \sstore', \hh'_S, \cc_n, \prop \models \expr \leadsto \hh_S'', \cc'' 
            }
            {
            \tau_s, \scontext, \prop \models f_c(\expr_1, \cdots \expr_n) \leadsto \hh''_{S}, \cc''
            } &&  && \mycounter &\\
\end{align*}
\begin{align*}
&\text{From Antecedent (3), \textsc{Op-func-call}}\\
    & \inferrule*[]
    {
        \forall i \in [n], \ \langle \expr, \context \rangle \Downarrow \val_i \\\\
        \fstore(f_c) = (\var_1, \cdots, \var_n), \expr \\\\
        \store' = \store[\var_1 \mapsto \val_1, \cdots \var_n \mapsto \val_n] \\\\ 
        \langle \expr, \fstore, \store', \hh_C \rangle \Downarrow \val   
    }
    {
        \langle f_c(\expr_1, \cdots, \expr_n), \context \rangle  \Downarrow \val
    } &&  && \mycounter &\\
    &\text{From (1), }\ex(\expr_i, \tau_s, \fstore, \sstore, \overline{\hh_{S_i}}, \overline{\cc_{i}}) = \hh_{S_i}, \cc_i&&&& \mycounter & \\
    &\text{From (2), }\langle \expr_i, \context \rangle \Downarrow \val_i &&&& \mycounter & \\
    &\text{From induction hypothesis using (1), (4)}\\
    &\text{and Antecedents (1),(4) and (5)}\\
    &  \ll \expr_i, \fstore, \store, \sstore, \hh_C, \overline{\hh_{S_i}}, \overline{\cc_{i}} \gg \updownarrow \val_i, \bar{\sval_i^\#}, \bar{\cc_i^\#}, \bar{\m_i^\#} && && \mycounter &\\
    & \text{From lemma~\ref{lemma:hexpand} using (3) (5), antecedents (1,4)} \\
    & \ll \expr_i, \fstore, \store, \sstore, \hh_C, \hh_{S_i}, \cc_i \gg \updownarrow \val_i, \bar{\sval'_i}, \bar{\cc_i}, \bar{\m_i}&& && \mycounter &\\
    & \exists X_i! \forall Y_i \exists!Z_i (\bigwedge_{t \in \dom(\hh_{S_i})} \hh_C(t) = \hh_{S_i}(t)) && && \mycounter &\\
    &\store, \hh_C \prec_{\cc_i} \sstore, \hh_{S_i} &&&& \mycounter & \\
    &\hh'_S = \hh_{S_n}, \quad  \cc'_0 = \cc_n &&\text{From (1)} && \mycounter & \\
    &\text{From (7), (9), } \exists X' \forall Y' \exists!Z' (\bigwedge_{t \in \dom(\hh'_{S})} \hh_C(t) = \hh'_{S}(t)) && && \mycounter &\\
    &\text{From (3), (9), }\hh_{S_i}, \cc_i \leadsto^* \hh'_S, \cc'_0 &&&& \mycounter & \\
    &\text{From lemma~\ref{lemma:multistephextendnew} using  (6), (7) and (11)}\\
    &\ll \expr_i, \fstore, \store, \sstore, \hh_C, \hh'_{S}, \cc' \gg \updownarrow \val_i, \sval_i, \bar{\cc'_i}, \bar{\m'_i} &&&& \mycounter & \\
    &\text{From lemmas~~\ref{lemma:ctogether} using (8), (9), (10) and (12) }\\
    &\cc''_0 = \cc'_0 && && \mycounter &\\
    &\langle \expr_i, \fstore, \sstore, \hh'_S, \cc''_{i-1} \rangle \downarrow \sval_i, \cc''_i &&&& \mycounter & \\
    &\exists! X' \forall Y \exists! Z' (\bigwedge_{i=0}^j CS(\sval_i,\val_i) \wedge \cc''_j \wedge \m \wedge \bigwedge_{t \in \dom(\hh'_S)} \hh'_S(t) = \hh_C(t)) &&&& \mycounter & \\
    &\sstore' = \sstore[\var_1 \mapsto \sval_1, \cdots \var_n \mapsto \sval_i] \quad \store' = \store[\var_1 \mapsto \val_1, \cdots \var_n \mapsto \val_i]&& \text{From (1) and (2)}&& \mycounter & \\
    &\store, \hh_C \prec_{\cc'_0} \sstore, \hh'_S && \text{From (8) and (15)} && \mycounter &\\
    &\text{From induction hypothesis using (1), (2), (10) and (17)}\\
    &\text{and Antecedent (5)}\\
    & \langle \expr, \fstore, \sstore', \hh'_S, \cc'_0 \rangle \downarrow  \sval^\#, \bar{\cc''} &&&& \mycounter&\\
    & \store', \hh_c \prec_{\cc''} \sstore', \hh''_S &&  && \mycounter &\\
    & \exists! X'' \forall Y'' \exists! Z'' (\bigwedge_{t \in \dom(\hh''_S)} \hh''_S(t) = \hh_C(t)) && \text{Consequent (3)} && \mycounter &\\
\end{align*}
\begin{align*}
    &\ll \expr, \fstore, \store', \sstore', \hh_C, \hh'_S, \cc'  \gg \updownarrow \val, \sval^\#, \bar{\cc''}, \bar{\m''} && && \mycounter &\\
    & \hh'_S \subseteq \hh''_S  \quad \cc'' = \cc'_0 \wedge \cc'''&& \text{From (11) }&& \mycounter &\\
    & \text{From (7), (8), (19) (20) and (22) } \store, \hh_c \prec_{\cc''} \sstore, \hh''_S && \text{Consequent (1)} && &\\
    &\text{From lemma~\ref{lemma:multistephextendnew} using (1), (2), (10) and (21)}\\
    & \ll \expr, \fstore, \store', \sstore', \hh_C, \hh''_S, \cc''  \gg \updownarrow \val, \sval', \cc''_2, \m''_2  && && \mycounter &\\
    &\text{From lemma~\ref{lemma:multistephextendnew} using (1), (10) and (12)}\\
    &\ll \expr_i, \fstore, \store, \sstore, \hh_C, \hh''_S, \cc'' \gg \updownarrow \val_i, \bar{\sval_i}, \bar{\cc''_i}, \bar{\m''_i} && && \mycounter &\\
    &\text{From lemma~\ref{lemma:ctogether} using (20), (23) and (24) }\\
    &\cc''_{0} = \cc'' && && &\\
    &\langle \expr_i, \fstore, \sstore, \hh''_{S}, \cc''_{i-1} \rangle \downarrow \bar{\sval_i}, \cc''_i && && \mycounter &\\
    &  \langle \expr, \fstore, \sstore', \hh''_S, \cc''_j  \rangle \downarrow \sval', \cc'''_2 && && \mycounter &\\
    &\exists! X''' \forall Y''' \exists!Z'''(CS(\sval',\val) \wedge \cc''_j \wedge \m \wedge \bigwedge_{t \in \dom(\hh''S)} \hh''_S(t) = \hh_C(t)) && && \mycounter &\\
    &\text{From (24),(25), (26) and SYM-FUNC-CALL}\\
    & \langle f_c(\expr_1, \dots, \expr_n), \fstore, \sstore, \hh''_S, \cc'' \downarrow \sval', \cc''_j &&  && \mycounter&\\
    &\text{From (2), (27) and (28)}\\
    &\ll f_c(\expr_1, \dots, \expr_n), \fstore, \store, \sstore, \hh_C, \hh''_S, \cc'' \gg \updownarrow \val, \sval', \cc''_j, \m &&\text{Consequent (2)} && &\\
\end{align*}
\end{proof}

% \clearpage
\begin{proof}
Proof of lemma~\ref{lemma:graphexpand}. We prove this by induction on the structure of $\expr$.
    
\textbf{Base cases:} \\ 
$\mathbf{\expr \equiv \constant}$
\setcounter{number}{1}
\begin{align*}
    &\text{From \textsc{G-const} } \tau_s, \symc, \prop \models \constant \leadsto \hh_S, \cc && \text{Consequent 1} &&  &\\
    &\text{From Antecedent (6) } \store, \hh_C \prec_{\cc} \sstore, \hh_S&& \text{Consequent 2} &&  &\\
\end{align*}

\noindent
$\mathbf{\expr \equiv \var}$
\setcounter{number}{1}
\begin{align*}
    &\text{From \textsc{G-var} } \tau_s, \symc, \prop \models \var \leadsto \hh_S, \cc && \text{Consequent 1} &&  &\\
    &\text{From Antecedent (6) } \store, \hh_C \prec_{\cc} \sstore, \hh_S&& \text{Consequent 2} &&  &\\
\end{align*}

\newpage
\textbf{Induction Cases:} \\
$\mathbf{\expr_1 \oplus \expr_2}$
\setcounter{number}{1}
\begin{align*}
    &\text{From \textsc{T-binary-bool, T-binary-arith-1}, }\\
    &\text{ \textsc{T-binary-arith-2, T-binary-arith-3, T-comparison-1}, }\\
    &\text{\textsc{T-comparison-2} and \textsc{T-comparison}-3}\\
    &\Gamma, \tau_s \vdash \expr_1 : \types_1 \quad \bot \sqsubset \types_1 \sqsubset \top \quad \Gamma, \tau_s \vdash \expr_2 : \types_2 \quad \bot \sqsubset \types_2 \sqsubset \top  && && \mycounter &\\
    &\text{From induction hypothesis using (1) and}\\
    &\text{Antecedents (3-8)}\\
    &\tau_s, \fstore, \sstore, \hh_S, \cc, \prop \vdash \expr_1 \leadsto \hh'_S, \cc' && && \mycounter &\\
    &\text{From lemma~\ref{lemma:optype-checking} using Antecedents (1),(2),(3),(4) and (5)}\\
    & \langle \expr_1 \oplus \expr_2, \fstore, \store, \hh_C \rangle \Downarrow \val && && \mycounter &\\
    &\text{From (2) and \textsc{Op-binary}, } \langle \expr_1 \fstore, \store, \hh_C \rangle \Downarrow \val_1 &&  && \mycounter &\\
    &\text{From lemma~\ref{lemma:overapprox} using (2), (4), Antecedents (6-8)}\\
    & \store, \hh_C, \prec_{\cc'} \sstore, \hh'_S && && \mycounter &\\
    & \exists!X' \forall Y' \exists! Z' (\bigwedge_{t \in \dom(\hh'_S)} \hh'_S(t) = \hh_C(t)) && && \mycounter &\\
    &\text{From induction hypothesis using (1), (5), (6) and}\\
    &\text{Antecedents (3), (4),(5), and (8)}\\
    &\tau_s, \fstore, \sstore, \hh'_S, \cc', \prop \vdash \expr_2 \leadsto \hh''_S, \cc'' && && \mycounter &\\
    &\store, \hh_C \prec_{\cc''} \sstore, \hh''_S && \text{Consequent (2)} &&   &\\
    &\text{From (2), (7) and \textsc{G-binary} } \\ 
    &\tau_s, \fstore, \sstore, \hh_S, \cc, \prop \vdash \expr_1 \oplus \expr_2 \leadsto \hh''_S, \cc'' && \text{Consequent (1)} && &\\
\end{align*}

$\mathbf{f_c(\expr_1, \cdots, \expr_n)}$
\setcounter{number}{1}
\begin{align*}
    &\text{From lemma~\ref{lemma:optype-checking} using Antecedents (1),(2),(3),(4) and (5)}\\
    &\langle f_c(\expr_1, \cdots, \expr_n), \fstore, \store, \hh_C \rangle \Downarrow \val && && \mycounter &\\
    &\forall i \in [n], \langle \expr_i, \fstore, \store, \hh_C \rangle \Downarrow \val_i && \text{From \textsc{Op-func-call}}&& \mycounter &\\
    &\Gamma, \tau_s \vdash \expr_i : \types_i \quad \bot \sqsubset \types_i \sqsubset \top && \text{From \textsc{T-func-call}} && \mycounter &\\
    &\hh_{S_0} = \hh_S \quad \cc_0 = \cc && && \mycounter &\\
    &\text{We can do induction on $n$}\\
    &\textbf{Base Case: } n=0\\
    &\forall i \in [n], \tau_s, \fstore, \sstore, \hh_{S_{i-1}}, \cc_{1-i}, \prop \vdash \expr_i \leadsto \overline{\hh_{S_i}}, \overline{\cc_i} && && \mycounter &\\
    &\forall i \in [n], \ex(\expr_1, \tau_s, \fstore, \sstore, \overline{\hh_{S_i}}, \overline{\cc_{i}}) = \hh_{S_{i}}, \cc_i && && \mycounter &\\
    & \store, \hh_C \prec_{\cc} \sstore, \hh_S && \text{From Antecedent (6)} && \mycounter &\\
    & \exists!X \forall Y \exists! Z (\bigwedge_{t \in \dom(\hh_S)} \hh_S(t) = \hh_C(t)) && \text{From Antecedent (7)} && \mycounter &\\
\end{align*}
\begin{align*}
    &\textbf{Base Case: } n=1\\
    &\text{From the induction hypothesis on $\expr$ using (3)}\\
    &\text{and Antecedents (3),(4),(5),(6),(7), and (8)} \\
    &\tau_s, \fstore, \sstore, \hh_{S_0}, \cc_0, \prop \vdash \expr_1 \leadsto \overline{\hh_{S_1}}, \overline{\cc_1} && && \mycounter &\\
    &\ex(\expr_1, \tau_s, \fstore, \sstore, \overline{\hh_{S_1}}, \overline{\cc_{1}}) = \hh_{S_1}, \cc_1 && && \mycounter &\\
    &\text{From lemma~\ref{lemma:overapprox} using (2), (7), (8), (9),}\\
    &\text{and Antecedent (8)}\\
    & \store, \hh_C \prec_{\overline{\cc_1}} \sstore, \overline{\hh_{S_1}} &&  && \mycounter &\\
    & \exists!\overline{X_1} \forall \overline{Y_1} \exists! \overline{Z_1} (\bigwedge_{t \in \dom(\overline{\hh_{S_1}})} \overline{\hh_{S_1}}(t) = \hh_C(t)) &&  && \mycounter &\\
    & \ll \expr_1, \fstore, \store, \sstore, \hh_C, \hh_{S_1}, \cc_1 \gg \updownarrow \val_1, \sval_1, \hat{\cc_1}, \hat{\m_1} && && \mycounter &\\
    &\text{From lemma~\ref{lemma:hexpand} using (10), (11), (12) and (13)}\\
    & \store, \hh_C \prec_{\cc_1} \sstore, \hh_{S_1} && && \mycounter &\\
    & \exists!X_1 \forall Y_1 \exists! Z_1 (\bigwedge_{t \in \dom(\hh_{S_1})} \hh_{S_1}(t) = \hh_C(t)) &&  && \mycounter &\\
    &\textbf{Induction Case: } n > 1\\
    &\text{From the induction hypothesis on $n$:}\\
    &\forall i \in [n-1], \tau_s, \fstore, \sstore, \hh_{S_{i-1}}, \cc_{1-i}, \prop \vdash \expr_i \leadsto \overline{\hh_{S_i}}, \overline{\cc_i} && && \mycounter &\\
    &\forall i \in [n-1], \ex(\expr_1, \tau_s, \fstore, \sstore, \overline{\hh_{S_i}}, \overline{\cc_{i}}) = \hh_{S_{i}}, \cc_i && && \mycounter &\\
    & \store, \hh_C \prec_{\cc_{n-1}} \sstore, \hh_{S_{n-1}} && && \mycounter &\\
    & \exists!X_{n-1} \forall Y_{n-1} \exists! Z_{n-1} (\bigwedge_{t \in \dom(\hh_{S_{n-1}})} \hh_{S_{n-1}}(t) = \hh_C(t)) &&  && \mycounter &\\
    &\text{From the induction hypothesis on $\expr$ using (3), (18), (19),}\\
    &\text{and Antecedents (3), (4),(5), and (8)} \\
    &\tau_s, \fstore, \sstore, \hh_{S_{n-1}}, \cc_{n-1}, \prop \vdash \expr_n \leadsto \overline{\hh_{S_n}}, \overline{\cc_n} && && \mycounter &\\
    &\ex(\expr_n, \tau_s, \fstore, \sstore, \overline{\hh_{S_n}}, \overline{\cc_{n}}) = \hh_{S_n}, \cc_n && && \mycounter &\\
    &\text{From lemma~\ref{lemma:overapprox} using (2), (18), (19) and (20),}\\
    &\text{and Antecedent (8)}\\
    & \store, \hh_C \prec_{\overline{\cc_n}} \sstore, \overline{\hh_{S_n}} &&  && \mycounter &\\
    & \exists!\overline{X_n} \forall \overline{Y_n} \exists! \overline{Z_n} (\bigwedge_{t \in \dom(\overline{\hh_{S_n}})} \overline{\hh_{S_n}}(t) = \hh_C(t)) &&  && \mycounter &\\
    & \ll \expr_n, \fstore, \store, \sstore, \hh_C, \hh_{S_n}, \cc_1 \gg \updownarrow \val_n, \sval_n, \hat{\cc_n}, \hat{\m_n} && && \mycounter &\\
    &\text{From lemma~\ref{lemma:hexpand} using (21), (22), (23), and (24)}\\
    & \store, \hh_C \prec_{\cc_n} \sstore, \hh_{S_n} && && \mycounter &\\
    & \exists!X_n \forall Y_n \exists! Z_n (\bigwedge_{t \in \dom(\hh_{S_n})} \hh_{S_n}(t) = \hh_C(t)) &&  && \mycounter &\\
\end{align*}
\begin{align*}
    &\text{From induction on } n \textbf{:}\\
    &\forall i \in [n], \tau_s, \fstore, \sstore, \hh_{S_{i-1}}, \cc_{1-i}, \prop \vdash \expr_i \leadsto \overline{\hh_{S_i}}, \overline{\cc_i} && && \mycounter &\\
    &\forall i \in [n], \ex(\expr_1, \tau_s, \fstore, \sstore, \overline{\hh_{S_i}}, \overline{\cc_{i}}) = \hh_{S_{i}}, \cc_i && && \mycounter &\\
    & \store, \hh_C \prec_{\cc_{n}} \sstore, \hh_{S_{n}} && && \mycounter &\\
    & \exists!X_{n} \forall Y_{n} \exists! Z_{n} (\bigwedge_{t \in \dom(\hh_{S_{n}})} \hh_{S_{n}}(t) = \hh_C(t)) &&  && \mycounter &\\
    &\text{From \textsc{G-func-call}, }\hh_S, \cc \leadsto^* \hh_{S_n}, \cc_n &&  && \mycounter &\\
    &\text{From (7), (14), (25), }\forall i \in [n], \store, \hh_C \prec_{\cc_i} \sstore, \hh_{S_i} && && \mycounter &\\
    &\text{From (8), (15), (26)}\\
    &\forall i \in [n],\exists!X_i \forall Y_i \exists! Z_i (\bigwedge_{t \in \dom(\hh_{S_i})} \hh_{S_i}(t) = \hh_C(t)) && && \mycounter &\\
    &\text{From lemmas~\ref{lemma:overapprox},~\ref{lemma:multistephextendnew} using (2),(31),(32),(33),}\\
    &\text{and \textsc{G-func-call} and Antecedent (8)}\\
    &\forall i \in [n], \ll \expr_i, \fstore, \store, \sstore, \hh_C, \hh_{S_n}, \cc_n \gg \updownarrow \val_i, \sval_i, \cc''_i, \m''_i && && \mycounter&\\
    &\text{Since $\vars(\cc_n) \subseteq \vars(\hh_{S_n})$}\\
    &\text{From (29) and (30), }\exists!X_{n} \forall Y_{n} \exists! Z_{n} (\cc_n \wedge \bigwedge_{t \in \dom(\hh_{S_{n}})} \hh_{S_{n}}(t) = \hh_C(t))&&  && \mycounter &\\
    &\text{From lemma~\ref{lemma:ctogether} using (34) and (35)}\\
    &\cc'_0 = \cc_n && && \mycounter &\\
    &\forall i \in [n], \langle \expr_i, \fstore, \sstore, \hh_{S_n}, \cc'_{i-1} \rangle \downarrow \sval_i, \cc'_i && && \mycounter &\\
    &\exists! X' \forall Y' \exists! Z' (\bigwedge_{i = 1}^n CS(\val_i,\sval_i) \wedge \cc'_n \wedge \m \bigwedge_{t \in \dom(\hh_{S_n})} (\hh_{S_n}(t) = \hh_C(t))) && && \mycounter&\\
    &\cc'_n \implies \cc_n && && \mycounter &\\
    & \store' = \store[\var_1 \mapsto \val_1, \cdots \var_n \mapsto \val_n] \text{ and}\\
    &\text{From \textsc{Op-func-call} and \textsc{G-func-call}}\\
    &\sstore' = \sstore[\var_1 \mapsto \sval_1, \cdots \var_n \mapsto \sval_n]  &&  && \mycounter &\\
    &\text{From (29),(38),(39) and (40) }\store', \hh_C \prec_{\cc_n} \sstore', \hh_{S_n} &&  && \mycounter &\\
    &\text{From \textsc{T-func-call} } \Gamma(f_c) = (\Pi_i^n \types_i)\rightarrow \types'   && && \mycounter &\\
    &\text{From Antecedent (3) } \fstore(f_c) = (\var_1, \cdots, \var_n),\expr \quad \types, \tau_s \vdash \expr : \types' &&  && \mycounter &\\
    &\text{From the induction hypothesis on $\expr$ using }\\
    &\text{ (26),(41),(43) and Antecedents (3),(4),(5),(8)}\\
    &\tau_s, \fstore, \sstore', \hh_{S_n}, \cc_n, \prop \models \expr \leadsto \hh'_S, \cc' && && \mycounter &\\
    &\store', \hh_C \prec_{\cc'} \sstore', \hh'_S && && \mycounter&\\
    &\text{From (27),(28),(40), (44) and \textsc{G-func-call}}\\
    &\tau_s, \symc, \prop \models \expr \leadsto \hh_S', \cc' && && \mycounter&\\
\end{align*}
\begin{align*}
    &\text{From lemma~\ref{lemma:overapprox} using (1), (46) and}\\
    &\text{Antecedents (6),(7),(8)}\\
    &\store, \hh_C \prec_{\cc'} \sstore, \hh'_S && \text{Consequent (2)} && &\\
\end{align*}
\end{proof}

\noindent
\textbf{Induction Case:} \\
$\mathbf{\expr.\map(f)}$
\setcounter{number}{1}
\begin{align*}
    &\text{From \textsc{T-map-poly} and \textsc{T-map-sym}}\\
    &\Gamma, \tau_s \vdash \expr : \types_1 \quad \bot \sqsubset \types_1 \sqsubset \top && && \mycounter &\\
    &\text{From the induction hypothesis using (1) and }\\
    &\text{antecedents (3),(4),(5),(6),(7), and (8)}\\
    &\tau_s, \fstore, \sstore, \hh_S, \cc, \prop \models \expr \leadsto \hh_{S_0}, \cc_0 && && \mycounter &\\
    &\ex(\expr, \tau_s, \sstore, \hh_{S_0}, \cc_0, \prop) = \hh''_S, \cc'' && && \mycounter &\\
    &\text{From lemma~\ref{lemma:optype-checking} using Antecedents (1-5)}\\
    &\langle \expr.\map(f), \fstore, \store, \hh_C \rangle \Downarrow \val && && \mycounter &\\
    &\langle \expr, \fstore, \store, \hh_C \rangle \Downarrow \val_1 && \text{From \textsc{Op-map}} && \mycounter &\\
    &\text{From lemma~\ref{lemma:overapprox} using (2),(4)}\\
    &\text{and Antecedents (6),(7),(8)}\\
    &\store, \hh_C \prec_{\cc_0} \sstore, \hh_{S_0} && && \mycounter &\\
    &\exists!X_1 \forall Y_1 \exists! Z_1 (\bigwedge_{t \in \dom(\hh_{S_0})} \hh_{S_0}(t) = \hh_C(t)) && && \mycounter &\\
    &\ll \expr, \fstore, \store, \sstore, \hh_C, \hh_{S_0}, \cc_0 \gg \updownarrow \val_1, \sval_1, \cc_1, \m_1 && && \mycounter &\\
    &\text{From lemma~\ref{lemma:hexpand} using (3),(6),(7) and (8)}\\
    &\ll \expr, \fstore, \store, \sstore, \hh_C, \hh''_S, \cc'' \gg \updownarrow \val_1, \sval'_1, \cc'_1, \m'_1 && && \mycounter &\\
    &\exists! X_2 \forall Y_2 \exists!Z_2 (\bigwedge_{t \in \dom(\hh''_S)} \hh_C(t) = \hh''_S(t)) &&&& \mycounter &\\
    &\expand(\sval'_1) &&&& \mycounter &\\
    &\store, \hh_C \prec_{\cc''} \sstore, \hh''_S &&&& \mycounter &\\
    &\langle \expr, \fstore, \sstore, \hh''_S, \cc'' \rangle \downarrow \sval'_1,\cc'_1 &&\text{From (9)} && \mycounter &\\
    &\text{We can do induction on $\height(\sval'_1)$:}\\
    &\textbf{\text{Base Case: }} \height(\sval'_1) = 0 \\
    &\text{From (11) and \textsc{T-map-sym} and \textsc{T-map-poly}:}\\
    &\sval'_1 = \sval_{b_0} + \sum^j_{i=1} n_i * \sval_{b_i} \quad n=n'_i\text{ or }n=\epsilon'_i  && && \mycounter &\\
    &\text{From \textsc{Op-map, T-map-sym, T-map-poly}}\\
    &\fstore(f) = (\var_1,\var_2),\expr' &&  && \mycounter &\\
\end{align*}
\begin{align*}
    &\text{From (14), \textsc{G-map-sym, G-map-poly}}\\
    &\sstore_i = [\var_1 \mapsto \sval_{b_i}, \var_2 \mapsto n_i]  &&  && \mycounter &\\
    &\sval_1 = \constant_0 + \sum_{i=0}^{i=l} \constant_i \cdot \ver_i && \text{From (4) and OP-MAP} && \mycounter &\\
    &\text{From (15), \textsc{Op-map} and \textsc{Op-func-call}}\\
    &\store_i = [\var_1 \mapsto \constant_i, \var_2 \mapsto \ver_i] &&  && \mycounter &\\
    &\text{Since we use the same symbolic variables to}\\
    &\text{represent neurons and other symbolic variables in }\\
    &\text{concrete and symbolic operational semantics}\\
    &\store_i, \hh_C \prec{\cc''} \sstore_i, \hh''_S \quad j=l && \text{From (9) and (11)} && \mycounter &\\
    &\text{From \textsc{T-map-sym, T-map-poly, T-func}}\\
    &\Gamma, \tau_s \vdash \expr' : \types_2 \quad \bot \sqsubset \types_2 \sqsubset \top &&  &&\mycounter &\\
    &\text{From \textsc{T-func-call} and antecedent (4), }\store_i \sim \Gamma_i &&  && \mycounter&\\
    &\text{We can do induction on $j$:}\\
    &\text{Base Case: } j=1 \\
    &\text{From the induction hypothesis on $\expr$ using}\\
    &\text{(10),(19),(20),(21) and antecedents (3),(5) and (8)}\\
    &\tau_s, \fstore, \sstore_1, \hh''_{S}, \cc'' \vdash \expr' \leadsto \hh_{S_1}, \cc_1 && && \mycounter &\\
    &\text{From lemma~\ref{lemma:overapprox} using (4),(10),(19) and (22),}\\
    &\text{and Antecedent (8)}\\
    &\exists! X'_1 \forall Y'_1 \exists! Z'_1 (\bigwedge_{t \in \dom(\hh_{S_1})} \hh_{S_1}(t) = \hh_C(t)) && && \mycounter&\\
    &\store_1, \hh_C \prec_{\cc_1} \sstore_1, \hh_{S_1} && && \mycounter &\\
    &\store, \hh_C \prec_{\cc_1} \sstore, \hh_{S_1} && \text{From (10),(12) and (23)} && \mycounter &\\
    &\text{Base Case: } j=1 \\
    &\text{From the induction hypothesis on $j$}\\
    &\store, \hh_C \prec_{\cc_{j-1}} \sstore, \hh_{S_{j-1}} && && \mycounter &\\
    &\exists! X'_{j-1} \forall Y'_{j-1} \exists! Z'_{j-1} (\bigwedge_{t \in \dom(\hh_{S_{j-1}})} \hh_{S_{j-1}}(t) = \hh_C(t)) && && \mycounter&\\
    &\text{From the induction hypothesis on $\expr$ using}\\
    &\text{(20),(21),(26),(27) and antecedents (3), (5) and (8)}\\ 
    &\tau_s, \fstore, \sstore_j, \hh_{S_{j-1}}, \cc_{j-1} \vdash \expr' \leadsto \hh_{S_j}, \cc_j && && \mycounter &\\
    &\text{From lemma~\ref{lemma:overapprox} using (4),(26),(27) and (28),}\\
    &\text{and Antecedent (8)}\\
    &\exists! X'_j \forall Y'_j \exists! Z'_j (\bigwedge_{t \in \dom(\hh_{S_j})} \hh_{S_j}(t) = \hh_C(t)) && && \mycounter&\\
\end{align*}
\begin{align*}
    &\store_1, \hh_C \prec_{\cc_j} \sstore_j, \hh_{S_j} && && \mycounter &\\
    &\store, \hh_C \prec_{\cc_j} \sstore, \hh_{S_j} && \text{From (10),(12) and (29)} && \mycounter &\\
    &\text{End of induction on $j$. Form induction on $j$:}\\
    &\tau_s, \fstore, \sstore_j, \hh_{S_{j-}}, \cc_{j-1} \vdash \expr' \leadsto \hh_{S_j}, \cc_j && && \mycounter &\\
    &\text{From (2),(3),(11),(13),(32) and G-MAP} \\
    & \tau_s, \fstore, \sstore_j, \hh_{S}, \cc \vdash \expr.\map(f) \leadsto \hh_{S_j}, \cc_j  && \text{Consequent (1)}&& \mycounter &\\\
    &\text{From lemma~\ref{lemma:overapprox} using (4),(33) and}\\
    &\text{Antecedents (6),(7) and (8), } \store, \hh_C \prec_{\cc_j} \sstore, \hh_{S_j} &&&& \text{Consequent (2)}  &\\
    &\textbf{\text{Induction Case: }} \height(\sval'_1) > 0 \\
    &\sval'_1 = if(\sval', \sval''_1, \sval''_2) && && \mycounter &\\
    &\text{From induction hypothesis on $\height(\sval'_1)$}\\
    & \foo(\tau_s, \fstore, \sstore, \hh''_{S}, \cc'', \prop, f_c, \sval_1) = \hh_{S_1}, \cc_1 && && \mycounter &\\
    &\text{From induction hypothesis on $\height(\sval'_1)$}\\
    &\foo(\tau_s, \fstore, \sstore, \hh_{S_1}, \cc_1, \prop, f_c, \sval_2) = \hh_{S_2}, \cc_2 &&&& \mycounter &\\
    &\text{From (35),(36) and \textsc{G-map-r}}\\
    & \tau_s, \fstore, \sstore_j, \hh_{S}, \cc \vdash \expr.\map(f) \leadsto \hh_{S_2}, \cc_2&& \text{Consequent (1)} && \mycounter &\\
    &\text{From lemma~\ref{lemma:overapprox} using (4),(37) and}\\
    &\text{Antecedents (6),(7) and (8)}\\
    &\store, \hh_C \prec_{\cc_2} \sstore, \hh_{S_2} && \text{Consequent (2)} && &\\
\end{align*}

\begin{proof}[Proof for theorem ~\ref{soundnesstheorem}]
% Using the definitions and the lemmas stated above, we can finally prove Theorem~\ref{soundnesstheorem}. 
    From Lemma~\ref{lemma:graphexpand}, we can conclude that for each abstract transformer $\theta$ specified in a \oldtool program $\Pi$, we can create an expanded symbolic DNN that can over-approximate a given concrete DNN that is within the bounds of the verification procedure. From Lemma~\ref{lemma:overapprox}, the verification procedure and the operational semantics execute the abstract transformer on the symbolic DNN and the concrete DNN respectively, to output a tuple of symbolic and concrete values respectively, representing the new abstract shape, with the symbolic values over-approximating the concrete values. Hence if the soundness property holds on the over-approximated symbolic output, i.e., the abstract transformer is sound on the symbolic DNN, then the transformer is also sound on the concrete DNN.
    % We have to prove the soundness of the verification procedure for each abstract transformer in the program. For each expression, $\expr$, in the output of an abstract transformer, from Lemmas ~\ref{gexpand} and ~\ref{symbolicoverapprox}, $\langle \expr, \fstore, \store, \hh_C \rangle \Downarrow \val$ and a symbolic graph over-approximating the concrete DNN can be created and expanded. From Lemma ~\ref{symbolicoverapprox}, the symbolic execution terminates and produces a symbolic value and conditions, $\sval, \cc''$, where $\sval$ over-approximates $\val$ under the conditions, $\cc''$. If the verification procedure proves the soundness of the abstract transformer, then $\cc'' \wedge (\curr = op(\prev)) \wedge (\curr = \curr') \implies \prop(\curr')$. Since the expanded symbolic graph over-approximates the concrete DNN and each shape member's newly computed symbolic value over-approximates each shape member's newly computed concrete value, the abstract transformer is sound for all concrete DNNs that can be represented by the expanded symbolic graph.
\end{proof}

\begin{proof}[Proof for theorem ~\ref{completenesstheorem}]
   For all of the constructs except for \solver and \traverse, the symbolic semantics are exact w.r.t. the operational semantics. So, we can prove, using structural induction, that given the output of symbolic evaluation of an expression, $e$, $\sval, \cc$, defined on $\vars(\hh_S)$, for every assignment to $\vars(\hh_S)$, that satisfies $\cc$, there exists a concrete DNN, $\hh_C$, s.t. concretely evaluating $e$ using the operational semantics will output $v$ equal to the value of $\sval$ under the given assignment to $\vars(\hh_S)$. If the \cf verification procedure fails, it will return a satisfying assignment to $\vars(\hh_S)$, which can be mapped to a concrete DNN that does not satisfy the over-approximation based soundness property. 
\end{proof}
\clearpage
\section{DNN Certifiers used for Evaluation}
\label{appendix:casestudies}
In this section, we provide the details and the \oldtool codes for the DNN certifiers and the transformers used in the evaluation in \S~\ref{sec:evaluation}.

\subsection{Dataset for Evaluation of Unsound Transformers}
\label{appendix:unsoundness}
To create the dataset for evaluating the detection of unsound behavior in the DNN certifiers, we randomly introduced bugs in the existing as well as the new certifiers defined in the evaluation. The following bugs were injected to create unsound behavior:
\begin{enumerate}
    \item Changing the operations to other operations with similar types of operands. For instance, changing + to -, max to min, etc.
    \item Changing the shape member to another shape member with the same type. For instance, changing \curr[l] to \curr[u].
    \item Changing function calls to other functions with the same signature.
    \item Changing the neurons, for example, \prev to \curr, when \prev represents a single neuron. 
\end{enumerate}

\subsection{\oldtool codes for Sigmoid and Tanh}
\label{appendix:sigmoid}
In the following code, we give the specifications for Sigmoid and Tanh transformers. Similar specifications can be given for transformers corresponding to Exponential and Reciprocal DNN operations. An attention layer is a composition of these primitive operations. 
The abstract transformers corresponding to all of these primitive operations can be specified in \oldtool. However, only the transformers for which the verification queries can fit into a decidable SMT theory can be verified by the \cf verification procedure.   In the following, we present the transformers for Sigmoid and Tanh operations for the DeepPoly certifier. These transformers for other DNN certifiers are quite similar and are thus, omitted here.

\begin{lstlisting}[escapeinside={(*@}{@*)}]
Def Shape as (Real l, Real u, PolyExp L, PolyExp U){[(curr[l]<=curr),(curr[u]>=curr),(curr[L]<=curr),(curr[U]>=curr)]};

Func Sigmoid_deriv(Real x) = 1 - Sigmoid(x);
Func Tanh_deriv(Real x) = 1 - Tanh(x)*Tanh(x);

Func lambda_s(Real l, Real u) = (Sigmoid(u) - Sigmoid(l)) / (u-l);
Func lambda_t(Real l, Real u) = (Tanh(u) - Tanh(l)) / (u-l);

Func lambda_p_s(Real l, Real u) = min(Sigmoid_deriv(l), Sigmoid_deriv(u));
Func lambda_p_t(Real l, Real u) = min(Tanh_deriv(l), Tanh_deriv(u));

Func f(Neuron n1, Neuron n2) = n1[l] >= n2[u];

Transformer DeepPoly{
    Sigmoid -> (Sigmoid(prev[l]), 
        Sigmoid(prev[u]), 
        (prev[l]>0 ? Sigmoid(prev[l]) + lambda_s(prev[l], prev[u])*(prev-prev[l]) : Sigmoid(prev[l]) + lambda_p_s(prev[l], prev[u])*(prev-prev[l])), 
        (prev[u]<=0 ? Sigmoid(prev[u]) + lambda_s(prev[l], prev[u])*(prev-prev[u]) : Sigmoid(prev[u]) + lambda_p_s(prev[l], prev[u])*(prev-prev[u])));

    Tanh -> (Tanh(prev[l]), 
        Tanh(prev[u]), 
        (prev[l]>0 ? Tanh(prev[l]) + lambda_s(prev[l], prev[u])*(prev-prev[l]) : Tanh(prev[l]) + lambda_p_s(prev[l], prev[u])*(prev-prev[l])), 
        (prev[u]<=0 ? Tanh(prev[u]) + lambda_t(prev[l], prev[u])*(prev-prev[u]) : Tanh(prev[u]) + lambda_p_t(prev[l], prev[u])*(prev-prev[u])));
}

Flow(forward, priority, true, DeepPoly);\end{lstlisting}

\subsection{\oldtool codes for State-of-the-art DNN Certifiers}
\label{appendix:existingcert}
In the following case studies, we show the ConstraintFlow code for the implementations of different DNN certifiers. We show the transformers for the DNN operations which can be verified by the ConstraintFlow verification procedure. To avoid clutter, we show only the operations - \affine, \maxpool, \relu, \abs, and \hswish.

\subsubsection{\textbf{DeepPoly}}
Following is the code for the DeepPoly certifier
\begin{lstlisting}[escapeinside={(*@}{@*)}]
Def Shape as (Float l, Float u, PolyExp L, PolyExp U){[(curr[l]<=curr),(curr[u]>=curr),(curr[L]<=curr),(curr[U]>=curr)]};

Func simplify_lower(Neuron n, Float coeff) = (coeff >= 0) ? (coeff * n[l]) : (coeff * n[u]);
Func simplify_upper(Neuron n, Float coeff) = (coeff >= 0) ? (coeff * n[u]) : (coeff * n[l]);

Func replace_lower(Neuron n, Float coeff) = (coeff >= 0) ? (coeff * n[L]) : (coeff * n[U]);
Func replace_upper(Neuron n, Float coeff) = (coeff >= 0) ? (coeff * n[U]) : (coeff * n[L]);

Func priority(Neuron n) = n[layer];
Func priority2(Neuron n) = -n[layer];

Func stop(Int x, Neuron n, Float coeff) = true;

Func backsubs_lower(PolyExp e, Neuron n, Int x) = (e.traverse(backward, priority2, stop(x), replace_lower){e <= n}).map(simplify_lower);
Func backsubs_upper(PolyExp e, Neuron n, Int x) = (e.traverse(backward, priority2, stop(x), replace_upper){e >= n}).map(simplify_upper);

Func f(Neuron n1, Neuron n2) = n1[l] >= n2[u];

Func slope(Float x1, Float x2) = ((x1 * (x1 + 3))-(x2 * (x2 + 3))) / (6 * (x1-x2));
Func intercept(Float x1, Float x2) = x1 * ((x1 + 3) / 6) - (slope(x1, x2) * x1);

Func f1(Float x) = x < 3 ? x * ((x + 3) / 6) : x;
Func f2(Float x) = x * ((x + 3) / 6);
Func f3(Neuron n) = max(f2(n[l]), f2(n[u]));

Func compute_l(Neuron n1, Neuron n2) = min([n1[l]*n2[l], n1[l]*n2[u], n1[u]*n2[l], n1[u]*n2[u]]);
Func compute_u(Neuron n1, Neuron n2) = max([n1[l]*n2[l], n1[l]*n2[u], n1[u]*n2[l], n1[u]*n2[u]]);

Transformer deeppoly{
    Affine -> (backsubs_lower(prev.dot(curr[weight]) + curr[bias], curr, curr[layer]), backsubs_upper(prev.dot(curr[weight]) + curr[bias], curr, curr[layer]), prev.dot(curr[weight]) + curr[bias], prev.dot(curr[weight]) + curr[bias]);

    Maxpool -> len(argmax(prev, f)) > 0 ? (max(prev[l]), max(prev[u]),  avg(argmax(prev, f)), avg(argmax(prev, f))) : (max(prev[l]), max(prev[u]), max(prev[l]), max(prev[u]));

    Relu -> ((prev[l]) >= 0) ? ((prev[l]), (prev[u]), (prev), (prev)) : (((prev[u]) <= 0) ? (0, 0, 0, 0) : (0, (prev[u]), 0, (((prev[u]) / ((prev[u]) - (prev[l]))) * (prev)) - (((prev[u]) * (prev[l])) / ((prev[u]) - (prev[l]))) ));

    Abs -> ((prev[l]) >= 0) ? ((prev[l]), (prev[u]), (prev), (prev)) : (((prev[u]) <= 0) ? (0-(prev[u]), 0-(prev[l]), 0-(prev), 0-(prev)) : (0, max(prev[u], 0-prev[l]), prev, prev*(prev[u]+prev[l])/(prev[u]-prev[l]) - (((2*prev[u])*prev[l])/(prev[u]-prev[l]))) );
    
    HardSwish -> (prev[l] < -3) ? 
                    (prev[u] < -3 ? 
                        (0, 0, 0, 0) : 
                        (prev[u] < 0 ? 
                            (-3/8, 0, -3/8, 0) : 
                            (-3/8, f1(prev[u]), -3/8, f1(prev[u]) * (prev - prev[l])))) : 
                    ((prev[l] < 3) ? 
                        ((prev[u] < 3) ? 
                            (-3/8, f3(prev), -3/8, slope(prev[u], prev[l]) * prev + intercept(prev[u], prev[l])) : 
                            (-3/8, prev[u], -3/8, prev[u] * ((prev + 3) / (prev[u] + 3)))) :
                        (prev[l], prev[u], prev, prev)); 

}

flow(forward, priority, true, deeppoly);\end{lstlisting}

\clearpage
\subsubsection{\textbf{Vegas}}
Following is the code for the Vegas certifier. Vegas uses a forward analysis followed by a backward analysis, both using different transformers. The metadata $\equations$ is used to refer to a list of equations relating the neurons in the current layer to the neurons in \prev. For the $\revaffine$ operation, the code uses the \solver construct to find the minimum and maximum value of the neuron given the bounds of the neurons in \prev.
\begin{lstlisting}
Def Shape as (Float l, Float u, PolyExp L, PolyExp U){[(curr[l]<=curr),(curr[u]>=curr),(curr[L]<=curr),(curr[U]>=curr)]};

Func simplify_lower(Neuron n, Float coeff) = (coeff >= 0) ? (coeff * n[l]) : (coeff * n[u]);
Func simplify_upper(Neuron n, Float coeff) = (coeff >= 0) ? (coeff * n[u]) : (coeff * n[l]);

Func replace_lower(Neuron n, Float coeff) = (coeff >= 0) ? (coeff * n[L]) : (coeff * n[U]);
Func replace_upper(Neuron n, Float coeff) = (coeff >= 0) ? (coeff * n[U]) : (coeff * n[L]);

Func priority(Neuron n) = n[layer];
Func priority2(Neuron n) = -n[layer];

Func stop(Int x, Neuron n, Float coeff) = true;

Func backsubs_lower(PolyExp e, Neuron n, Int x) = (e.traverse(backward, priority2, stop(x), replace_lower){e <= n}).map(simplify_lower);
Func backsubs_upper(PolyExp e, Neuron n, Int x) = (e.traverse(backward, priority2, stop(x), replace_upper){e >= n}).map(simplify_upper);

Func f(Neuron n1, Neuron n2) = n1[l] >= n2[u];

Func slope(Float x1, Float x2) = ((x1 * (x1 + 3))-(x2 * (x2 + 3))) / (6 * (x1-x2));
Func intercept(Float x1, Float x2) = x1 * ((x1 + 3) / 6) - (slope(x1, x2) * x1);

Func f1(Float x) = x < 3 ? x * ((x + 3) / 6) : x;
Func f2(Float x) = x * ((x + 3) / 6);
Func f3(Neuron n) = max(f2(n[l]), f2(n[u]));

Func compute_l(Neuron n1, Neuron n2) = min([n1[l]*n2[l], n1[l]*n2[u], n1[u]*n2[l], n1[u]*n2[u]]);
Func compute_u(Neuron n1, Neuron n2) = max([n1[l]*n2[l], n1[l]*n2[u], n1[u]*n2[l], n1[u]*n2[u]]);

Func create_c(Neuron n, PolyExp e) = n == e;

Transformer vegas_forward{
    Affine -> (backsubs_lower(prev.dot(curr[weight]) + curr[bias], curr, curr[layer]), backsubs_upper(prev.dot(curr[weight]) + curr[bias], curr, curr[layer]), prev.dot(curr[weight]) + curr[bias], prev.dot(curr[weight]) + curr[bias]);

    Maxpool -> len(argmax(prev, f)) > 0 ? (max(prev[l]), max(prev[u]),  avg(argmax(prev, f)), avg(argmax(prev, f))) : (max(prev[l]), max(prev[u]), max(prev[l]), max(prev[u]));

    Relu -> ((prev[l]) >= 0) ? ((prev[l]), (prev[u]), (prev), (prev)) : (((prev[u]) <= 0) ? (0, 0, 0, 0) : (0, (prev[u]), 0, (((prev[u]) / ((prev[u]) - (prev[l]))) * (prev)) - (((prev[u]) * (prev[l])) / ((prev[u]) - (prev[l]))) ));

    Abs -> ((prev[l]) >= 0) ? ((prev[l]), (prev[u]), (prev), (prev)) : (((prev[u]) <= 0) ? (0-(prev[u]), 0-(prev[l]), 0-(prev), 0-(prev)) : (0, max(prev[u], 0-prev[l]), prev, prev*(prev[u]+prev[l])/(prev[u]-prev[l]) - (((2*prev[u])*prev[l])/(prev[u]-prev[l]))) );
    
    HardSwish -> (prev[l] < -3) ? 
                    (prev[u] < -3 ? 
                        (0, 0, 0, 0) : 
                        (prev[u] < 0 ? 
                            (-3/8, 0, -3/8, 0) : 
                            (-3/8, f1(prev[u]), -3/8, f1(prev[u]) * (prev - prev[l])))) : 
                    ((prev[l] < 3) ? 
                        ((prev[u] < 3) ? 
                            (-3/8, f3(prev), -3/8, slope(prev[u], prev[l]) * prev + intercept(prev[u], prev[l])) : 
                            (-3/8, prev[u], -3/8, prev[u] * ((prev + 3) / (prev[u] + 3)))) :
                        (prev[l], prev[u], prev, prev)); 

}

Transformer vegas_backward{
    rev_Affine -> (lp(minimize, curr, (curr[equations].map_list(create_c curr))), lp(maximize, curr, (curr[equations].map_list(create_c curr))), curr[L], curr[U]);
    rev_Maxpool -> (curr[l], min(curr[u], min(prev[u])), curr[L], curr[U]);
    rev_Relu -> 
    (prev[l]) > 0 ? 
    (
        (prev[u]) >= 0 ? 
        (max((prev[l]), curr[l]), min((prev[u]), curr[u]), curr[L], curr[U]) : 
        (max((prev[l]), curr[l]),curr[u], curr[L], curr[U])
    ) : 
    (
        (prev[u]) >= 0 ? 
        (curr[l], min((prev[u]), curr[u]), curr[L], curr[U]) : 
        (curr[l], curr[u], curr[L], curr[U])
    );
    rev_Abs -> (max(-prev[u], curr[l]), min(prev[u], curr[u]), curr[L], curr[U]);
    rev_HardSwish -> prev[l] >= 3 ? 
        (max(prev[l], curr[l]), min(prev[u], curr[u]), curr[L], curr[U]) :
        (prev[l] > 0 ? 
            (prev[u] >= 3 ?
                (max(prev[l] / 2, curr[l]), curr[u], curr[L], curr[U]) :
                (max(prev[l] / 2, curr[l]), min(2 * prev[u], curr[u]), curr[L], curr[U])
            ) : 
            (prev[u] >= 0 ? 
                (curr[l], curr[u], curr[L], curr[U]):
                (min(0, max(-3, curr[l])), min(0, max(-3, curr[u])), curr[L], curr[U])
            )

        );
}

flow(forward, priority, true, vegas_forward);
flow(backward, -priority, true, vegas_backward);    \end{lstlisting}

\clearpage
\subsubsection{\textbf{DeepZ}}
Following is the correct code for DeepZ certifier. The abstract shape has two concrete values which serve as the lower and upper concrete bounds. Additionally, it has a zonotope expression, which can be represented as a \symexp in \oldtool. The \cf verification procedure is able to prove the soundness of this Transformer.
\begin{lstlisting}
Def Shape as (Float l, Float u, ZonoExp z){[(curr[u]>=curr),(curr In curr[z]),(curr[l]<=curr)]};

Func simplify_lower(Neuron n, Float coeff) = (coeff >= 0) ? (coeff * n[l]) : (coeff * n[u]);
Func simplify_upper(Neuron n, Float coeff) = (coeff >= 0) ? (coeff * n[u]) : (coeff * n[l]);

Func priority(Neuron n) = n[layer];
Func f(Neuron n1, Neuron n2) = n1[l] >= n2[u];

Func s1(Float x1, Float x2) = ((x1 * (x1 + 3))-(x2 * (x2 + 3))) / (6 * (x1-x2));
Func i1(Float x1, Float x2) = x1 * ((x1 + 3) / 6) - (s1(x1, x2) * x1);

Func f1(Float x) = x < 3 ? x * ((x + 3) / 6) : x;
Func f2(Float x) = x * ((x + 3) / 6);

Transformer DeepZ{
    Affine -> ((prev.dot(curr[weight]) + curr[bias]).map(simplify_lower), (prev.dot(curr[weight]) + curr[bias]).map(simplify_upper), prev[z].dot(curr[weight]) + (curr[bias]));
    Maxpool -> len(argmax(prev, f)) > 0 ? (max(prev[l]), max(prev[u]),  avg(argmax(prev, f)[z])) :
     (max(prev[l]), max(prev[u]), ((max(prev[u]) + max(prev[l])) / 2) + (((max(prev[u]) - max(prev[l])) / 2) * eps));
    Relu -> ((prev[l]) >= 0) ? ((prev[l]), (prev[u]), (prev[z])) : (((prev[u]) <= 0) ? (0, 0, 0) : (0, (prev[u]), ((prev[u]) / 2) + (((prev[u]) / 2) * eps)));
    Abs -> ((prev[l]) >= 0) ? 
                ((prev[l]), (prev[u]), (prev[z])) : 
                (((prev[u]) <= 0) ? 
                    (-(prev[u]), -(prev[l]), -(prev[z])) : 
                    (0, max(-prev[l], prev[u]), ((max(-prev[l], prev[u])) / 2) + (((max(-prev[l], prev[u])) / 2) * eps)));
    HardSwish -> (prev[l] < -3) ? 
                    (prev[u] < -3 ? 
                        (0, 0, 0) : 
                        (prev[u] < 0 ? 
                            (-3/8, 0, (-3/16) * (1 - eps)) : 
                            (-3/8, f1(prev[u]), (f1(prev[u])/2 - (3/16)) + ((f1(prev[u])/2 + (3/16)) * eps) ))) : 
                    ((prev[l] < 3) ? 
                        ((prev[u] < 3) ? 
                            (-3/8, max(f2(prev[l]), f2(prev[u])), ((max(f2(prev[l]), f2(prev[u]))/2 )- (3/16)) + (eps * (max(f2(prev[l]), f2(prev[u]))/2 + (3/16)))) : 
                            (-3/8, prev[u], (prev[u]/2 - (3/16)) + (eps * (prev[u]/2 + (3/16))) )) :
                        (prev[l], prev[u], prev[z])); 
}

flow(forward, priority, true, DeepZ);\end{lstlisting}

\clearpage
\subsubsection{\textbf{RefineZono}}
Following is the correct code for RefineZono certifier. The abstract shape has two concrete values which serve as the lower and upper concrete bounds, a zonotope expression, which is represented as a \symexp in \oldtool, and a constraint of the type $\ct$. The \cf verification procedure is able to prove the soundness of this Transformer.
% \begin{figure}[H]
    \begin{lstlisting}
Def Shape as (Float l, Float u, ZonoExp z, Ct c){[(curr[l]<=curr),(curr[u]>=curr),(curr In curr[z]),curr[c]]};

Func simplify_lower(Neuron n, Float coeff) = (coeff >= 0) ? (coeff * n[l]) : (coeff * n[u]);
Func simplify_upper(Neuron n, Float coeff) = (coeff >= 0) ? (coeff * n[u]) : (coeff * n[l]);

Func priority(Neuron n) = n[layer];
Func foo(Neuron n) = n[c];
Func f(Neuron n1, Neuron n2) = n1[l] >= n2[u];

Func s1(Float x1, Float x2) = ((x1 * (x1 + 3))-(x2 * (x2 + 3))) / (6 * (x1-x2));
Func i1(Float x1, Float x2) = x1 * ((x1 + 3) / 6) - (s1(x1, x2) * x1);

Func f1(Float x) = x < 3 ? x * ((x + 3) / 6) : x;
Func f2(Float x) = x * ((x + 3) / 6);

Transformer RefineZono{
    Affine -> (lp(minimize, prev.dot(curr[weight]) + curr[bias], prev.map_list(foo)), lp(maximize, prev.dot(curr[weight]) + curr[bias], prev.map_list(foo)), prev[z].dot(curr[weight]) + (curr[bias]), (prev.dot(curr[weight]) + curr[bias]) == curr);
    Maxpool -> len(argmax(prev, f)) > 0 ? (max(prev[l]), max(prev[u]),  avg(argmax(prev, f)[z]), (curr <= max(prev[u])) and (curr >= max(prev[l]))) :
        (max(prev[l]), max(prev[u]), ((max(prev[u]) + max(prev[l])) / 2) + (((max(prev[u]) - max(prev[l])) / 2) * eps), (curr <= max(prev[u])) and (curr >= max(prev[l])));
    Relu -> (prev[l] >= 0) ? 
        (prev[l], prev[u], prev[z], (prev[l] <= curr) and (prev[u] >= curr)) : 
        (
            (prev[u] <= 0) ? 
            (0, 0, 0, curr == 0) : 
            (0, prev[u], (prev[u] / 2) + ((prev[u] / 2) * eps), 
            (prev[l] <= prev) and (prev[u] >= prev) and 
            (((prev <= 0) and (curr == 0) ) or ((prev > 0) and (curr == prev)) )
            )
        );
    Abs -> (prev[l] >= 0) ? 
        (prev[l], prev[u], prev[z], (prev == curr)) : 
            (prev[u] <= 0) ? 
                (-prev[u], -prev[l], -prev[z], (curr == -prev)) : 
                (0, max(-prev[l], prev[u]), (max(-prev[l], prev[u]) / 2) + ((max(-prev[l], prev[u]) / 2) * eps), 
                (((prev <= 0) and (curr == -prev) ) or ((prev > 0) and (curr == prev)) )
                );
    HardSwish -> (prev[l] < -3) ? 
                (prev[u] < -3 ? 
                    (0, 0, 0, curr==0) : 
                    (prev[u] < 0 ? 
                        (-3/8, 0, (-3/16) * (1 - eps), ((curr >= (-3/8)) and (curr <= 0))) : 
                        (-3/8, f1(prev[u]), (f1(prev[u])/2 - (3/16)) + ((f1(prev[u])/2 + (3/16)) * eps), ((curr >= (-3/8)) and (curr <= f1(prev[u]))) ))) : 
                ((prev[l] < 3) ? 
                    ((prev[u] < 3) ? 
                        (-3/8, max(f2(prev[l]), f2(prev[u])), ((max(f2(prev[l]), f2(prev[u]))/2 )- (3/16)) + (eps * (max(f2(prev[l]), f2(prev[u]))/2 + (3/16))), ((curr >= (-3/8)) and (curr <= max(f2(prev[l]), f2(prev[u]))))) : 
                        (-3/8, prev[u], (prev[u]/2 - (3/16)) + (eps * (prev[u]/2 + (3/16))), ((curr >= (-3/8)) and (curr <= prev[u])) )) :
                    (prev[l], prev[u], prev[z], curr==prev)); 
    
}

flow(forward, priority, true, RefineZono);\end{lstlisting}

\clearpage
\subsubsection{\textbf{IBP}}
Following is the correct code for IBP certifier. The abstract shape has two concrete values which serve as the lower and upper concrete bounds. The \cf verification procedure can prove the soundness of this Transformer.
% \begin{figure}[H]
    \begin{lstlisting}
Def Shape as (Float l, Float u){[(curr[l]<=curr),(curr[u]>=curr)]};

Func simplify_lower(Neuron n, Float coeff) = (coeff >= 0) ? (coeff * n[l]) : (coeff * n[u]);
Func simplify_upper(Neuron n, Float coeff) = (coeff >= 0) ? (coeff * n[u]) : (coeff * n[l]);

Func priority(Neuron n) = n[layer];

Func hswish(Float x) = x <= -3 ? 0 : (x >= 3 ? x : (x * ((x + 3) / 6))); 

Transformer Ibp{
    Affine -> ((prev.dot(curr[weight]) + curr[bias]).map(simplify_lower), (prev.dot(curr[weight]) + curr[bias]).map(simplify_upper));
    Maxpool -> (max(prev[l]), max(prev[u]));
    Relu -> ((prev[l]) >= 0) ? ((prev[l]), (prev[u])) : (((prev[u]) <= 0) ? (0, 0) : (0, (prev[u])));
    Abs -> (((prev[l]) >= 0) ? ((prev[l]), (prev[u])) : (((prev[u]) <= 0) ? (-prev[u], -prev[l]) : (0, max(-prev[l], prev[u]))));
    HardSwish -> prev[u] <= (-3/2) ? (hswish(prev[u]), hswish(prev[l])) : (prev[l] > (-3/2) ? (hswish(prev[l]), hswish(prev[u])) : (-3/8, max(hswish(prev[u]), hswish(prev[l]))));
}

flow(forward, priority, true, Ibp);\end{lstlisting}

\subsubsection{\textbf{Hybrid Zonotope}}
Following is the \oldtool code for Hybrid Zonotope certifier.
    \begin{lstlisting}
Def Shape as (Float l, Float u, Float b, ZonoExp z)
{[curr[b] >= 0, curr[l] <= curr, curr[u] >= curr, curr In (curr[z] + (curr[b]*eps))]};

Func simplify_lower(Neuron n, Float coeff) = (coeff >= 0) ? (coeff * n[l]) : (coeff * n[u]);
Func simplify_upper(Neuron n, Float coeff) = (coeff >= 0) ? (coeff * n[u]) : (coeff * n[l]);

Func replace_abs(Neuron n, Float coeff) = (coeff >= 0) ? (coeff * n[b]) : (-coeff * n[b]);

Func priority(Neuron n) = n[layer];

Func relu(Float r) = r >= 0 ? r : 0;
Func f(Neuron n1, Neuron n2) = n1[l] >= n2[u];
Func abs(Float x) = x > 0 ? x : -x;

Func s1(Float x1, Float x2) = ((x1 * (x1 + 3))-(x2 * (x2 + 3))) / (6 * (x1-x2));
Func i1(Float x1, Float x2) = x1 * ((x1 + 3) / 6) - (s1(x1, x2) * x1);

Func f1(Float x) = x < 3 ? x * ((x + 3) / 6) : x;
Func f2(Float x) = x * ((x + 3) / 6);

Transformer HybridZonotope{
    Neuron_add -> ((prev_0[l] + prev_1[l]), (prev_0[u] + prev_1[u]), (prev_0[b] + prev_1[b]), (prev_0[z] + prev_1[z]));
    Maxpool -> (max(prev[l]), max(prev[u]), max(abs(max(prev[l])), abs(max(prev[u]))),0);
    Relu -> abs(prev[l]) > abs(prev[u]) ? 
                ((prev[l]) >= 0) ? 
                (prev[l], prev[u], (prev[b]), (prev[z])) : 
                (-prev[b], prev[b] + relu(prev[u]), prev[b], ((1 + eps) * (relu(prev[u]/2)))) :
                (((prev[l]) < 0) and ((prev[u]) > 0)) ? 
                (0, prev[u], (prev[b]), (prev[z] - (((1 + eps) * (prev[l])) / 2))) : 
                (((prev[l]) >= 0) ? 
                    (prev[l], prev[u], (prev[b]), (prev[z])) : 
                    (0, 0, prev[b], ((1 + eps) * (relu(prev[u]/2)))));
    Abs -> (prev[l] > 0) ?
                (prev[l], prev[u], prev[b], prev[z]) :
                (prev[u] < 0) ?
                    (-prev[u], -prev[l], prev[b], -prev[z]) : 
                    (0, max(-prev[l], prev[u]), max(-prev[l], prev[u]), 0);
    HardSwish -> (prev[l] < -3) ? 
                    (prev[u] < -3 ? 
                        (0, 0, 0, 0) : 
                        (prev[u] < 0 ? 
                            (-3/8, 0, 3/8, 0) : 
                            (-3/8, f1(prev[u]), max(f1(prev[u]), 3/8), 0))) : 
                    ((prev[l] < 3) ? 
                        ((prev[u] < 3) ? 
                            (-3/8, max(f2(prev[l]), f2(prev[u])),max(3/8, max(f2(prev[l]), f2(prev[u]))), 0) : 
                            (-3/8, prev[u], max(3/8, f1(prev[u])), 0 )) :
                        (prev[l], prev[u], prev[b], prev[z])); 
}

flow(forward, priority, true, HybridZonotope);\end{lstlisting}

\clearpage
\subsection{\oldtool codes for New DNN Certifiers}
\label{appendix:newcert}

\subsubsection{\textbf{\deeppolyH}}
Following is the \oldtool code for \deeppolyH certifier.
\begin{lstlisting}
Def Shape as (Float l, Float u, PolyExp L, PolyExp U){[(curr[l]<=curr),(curr[u]>=curr),(curr[L]<=curr),(curr[U]>=curr)]};

Func simplify_lower(Neuron n, Float coeff) = (coeff >= 0) ? (coeff * n[l]) : (coeff * n[u]);
Func simplify_upper(Neuron n, Float coeff) = (coeff >= 0) ? (coeff * n[u]) : (coeff * n[l]);

Func replace_lower(Neuron n, Float coeff) = (coeff >= 0) ? (coeff * n[L]) : (coeff * n[U]);
Func replace_upper(Neuron n, Float coeff) = (coeff >= 0) ? (coeff * n[U]) : (coeff * n[L]);

Func priority(Neuron n) = n[layer];
Func priority2(Neuron n) = -n[layer];

Func stop(Int x, Neuron n, Float coeff) = n[layer] >= (x - 2);

Func backsubs_lower(PolyExp e, Neuron n, Int x) = (e.traverse(backward, priority2, stop(x), replace_lower){e <= n}).map(simplify_lower);
Func backsubs_upper(PolyExp e, Neuron n, Int x) = (e.traverse(backward, priority2, stop(x), replace_upper){e >= n}).map(simplify_upper);

Func f(Neuron n1, Neuron n2) = n1[l] >= n2[u];
Func abs(Float x) = x > 0 ? x : (-x);

Func s1(Float x1, Float x2) = ((x1 * (x1 + 3))-(x2 * (x2 + 3))) / (6 * (x1-x2));
Func i1(Float x1, Float x2) = x1 * ((x1 + 3) / 6) - (s1(x1, x2) * x1);

Func f1(Float x) = x < 3 ? x * ((x + 3) / 6) : x;
Func f2(Float x) = x * ((x + 3) / 6);

Transformer BalanceCert{
    Affine -> (backsubs_lower(prev.dot(curr[weight]) + curr[bias], curr, curr[layer]), backsubs_upper(prev.dot(curr[weight]) + curr[bias], curr, curr[layer]), prev.dot(curr[weight]) + curr[bias], prev.dot(curr[weight]) + curr[bias]);
    Maxpool -> len(argmax(prev, f)) > 0 ? (max(prev[l]), max(prev[u]),  avg(argmax(prev, f)), avg(argmax(prev, f))) : (max(prev[l]), max(prev[u]), max(prev[l]), max(prev[u]));
    Relu -> ((prev[l]) >= 0) ? ((prev[l]), (prev[u]), ((prev)), ((prev))) : (((prev[u]) <= 0) ? (0.0, 0.0, 0.0, 0.0) : (((abs(prev[l]) < abs(prev[u])) ? (prev[l]) : 0), (prev[u]), ((abs(prev[l]) < abs(prev[u])) ? (prev) : 0), (((prev[u]) / ((prev[u]) - (prev[l]))) * ((prev))) + ((((prev[u] * (-1))) * (prev[l])) / ((prev[u]) - (prev[l]))) ));
    Abs -> ((prev[l]) >= 0) ? 
                ((prev[l]), (prev[u]), (prev), (prev)) : 
                (((prev[u]) <= 0) ? 
                    (-(prev[u]), -(prev[l]), -(prev), -(prev)) : 
                    (0, max(prev[u], -prev[l]), ((-prev[l])>prev[u]) ? -prev : prev, prev*(prev[u]+prev[l])/(prev[u]-prev[l]) - (((2*prev[u])*prev[l])/(prev[u]-prev[l]))) );
    HardSwish -> (prev[l] < -3) ? 
                    (prev[u] < -3 ? 
                        (0, 0, 0, 0) : 
                        (prev[u] < 0 ? 
                            (-3/8, 0, -3/8, 0) : 
                            (-3/8, f1(prev[u]), -3/8, f1(prev[u]) * (prev - prev[l])))) : 
                    ((prev[l] < 3) ? 
                        ((prev[u] < 3) ? 
                            (-3/8, max(f2(prev[l]), f2(prev[u])), -3/8, s1(prev[u], prev[l]) * prev + i1(prev[u], prev[l])) : 
                            (-3/8, prev[u], -3/8, prev[u] * ((prev + 3) / (prev[u] + 3)))) :
                        (prev[l], prev[u], prev, prev)); 
}

flow(forward, priority, true, BalanceCert);\end{lstlisting}

\clearpage
\subsubsection{\textbf{\deeppolyNew}}
Following is the \oldtool code for \deeppolyNew certifier.
\begin{lstlisting}
def Shape as (Float l, Float u, PolyExp L, PolyExp U, PolyExp Lc, PolyExp Uc)
{[(curr[l]<=curr),(curr[u]>=curr),(curr[L]<=curr),(curr[U]>=curr),(curr[Lc]<=curr),(curr[Uc]>=curr)]};

func simplify_lower(Neuron n, Float coeff) = (coeff >= 0) ? (coeff * n[l]) : (coeff * n[u]);
func simplify_upper(Neuron n, Float coeff) = (coeff >= 0) ? (coeff * n[u]) : (coeff * n[l]);

func replace_lower(Neuron n, Float coeff) = (coeff >= 0) ? (coeff * n[L]) : (coeff * n[U]);
func replace_upper(Neuron n, Float coeff) = (coeff >= 0) ? (coeff * n[U]) : (coeff * n[L]);

func replace_lower2(Neuron n, Float coeff) = (coeff >= 0) ? (coeff * n[Lc]) : (coeff * n[Uc]);
func replace_upper2(Neuron n, Float coeff) = (coeff >= 0) ? (coeff * n[Uc]) : (coeff * n[Lc]);

func priority(Neuron n) = n[layer];
func priority2(Neuron n) = -n[layer];

func backsubs_lower(PolyExp e, Neuron n) = (e.traverse(backward, priority2, true, replace_lower){e <= n}).map(simplify_lower);
func backsubs_upper(PolyExp e, Neuron n) = (e.traverse(backward, priority2, true, replace_upper){e >= n}).map(simplify_upper);

func f(Neuron n1, Neuron n2) = n1[l] >= n2[u];
func s1(Float x1, Float x2) = ((x1 * (x1 + 3))-(x2 * (x2 + 3))) / (6 * (x1-x2));
func i1(Float x1, Float x2) = x1 * ((x1 + 3) / 6) - (s1(x1, x2) * x1);

func f1(Float x) = x < 3 ? x * ((x + 3) / 6) : x;
func f2(Float x) = x * ((x + 3) / 6);

transformer ReuseCert{
    Affine -> (backsubs_lower(prev.dot(curr[weight]) + curr[bias], curr).map(simplify_lower), backsubs_upper(prev.dot(curr[weight]) + curr[bias], curr).map(simplify_upper), prev.dot(curr[weight]) + curr[bias], prev.dot(curr[weight]) + curr[bias], (prev.dot(curr[weight]) + curr[bias]).map(replace_lower2), (prev.dot(curr[weight]) + curr[bias]).map(replace_upper2));
    Maxpool -> len(argmax(prev, f)) > 0 ? 
                (max(prev[l]), max(prev[u]), avg(argmax(prev, f)), avg(argmax(prev, f)), avg(argmax(prev, f)[Lc]), avg(argmax(prev, f)[Uc])) : 
                (max(prev[l]), max(prev[u]), max(prev[l]), max(prev[u]), max(prev[l]), max(prev[u]));
    Relu -> ((prev[l]) >= 0) ? 
                ((prev[l]), (prev[u]), (prev), (prev), (prev[Lc]), (prev[Uc])) : 
                (((prev[u]) <= 0) ? 
                    (0, 0, 0, 0, 0, 0) : 
                    (0, (prev[u]), 0, (((prev[u]) / ((prev[u]) - (prev[l]))) * (prev)) - (((prev[u]) * (prev[l])) / ((prev[u]) - (prev[l]))), 0, (((prev[u]) / ((prev[u]) - (prev[l]))) * (prev[Uc])) - (((prev[u]) * (prev[l])) / ((prev[u]) - (prev[l]))) ));

    Abs -> ((prev[l]) >= 0) ? 
                ((prev[l]), (prev[u]), (prev), (prev), (prev[Lc]), (prev[Uc])) : 
                (((prev[u]) <= 0) ? 
                    (-prev[u], -prev[l], -prev, -prev, -prev[Uc], -prev[Lc]) : 
                    (0, max(-prev[l], prev[u]), 0, prev*(prev[u]+prev[l])/(prev[u]-prev[l]) - (((2*prev[u])*prev[l])/(prev[u]-prev[l])), 0, ((-prev[l])>prev[u] ? prev[Lc] : prev[Uc])*(prev[u]+prev[l])/(prev[u]-prev[l]) - (((2*prev[u])*prev[l])/(prev[u]-prev[l]))));

    HardSwish -> (prev[l] <= -3) ? 
                    (prev[u] <= -3 ? 
                        (0, 0, 0, 0, 0, 0) : 
                        (prev[u] <= 0 ? 
                            (-3/8, 0, -3/8, 0, -3/8, 0) : 
                            (-3/8, f1(prev[u]), -3/8, f1(prev[u]) * (prev - prev[l]), -3/8, f1(prev[u]) * (prev[Uc] - prev[l])))) : 
                    ((prev[l] <= 3) ? 
                        ((prev[u] <= 3) ? 
                            (-3/8, max(f2(prev[l]), f2(prev[u])), -3/8, s1(prev[u], prev[l]) * prev + i1(prev[u], prev[l]), -3/8, max(f2(prev[l]), f2(prev[u]))) : 
                            (-3/8, prev[u], -3/8, prev[u] * ((prev + 3) / (prev[u] + 3)), -3/8, prev[u] * ((prev[Uc] + 3) / (prev[u] + 3)))) :
                        (prev[l], prev[u], prev, prev, prev, prev)); 
}

flow(forward, priority, true, ReuseCert);\end{lstlisting}

\clearpage
\subsubsection{\textbf{\sympoly}}
Following is the \oldtool code for \sympoly certifier.
\begin{lstlisting}
Def Shape as (Float l, Float u, PolyExp L, PolyExp U){[(curr[l]<=curr),(curr[u]>=curr),(curr[L]<=curr),(curr[U]>=curr)]};

Func simplify_lower(Neuron n, Float coeff) = (coeff >= 0) ? (coeff * n[l]) : (coeff * n[u]);
Func simplify_upper(Neuron n, Float coeff) = (coeff >= 0) ? (coeff * n[u]) : (coeff * n[l]);

Func replace_lower(Neuron n, Float coeff) = (coeff >= 0) ? (coeff * n[L]) : (coeff * n[U]);
Func replace_upper(Neuron n, Float coeff) = (coeff >= 0) ? (coeff * n[U]) : (coeff * n[L]);

Func priority(Neuron n) = n[layer];
Func priority2(Neuron n) = -n[layer];

Func stop(Int x, Neuron n, Float coeff) = true;

Func backsubs_lower(PolyExp e, Neuron n, Int x) = (e.traverse(backward, priority2, stop(x), replace_lower){e <= n}).map(simplify_lower);
Func backsubs_upper(PolyExp e, Neuron n, Int x) = (e.traverse(backward, priority2, stop(x), replace_upper){e >= n}).map(simplify_upper);

Func f(Neuron n1, Neuron n2) = n1[l] >= n2[u];

Func s1(Float x1, Float x2) = ((x1 * (x1 + 3))-(x2 * (x2 + 3))) / (6 * (x1-x2));
Func i1(Float x1, Float x2) = x1 * ((x1 + 3) / 6) - (s1(x1, x2) * x1);

Func f1(Float x) = x < 3 ? x * ((x + 3) / 6) : x;
Func f2(Float x) = x * ((x + 3) / 6);

Transformer SymPoly{
    Affine -> (backsubs_lower(prev.dot(curr[weight]) + curr[bias], curr, curr[layer]), backsubs_upper(prev.dot(curr[weight]) + curr[bias], curr, curr[layer]), prev.dot(curr[weight]) + curr[bias], prev.dot(curr[weight]) + curr[bias]);
    Maxpool -> len(argmax(prev, f)) > 0 ? (max(prev[l]), max(prev[u]),  avg(argmax(prev, f)), avg(argmax(prev, f))) : (max(prev[l]), max(prev[u]), max(prev[l]), max(prev[u]));
    Relu -> ((prev[l]) >= 0) ? ((prev[l]), (prev[u]), (prev), (prev)) : (((prev[u]) <= 0) ? (0, 0, 0, 0) : (0, (prev[u]), ((((1 + eps) / 2)) * (prev)), (((prev[u]) / ((prev[u]) - (prev[l]))) * (prev)) - (((prev[u]) * (prev[l])) / ((prev[u]) - (prev[l]))) ));
    Abs -> ((prev[l]) >= 0) ? ((prev[l]), (prev[u]), (prev), (prev)) : (((prev[u]) <= 0) ? (0-(prev[u]), 0-(prev[l]), 0-(prev), 0-(prev)) : (0, max(prev[u], 0-prev[l]), (eps * prev), prev*(prev[u]+prev[l])/(prev[u]-prev[l]) - (((2*prev[u])*prev[l])/(prev[u]-prev[l]))) );
    HardSwish -> (prev[l] < -3) ? 
                    (prev[u] < -3 ? 
                        (0, 0, 0, 0) : 
                        (prev[u] < 0 ? 
                            (-3/8, 0, -3/8, 0) : 
                            (-3/8, f1(prev[u]), -3/8, f1(prev[u]) * (prev - prev[l])))) : 
                    ((prev[l] < 3) ? 
                        ((prev[u] < 3) ? 
                            (-3/8, max(f2(prev[l]), f2(prev[u])), -3/8, s1(prev[u], prev[l]) * prev + i1(prev[u], prev[l])) : 
                            (-3/8, prev[u], -3/8, prev[u] * ((prev + 3) / (prev[u] + 3)))) :
                        (prev[l], prev[u], prev, prev)); 
}

flow(forward, priority, true, SymPoly);\end{lstlisting}

\clearpage
\subsubsection{\textbf{PolyZ}}
Following is the \oldtool code for PolyZ certifier
    \begin{lstlisting}
Def Shape as (Float l, Float u, PolyExp L, PolyExp U, ZonoExp z){[curr[l]<=curr,curr[u]>=curr,curr[L]<=curr,curr[U]>=curr,curr In curr[z]]};

Func simplify_lower(Neuron n, Float coeff) = (coeff >= 0) ? (coeff * n[l]) : (coeff * n[u]);
Func simplify_upper(Neuron n, Float coeff) = (coeff >= 0) ? (coeff * n[u]) : (coeff * n[l]);

Func replace_lower(Neuron n, Float coeff) = (coeff >= 0) ? (coeff * n[L]) : (coeff * n[U]);
Func replace_upper(Neuron n, Float coeff) = (coeff >= 0) ? (coeff * n[U]) : (coeff * n[L]);

Func priority(Neuron n) = n[layer];
Func priority2(Neuron n) = -n[layer];

Func backsubs_lower(PolyExp e, Neuron n) = (e.traverse(backward, priority2, true, replace_lower){e <= n}).map(simplify_lower);
Func backsubs_upper(PolyExp e, Neuron n) = (e.traverse(backward, priority2, true, replace_upper){e >= n}).map(simplify_upper);

Func f(Neuron n1, Neuron n2) = n1[l] >= n2[u];
Func f1(Float x) = x < 3 ? x * ((x + 3) / 6) : x;
Func f2(Float x) = x * ((x + 3) / 6);

Func s1(Float x1, Float x2) = ((x1 * (x1 + 3))-(x2 * (x2 + 3))) / (6 * (x1-x2));
Func i1(Float x1, Float x2) = x1 * ((x1 + 3) / 6) - (s1(x1, x2) * x1);

Transformer polyz{
    Affine -> (max((prev.dot(curr[weight]) + curr[bias]).map(simplify_lower),backsubs_lower(prev.dot(curr[weight]) + curr[bias], curr)), min((prev.dot(curr[weight]) + curr[bias]).map(simplify_upper),backsubs_upper(prev.dot(curr[weight]) + curr[bias], curr)), prev.dot(curr[weight]) + curr[bias], prev.dot(curr[weight]) + curr[bias], prev[z].dot(curr[weight]) + curr[bias]);
    Maxpool -> len(argmax(prev, f)) > 0 ? (max(prev[l]), max(prev[u]), avg(argmax(prev, f)), avg(argmax(prev, f)), avg(argmax(prev, f)[z])) : (max(prev[l]), max(prev[u]), max(prev[l]), max(prev[u]), ((max(prev[u]) + max(prev[l])) / 2) + (((max(prev[u]) - max(prev[l])) / 2) * eps));
    Relu -> ((prev[l]) >= 0) ? 
    ((prev[l]), (prev[u]), (prev), (prev), (prev[z])) : 
    (
        ((prev[u]) <= 0) ? 
        (0, 0, 0, 0, 0) : 
        (0, (prev[u]), 0, (((prev[u]) / ((prev[u]) - (prev[l]))) * (prev)) - (((prev[u]) * (prev[l])) / ((prev[u]) - (prev[l]))), ((prev[u] + prev[l]) / 2) + (((prev[u] - prev[l]) / 2) * eps))
    );
    Abs -> ((prev[l]) >= 0) ? 
    ((prev[l]), (prev[u]), (prev), (prev), (prev[z])) : 
    (
        ((prev[u]) <= 0) ? 
        (-prev[u], -prev[l], -prev, -prev, -prev[z]) : 
        (0, max(-prev[l], prev[u]), 0, prev*(prev[u]+prev[l])/(prev[u]-prev[l]) - (((2*prev[u])*prev[l])/(prev[u]-prev[l])), ((max(-prev[l], prev[u])) / 2) + (((max(-prev[l], prev[u])) / 2) * eps))
    );
    HardSwish -> (prev[l] < -3) ? 
                    (prev[u] < -3 ? 
                        (0, 0, 0, 0, 0) : 
                        (prev[u] < 0 ? 
                            (-3/8, 0, -3/8, 0, (-3/16) * (1 - eps)) : 
                            (-3/8, f1(prev[u]), -3/8, f1(prev[u]) * (prev - prev[l]), (f1(prev[u])/2 - (3/16)) + ((f1(prev[u])/2 + (3/16)) * eps)))) : 
                    ((prev[l] < 3) ? 
                        ((prev[u] < 3) ? 
                            (-3/8, max(f2(prev[l]), f2(prev[u])), -3/8, s1(prev[u], prev[l]) * prev + i1(prev[u], prev[l]), ((max(f2(prev[l]), f2(prev[u]))/2 )- (3/16)) + (eps * (max(f2(prev[l]), f2(prev[u]))/2 + (3/16)))) : 
                            (-3/8, prev[u], -3/8, prev[u] * ((prev + 3) / (prev[u] + 3)), (prev[u]/2 - (3/16)) + (eps * (prev[u]/2 + (3/16))))) :
                        (prev[l], prev[u], prev, prev, prev[z])); 
}

flow(forward, priority, true, polyz);\end{lstlisting}

\end{document}